\pdfoutput=1
\newcommand{\submission}{} %

\ifdefined\camerareadyversion
\documentclass[acmsmall,review]{acmart}
\else
\documentclass[acmsmall,nonacm]{acmart}
\fi
\newboolean{arxiv}
\newboolean{cameraready}
\ifdefined\camerareadyversion
  \setboolean{arxiv}{false}
  \setboolean{cameraready}{true}
\else
  \setboolean{arxiv}{true}
  \setboolean{cameraready}{false}
\fi

\AtBeginDocument{%
  }

\setcopyright{none}
\renewcommand\footnotetextcopyrightpermission[1]{}

\usepackage{CJKutf8} %
\usepackage{ascmac} %
\usepackage{stmaryrd} %

{\begin{screen}\begin{CJK}{UTF8}{ipxm}}%
{\end{CJK}\end{screen}}

\usepackage{xparse}
\usepackage{ifthen}

\usepackage{amsmath}
\usepackage{amsfonts%
}
\usepackage{tabularray} %

\newcommand{\logicdash}{\mathord{\text{-}}}

\usepackage{amsthm}

\usepackage{hyperref}
\usepackage[nosort]{cleveref}
\makeatletter
\def\pdfCref#1{\ifcsname r@#1@cref\endcsname
  \expandafter\expandafter\expandafter\pdfCref@a\csname r@#1@cref\endcsname\@nil
  \else ??\fi}
\def\pdfCref@a#1#2\@nil{\pdfCref@b#1\@nil}
\def\pdfCref@b[#1][#2][#3]#4\@nil{\csname Cref@#1@name\endcsname\space#4}
\makeatother

\newtheorem{theorem}{Theorem}
\newtheorem{lemma}[theorem]{Lemma}
\newtheorem{proposition}[theorem]{Proposition}
\newtheorem{corollary}[theorem]{Corollary}
\newtheorem{definition}[theorem]{Definition}
\newtheorem*{proposition*}{Proposition}
\newtheorem*{theorem*}{Theorem}
\newtheorem*{lemma*}{Lemma}
\newtheorem*{corollary*}{Corollary}

\newtheorem*{claim*}{Claim}

\newtheorem*{hypothesis*}{Hypothesis}

\theoremstyle{remark}
\newtheorem{remark}[theorem]{Remark}
\newtheorem*{remark*}{Remark}

\newtheorem*{problem*}{Problem}
\newtheorem{question}{Question}
\newtheorem*{question*}{Question}
\newtheorem*{example*}{Example}
\newtheorem*{examples*}{Examples}

\crefname{theorem}{Theorem}{Theorems}
\crefname{proposition}{Proposition}{Propositions}
\crefname{apxproposition}{Proposition}{Propositions}
\crefname{lemma}{Lemma}{Lemmas}
\crefname{corollary}{Corollary}{Corollaries}
\crefname{definition}{Definition}{Definitions}
\crefname{section}{\S}{\S\S}
\crefname{figure}{Fig.\@}{Figs.\@}
\crefname{hypothesis}{Hypothesis}{Hypotheses}
\crefname{table}{Table}{Tables}
\crefname{problem}{Problem}{Problems}
\crefname{appendix}{Appendix}{Appendices}
\crefname{footnote}{Footnote}{Footnotes}

\AddToHook{env/theorem/begin}{\crefalias{theorem}{theorem}}
\AddToHook{env/proposition/begin}{\crefalias{theorem}{proposition}}
\AddToHook{env/lemma/begin}{\crefalias{theorem}{lemma}}
\AddToHook{env/corollary/begin}{\crefalias{theorem}{corollary}}
\AddToHook{env/definition/begin}{\crefalias{theorem}{definition}}
\AddToHook{env/remark/begin}{\crefalias{theorem}{remark}}
\AddToHook{env/hypothesis/begin}{\crefalias{theorem}{hypothesis}}
\AddToHook{env/problem/begin}{\crefalias{theorem}{problem}}
\AddToHook{env/table/begin}{\crefalias{theorem}{table}}

\usepackage[table]{xcolor}
\usepackage{subcaption}

\ifdefined\submission
\newcommand{\unote}[1]{}
\newcommand{\rednote}[1]{}
\newcommand{\bluenote}[1]{}
\else
\newcommand{\unote}[1]{{\color[HTML]{D70040} [Uezato: #1]}}
\newcommand{\rednote}[1]{{\color[HTML]{FF2400} #1}}
\newcommand{\bluenote}[1]{{\color[HTML]{3F00FF} #1}}
\fi

\usepackage{diagbox}

\MakeRobust{\ref}%
\usepackage{tikz}
\usetikzlibrary{cd, decorations.pathmorphing, tikzmark, calc, trees, topaths, bending, backgrounds, shapes, shapes.geometric, fit, arrows, arrows.meta, positioning, graphs, quotes}
\usetikzlibrary{arrows.meta,calc,decorations.pathreplacing,positioning}
\pgfdeclarelayer{bg}    %
\pgfsetlayers{bg,main}  %
\tikzstyle{mynode} = [inner sep = 1.5pt, fill= gray!20,  font = \scriptsize, draw, circle]
\tikzstyle{mysmallnode} = [inner sep = 1.pt, fill= gray!20, font = \scriptsize, draw, circle]
\tikzset{earrow/.style={>={{[flex] Latex[length=.08cm, width=2.5pt]}}}}

\usepackage[most]{tcolorbox} %
\usepackage{mathtools} %
\usepackage{environ} %
\ifdefined\submission
\usepackage[xcolor, cleveref, notion, quotation, electronic]{knowledge} %
\knowledgeconfigure {visible anchor points=false}
\else
\usepackage[xcolor, cleveref, notion, quotation]{knowledge}
\fi
\usepackage{mathcommand}
\knowledgeconfigure{protect quotation={graph,tikzpicture,tikzcd}} %
\knowledgeconfigure{diagnose line=true, diagnose bar=true}

\definecolor{Dark Ruby Red}{HTML}{580507}
\definecolor{Dark Blue Sapphire}{HTML}{053641}
\definecolor{Dark Gamboge}{HTML}{be7c00}

\IfKnowledgePaperModeTF{
}{
    \knowledgestyle{intro notion}{color={Dark Ruby Red}, emphasize}
    \knowledgestyle{notion}{color={Dark Blue Sapphire}}
    \IfKnowledgeElectronicModeTF{
    }{
    }
}
\ifdefined\submission
\NewEnviron{hidden}{} %
\else
\NewEnviron{hidden}{\color{gray!60} \BODY} %
\fi

\ifdefined\submission
\usepackage[backgroundcolor=orange!20, textcolor={Dark Ruby Red}, textsize=tiny, disable]{todonotes}
\else
\usepackage[backgroundcolor=orange!20, textcolor={Dark Ruby Red}, textsize=tiny]{todonotes}
\fi

\ifdefined\submission
\usepackage[final,commandnameprefix=always]{changes}
\else
\usepackage[commandnameprefix=always]{changes}
\fi

\NewEnviron{commentenv}{\todo[inline,caption={}, size=\footnotesize]{\BODY}}
\NewEnviron{sidecommentenv}{\todo[caption={}, size=\footnotesize]{\BODY}} 

\definecolor{yoshikieditcolor}{RGB}{100,230,200}
\definecolor{uezatoeditcolor}{RGB}{230,100,200}
\NewEnviron{sideyoshiki}{\todo[caption={},backgroundcolor=yoshikieditcolor, size=\tiny]{YN: \BODY}}
\NewEnviron{yoshiki}{\todo[inline,caption={},backgroundcolor=yoshikieditcolor, size=\footnotesize]{YN: \BODY}} %

\definechangesauthor[name={Yoshiki Nakamura}, color=yoshikieditcolor!80!black]{YN}
\definechangesauthor[name={Yuya Uezato}, color=uezatoeditcolor!80!black]{YU}

\definecolor{themecolor}{named}{gray}
\newcommand{\leftsp}{0.3cm}
\newtcolorbox{myquote}[1][]{
enhanced,
colback=themecolor!10,
coltitle=black,
colbacktitle=themecolor!3,
arc=0pt,
frame empty,
overlay={%
\begin{scope}[shift={(interior.north west)}]
\begin{scope}[shift={(interior.north west)}]
\filldraw [themecolor!60] (\leftsp,0.1cm) -- ++(-0.075cm,-0.15cm) -- ++(0,-0.06cm) -- ++(0.08cm,0) -- ++(0,0.06cm) -- ++(0.055cm,0.155cm) --cycle;
\filldraw [themecolor!60] ($(\leftsp,0.1cm)+(0.2cm,0)$) -- ++(-0.075cm,-0.15cm) -- ++(0,-0.06cm) -- ++(0.08cm,0) -- ++(0,0.06cm) -- ++(0.055cm,0.155cm) --cycle;
\end{scope}
\end{scope}
},
#1
}
\renewcommand{\epsilon}{\varepsilon}
\renewcommand{\phi}{\varphi}

\newcommand{\proofcasebase}[1]{\colorbox{black!10}{\vphantom{()}\hspace{.2em}\textsf{#1}\hspace{.2em}}~}
\newcommand{\proofcase}[1]{\noindent\setlength{\fboxsep}{0pt}\proofcasebase{#1}}
\newcommand{\subproofcase}[1]{\noindent\setlength{\fboxsep}{0pt}\quad\proofcasebase{#1}}

\newcommand{\const}[1]{\mathsf{#1}}

\newcommand{\defeq}{\coloneq}

\newcommand{\mathsc}[1]{\text{\normalfont\textsc{#1}}}

\newcommand{\eps}{\varepsilon}
\newcommand{\asympO}{O}
\newcommand{\asympo}{o}

\newcommand{\asympOmega}{\Omega}
\newcommand{\asympTheta}{\Theta}
\knowledgenewrobustcmd{\poly}{\cmdkl{\operatorname{poly}}}
\knowledgenewrobustcmd{\tower}{\cmdkl{\operatorname{tower}}}
\knowledgenewrobustcmd{\matrixexponent}{\cmdkl{\omega}}
\knowledgenewrobustcmd{\matrixexponentupperbound}{\cmdkl{2.3714}}

\DeclarePairedDelimiter{\set}{\{}{\}}
\knowledgenewrobustcmd{\nat}{\mathbb{\cmdkl{N}}}
\knowledgenewrobustcmd{\card}[1]{\cmdkl{|}#1\cmdkl{|}}

\DeclarePairedDelimiter{\tuple}{(}{)}
\knowledgenewrobustcmd{\range}[1]{\cmdkl{[}#1\cmdkl{]}}
\knowledgenewrobustcmd{\ith}[1]{\cmdkl{[}#1\cmdkl{]}}

\knowledgenewrobustcmd{\GIP}[1]{\cmdkl{\langle}#1\cmdkl{\rangle}} %

\NewDocumentCommand\graph{O{1}}{%
    \ifcase#1 undefined
    \or G
    \or H
    \else undefined \fi
}

\knowledgenewrobustcmd{\sig}{\cmdkl{\const{S}}} %
\knowledgenewrobustcmd{\ordinalomega}{\cmdkl{\infty}} 
\knowledgenewrobustcmd{\arity}[1]{\cmdkl{\operatorname{ar}}_{#1}}
\knowledgenewrobustcmd{\sigarity}[1]{\cmdkl{\operatorname{ar}}_{#1}}

\NewDocumentCommand\struc{O{1}}{%
    \ifcase#1 undefined
    \or \mathbb{A}
    \or \mathbb{B}
    \else undefined \fi
}
\knowledgenewrobustcmd{\univ}[1]{\cmdkl{\mathcal{U}}_{#1}} %

\knowledgenewrobustcmd{\strucn}[1]{\cmdkl{n}_{#1}} %
\knowledgenewrobustcmd{\strucm}[1]{\cmdkl{m}_{#1}} %

\DeclarePairedDelimiter{\sem}{\llbracket}{\rrbracket}
\NewDocumentCommand\val{O{1}}{%
    \ifcase#1 undefined
    \or I
    \else undefined \fi
}

\NewDocumentCommand\fml{O{1}}{%
    \ifcase#1 undefined
    \or \varphi
    \or \psi
    \or \rho
    \else undefined \fi
}
\knowledgenewrobustcmd{\fmltrue}{\cmdkl{\const{T}}}
\knowledgenewrobustcmd{\fmlfalse}{\cmdkl{\const{F}}}
\knowledgenewrobustcmd{\fmlneq}{\mathbin{\cmdkl{\neq}}}

\knowledgenewrobustcmd{\fmlsig}[1]{\cmdkl{\const{S}_{#1}}}
\knowledgenewrobustcmd{\fmlarity}[1]{\cmdkl{\operatorname{ar}}_{#1}}
\knowledgenewrobustcmd{\FV}{\cmdkl{\operatorname{FV}}}
\knowledgenewrobustcmd{\vars}{\cmdkl{\mathbf{Var}}}

\knowledgenewrobustcmd{\len}[1]{\cmdkl{|}#1\cmdkl{|}} %
\knowledgenewrobustcmd{\lendag}[1]{\cmdkl{|}#1\cmdkl{|}_{\mathrm{DAG}}}

\knowledgenewrobustcmd{\qw}{\cmdkl{\operatorname{qw}}}

\knowledgenewrobustcmd{\subw}{\cmdkl{\operatorname{subw}}}

\NewDocumentCommand\fmlclass{O{1}}{%
    \ifcase#1 undefined
    \or \mathcal{F}
    \else undefined \fi
}
\knowledgenewrobustcmd{\sigres}[1]{\cmdkl{[}#1\cmdkl{]}}

\knowledgenewrobustcmd{\FO}{\ensuremath{\mathsf{\cmdkl{FO}}}}
\knowledgenewrobustcmd{\kFO}[1]{\ensuremath{\mathsf{\cmdkl{FO}}^{#1}}}
\knowledgenewrobustcmd{\PNFkFO}[1]{\ensuremath{\cmdkl{\mathsf{PNF} \logicdash \mathsf{FO}}^{#1}}}

\knowledgenewrobustcmd{\kPP}[1]{\ensuremath{\mathsf{\cmdkl{PP}}^{#1}}}
\knowledgenewrobustcmd{\PNFkPP}[1]{\ensuremath{\cmdkl{\mathsf{PNF} \logicdash \mathsf{PP}}^{#1}}}

\knowledgenewrobustcmd{\kEP}[1]{\ensuremath{\mathsf{\cmdkl{EP}}^{#1}}}
\knowledgenewrobustcmd{\PNFkEP}[1]{\ensuremath{\cmdkl{\mathsf{PNF} \logicdash \mathsf{EP}}^{#1}}}

\knowledgenewrobustcmd{\kPFO}[1]{\ensuremath{\mathsf{\cmdkl{PFO}}^{#1}}}
\knowledgenewrobustcmd{\CoR}{\ensuremath{\cmdkl{\mathrm{CoR}}}}
\knowledgenewrobustcmd{\CoRtwo}{\ensuremath{\cmdkl{\mathrm{CoR}^{2}}}}
\knowledgenewrobustcmd{\CoRtwostar}{\ensuremath{\cmdkl{\mathrm{CoR}^{2, \star}}}}
\knowledgenewrobustcmd{\PPCoR}{\ensuremath{\cmdkl{\mathrm{PPCoR}}}}

\NewDocumentCommand\term{O{1}}{%
    \ifcase#1
        undefined
    \or t
    \or s
    \or u
    \else undefined
    \fi
}
\knowledgenewrobustcmd{\union}{\mathbin{\cmdkl{+}}}
\knowledgenewrobustcmd{\intersection}{\mathbin{\cmdkl{\cap}}}
\knowledgenewrobustcmd{\compo}{\mathbin{\cmdkl{;}}}

\knowledgenewrobustcmd{\cortop}{\cmdkl{\top}}
\knowledgenewrobustcmd{\corz}{\cmdkl{\bot}}
\knowledgenewrobustcmd{\corid}{\cmdkl{\mathsf{id}}}

\knowledgenewrobustcmd{\kC}[1]{\ensuremath{\mathsf{\cmdkl{C}}}^{#1}}
\knowledgenewrobustcmd{\BMLP}{\ensuremath{\cmdkl{\mathcal{ML}^{\star}}}}
\knowledgenewrobustcmd{\BMLPwithC}{\ensuremath{\cmdkl{\mathcal{ML}^{\star, \complement}}}}
\knowledgenewrobustcmd{\Atoms}{\cmdkl{\mathit{At}}}
\knowledgenewrobustcmd{\binrep}{\cmdkl{\operatorname{bin}}}
\knowledgenewrobustcmd{\pathfml}{\cmdkl{\theta}}%
\knowledgenewrobustcmd{\pathfmltree}{\cmdkl{\theta}}%
\knowledgenewrobustcmd{\hyperedgefml}{\cmdkl{\mathcal{E}}}
\knowledgenewrobustcmd{\DgOne}{\cmdkl{D_{\mathrm{g}}}}%
\knowledgenewrobustcmd{\DnOne}{\cmdkl{D_{\mathrm{n}}}}
\knowledgenewrobustcmd{\DgTwo}{\cmdkl{D_{\mathrm{g}}}}%
\knowledgenewrobustcmd{\DnTwo}{\cmdkl{D_{\mathrm{n}}}}
\knowledgenewrobustcmd{\HOM}{\cmdkl{\mathrm{HOM}}}
\knowledgenewrobustcmd{\MLLSW}{\ensuremath{\cmdkl{\mathcal{ML}^{\lnot, \cap, \cup, -, \mathit{id}}}}}

\knowledgenewrobustcmd{\dia}[1]{\cmdkl{\langle}#1\cmdkl{\rangle}}
\knowledgenewrobustcmd{\bo}[1]{\cmdkl{[}#1\cmdkl{]}}
\knowledgenewrobustcmd{\windowdia}[1]{\cmdkl{\langle\mkern-1mu\lvert}#1\cmdkl{\rvert\mkern-1mu\rangle}}
\knowledgenewrobustcmd{\windowbo}[1]{\cmdkl{[\mkern-1mu\lvert}#1\cmdkl{\rvert\mkern-1mu]}}
\NewDocumentCommand\mlfml{O{1}}{%
    \ifcase#1 undefined
    \or \theta
    \or \lambda
    \else undefined \fi
}
\knowledgenewrobustcmd{\Gaifman}{\cmdkl{E}}

\knowledgenewrobustcmd{\Varset}{\cmdkl{\mathcal{V}}}
\knowledgenewrobustcmd{\type}{\cmdkl{\operatorname{tp}}}
\knowledgenewrobustcmd{\Realizable}{\cmdkl{\mathrm{Rlz}}}
\knowledgenewrobustcmd{\Nclass}{\cmdkl{N}}

\knowledgenewrobustcmd{\COVER}{\cmdkl{\operatorname{Co}}}
\knowledgenewrobustcmd{\Zeropos}{\cmdkl{\mathcal{Z}}}

\knowledgenewrobustcmd{\Proj}{\cmdkl{\pi}}
\knowledgenewrobustcmd{\IsV}{\cmdkl{\operatorname{IsV}}}
\knowledgenewrobustcmd{\EQ}{\cmdkl{\mathrm{EQ}}}
\knowledgenewrobustcmd{\sigEQ}{\cmdkl{\const{S}'}}
\knowledgenewrobustcmd{\Gdd}{\cmdkl{\operatorname{Gdd}}}

\knowledge{text={i.e.\@}, italic}
  | ie
  | i.e.

\knowledge{text={I.e.\@}, italic}
  | Ie

\knowledge{text={s.t.\@}, italic}
  | st
  | s.t.

\knowledge{text={e.g.\@}, italic}
  | eg
  | e.g.

\knowledge{text={E.g.\@}, italic}
  | Eg
  | E.g.

\knowledge{text={vs.\@}, italic}
  | vs

\knowledge{text={w.r.t.\@}, italic}
  | wrt
  | w.r.t

\knowledge{text={a.k.a.\@}, italic}
  | aka
  | a.k.a

\knowledge{text={w.l.o.g.\@}, italic}
  | wlog
  | w.l.o.g.

\knowledge{text={W.l.o.g.\@}, italic}
  | Wlog
  | W.l.o.g.

\knowledge{text={cf.\@}, italic}
  | cf
  | cf.

\knowledge{text={Cf.\@}, italic}
  | Cf
  | Cf.

\knowledge{text={iff}, italic}
  | iff

\knowledge{text={r.e.\@}, italic}
  | r.e.
  | re

\knowledge{text={resp.\@}, italic}
  | resp

\knowledge{notion}
| word-RAM model

\knowledge{text={\ensuremath{\text{\textsc{PSpace}}}}}
  | PSPACE
  | PSpace

\knowledge{text={\ensuremath{\text{\textsc{NP}}}}}
  | NP

\knowledge{text={\ensuremath{\text{\textsc{NExpTime}}}}}
  | NEXPTIME
  | NExpTime

\knowledge{text={\ensuremath{\text{\textsc{FPT}}}}}
  | FPT

\knowledge{text={\ensuremath{\text{\textsc{AW}[$*$]}}}}
  | AW[*]

\knowledge{notion}
| equivalence relation
| equivalence relations

\knowledge{notion}
| equivalence class
| equivalence classes

\knowledge{notion}
| element
| elements

\knowledge{notion}
| choice function

\knowledge{notion}
| vector
| vectors

\knowledge{notion}
| dimension
| dimensions
| dimensional

\knowledge{notion}
| word
| words

\knowledge{notion}
| graph
| graphs

\knowledge{notion}
| vertex
| vertices

\knowledge{notion}
| edge
| edges

\knowledge{notion}
| nowhere dense

\knowledge{notion}
| treewidth

\knowledge{notion}
| twin-width

\knowledge{notion}
| graph homomorphism problem

\knowledge{notion}
 | graph homomorphism
 | graph homomorphisms
 | homomorphism

\knowledge{notion}
 | colored subgraph isomorphism problem
 | colored (not necessarily induced) subgraph isomorphism problem

\knowledge{notion}
| DAG
| DAGs
| directed acyclic graph
| directed acyclic graphs

\knowledge{notion}
| SETH
| strong exponential time hypothesis
| Strong Exponential Time Hypothesis

\knowledgenewrobustcmd{\CNFSATProblem}[1]{\cmdkl{\textsc{CNF-SAT}}}
\knowledgenewrobustcmd{\CNFSATHypothesis}[1]{\cmdkl{\textsc{CNF-SAT}~hypothesis}}
\knowledgenewrobustcmd{\SparseCNFSATHypothesis}[1]{\cmdkl{\textsc{Sparse CNF-SAT}~hypothesis}}
\knowledgenewrobustcmd{\MAXSATProblem}[1]{\cmdkl{\textsc{MAX-SAT}}}
\knowledgenewrobustcmd{\MAXSATHypothesis}[1]{\cmdkl{\textsc{MAX-SAT}~hypothesis}}
\knowledgenewrobustcmd{\MAXxSATProblem}[1]{\cmdkl{\textsc{MAX}-\ensuremath{#1}-\textsc{SAT}}}
\knowledgenewrobustcmd{\MAXxSATHypothesis}[1]{\cmdkl{\textsc{MAX}-\ensuremath{#1}-\textsc{SAT}~hypothesis}}
\knowledgenewrobustcmd{\SparseMAXxSATProblem}[1]{\cmdkl{\textsc{Sparse MAX}-\ensuremath{#1}-\textsc{SAT}}}
\knowledgenewrobustcmd{\SparseMAXxSATHypothesis}[1]{\cmdkl{\textsc{Sparse MAX}-\ensuremath{#1}-\textsc{SAT}~hypothesis}}

\knowledge{notion}
| CNF
| conjunctive normal form

\knowledge{notion}
| DNF
| disjunctive normal form

\knowledge{notion}
| satisfiable
| satisfiability

\knowledge{notion}
 | propositional formula
 | propositional formulas
 | formula@propositional
 | formulas@propositional

\knowledge{notion}
 | variable@propositional
 | variables@propositional

\knowledge{notion}
 | clause
 | clauses
 | clause@propositional
 | clauses@propositional

\knowledge{notion}
 | literal
 | literals
 | literal@propositional
 | literals@propositional

\knowledgenewrobustcmd{\OVProblem}[1]{\cmdkl{$\mathsc{OV}_{#1}$}}
\knowledgenewrobustcmd{\OVHypothesis}[1]{\cmdkl{$\mathsc{OV}_{#1}$~hypothesis}}
\knowledge{notion}
| generalized-orthogonal
| orthogonal

\knowledge{notion}
| hyperclique problem

\knowledgenewrobustcmd{\HyperCliqueProblem}[1]{\cmdkl{\textsc{HyperClique}\ensuremath{_{#1}}}}
\knowledgenewrobustcmd{\HyperCliqueHypothesis}[1]{\cmdkl{\textsc{HyperClique}\ensuremath{_{#1}}~hypothesis}}

\knowledge{notion}
| hypergraph
| hypergraphs

\knowledge{notion}
| uniform hypergraph
| uniform hypergraphs

\knowledge{notion}
| hyperedge
| hyperedges

\knowledge{notion}
| hyperclique

\knowledge{notion}
| OV-hypergraph
| OV-hypergraphs
| uniform OV-hypergraph
| uniform OV-hypergraphs

\knowledgenewrobustcmd{\OVgraph}[1]{\cmdkl{\graph}_{#1}}

\knowledge{notion}
| OV-hyperclique problem

\knowledge{notion}
| well-formed

\knowledgenewrobustcmd{\OVHyperCliqueProblem}[1]{\cmdkl{\textsc{OV-HyperClique}\ensuremath{_{#1}}}}
\knowledgenewrobustcmd{\OVHyperCliqueHypothesis}[1]{\cmdkl{\textsc{OV-HyperClique}\ensuremath{_{#1}}~hypothesis}}
\knowledgenewrobustcmd{\OVTriangleProblem}[1]{\cmdkl{\textsc{OV}\ensuremath{\Delta_{#1}}}}
\knowledgenewrobustcmd{\OVTriangleHypothesis}[1]{\cmdkl{\textsc{OV}\ensuremath{\Delta_{#1}}~hypothesis}}

\knowledgenewrobustcmd{\ExactIPTriangleProblem}{\cmdkl{\textsc{Exact-IP}\ensuremath{\Delta}}}
\knowledgenewrobustcmd{\ExactIPHyperCliqueProblem}{\cmdkl{\textsc{Exact-IP-HyperClique}}}
\knowledgenewrobustcmd{\PartitionedOVHyperCliqueProblem}[1]{\cmdkl{\textsc{Partitioned-OV-HyperClique}\ensuremath{_{#1}}}}
\knowledgenewrobustcmd{\PartitionedOVHyperCliqueHypothesis}[1]{\cmdkl{\textsc{Partitioned-OV-HyperClique}\ensuremath{_{#1}}~hypothesis}}

\knowledgenewrobustcmd{\PartitionedOVTriangleProblem}[1]{\cmdkl{\textsc{Partitioned-OV}\ensuremath{\Delta_{#1}}}}
\knowledgenewrobustcmd{\PartitionedOVTriangleHypothesis}[1]{\cmdkl{\textsc{Partitioned-OV}\ensuremath{\Delta_{#1}}~hypothesis}}

\knowledgenewrobustcmd{\CliqueHypothesis}[1]{\cmdkl{\textsc{Clique}\ensuremath{_{#1}}~hypothesis}}

\knowledgenewrobustcmd{\MC}[1]{\ensuremath{#1}\logicdash\cmdkl{\text{\normalfont \textsc{MC}}}}
\knowledgenewrobustcmd{\CoRMC}[1]{\ensuremath{#1}\logicdash\cmdkl{\text{\normalfont \textsc{MC}}}}
\knowledgenewrobustcmd{\BMLPMC}[1]{\ensuremath{#1}\logicdash\cmdkl{\text{\normalfont \textsc{MC}}}}

\knowledge{notion}
| model checking
| model-checking problem
| first-order model-checking problem
| first-order model checking

\knowledge{notion}
| database
| databases

\knowledge{notion}
| combined complexity
| Combined complexity

\knowledge{notion}
| signature
| signatures
| relational signature
| relational signatures

\knowledge{notion}
| relation symbol
| relation symbols

\knowledge{notion}
| arity
| Arity
| arities

\knowledge{notion}
| structure
| structures
| relational structure
| relational structures

\knowledge{notion}
| universe

\knowledge{notion}
| relation
| relations

\knowledge{notion}
| valuation
| valuations

\knowledge{notion}
| truth value

\knowledge{notion}
| semantically equivalent
| semantic equivalence

\knowledge{notion}
| expressive power

\knowledge{notion}
 | structure-homomorphism problem

 \knowledge{notion}
 | structure-homomorphism
 | homomorphism@structure

 \knowledge{notion}
 | constraint satisfaction problem
 | CSP

\knowledge{notion}
| query enumeration problem

\knowledge{notion}
| variable
| variables
| variable@fo
| variables@fo

\knowledge{notion}
| first-order
| first-order formula
| first-order formulas
| first-order query
| first-order queries
| FO formula
| FO formulas
| FO query
| FO queries
| formula
| formulas
| query
| queries

\knowledge{notion}
| first-order sentence
| first-order sentences
| sentence
| sentences

\knowledge{notion}
| primitive positive
| primitive positive formulas
| primitive positive formula
| primitive positive queries
| primitive positive query
| PP
| PP queries
| PP query
| PP formulas
| PP formula
| Primitive positive

\knowledge{notion}
| conjunctive queries
| conjunctive query

\knowledge{notion}
| hyperclique query

\knowledge{notion}
| Existential positive

\knowledge{notion}
| self-joins
| self-join
| self-join-free

\knowledge{notion}
| subformula
| subformulas

\knowledge{notion}
| subterm
| subterms

\knowledge{notion}
| size
| tree-size

\knowledge{notion}
| tree-represented

\knowledge{notion}
| DAG-size

\knowledge{notion}
| DAG-represented

\knowledge{notion}
| equality
| equalities

\knowledge{notion}
| inequality
| inequalities

\knowledge{notion}
| prenex normal form
| PNF

\knowledge{notion}
| quantifier@fo
| quantifiers@fo
| quantifier
| quantifiers

\knowledge{notion}
| atom@fo
| atoms@fo
| atom
| atoms

\knowledge{notion}
| extended atom@fo
| extended atoms@fo

\knowledge{notion}
| constant
| constants

\knowledge{notion}
 | quantifier-width

\knowledge{notion}
| quantifier rank
\knowledge{notion}
| Tarski's calculus of relations
| CoR

\knowledge{notion}
| term@CoR
| terms@CoR

\knowledge{notion}
| atomic FO-formula@CoR
| atomic FO-formulas@CoR
| atomic@CoR

\knowledge{notion}
| semantics@CoR

\knowledge{notion}
| model-checking problem@CoR

\knowledge{notion}
| counting quantifiers
| counting quantifier
| Counting quantifiers

\knowledge{notion}
| type
| types

\knowledge{notion}
| realizability
| realizable

\knowledge{notion}
| box

\knowledge{notion}
| window

\knowledge{notion}
| semantics@BMLP

\knowledge{notion}
| term@BMLP
| terms@BMLP

\knowledge{notion}
| formula@BMLP
| formulas@BMLP

\knowledge{notion}
| atomic FO-formula@BMLP
| atomic FO-formulas@BMLP

\knowledge{notion}
| Gaifman graph

\knowledge{notion}
| guarded

\knowledge{notion}
| model-checking problem@BMLP

\knowledge{notion}
| DAG-@BMLP

\knowledge{notion}
| generalized inner product

\knowledge{notion}
| block
| blocks

\knowledge{notion}
| projection
| projections
| Projection

\knowledge{notion}
| guarded-pairing
| Guarded-pairing

\begin{document}

\title[A Fine-Grained Dichotomy for Bounded-Variable Query Evaluation]{%
A Fine-Grained Dichotomy for Bounded-Variable Query Evaluation}
\subtitle{The Calculus of Relations, a Boolean Modal Logic, and One-Variable Counting Logic}

\author{Yoshiki Nakamura}
\affiliation{%
  \institution{Chiba University}
  \city{Chiba}
  \country{Japan}}
\email{nakamura.yoshiki.ny@gmail.com}

\author{Yuya Uezato}
\affiliation{%
  \institution{CyberAgent, Inc.}
  \city{Tokyo}
  \country{Japan}}
\affiliation{%
  \institution{National Institute of Informatics}
  \city{Tokyo}
  \country{Japan}}
\email{uezato_yuya@cyberagent.co.jp}
\email{uezato@nii.ac.jp}

\begin{abstract}
We study the fine-grained complexity of evaluating Boolean bounded-variable first-order queries over sparse relational structures.
For every fixed $k \ge 2$, every relational signature, and every fragment between
$k$-variable primitive positive ($\mathrm{PP}^{k}$) and first-order ($\mathrm{FO}^{k}$) logic, we prove, assuming the Sparse MAX-$3$-SAT hypothesis, a dichotomy theorem for evaluation in $O(m^{k-\varepsilon})$ time,
where $m$ is the number of tuples in the input structure.
The only tractable cases fall into three families:
(1) three-variable fragments,
(2) two-variable fragments, and
(3) fragments over unary signatures.
On the hard side, for every fixed $k \ge 4$ and every $\varepsilon > 0$,
there is a fixed sentence $\varphi_\varepsilon$ in $\mathrm{PP}^{k}$
over a single binary relation symbol,
depending on $\varepsilon$ but not on the input structure,
whose evaluation cannot be performed in $O(m^{k-\varepsilon})$ time.
For the tractable cases, we show that the evaluation problem can be solved in $2^{O(|\varphi|)} \cdot m^{k-\varepsilon}$ time for some $\varepsilon > 0$.
Moreover, every tractable fragment reduces, with a $2^{O(|\varphi|)}$ blowup in formula size under DAG representations, to one of the following query languages:
(1) Tarski's calculus of relations,
(2) a new Boolean modal logic for sparse model checking, and
(3) one-variable counting logic.
 \end{abstract}

\begin{CCSXML}
<ccs2012>
<concept>
<concept_id>10003752.10003777.10003787</concept_id>
<concept_desc>Theory of computation~Complexity theory and logic</concept_desc>
<concept_significance>500</concept_significance>
</concept>
<concept>
<concept_id>10003752.10003790.10003799</concept_id>
<concept_desc>Theory of computation~Finite Model Theory</concept_desc>
<concept_significance>500</concept_significance>
</concept>
<concept>
<concept_id>10003752.10010070.10010111.10011734</concept_id>
<concept_desc>Theory of computation~Logic and databases</concept_desc>
<concept_significance>500</concept_significance>
</concept>
</ccs2012>
\end{CCSXML}

\ccsdesc[500]{Theory of computation~Complexity theory and logic}
\ccsdesc[500]{Theory of computation~Finite Model Theory}
\ccsdesc[500]{Theory of computation~Logic and databases}

\keywords{Bounded-variable logic, primitive positive queries, query languages, fine-grained complexity}
\maketitle

\nointro{\poly}
\nointro{constant}

\section{Introduction}\label{sec: introduction}

%
%
%
%
%
%
%
%
%
%
%

%
%
%

%
%
%
%

%
%
%
%
%
%
%

%
%

%

Evaluating "first-order" ($\FO$) queries
is a fundamental problem in database theory and logic.
While unrestricted $\FO$ "model checking" is "PSPACE"-hard~\cite[Theorem 4.3]{stockmeyerWordProblemsRequiring1973}, \cite{vardiComplexityRelationalQuery1982}
and remains "NP"-hard even when restricted to ""conjunctive queries""~\cite{chandraOptimalImplementationConjunctive1977},
it is solvable in polynomial time when the number of "variables" is bounded:
$k$-"variable" $\FO$ ($\kFO{k}$) queries $\fml$ can be evaluated in $\asympO_{k}(\len{\fml} \cdot n^k)$ time%
\footnote{We may write $\asympO_{x_1, x_2, \ldots, x_p}(-)$ instead of $\asympO(-)$ to explicitly denote that parameters $x_1, x_2, \ldots, x_p$ are fixed.}
on a ""database"" with $n$ elements,
by the naive dynamic programming algorithm~\cite{vardiComplexityBoundedvariableQueries1995}
that materializes a Boolean table of size $n^k$ for each "subformula".
A natural question is whether this $n^k$ dependence can be improved.

In general, the $n^k$ dependence cannot be improved even for the $k$-"variable" \reintro*\kl{primitive positive} fragment ($\kPP{k}$).
\reintro*\kPP{k} queries \cite{bulatovClassifyingComplexityConstraints2005,berkholzCompilingExistentialPositive2019} are $\kFO{k}$ queries built using only existential quantifiers ($\exists$) and conjunctions ($\land$).\footnote{%
By "conjunctive queries" we mean "primitive positive" queries in "prenex normal form".
}
Here, nested quantifiers are allowed.
For example, over a directed edge "relation" $E$, the $\kPP{3}$-"formula"
\[
\exists z (E(x,z) \land
\exists x (E(z,x) \land
\exists z (E(x,z) \land
\exists x (E(z,x) \land E(x, y)))))\]
asks whether there is a length-$5$ walk from $x$ to $y$.
Yet even in "prenex normal form" ("PNF"), for instance,
if the maximum arity of relations, denoted by $r$, is fixed at some $k > r \ge 3$,
the following query, called the $(k,r)$-""hyperclique query"",
\[%
{\textstyle \exists x_1 \cdots \exists x_k.~  \bigwedge_{1 \leq i_1 < \cdots < i_r \leq k} R(x_{i_1}, \dots, x_{i_r})}\]
cannot be evaluated in $\asympO(n^{k - \eps})$ time for any $\eps > 0$,
unless the $(k, r)$-\HyperCliqueHypothesis{}~\cite{abboudMoreConsequencesFalsifying2018,lincolnTightHardnessShortest2018}\ifarxiv{} (see \Cref{section: theorem: dichotomy dense})\fi{} fails.
Moreover, even if the maximum arity of relations is $2$, for every $k \ge 4$,
there is already a fixed non-"PNF" $\kPP{k}$-"sentence" over a single binary "relation" and many unary "relations"
which cannot be evaluated in $\asympO(n^{k-\eps})$ time for any $\eps > 0$, assuming the $(qk, q)$-\HyperCliqueHypothesis{} for some $3 \le q < k$, by \cite[Theorem 1.1]{bringmannCurrentAlgorithmsDetecting2021} (see also \Cref{section: bounded treewidth}).

The sparse setting, however, is less well understood:
several fragments of $\kFO{k}$ admit algorithms running in sub-$m^k$ time, where $m$ is the number of tuples in the input "structure".
For example, "conjunctive queries" with $k$ "variables", "ie", $\kPP{k}$ queries in "PNF", can be evaluated by worst-case optimal join algorithms \cite{ngoWorstcaseOptimalJoin2018} within the AGM bound \cite{atseriasSizeBoundsQuery2013}, which for many queries is below the naive $m^k$ bound.
Also, for $k \ge 3$, every $\kFO{k}$ query in "PNF" can be evaluated in $m^{k-1} / 2^{\asympTheta(\sqrt{\log m})}$ time \cite[Theorem 1.4]{gaoCompletenessFirstorderProperties2018}.
Nevertheless, no classification was known for full $\kFO{k}$, or even for $\kPP{k}$, without syntactic restrictions such as "PNF".

\paragraph{\textup{\textbf{Main Question}}}
\emph{Which fragments of $\kFO{k}$ can be evaluated in $f(\len{\fml}) \cdot m^{k-\eps}$ time for some $\eps > 0$ and some computable function $f$?}

In particular, to the best of our knowledge, it was not known even whether full $\kFO{k}$ queries can be evaluated in $f(\len{\fml}) \cdot m^{k-\eps}$ time for some $\eps > 0$ and some computable function $f$.

\subsection{Our contribution}
For each $k \ge 2$ and each "relational signature" $\sig$,
we provide a complete answer for every fragment between
$k$-"variable" "primitive positive" queries ($\kPP{k}\sigres{\sig}$) and "first-order" queries ($\kFO{k}\sigres{\sig}$) over $\sig$,
under standard fine-grained complexity assumptions:

\begin{theorem*}[Main Dichotomy; see \Cref{theorem: dichotomy sparse}]
Fix $k \ge 2$, a "relational signature" $\sig$, and a class $\fmlclass$ such that $\kPP{k}\sigres{\sig} \subseteq \fmlclass \subseteq \kFO{k}\sigres{\sig}$.
Then the following dichotomy holds.
\begin{itemize}
\item If $k \leq 3$, or if $\sig$ is unary, then $\fmlclass$ can be evaluated in $2^{\asympO(\len{\fml})} \cdot m^{k-\eps}$ time for some $\eps > 0$.
\item Otherwise, assuming the \SparseMAXxSATHypothesis{3}, for every $\eps>0$, there is a fixed "sentence" $\fml_{\eps} \in \kPP{k}\sigres{\sig}~(\subseteq \fmlclass)$
whose evaluation cannot be performed in $\asympO(m^{k-\eps})$ time.
\end{itemize}
\end{theorem*}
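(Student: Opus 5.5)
Since $\kPP{k}\sigres{\sig} \subseteq \fmlclass \subseteq \kFO{k}\sigres{\sig}$, it suffices to prove the upper bounds for $\kFO{k}\sigres{\sig}$ and the lower bounds for $\kPP{k}\sigres{\sig}$. The proof therefore splits into two independent halves: algorithms for the three tractable families, and a fine-grained reduction from Sparse MAX-$3$-SAT for $k \ge 4$ with a non-unary signature.

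\textbf{Tractable side.} I treat the three families separately, each by a translation (with $2^{\asympO(\len{\fml})}$ blowup, measured as DAG-size) into a query language with a fast evaluator.
\begin{itemize}
\item \emph{$k=2$.} I translate $\kFO{2}$ into one-variable counting logic $\kC{1}$ over a unary abstraction. The point is that a binary (or higher-arity) atom can be true only on the $\le m$ tuples occurring in the structure. So for each subformula $\fml(x,y)$, the pairs $(a,b)$ not covered by any tuple behave uniformly, determined by the unary types of $a$ and $b$. Quantification over $y$ then reduces to counting, over each unary type, how many non-covered partners $b$ exist. This gives near-linear time.
\item \emph{$k=3$.} I translate $\kFO{3}$ into the calculus of relations $\CoR$. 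Tarski--Givant style, every $\kFO{3}$ formula with at most two free variables is expressible by a $\CoR$ term. The natural translation blows up exponentially in the worst case, but only with shared subterms, i.e.\ $2^{\asympO(\len{\fml})}$ in DAG-size. Higher-arity relations are handled via the new Boolean modal logic $\BMLP$ on the incidence (Gaifman-like) structure. I then evaluate $\CoR$ terms in $m^{3-\eps}$ time (ideally $\asympO(m^{\matrixexponent/ \dots})$-type bounds). The key idea is that every term is either ``sparse'' (at most $m$-ish pairs) or the complement of a sparse relation, up to a dense part that is a union of rectangles of unary types. Composition of sparse relations is done by the heavy/light split: light middle vertices are enumerated in $\asympO(m\Delta)$, heavy ones via fast matrix multiplication on an $m/\Delta$-size core. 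Complements are handled symbolically.
\item \emph{Unary signatures.} Every $\kFO{k}$ formula collapses to a Boolean combination of counting statements over the $2^{|\sig|}$ unary types. I evaluate these through $\kC{1}$ by counting type sizes, in linear time.
\end{itemize}

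\textbf{Hard side.} For $k\ge4$ and $\sig$ containing a symbol of arity $\ge2$, I reduce from Sparse MAX-$3$-SAT. Split the $N$ variables into $k$ groups, with $M=2^{N/k}$ partial assignments per group. Encode the partial assignments as elements, and encode the clause/weight information as a structure with $m = \tilde{\asympO}(M)$ tuples over a single binary relation, using gadget paths. Then a fixed non-PNF $\kPP{k}$ sentence, built by variable reuse as in the walk example of the introduction, checks the existence of a $k$-tuple of partial assignments jointly satisfying at least the target number of clauses. Each 3-clause touches at most 3 groups, so for $k\ge4$ it can be checked with $\le 3$ variables bound at a time while the remaining ones are held. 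The weight threshold is handled by guessing a split of the counts into fixed precision buckets; this is where the dependence of $\fml_\eps$ on $\eps$ enters. An $\asympO(m^{k-\eps})$ algorithm would then give a $2^{(1-\eps/k)N}$ algorithm, contradicting the hypothesis. Unary predicates are simulated by binary-relation gadgets such as self-loops and pendant paths.

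\textbf{Main obstacle.} I expect the hardest step to be encoding sparse clause-counting with a \emph{fixed} $\kPP{k}$ sentence over a \emph{single} binary relation while keeping $m$ near-linear in $M$. That I would first attempt via an intermediate problem (a partitioned OV-hyperclique variant) whose hyperedge checks are local to $\le 3$ groups.
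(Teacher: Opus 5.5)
\textbf{Gap 1: the hardness half is missing the idea that makes the reduction sparse.} Suppose the elements are the $M=2^{N/k}$ partial assignments. Then the number of satisfied clauses spanning groups $i,j,l$ is a function of a triple of partial assignments, i.e.\ a ternary relation with up to $M^3$ tuples ($M^2$ for two-group clauses). Nothing in your sketch stores this in $\tilde O(M)$ tuples.

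Your weight-threshold mechanism also fails. A split of counts ranging up to $cN$ cannot be guessed inside a sentence fixed independently of $N$, and approximate bucketing does not preserve the exact MAX-SAT value. Alman and Vassilevska Williams do this guessing outside the sentence, by producing many instances, and land in $(k,r)$-OV-hyperclique with $d\le c\log n$.

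You name that intermediate problem at the end, but the hard part remains: expressing $r$-wise generalized orthogonality by a fixed $\mathrm{PP}^k$ sentence over $n^{1+o(1)}$ tuples, with no $\forall$ to range over coordinates. The paper (\Cref{lemma: new new partitioned binary hyperclique hypothesis}) does it as follows:
\begin{itemize}
\item It splits the $d$ coordinates into $B$ blocks of length $L=d/B$.
\item It adds a witness element $(b,P)$ for every choice function $P\colon[L]\to[r]$, with $E_j((b,P),v)$ iff $P^{-1}(j)$ lies in the zero set of block $b$ of $v$.
\item The sentence reuses the spare variable $x_{\mu(S)}$ as the witness in each conjunct $\exists x_{\mu(S)}\,(U_b(x_{\mu(S)})\land\bigwedge_j E_j(x_{\mu(S)},x_{S[j]}))$.
\end{itemize}
This yields $m=O(r^{L+1}n)=O(n^{1+\gamma})$ with $\gamma=(c/B)\log r$. $B$ must be chosen after $\varepsilon$ and $c$ so that $k\gamma<\varepsilon/2$. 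That choice, not precision buckets, is the source of the $\varepsilon$-dependence of $\varphi_\varepsilon$.

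Also, self-loops are unusable as markers in $\mathrm{PP}$: a looped vertex receives a homomorphism from every pattern, so path encodings collapse onto it. The paper instead marks original and dummy elements by closed walks of lengths $2,3,5$ in a layered gadget, where short closed walks cannot leave their marker cycle.

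\textbf{Gap 2: the $k=3$ case does not handle atoms of arity $3$.} The statement covers arbitrary $\sigma$, so this case is required. After projecting away repeated variables, an atom $R(x,y,z)$ with distinct variables has no $\mathrm{CoR}$ counterpart. The incidence encoding $\exists w\,(R^{[1]}(w,x)\land R^{[2]}(w,y)\land R^{[3]}(w,z))$ costs a fourth variable, which breaks the three-variable bound the Tarski--Givant translation needs. $\mathrm{ML}^{\star}$, a two-variable logic, cannot repair this. The paper's guarded-pairing (\Cref{lemma: guarded-pairing}) represents a guarded pair $(x_1,x_2)$ by one variable ranging over tuple-vertices with fixed positions $(j_1,j_2)$, and turns atoms containing an unguarded pair into false. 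This keeps the formula in $\mathrm{FO}^3$ and keeps $n,m$ in $O(m)$.

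\textbf{Your sparse $\mathrm{CoR}$ evaluator rests on a false claim.} For sparse $R$, the term $R;R^{\smile}$ can be a disjoint union of $\sqrt m$ cliques of size $\sqrt m$. That has $\Theta(m^{1.5})$ pairs and is neither sparse, nor co-sparse, nor a union of type rectangles. You do not need this machinery: after deleting all but three isolated elements, $n=O(m)$ on a binary signature. Dense bottom-up evaluation with Boolean matrix multiplication then gives $2^{O(|\varphi|)}m^{\omega}$, which the paper shows cannot be improved to $m^{\omega-\varepsilon}$ under Sparse MAX-$2$-SAT.

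\textbf{Your $k=2$ argument is a sound alternative.} You note that non-guarded pairs depend only on one-variable subformula values, and decide $\exists y$ by per-type counts minus guarded partners. This is essentially how the paper evaluates its window modality. It is an algorithm rather than a translation into $\mathrm{C}^1$, since the correction depends on each element's Gaifman neighbourhood.
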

Using the same reductions,
we also obtain a dense analogue of the dichotomy under certain assumptions (\Cref{sec: dichotomy dense});
\ifthenelse{\boolean{arxiv}}{%
see \Cref{section: theorem: dichotomy dense}}{%
see the extended version~\cite{nakamuraUezatoDichotomyBoundedVariable2026arxiv}} for details.

Interestingly, every tractable bounded-variable fragment in our dichotomies
collapses to one of the three query languages that admit efficient evaluation algorithms,
namely $\CoR$, $\BMLP$, or $\kC{1}$:
\begin{itemize}
    \item $\CoR$: the calculus of relations \cite{tarskiCalculusRelations1941}, which is in $\asympO(\len{\fml} \cdot n^{\matrixexponent})$ time and captures the $k=3$ case,
    \item $\BMLP$: a new Boolean modal logic inspired by the modal logic $\MLLSW$ of Lutz, Sattler, and Wolter \cite{lutzModalLogicTwoVariable2001}, which is in $\asympO(\len{\fml} \cdot m)$ time and captures the $k=2$ case, and
    \item $\kC{1}$: one-variable counting logic, which is in $\asympO(\len{\fml} \cdot n)$ time and captures the unary case.
\end{itemize}
Here, $\intro*\matrixexponent < \intro*\matrixexponentupperbound$ \cite{almanMoreAsymmetryYields2025} is the exponent of matrix multiplication.
Our dichotomy applies to arbitrary $\kFO{k}$-"formulas":
the input "sentence" need not be in "PNF", nor come from any fixed "query" pattern.

The main technical contributions are as follows:
\begin{itemize}
\item
We give a sparsity-preserving reduction for hardness of nested bounded-"variable" queries.
In the dense setting, Bringmann and Slusallek \cite{bringmannCurrentAlgorithmsDetecting2021} already prove hardness with one binary "relation" and enough unary "relations" by reducing from \HyperCliqueProblem{},
but their reduction does not by itself yield the sparse $m^k$ barrier; see \Cref{remark: extending to dense hardness}.
We instead reduce from \SparseMAXxSATProblem{(k-1)} (whose hardness follows from the \SparseMAXxSATHypothesis{3}) via \OVHyperCliqueProblem{}.
Moreover, we encode unary and binary "relations" into a single binary "relation" (\Cref{lemma: reduction monadic and binary predicates to single binary predicate}),
showing that one binary "relation" already suffices for the hard side (\Cref{theorem:sparse-k4,theorem:sparse-k3}).

\item
For the sparse $k = 3$ case, even if the maximum "arity" is unbounded, we can still obtain a $2^{\asympO(\len{\fml})} \cdot m^{\matrixexponent}$-time algorithm (which is tight under the \SparseMAXxSATHypothesis{2}; \Cref{theorem:sparse-k3}).
To this end, we introduce a \emph{"guarded-pairing"} translation (\Cref{lemma: guarded-pairing}), which reduces ternary "relations" to binary "relations" without increasing the number of "variables"
while preserving the sparse input size.
While a similar reduction exists for the "PNF" case by fixing "variables" to "constants" \cite[Lemma 7.1]{gaoCompletenessFirstorderProperties2018}, it does not extend to nested quantifiers.

\item
We design $\BMLP$, a new Boolean modal logic, inspired by \cite{lutzModalLogicTwoVariable2001}, that serves as a normal form for sparse $\kFO{2}$ model checking.
Unlike $\MLLSW$ \cite{lutzModalLogicTwoVariable2001}, $\BMLP$ has no term complement, so it admits a straightforward bottom-up $\asympO(\len{\fml} \cdot m)$-time evaluation algorithm.
A linear-time algorithm was previously known only for $\kFO{2}$ in "PNF" \cite[Lemma 9.1]{gaoCompletenessFirstorderProperties2018};
it is not bottom-up and does not extend to nested quantifiers.
\end{itemize}

\begin{remark}
To obtain algorithmic bounds, $\CoR$, $\BMLP$, and $\kC{1}$ are used as normal forms:
$\CoR$ for $\kFO{3}$, $\BMLP$ for $\kFO{2}$, and $\kC{1}$ for unary $\kFO{k}$.
Once normalized, evaluation is simple and has only linear dependence on the translated "formula" size.
This separation also simplifies the correctness proofs.
For example, a direct approach within $\kFO{2}$ requires careful normalization of sparse/co-sparse cases ("cf" \cite[Lemma 9.1]{gaoCompletenessFirstorderProperties2018}),
whereas $\BMLP$ makes the sparse/co-sparse distinction syntactically explicit.%
\end{remark}

\subsection{Related work}\label{sec:related work}

\subsubsection{$\kPP{k}$ query evaluation and the "structure-homomorphism problem" for bounded "treewidth"}\label{section: bounded treewidth}
The ""structure-homomorphism problem"", equivalently the ""constraint satisfaction problem"" (\reintro*\kl{CSP}) \cite{federComputationalStructureMonotone1998,kolaitisConjunctiveQueryContainmentConstraint2000}, asks whether, given two "structures" $\struc[1]$ and $\struc[2]$ over $\sig$,
there is a "homomorphism" from $\struc[1]$ to $\struc[2]$.
Evaluating a fixed $\kPP{k}$-"sentence" can be viewed as
the "structure-homomorphism problem" from a fixed source "structure" of ""treewidth"" at most $k-1$.
This follows from the standard correspondence between $\kPP{k}$-"sentences" and "structures" of "treewidth" at most $k-1$ \cite[Remark~5.8]{kolaitisConjunctiveQueryContainmentConstraint2000}, \cite[Lemma 1]{dalmauConstraintSatisfactionBounded2002}.

The $\asympO(n^{k - \eps})$-hardness in the dense setting has already been shown even for the ""colored subgraph isomorphism problem"" \cite[Theorem 1.1]{bringmannCurrentAlgorithmsDetecting2021} (more precisely, \cite[Lemma 21 and Theorem 22 (Part 1)]{bringmannCurrentAlgorithmsDetecting2021arXiv}), a special case of the "structure-homomorphism problem" with a single binary "relation" and as many unary "relations" as there are "vertices" in the pattern "graph":
\phantomintro{graph homomorphism}Given a pattern "graph" $\graph[1]$, a "graph" $\graph[2]$, and a coloring function $c \colon V(\graph[2]) \to V(\graph[1])$, the task is to decide whether there is an injective%
\footnote{Observe that any "graph homomorphism" that preserves the coloring is injective, so we can remove the injectivity here.} "graph homomorphism" $f \colon V(\graph[1]) \to V(\graph[2])$ that preserves the coloring: $c(f(v)) = v$ for all $v \in V(\graph[1])$.
Using the correspondence with $\kPP{k}$-"sentences", this can be restated as follows ("cf" \Cref{corollary: dense hardness}):
\begin{proposition}[Cor.\@ of {\cite[Lemma 21 and Theorem 22 (Part 1)]{bringmannCurrentAlgorithmsDetecting2021arXiv}}]\label{proposition: dense hardness}
Fix $k \ge 4$.
Assuming the $(qk, q)$-\HyperCliqueHypothesis{} for some $3 \le q < k$,
there is a fixed $\kPP{k}$-"sentence" $\fml$ with sufficiently many unary "relations" and a single binary "relation" such that for any $\eps > 0$, the "model-checking problem" cannot be solved in $\asympO(n^{k - \eps})$ time.
\end{proposition}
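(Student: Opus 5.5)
This is essentially a corollary of Bringmann and Slusallek's result. The plan is to invoke their hardness theorem for colored subgraph isomorphism on a suitable fixed pattern graph, and then transfer it to model checking of a fixed $\kPP{k}$-sentence through the treewidth correspondence.

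\emph{Step 1: the pattern graph and its hardness.} From \cite[Lemma 21 and Theorem 22 (Part 1)]{bringmannCurrentAlgorithmsDetecting2021arXiv} I take, for fixed $k \ge 4$ and some $3 \le q < k$, a fixed pattern graph $\graph[1]$ with the following two properties.
\begin{itemize}
\item Its treewidth is at most $k-1$.
\item Colored subgraph isomorphism for $\graph[1]$ on $n$-vertex host graphs has no $\asympO(n^{k-\eps})$ algorithm for any $\eps > 0$, unless the $(qk,q)$-\HyperCliqueHypothesis{} fails.
\end{itemize}
Concretely, their Lemma 21 supplies a graph whose treewidth and "clique-embedding power" both equal $k-1$ (and the relevant parameter is $q$), and their Theorem 22 (Part 1) turns this into the conditional lower bound. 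So the only task is to read off these parameters correctly.

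\emph{Step 2: encoding as a $\kPP{k}$-sentence.} Fix a tree decomposition of $\graph[1]$ of width at most $k-1$. By the standard correspondence \cite[Remark~5.8]{kolaitisConjunctiveQueryContainmentConstraint2000}, \cite[Lemma 1]{dalmauConstraintSatisfactionBounded2002}, the canonical conjunctive query of the colored structure of $\graph[1]$ is equivalent to a $\kPP{k}$-sentence $\fml$. I will make this explicit to ensure the variable reuse is sound.
\begin{itemize}
\item The canonical query is $\exists (x_v)_{v\in V(\graph[1])} \bigwedge_{v} P_v(x_v) \wedge \bigwedge_{uv\in E(\graph[1])} E(x_u,x_v)$. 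Here $E$ is the single binary relation and the unary relations $P_v$, one per pattern vertex, encode the coloring $c$.
\item Root the decomposition. Assign each pattern vertex one of $k$ variable names so that vertices sharing a bag get distinct names.
\item Build $\fml$ bottom-up: at each node, existentially quantify the vertices introduced at that node, conjoined with the atoms covered there and with the subformulas of the children.
\end{itemize}
Since every pattern vertex appears in a connected subtree of bags, reusing a variable name never captures a still-needed variable. Hence $\fml$ holds in the structure $(V(\graph[2]), E^{\graph[2]}, (c^{-1}(v))_v)$ iff there is a color-preserving homomorphism, which is automatically injective.

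\emph{Step 3: transfer of the lower bound.} The structure above is built in linear time from $(\graph[2], c)$ and has $n = |V(\graph[2])|$ elements. Therefore an $\asympO(n^{k-\eps})$ model-checking algorithm for the fixed $\fml$ would give an $\asympO(n^{k-\eps})$ algorithm for colored subgraph isomorphism for $\graph[1]$, contradicting Step 1. Note that $\fml$ depends only on $k$ (via $\graph[1]$ and the choice of $q$), not on $\eps$, as the statement requires.

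The main obstacle is Step 1: checking that Bringmann and Slusallek's result really yields the exponent $k$, i.e. a pattern of treewidth $k-1$ whose lower bound is exactly $n^{k-o(1)}$. The translation itself (Step 2) is routine, with only the variable-recycling argument needing care.
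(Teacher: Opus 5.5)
Your proposal is correct and follows the paper's route: the paper gets this proposition by citing Bringmann--Slusallek's hardness of colored subgraph isomorphism for a fixed pattern of treewidth at most $k-1$, then transferring it through the standard correspondence between $\mathsf{PP}^{k}$-sentences and treewidth-$(k-1)$ structures, which your Step~2 makes explicit correctly, including the variable-recycling argument. One slip to fix: in Step~1 the pattern needs treewidth $k-1$ but clique-embedding power $k = \mathrm{tw}+1$, since that is what gives the bound $n^{k-o(1)}$ rather than $n^{k-1-o(1)}$; your parenthetical says both equal $k-1$, although your bullet points state the right requirement.
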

\begin{remark}\label{remark: extending to dense hardness}
The reduction of \cite[Lemma 21]{bringmannCurrentAlgorithmsDetecting2021arXiv}
does not seem to yield a tight lower bound in the sparse setting, "cf" \Cref{theorem:sparse-k4,theorem:sparse-k3}.
Indeed, the "structure" produced by that reduction has $\asympTheta(n^q)$ "elements" and, in the worst case, $\asympTheta(n^{2q - 1})$ "edges".
Equivalently, if \(N=\asympTheta(n^q)\) is the number of "elements", then the number of "edges" can be \(\asympTheta(N^{2-1/q})\), not \(\asympO(N)\).
Thus, this dense lower bound does not yield the sparse $m^{k}$ barrier for bounded "treewidth".
Our reduction preserves sparsity.
\end{remark}

\begin{remark}\label{remark: Gao and Impagliazzo}
\Cref{proposition: dense hardness} highlights a fundamental obstacle to extending
the known algorithms for $\PNFkFO{k}$ to full $\kFO{k}$.
Williams' algorithm in \cite[Corollary 1.3]{williamsFasterDecisionFirstorder2014} exploits the "PNF" structure,
in particular the ability to fix outer "variables" and reduce the remaining formula
to a lower-variable instance.
The lower bound of \Cref{proposition: dense hardness} shows that this strategy cannot yield an
$\asympO(n^{k-3+\matrixexponent})$-time algorithm for full $\kFO{k}$ when
$k \ge 4$, under the $(qk, q)$-\HyperCliqueHypothesis{} for some $3 \le q < k$.
This also explains why the analogous bound claimed for full $\kFO{k}$ in
\cite[Theorem~7]{gaoFineGrainedComplexityStrengthenings2019}, \cite[Theorem~14]{gaoFineGrainedComplexityProblems2019} would require
additional ideas beyond the "PNF"-based approach.
\end{remark}
We slightly refine \Cref{proposition: dense hardness} by removing the need for unary "relations" (\Cref{corollary: dense hardness}), using our encoding into a single binary "relation" (\Cref{lemma: reduction monadic and binary predicates to single binary predicate}).
As a corollary, for every $k \ge 4$, under the same hypothesis, there is a directed uncolored "graph" $\graph[1]$ of "treewidth" at most $k-1$ such that
the fixed-pattern ""graph homomorphism problem"" $\intro*\HOM(\graph[1], -)$,
the problem of deciding, given a directed "graph" $\graph[2]$, whether there is a "homomorphism" from $\graph[1]$ to $\graph[2]$,
cannot be solved in $\asympO(n^{k - \eps})$ time for any $\eps > 0$,
which may be of independent interest as a purely graph-theoretic problem; see \Cref{corollary: dense hardness graph}.

\subsubsection{\texorpdfstring{$\FO$}{FO} model checking on restricted "structure" classes}
While we consider the "model-checking problem" for $\kFO{k}$ on general "structures",
the "model-checking problem" for $\FO$ can be solved more efficiently on restricted classes of "structures", "eg",
in $\asympO_{\fml}(n)$ time on "words" (via finite automata \cite{buchiWeakSecondorderArithmetic1960,elgotDecisionProblemsFinite1961}),
in $\asympO_{\fml,\eps}(n^{1+\eps})$ time for any $\eps > 0$ on ""nowhere dense"" "graphs" \cite{groheDecidingFirstOrderProperties2017}, and
in $\asympO_{\fml}(n)$ time on bounded ""twin-width"" "graphs" (given with a $d$-sequence) \cite{bonnetTwinwidthTractableFO2021}.
These results require a non-elementary dependence on $\len{\fml}$.
In particular, assuming "FPT" $\neq$ "AW[*]",
the $\FO$ "model-checking problem" on "words"
cannot be solved in $f(\len{\fml}) \cdot n^{\asympO(1)}$ time for any elementary function $f$
\cite[Theorem 1 (2)]{frickComplexityFirstorderMonadic2004}.
Even for bounded-"variable" fragments,
assuming the $\ell$-\intro*\CliqueHypothesis{} for every $\ell \ge 3$,
the $\kFO{3}$ "model-checking problem" on "words"
cannot be solved in $f(\len{\fml}) \cdot n^{\matrixexponent - \eps}$ time for any elementary function $f$ and any $\eps > 0$ \cite[Theorem 3 (b)]{nogamiHardnessRegularExpression2026},
via the translation of star-free extended regular expressions into $\kFO{3}$ \cite[Theorem 5.2 (1)]{stockmeyerComplexityDecisionProblems1974}.
\subsection{Organization}
In \Cref{sec: preliminaries},
we recall basic definitions.
In \Cref{section: dichotomy sparse},
we state our dichotomy theorem and give a technical overview.
In \Cref{section: reduction from OV Hyperclique to MCFO},
we prove the hardness results (\Cref{theorem:sparse-k4,theorem:sparse-k3}).
In \Cref{section: fo3,section: fo2},
we prove the upper-bound results (\Cref{theorem: fo3,theorem: fo2});
\Cref{theorem: monadic} is proved in \ifthenelse{\boolean{arxiv}}{\Cref{section: monadic}}{the extended version~\cite{nakamuraUezatoDichotomyBoundedVariable2026arxiv}}.
In \Cref{sec: conclusion}, we conclude this paper with open problems.

\section{Preliminaries}\label{sec: preliminaries}

\AP
We write $\intro*\nat = \set{0, 1, 2, \ldots}$ for the set of non-negative integers.
For $l, r \in \nat$,
let $\intro*\range{l\,..\,r} \defeq \set{i \in \nat \mid l \le i \le r}$.
In particular, $\range{r} \defeq \range{1\,..\,r}$.
Let $X$ be a %
set.
We write $\intro*\card{X}$ for its cardinality.
For $k \in \nat$, we write $\binom{X}{k}$ for the set of $k$-element subsets of $X$:
"ie", $\binom{X}{k} \defeq \set{ A \subseteq X ~\mid~ \card{A} = k }$.
For a ""vector"" ("resp", an ordered set) $v$, we write $v\intro*\ith{i}$ for the $i$-th component ("resp", $i$-th smallest element) of $v$.
\phantomintro{equivalence relation}\phantomintro{equivalence class}\phantomintro{word}%

\AP
A ""relational signature"" (henceforth, \reintro*\kl{signature}) $\intro*\sig$ is a set of ""relation symbols"".
Each $R \in \sig$ has a positive integer ""arity"", denoted by $\intro*\arity{R}$.
We write $\intro*\sigarity{\sig}$ for the supremum of the "arities" in $\sig$,
with the convention $\sigarity{\sig}=0$ when $\sig$ has no "relation symbols".
When the names of the "relation symbols" are irrelevant,
we use a tuple $(C_1, \dots, C_r)$ in $(\nat \cup \set{\ordinalomega})^r$
to denote a "signature" such that,
for every $q \in \range{r}$, $C_q$ is the allowed number of $q$-ary "relation symbols"
and no "relation symbols" of "arity" greater than $r$ are allowed.
$C_q = \intro*\ordinalomega$ means no bound.
This notation follows~\cite[Definition 1.3.1]{borgerClassicalDecisionProblem1997}.

\AP
A ""structure"" $\struc$ over $\sig$ is a tuple $\tuple{\univ{\struc}, \set{R^{\struc}}_{R \in \sig}}$,
where the ""universe"" $\intro*\univ{\struc}$ is a nonempty finite set of ""elements""
and each $R^{\struc} \subseteq \univ{\struc}^{\arity{R}}$ is an $\arity{R}$-ary ""relation"".
We use two size measures for $\struc$,
$\intro*\strucn{\struc}$ in the dense setting and $\intro*\strucm{\struc}$ in the sparse setting.
\begin{align*}
\strucn{\struc} &\defeq \card{\univ{\struc}} \; \text{(vertex-count, dense setting)}&
\strucm{\struc} &\defeq \max(1, \sum_{R\,\in\,\sig} \card{R^{\struc}}) \; \text{(tuple-count, sparse setting)}
\end{align*}
When there is no ambiguity, we simply write $n$ and $m$ for $\strucn{\struc}$ and $\strucm{\struc}$, respectively.
In the sparse setting, the input consists of $\strucn{\struc}$ together with explicit lists of relation tuples.
In the dense setting, each "relation" $R^{\struc}$ is given by its adjacency tensor, "ie", a Boolean array with $\strucn{\struc}^{\arity{R}}$ entries.

Given "structures" $\struc$ and $\struc[2]$ over the same "signature" $\sig$,
a ""structure-homomorphism"" from $\struc$ to $\struc[2]$ is a map $h : \univ{\struc} \to \univ{\struc[2]}$ such that, for every $R \in \sig$ and every tuple $(a_1, \dots, a_{\arity{R}}) \in R^{\struc}$, we have $(h(a_1), \dots, h(a_{\arity{R}})) \in R^{\struc[2]}$.

\AP
We use $x, y, z, \dots$ to denote ""variables"".
We write $\intro*\FO\intro*\sigres{\sig}$ (or simply $\FO$) for the set of ""first-order formulas"" with ""equality"" (henceforth, \reintro*\kl{formulas}) over $\sig$,
given by the following grammar:
\[
\fml[1], \fml[2] ~\Coloneqq~ R(x_1, \dots, x_{\arity{R}}) \mid x = y \mid \fml[1] \land \fml[2] \mid \fml[1] \lor \fml[2] \mid \exists x. \fml \mid \forall x. \fml \mid \lnot \fml[1]
\quad \text{($R \in \sig$)}.
\]
\AP
We write $\intro*\fmltrue$ for the "formula" $\exists x. x = x$,
write $\intro*\fmlfalse$ for $\lnot \fmltrue$, and
write $x \intro*\fmlneq y$ for $\lnot x = y$.
\phantomintro{inequality}%
\AP
\phantomintro{graph}\phantomintro{edge}%
We represent "formulas" as \emph{"directed acyclic graphs"} (""DAGs"") to allow sharing of common ""subformulas"".
The semantics are defined on the unfolded tree of the DAG, "ie", each bound "variable" is captured by a single ""quantifier@@fo"", as usual.
The "vertices" consist of "relation symbols", "variables", and logical connectives.
The ""size"" $\intro*\len{\fml}$ of a "formula" $\fml$ is defined as the number of "edges" in the DAG representation of $\fml$.
We say that $\fml$ is ""DAG-represented"" in general, and ""tree-represented"" if its DAG is a tree.
\begin{remark}\label{remark: DAG and tree representations}
Our complexity results for $\FO$ "model checking"
hold regardless of whether the input "sentences" are "DAG-represented" or "tree-represented"%
\ifarxiv, except for the low-dimensional bound in \Cref{section: theorem: FO model checking hardness one binary}.%
\else.\fi{}
This distinction affects only the "formula" translations used as intermediate steps in our algorithmic results (see also \Cref{remark: DAG representation in algorithmic bounds}).
\end{remark}
\AP
We write $\intro*\FV(\fml)$ for the set of free "variables" in $\fml$.
A ""sentence"" is a "formula" without free "variables".

\AP
For a "formula" $\fml \in \FO\sigres{\sig}$,
a "structure" $\struc$ over $\sig$, and a ""valuation"" $\val$ (a map from "variables" to $\univ{\struc}$), we write 
$\struc, \val \models \fml$ to mean that $\fml$ is true in $\struc$ under $\val$\phantomintro{truth value}\phantomintro{expressive power}.
We simply write $\struc \models \fml$ when $\fml$ is a "sentence".
For a "formula" $\fml(x_1, \dots, x_\ell)$ with pairwise distinct $x_1, \dots, x_\ell$ such that $\FV(\fml) \subseteq \set{x_1, \dots, x_\ell}$ and "elements" $a_1, \dots, a_\ell \in \univ{\struc}$,
we write $\struc \models \fml(a_1, \dots, a_\ell)$ for $\struc, \set{x_i \mapsto a_i \mid i \in \range{\ell}} \models \fml$.

\AP
We write $\intro*\vars(\fml)$ for the set of "variables" (free or bound) occurring in $\fml$.
We denote by $\intro*\kFO{k}$
the set of $\FO$ %
"formulas" using at most $k$ distinct "variables": "ie", $\card{\vars(\fml)} \leq k$.
For example, consider the following two logically equivalent "formulas":
\begin{flalign*}
\fml[1] &= \exists y_1 y_2 y_3. \bigl(E(x, y_1) \land E(y_1, y_2) \land E(y_2, y_3) \bigr), &
\fml[2] &= \exists y. \bigl(E(x, y) \land \exists x. (E(y, x) \land \exists y. E(x, y)) \bigr).
\end{flalign*}
While $\fml[1]$ is in $\kFO{4}$ but not in $\kFO{3}$, $\fml[2]$ is in $\kFO{2}$ by reusing "variables",
though it is not in "PNF".

\AP
An ""atom"" is either $R(x_1, \dots, x_{\arity{R}})$ or $x = y$.
We consider the following syntactic fragments of $\kFO{k}$.
""Primitive positive"" "queries" ($\intro*\kPP{k}$) consist of formulas built from "atoms" (including "equalities") using $\land$ and $\exists$.
""Existential positive"" "queries" ($\intro*\kEP{k}$) additionally allow $\lor$.
\phantomintro*\PNFkFO{}%
\phantomintro*\PNFkPP{}%
\phantomintro*\PNFkEP{}%
$\mathrm{PNF}$-$\fmlclass$ denotes $\fmlclass$-"formulas" in ""prenex normal form"" (\reintro*\kl{PNF}), for each $\fmlclass$.

\AP
For a class $\fmlclass \subseteq \FO\sigres{\sig}$,
the \intro*\kl{model-checking problem} $\intro*\MC{\fmlclass}$ is the following decision problem:
\begin{center}
Given a "sentence" $\fml$ in $\fmlclass$ and a "structure" $\struc$ over $\sig$, decide whether $\struc \models \fml$.
\end{center}
We use Boolean query evaluation and "sentence" "model checking" interchangeably.

We write $\intro*\fmlsig{\fml}$ for the "relation symbols" occurring in $\fml$
and write $\intro*\fmlarity{\fml}$ for its maximum "arity".
For $\MC{\kFO{k}}$,
without loss of generality, we can ensure $n \le \asympO_k(\max(\fmlarity{\fml}, 1) \, m)$.
This follows from the following proposition, which removes \emph{isolated} "elements", "ie", "elements" occurring in no tuple of any "relation";
note that $\strucm{\struc}$ counts tuples rather than isolated "elements" ("cf" \cite[p.~8]{bringmannFineGrainedAnalogueSchaefers2019}).

\ifthenelse{\boolean{arxiv}}{%
\begin{proposition}[{\Cref{section: proposition: isolated}}]%
}{%
\begin{proposition}%
}\label{proposition: isolated}
\gdef\propositionisolated{%
Given an $\kFO{k}$-"sentence" $\fml$ and a "structure" $\struc$,
we can construct a "structure" $\struc[2]$ in $\asympO_{k}(\max(\fmlarity{\fml}, 1) \, \strucm{\struc})$ time such that $\struc \models \fml$ "iff" $\struc[2] \models \fml$.
Moreover, $\strucn{\struc[2]} \le \fmlarity{\fml} \, \strucm{\struc} + k$.
}%
\propositionisolated
\end{proposition}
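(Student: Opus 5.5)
The plan is to keep all non-isolated elements and replace the isolated ones by a bounded number of representatives. Let $V \subseteq \univ{\struc}$ be the set of elements occurring in some tuple of some relation. Then $\card{V} \le \fmlarity{\fml}\,\strucm{\struc}$; we may drop relations not in $\fmlsig{\fml}$, since they do not affect $\fml$. Let $I = \univ{\struc} \setminus V$ be the isolated elements. Define $\struc[2]$ with universe $V \cup I'$, where $I' \subseteq I$ is any subset of size $\min(\card{I}, k)$, and take the relations to be exactly those of $\struc$ (restricted to $\fmlsig{\fml}$). All of them live inside $V$. If $V$ and $I'$ were both empty, the universe would be empty; this cannot happen because universes are nonempty and $k \ge 1$ whenever $\univ{\struc}$ is nonempty. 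Computing $V$ takes one scan of the tuple lists, in $\asympO(\fmlarity{\fml}\,\strucm{\struc})$ time. Selecting $I'$ takes $\asympO(k)$ time, since we can pick arbitrary indices in $\range{\strucn{\struc}}$ not marked in $V$, scanning at most $\card{V}+k$ indices. The size bound $\strucn{\struc[2]} \le \fmlarity{\fml}\,\strucm{\struc}+k$ is then immediate.

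Correctness is the main step. I would prove, by induction on subformulas $\psi$ of $\fml$ (all using variables from $\vars(\fml)$, at most $k$ of them), the following statement. For valuations $\val$ into $\univ{\struc}$ and $\val'$ into $\univ{\struc[2]}$, call them \emph{equivalent} if there is a bijection between the isolated elements used by $\val$ and those used by $\val'$ that fixes $V$ pointwise. Formally, we require $\val(x) = \val'(x)$ whenever either side lies in $V$, and $\val(x) = \val(y) \iff \val'(x) = \val'(y)$ for all $x,y \in \vars(\fml)$. The claim is that for equivalent valuations, $\struc,\val \models \psi$ iff $\struc[2],\val' \models \psi$. Atoms are the base case. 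Equalities are preserved by definition. A relational atom $R(\ldots)$ is false whenever some argument is isolated, and otherwise its arguments lie in $V$, where the two valuations agree and the relations coincide. Boolean connectives are immediate.

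The quantifier case $\exists x.\psi$ is the step to handle carefully, and it is where the bound $k$ on the number of representatives is used. Suppose $a \in \univ{\struc}$ witnesses $\psi$ for $\struc$. If $a \in V$, or $a$ equals $\val(y)$ for some $y \neq x$, we choose the corresponding element on the $\struc[2]$ side. Otherwise $a$ is a fresh isolated element. The other at most $k-1$ variables occupy at most $k-1$ elements of $I'$, so when $\card{I} \ge k$ some element of $I'$ is unused. When $\card{I} < k$, we have $I' = I$ and $a$ itself is available. The updated valuations remain equivalent, and the induction hypothesis applies. The converse direction is symmetric and easier, since $I' \subseteq I$. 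Universal quantifiers follow by duality, or by the same argument. Finally, apply the claim to the sentence $\fml$ with the empty valuation, which is trivially equivalent on both sides.
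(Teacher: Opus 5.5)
This is the paper's proof. It likewise keeps every non-isolated element plus $\min(k,\card{Z})$ isolated ones, and concludes by a pebble-game-style induction on $\fml$; you just make that induction explicit. (I write $Z$ and $Z_0$ for your $I$ and $I'$, to avoid the clash with valuations.)

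One step of your induction does not follow as written. In the quantifier case with $\card{Z}<k$, the fresh witness $a$ need not itself be available on the $\struc[2]$ side. Your equivalence only matches the equality pattern of isolated values, not the values themselves. So you can have $\val(y)=b$ but $\val'(y)=a$ for some $y\neq x$ and $a,b\in Z$. Then $a$ is unused by $\val$ but used by $\val'$, and taking $a'=a$ breaks equivalence. A count fixes this:
\begin{itemize}
\item $\val'$ takes exactly as many distinct isolated values as $\val$ on the variables other than $x$.
\item That number is at most $\min(k-1,\card{Z}-1)$, since $a\in Z$ is not among the values of $\val$.
\item Since $\card{Z_0}=\min(k,\card{Z})$ exceeds this, some element of $Z_0$ is free.
\end{itemize}
Alternatively, note that $\card{Z}\le k$ gives $\struc[2]=\struc$, and run the induction only for $\card{Z}>k$, where your counting is correct.

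The second point is the running time. It must be independent of $\strucn{\struc}$, since that is the whole purpose of the proposition. Marking $V$ in an ordinary array of length $\strucn{\struc}$ already costs $\Theta(\strucn{\struc})$ just to initialize. The output must also be relabeled onto $\range{\strucn{\struc[2]}}$, because the later algorithms allocate arrays and matrices indexed by elements. The paper handles both by coordinate compression through an uninitialized (constant-time-initializable) array of size $\strucn{\struc}$. Your algorithm needs the same device to reach $\asympO_k(\max(\fmlarity{\fml},1)\,\strucm{\struc})$.
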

Each input "formula" $\fml$ is understood over its finite active "signature" $\fmlsig{\fml}$,
consisting of the "relation symbols" that occur in $\fml$.
Accordingly, we restrict the input "structure" to the same "signature" $\fmlsig{\fml}$.
This convention also applies to notation such as $\sigres{(\ordinalomega,\ordinalomega)}$:
although the ambient class allows infinitely many "relation symbols",
each input instance is over the finite "signature" $\fmlsig{\fml}$.%

\AP
Throughout the paper, we work in the ""word-RAM model"" with word size $\asympTheta(\log n)$, following the standard setting in fine-grained complexity.

\section{Dichotomy Theorem: Precise Statement and Technical Overview}\label{section: dichotomy sparse}
\AP
We establish the following dichotomy theorem for $\kFO{k}$ queries with a "signature" $\sig$ (denoted by $\kFO{k}\sigres{\sig}$) in the sparse setting.
The hardness part of the dichotomy relies on the \intro*\SparseMAXxSATHypothesis{q} for $q \ge 3$ \cite[Hypothesis 3 for $q = 3$]{almanOVGraphsAre2020}.
Here \intro*\MAXxSATProblem{q} is the problem of computing the maximum number of "clauses" satisfiable by a single "valuation" in a given $q$-"CNF" formula; we use the convention that each such clause has at most $q$ literals.
The \SparseMAXxSATHypothesis{q} states that, for any $\eps > 0$,
there is some $c > 0$ such that \MAXxSATProblem{q} with $n$ "variables@@propositional" and $c n$ "clauses" is not solvable in $\asympO_{c}(2^{(1 - \eps) n})$ time.
With this convention, the \SparseMAXxSATHypothesis{3} implies the \SparseMAXxSATHypothesis{q} for every $q \geq 4$.
\begin{theorem}[Dichotomy in the sparse setting; see also \Cref{table: sparse dichotomy}]\label{theorem: dichotomy sparse}
Fix%
\footnote{\label{footnote:k=1}The case $k=1$ may be degenerate and is handled separately;
\ifthenelse{\boolean{arxiv}}{see \Cref{section: footnote:k=1}}{see also the extended version~\cite{nakamuraUezatoDichotomyBoundedVariable2026arxiv} for more details}.}
$k \ge 2$,
a "relational signature" $\sig$, and
a class $\fmlclass$ such that $\kPP{k}\sigres{\sig} \subseteq \fmlclass \subseteq \kFO{k}\sigres{\sig}$.
Then the following dichotomy holds:

\noindent\textbf{Tractable cases.}
$\MC{\fmlclass}$, where the input $\fml$ may be any $\fmlclass$-"sentence", is solvable in $2^{\asympO(\len{\fml})} \cdot m^{k - \eps}$ time for some $\eps > 0$ if one of the following holds:
\begin{itemize}
    \item $\sigarity{\sig} \le 1$, in which case the problem reduces to one-variable counting logic $\kC{1}$
    \ifcameraready\AP$\phantomintro*\kC{}$\phantomintro{counting quantifiers}\fi
    (this case is solvable in $\asympO(\len{\fml})$ time if $\sigarity{\sig}=0$ and in $2^{\asympO(\len{\fml})}\cdot m$ time if $\sigarity{\sig}=1$);
    \item $k = 2$ and $\sigarity{\sig} \geq 2$, in which case the problem reduces to $\BMLP$
    (this case is solvable in $2^{\asympO(\len{\fml})} \cdot m$ time); or
    \item $k = 3$ and $\sigarity{\sig} \geq 2$, in which case the problem reduces to $\CoR$
    (this case is solvable in $2^{\asympO(\len{\fml})} \cdot m^{\matrixexponent}$ time).
\end{itemize}

\noindent\textbf{Hard cases.}
In all remaining cases, equivalently when $k \geq 4$ and $\sigarity{\sig} \geq 2$,
for every $\eps>0$,
there exists %
a fixed "sentence" $\fml_{\eps}$ in $\kPP{k}\sigres{\sig}$,
depending on $\eps$ but independent of the input "structure",
whose "model-checking problem" cannot be solved in $\asympO(m^{k-\eps})$ time,
unless the \SparseMAXxSATHypothesis{3} fails.
Consequently, under this hypothesis, $\MC{\fmlclass}$ cannot be solved in $f(\len{\fml}) \cdot m^{k-\eps}$ time for any function $f$ and any $\eps > 0$.
\end{theorem}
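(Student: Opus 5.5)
The plan is to treat \Cref{theorem: dichotomy sparse} as the combination of three upper bounds and one lower bound, after first disposing of isolated elements. \Cref{proposition: isolated} lets us assume $n \le \fmlarity{\fml}\, m + k$. Since $\fmlarity{\fml}\le\len{\fml}$, every bound stated in $n$ then turns into one in $m$, with the extra $\len{\fml}$ factor absorbed into $2^{\asympO(\len{\fml})}$. The case split follows the statement.

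\textbf{Case $\sigarity{\sig}\le 1$.} Here I translate any $\kFO{k}$-sentence over unary predicates into one-variable counting logic $\kC{1}$. Over a unary signature, a structure is determined up to isomorphism by how many elements realize each atomic $1$-type. A $k$-variable formula can only compare elements through equality, so it can distinguish counts only up to threshold $k$.

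I would eliminate quantifiers from the inside out, maintaining an equivalent Boolean combination of atomic type formulas, equalities, and counting sentences $\exists^{\ge j}x.\tau(x)$ with $j\le k$. Sharing in the DAG keeps the blow-up at $2^{\asympO(\len{\fml})}$. Evaluating $\kC{1}$ is then linear: count each type once in $\asympO(m)$ time, or in $\asympO(1)$ time when $\sigarity{\sig}=0$, since then only $n$ matters.

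\textbf{Case $k=2$.} I translate $\kFO{2}$ into $\BMLP$. The guiding idea is that a binary relation restricted to two variables splits into a sparse part, namely pairs lying in some relation tuple, and a co-sparse remainder. A $\BMLP$ term describes a Boolean combination of relations without complement, and the modal operators range either over the sparse part or over the universe minus it.

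For a two-variable subformula $\psi(x,y)$, I would put it in a normal form indexed by atomic $2$-types. A type containing no positive binary atom then behaves uniformly, depending only on the unary data of $x$ and $y$ and on whether $x=y$. Quantifying $\exists y$ therefore becomes a window or box modality over positive types, plus a counting-style check on the co-sparse part. This uses $2^{\asympO(\len{\fml})}$ types, and the bottom-up evaluation of $\BMLP$ runs in $\asympO(m)$ per subformula.

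\textbf{Case $k=3$.} Relations of arity greater than $2$ are first reduced to binary ones by the guarded-pairing translation (\Cref{lemma: guarded-pairing}), which keeps three variables and $\asympO(m)$ tuples. The standard Tarski--Givant style translation of $\kFO{3}$ over binary signatures into $\CoR$ then follows, with DAG size $2^{\asympO(\len{\fml})}$.

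$\CoR$ is evaluated term by term using Boolean matrix multiplication. This costs $\asympO(n^{\matrixexponent})\le\asympO(m^{\matrixexponent})$ per term, and since $\matrixexponent<3$ we get $\eps=3-\matrixexponent>0$.

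\textbf{Hard case: $k\ge 4$, $\sigarity{\sig}\ge 2$.} This is \Cref{theorem:sparse-k4}. The plan reduces \SparseMAXxSATProblem{(k-1)} to \OVHyperCliqueProblem{}, and then to model checking a fixed nested $\kPP{k}$ sentence over one binary relation plus unary relations, with $\asympO(2^{n/(k-1)}\poly)$ tuples. The unary relations are then removed via \Cref{lemma: reduction monadic and binary predicates to single binary predicate}.

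An $\asympO(m^{k-\eps})$ algorithm would then give a $2^{(1-\eps')n}$ algorithm for \SparseMAXxSATProblem{}, contradicting the hypothesis. The final clause of the theorem follows because $\fml_\eps$ is fixed and every larger $\fmlclass$ contains it.

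The main obstacle is this sparsity-preserving hardness reduction. Unlike the construction of \cite{bringmannCurrentAlgorithmsDetecting2021}, the gadget must keep the edge count linear in the number of elements while simulating $k$-wise orthogonality checks with nested reuse of only $k$ variables.
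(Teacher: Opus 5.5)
Your decomposition matches the paper's. The theorem combines \Cref{theorem: monadic} (after \Cref{proposition: isolated}), \Cref{theorem: fo2}, \Cref{theorem: fo3}, and \Cref{theorem:sparse-k4} together with \Cref{proposition: arity monotone}. Your $k=2$ and $k=3$ sketches follow the paper's route: a complement-free modal normal form whose co-sparse part is handled by window-style counting, and guarded-pairing followed by the Tarski--Givant translation and Boolean matrix multiplication.

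Citing \Cref{theorem:sparse-k4} for the hard case is legitimate, but the mechanism you describe for it fails.

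\textbf{Arithmetic.} With $\asympO(2^{n/(k-1)}\mathrm{poly}(n))$ tuples, an $\asympO(m^{k-\eps})$ algorithm runs in time $2^{n(k-\eps)/(k-1)}$. For $\eps<1$ this exceeds $2^{n}$, so nothing is contradicted. The argument actually goes through the $(k,k-1)$-OV-hyperclique hypothesis for $N$ vectors of dimension $d\le c\log N$. The structure may therefore have only $N^{1+\gamma}$ tuples, with $\gamma$ small relative to $\eps$.

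\textbf{Overhead.} The overhead over $N$ is not $\mathrm{poly}(n)$. To certify $\langle v_{S[1]},\dots,v_{S[r]}\rangle=0$ with the single spare variable $x_{\mu(S)}$, that variable must range over choice functions $P\colon[d]\to[r]$. There are $r^{d}$ of them, which can be as large as $N^{c\log r}$.

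\textbf{The missing idea.} The key step (\Cref{lemma: new new partitioned binary hyperclique hypothesis}) splits the coordinates into $B$ blocks, with one unary predicate $U_b$ and one re-quantification of $x_{\mu(S)}$ per block. This gives $m=\asympO(r^{d/B+1}N)=\asympO(N^{1+\gamma})$ with $\gamma=(c/B)\log r$. Then $B$ is chosen after $\eps$ and $c$ so that $k\gamma<\eps/2$ (\Cref{lemma:sparse-k4}). This is exactly why $\fml_{\eps}$ depends on $\eps$. A single fixed sentence with only polynomial-in-$n$ overhead, as your sketch suggests, would settle the open question of \Cref{sec: fixed sentence}.

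You also need the instance to be well-formed. A primitive positive sentence cannot force the $k$ witnesses to be distinct vectors.

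A few bridging steps are also missing.
\begin{itemize}
\item The hard sentence is over a single binary symbol. For an arbitrary $\sig$ with $\sigarity{\sig}\ge 2$, you must pad it into a symbol of $\sig$ via \Cref{proposition: arity monotone}, mapping $E(x,y)\mapsto R(x,y,\dots,y)$. The inclusion $\kPP{k}\sigres{\sig}\subseteq\fmlclass$ alone does not give this.
\item You need that the Sparse MAX-$3$-SAT hypothesis implies the Sparse MAX-$(k-1)$-SAT hypothesis.
\item For $k=2$ with higher arities, $\BMLP$ has only unary and binary symbols. You need the projection step (\Cref{lemma: projection}), which replaces each atom by a fresh relation on its at most two distinct variables.
\item For $k=3$, \Cref{lemma: guarded-pairing} is stated only for arity at most $3$, so projection must come first there too.
\item In the monadic case, naive inside-out elimination of equality duplicates subformulas. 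Even with DAG sharing it only gives $2^{\asympO(\len{\fml}^{2})}$. The $2^{\asympO(\len{\fml})}$ bound needs the translation indexed by atomic $k$-types and memoized per subformula--type pair; for fixed $k$ there are only $k^{\asympO(k)}2^{\asympO(Ck)}$ such types.
\end{itemize}
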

\begin{table}[h]
\centering
\caption{Dichotomy for $\kPP{k}\sigres{\sig} \subseteq \fmlclass \subseteq \kFO{k}\sigres{\sig}$, where $k \ge 2$ and $\sig$ is a "relational signature".
The cyan entries correspond to tractable cases
and the orange entries correspond to hard cases in \Cref{theorem: dichotomy sparse}.
In the $k \ge 4$ cases, the \SparseMAXxSATHypothesis{(k-1)} is implied by the \SparseMAXxSATHypothesis{3}.
}
\label{table: sparse dichotomy}
\definecolor{myPink}{HTML}{FFEEDD}
\definecolor{myCyan}{HTML}{DDEEFF}

\resizebox{\textwidth}{!}{
\begin{talltblr}[
  label = none, entry = none]{
  colspec = {Q[c,1.2cm] | Q[c, 1.2cm] | Q[c, 1.4cm] | Q[c,14.cm]  },
  row{1} = {},
  column{1} = {},
  hlines, 
  vlines,
  hline{2} = {1.5pt},
  vline{2} = {1.5pt},
  vline{3} = {2-4}{0pt},
  cell{2}{2} = {r=3}{bg=myCyan},       %
  cell{2}{3} = {r=3}{bg=myCyan},       %
  cell{2}{4} = {c=1}{bg=myCyan},      %
  cell{3}{4} = {c=1}{bg=myCyan},       %
  cell{4}{4} = {c=1}{bg=myPink}, %
}
  \diagbox[width=1.6cm, height = 1.\line]{\raisebox{-.4ex}{$k$}}{\raisebox{.4ex}{$\sigarity{\sig}$}} & 0 & 1 & $\sigarity{\sig} \ge 2$ \\

  2 & {$\asympO(\len{\fml})$} \par via $\kC{1}$ \par (Thm.~\ref{theorem: monadic}) & {$2^{\asympO(\len{\fml})} \cdot m$} \par via $\kC{1}$  \par (Thm.~\ref{theorem: monadic})
    & {$2^{\asympO(\len{\fml})} \cdot m$} via $\BMLP$ (Thm.~\ref{theorem: fo2}) \\

  3 & & & {$2^{\asympO(\len{\fml})} \cdot m^{\matrixexponent}$} via CoR (Thm.~\ref{theorem: fo3}) \\
  $k \ge 4$ & & & Naive $\asympO(m^k)$ is optimal under %
  the \SparseMAXxSATHypothesis{(k-1)} (Thm.~\ref{theorem:sparse-k4}) 

\end{talltblr}
}
 \end{table}
The lower bound is non-uniform in \(\eps\); whether a single fixed $\kPP{k}$-"sentence" works for all \(\eps>0\) is left open (\Cref{sec: fixed sentence}).
Roughly speaking, \Cref{theorem: dichotomy sparse} shows that for sparse "model checking", the only exceptions to
the naive $m^k$ barrier are the unary case, the two-"variable" case, and the three-"variable" case;
from four "variables" on, a single binary "relation" already yields the full $m^k$ hardness.

\subsection{Technical overview of hardness results}\label{section: technical overview}
The hardness result of \Cref{theorem: dichotomy sparse} holds even when restricted to $\kPP{k}\sigres{(0, 1)}$, "ie", $\kPP{k}$-"sentences" with at most one binary "relation"; recall \Cref{sec: preliminaries} for the notation of $(C_1, C_2, \dots, C_r)$.
\begin{theorem}[\Cref{section: reduction from OV Hyperclique to MCFO}]
\label{theorem:sparse-k4}
\gdef\sparseFOhardnesstheorem{
Fix $k \ge 4$.
Assuming the \SparseMAXxSATHypothesis{(k-1)},
for any $\eps > 0$, there exists a $\kPP{k}\sigres{(0, 1)}$-"sentence" $\fml_{\eps}$ such that
\textup{$\MC{\set{\fml_{\eps}}}$} cannot be solved in $\asympO(m^{k - \eps})$ time.
}
\sparseFOhardnesstheorem
\end{theorem}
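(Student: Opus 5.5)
The plan is a chain of fine-grained reductions
\[
\SparseMAXxSATProblem{(k-1)} \;\longrightarrow\; \OVHyperCliqueProblem{k} \;\longrightarrow\; \MC{\set{\fml_{\eps}}},
\]
in which the last step first produces a sentence over one binary relation plus unary relations. At the end I would apply \Cref{lemma: reduction monadic and binary predicates to single binary predicate} to collapse everything into a single binary relation.

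\emph{Step 1: from MAX-SAT to partitioned hyperclique detection.} Take a \MAXxSATProblem{(k-1)} instance with $N$ variables and $cN$ clauses. Split the variables into $k$ groups of size $N/k$, and let $V_i$ be the set of $n \defeq 2^{N/k}$ partial assignments to group $i$. Every clause touches at most $k-1$ groups, so assign it to a $(k-1)$-subset $S\in\binom{\range{k}}{k-1}$ containing those groups. Guess a target vector $(t_S)_S$ of satisfied-clause counts per subset; there are only $(cN)^{k}=\poly(N)$ guesses. Under a fixed guess, "exactly $t_S$ clauses of $S$ are satisfied'' becomes a hyperedge predicate on $\prod_{i\in S}V_i$. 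I would phrase it in the OV form: each partial assignment is a $0/1$ vector over the clauses of $S$, recording which of its literals it falsifies. Together with a counting gadget, this yields a generalized-orthogonality test over $d=O(cN)$ coordinates. A $k$-clique in this $(k-1)$-uniform $k$-partite hypergraph then certifies an optimal assignment. So an $O(n^{k-\eps})$ algorithm for hyperclique detection on this family, run once per guess, gives time $2^{(1-\eps/k)N}\poly(N)$ for MAX-SAT, contradicting the hypothesis.

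\emph{Step 2: sparse encoding into a fixed $\kPP{k}$-sentence.} The hypergraph has up to $n^{k-1}$ hyperedges, so it cannot be listed explicitly. Instead, split the $d$ coordinates into blocks of $b=\delta\log n$ coordinates, which gives $d/b=O(ck/\delta)$ blocks. This number is a constant depending on $c$ and $\delta$, hence on $\eps$, and this is exactly why the sentence is $\fml_\eps$. The structure then consists of:
\begin{itemize}
\item elements for each vector in each $V_i$;
\item elements for each block-pattern $p\in\set{0,1}^b$, tagged by block index;
\item a binary relation linking each vector to its pattern in each block, contributing $O(nd/b)$ tuples;
\item a table of jointly orthogonal pattern tuples, of size $2^{b(k-1)} \le n^{\delta(k-1)}\le n$ for small $\delta$.
\end{itemize}
The total size is $m=\tilde O(n)$, so an $O(m^{k-\eps})$ algorithm gives an $O(n^{k-\eps'})$ algorithm. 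The sentence $\fml_\eps$ existentially picks $x_1,\dots,x_k$ (one per part, via unary colours) and, for every $S$ and every block $j$, asserts that the block-$j$ patterns of $\set{x_i}_{i\in S}$ lie in the orthogonality table.

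\emph{Main obstacle.} The obstacle is doing this with only $k$ variables and binary relations, while keeping the structure sparse. Once $x_1,\dots,x_k$ are all bound, no variable is free to walk to a pattern element, and the orthogonality table is $(k-1)$-ary. My first attempt would follow the treewidth-$(k-1)$ nesting idea of Bringmann--Slusallek. Do not keep all $k$ vertex-variables live simultaneously. Instead, quantify in a nested order and reuse a variable after its vector has been "compressed'' into a pattern element for the current block. The $(k-1)$-ary table would be replaced by a product-pattern element: an element representing a tuple of patterns, joined by one binary relation to each component pattern. Such an element costs one variable and is stepped through with binary atoms, so each hyperedge check becomes a bounded-depth $\kPP{k}$ gadget. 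The delicate point is to order the quantifiers so that every constraint for every $S$ and every block is checked while at most $k$ variables are live, and to verify that the intermediate elements keep the tuple count $\tilde O(n)$. Once this is in place, \Cref{lemma: reduction monadic and binary predicates to single binary predicate} removes the unary colours and all but one binary symbol with constant blow-up, which places $\fml_\eps$ in $\kPP{k}\sigres{(0,1)}$.
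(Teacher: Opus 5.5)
Your proposal has a genuine gap in Step 2, which is the heart of the proof. You never write the fixed sentence, and the obstacle you describe does not actually exist. It is not true that once $x_1,\dots,x_k$ are bound, no variable is free. Each hyperedge check concerns a set $S\in\binom{[k]}{k-1}$, so exactly one vector variable is irrelevant to that check, namely $x_{\mu(S)}$ with $\mu(S)=\min([k]\setminus S)$. That variable can be re-quantified (shadowed) inside the conjunct for $S$ and made to range over certificate elements. This is what the paper does in \Cref{lemma: new new partitioned binary hyperclique hypothesis}:
\[
\exists x_1\cdots\exists x_k.\ \bigwedge_{S}\ \bigwedge_{b=1}^{B}\exists x_{\mu(S)}.\Bigl(U_b(x_{\mu(S)})\wedge\bigwedge_{j=1}^{k-1}E_j\bigl(x_{\mu(S)},x_{S[j]}\bigr)\Bigr).
\]
Here the certificate elements of block $b$ are all choice functions $P\colon[L]\to[k-1]$, and $E_j(P,v)$ holds iff $v^{(b)}$ is $0$ at every $\ell$ with $P(\ell)=j$. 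Certificates are linked to vectors directly by binary relations, so no intermediate pattern element or extra variable is needed, and the structure has $O((k-1)^{L+1}n)$ tuples per block. Your Bringmann--Slusallek-style idea of compressing a vector into its current-block pattern and then freeing its variable does not help. Each vector variable is still needed in $k-1$ hyperedges across all $B$ blocks. The outer prefix with all $k$ vector variables live is harmless; only local shadowing inside each hyperedge conjunct is needed.

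Your product-pattern design can be completed by the same local reuse. Bind the product element $t$ to $x_{\mu(S)}$. Inside the $\ell$-th component check $\exists p.\,\mathrm{Comp}_\ell(t,p)\wedge\mathrm{Pat}_b(x_{S[\ell]},p)$, reuse some $x_{S[\ell']}$ with $\ell'\neq\ell$ as $p$; this is possible since $|S|=k-1\ge 2$. Alternatively, compose $\mathrm{Comp}_\ell$ with $\mathrm{Pat}_b^{-1}$ into a single binary relation from $t$ to vectors. With this fix your route works and gives $m=O(n)$ for fixed $B$. Your partitioned input with unary colours is also a legitimate substitute for the paper's well-formedness padding, and \Cref{lemma: reduction monadic and binary predicates to single binary predicate} then removes the colours.

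Separately, Step 1 is only a sketch. With falsification vectors, the generalized inner product over the clauses of $S$ counts unsatisfied clauses, so a prescribed count is an exact-inner-product condition. Turning it into generalized orthogonality is precisely the nontrivial part of the Alman--Vassilevska Williams reduction, which produces many OV instances rather than one instance with a counting gadget. You should invoke their Theorem~23 in its partitioned form, as the paper does via \Cref{proposition: alman VVW}.
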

$\MC{\kPP{k}\sigres{(0, 1)}}$ is reduced to $\MC{\kPP{k}\sigres{\sig}}$ for $\sigarity{\sig} \ge 2$ by replacing each "atom" $E(x, y)$ with $R(x, y, \dots, y)$ (see \Cref{section: proposition: arity monotone}).
Hence, \Cref{theorem:sparse-k4} implies the hardness part of \Cref{theorem: dichotomy sparse}.

For the hardness result, we give a reduction from \SparseMAXxSATProblem{(k-1)} to $\MC{\kPP{k}\sigres{(0, 1)}}$, as in the following chain:
\begin{align*}
&\text{\SparseMAXxSATProblem{(k-1)}}
\leadsto_{\text{by the reduction of \cite[Thm.~23]{almanOVGraphsAre2020} ("cf" \Cref{proposition: alman VVW})}}
\text{$(k,k-1)$-\OVHyperCliqueProblem{}}\\
&\leadsto_{(1)~\Cref{section: to PPk}}~ \text{$\MC{\kPP{k}\sigres{(\ordinalomega{}, k-1)}}$}
~\leadsto_{(2)~\Cref{section: to PPk one binary relation}}~ \text{$\MC{\kPP{k}\sigres{(0, 1)}}$}.
\end{align*}
Here, \OVHyperCliqueProblem{} denotes the variant of \HyperCliqueProblem{} over "OV-hypergraphs" (see \Cref{section: to OVHyperclique} for the precise definition).
(1) is the more technical part:
we reduce $(k, k-1)$-\OVHyperCliqueProblem{} to $\MC{\kPP{k}\sigres{(\ordinalomega{}, k-1)}}$,
by splitting the "dimensions" of the input "vectors" into "blocks", in the spirit of the split-and-list technique \cite{williamsNewAlgorithmOptimal2005} (see \Cref{lemma: new new partitioned binary hyperclique hypothesis,lemma:sparse-k4}).
(2) is based on the idea of encoding unary and binary "relations" into a single binary "relation",
with a slight modification due to the restriction to $\kPP{k}$ (see \Cref{lemma: reduction monadic and binary predicates to single binary predicate}).

\AP
Beyond \Cref{theorem: dichotomy sparse}, the same approach shows that the $k=3$ upper bound is tight.
For this result we use the \SparseMAXxSATHypothesis{2} of Alman and Vassilevska Williams~\cite[Hypothesis~4]{almanOVGraphsAre2020}:
for any $\eps > 0$, there exists a constant $c > 0$ such that \MAXxSATProblem{2} with $n$ "variables@@propositional" and $c n$ "clauses" cannot be solved in $\asympO_{c}(2^{(\matrixexponent/3 - \eps) n})$ time.
The following corollary shows that the $\asympO_{\fml}(m^{\matrixexponent})$ upper bound for $\kFO{3}$ (see \Cref{theorem: fo3}) is tight:
\begin{corollary}[\Cref{section: reduction from OV Hyperclique to MCFO}]\label{theorem:sparse-k3}
\gdef\sparsehardnessbyovtriangle{
Assuming the \SparseMAXxSATHypothesis{2},
for any $\eps > 0$, there exists a $\kPP{3}\sigres{(0, 1)}$-"sentence" $\fml_{\eps}$ such that
\textup{$\MC{\set{\fml_{\eps}}}$} cannot be solved in $\asympO(m^{\matrixexponent - \eps})$ time.
}
\sparsehardnessbyovtriangle
\end{corollary}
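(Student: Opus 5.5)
We mirror the $k \ge 4$ chain with $k = 3$ and $(k-1) = 2$:
\[
\text{\SparseMAXxSATProblem{2}} \leadsto \text{$(3,2)$-\OVHyperCliqueProblem{} (i.e., \OVTriangleProblem{})} \leadsto \MC{\kPP{3}\sigres{(\ordinalomega,2)}} \leadsto \MC{\kPP{3}\sigres{(0,1)}}.
\]
The first step is the reduction of Alman and Vassilevska Williams (\Cref{proposition: alman VVW}, i.e., \cite[Thm.~23]{almanOVGraphsAre2020}) applied with $k=3$. It maps a \MAXxSATProblem{2} instance with $n$ variables and $cn$ clauses to OV-triangle instances on three vertex parts of size $N = 2^{n/3}$, with vectors of dimension $d = O_c(n)$ and only subexponentially many calls. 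Under the \SparseMAXxSATHypothesis{2}, this rules out $O(N^{\matrixexponent-\eps})$-time algorithms for \OVTriangleProblem{} with $d = c' \log N$ for a suitable constant $c'$ depending on $\eps$. The dependence of $c'$ on $\eps$ is why $\fml_\eps$ must depend on $\eps$.

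Next I will instantiate the block-splitting construction (\Cref{lemma: new new partitioned binary hyperclique hypothesis,lemma:sparse-k4}) with $k=3$. Split the $d = c'\log N$ dimensions into $b$ blocks, with $b$ a constant chosen from $\eps$ and $c'$, so that each block has $(c'/b)\log N$ coordinates. For each block, the OV condition on that block can be listed as a binary relation between partial vectors, or as edge-labelled tuples. Then the ``every coordinate has some zero among $u,v,w$'' condition becomes a conjunction over blocks. Each conjunct is checked by a nested $\kPP{3}$ subformula that reuses the three variables $x,y,z$ and walks through auxiliary block-elements via binary relations. This is analogous to how the length-$5$ walk formula reuses variables.

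The choice of $b$ should make the total number of tuples $m = O(N^{1+\delta})$ for a small $\delta$. Then an $O(m^{\matrixexponent-\eps})$ algorithm would give $O(N^{(1+\delta)(\matrixexponent-\eps)}) \le O(N^{\matrixexponent-\eps/2})$ for \OVTriangleProblem{}, contradicting the previous step. The resulting sentence uses binary relations and polynomially many (in $b$, hence constantly many) unary and binary symbols. Fixing $b$ and the symbol set in terms of $\eps$ makes the sentence fixed.

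Finally, I will apply \Cref{lemma: reduction monadic and binary predicates to single binary predicate} to encode all unary and binary relations into one binary relation. This keeps the formula in $\kPP{3}$ and blows up $m$ by only a constant factor, which yields a $\kPP{3}\sigres{(0,1)}$-sentence $\fml_\eps$.

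I expect the main obstacle to be the middle step: verifying that the $k=3$ case of the split-and-list encoding actually stays within three variables while keeping $m$ near-linear in $N$. With only three variables, the per-block checks must be nested without a spare variable. For the first attempt, I would reuse the $k \ge 4$ lemma verbatim with $k=3$, noting that it uses exactly $k$ variables. The only change in the parameter accounting is to replace the target exponent $k$ by $\matrixexponent$.
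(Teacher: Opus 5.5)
Your proposal is correct and follows the paper's route: \Cref{proposition: alman VVW} for $(k,r)=(3,2)$, then the block-splitting construction of \Cref{lemma: new new partitioned binary hyperclique hypothesis} used verbatim with $k=3$, $r=2$ (it stays within $x_1,x_2,x_3$ because $x_{\mu(S)}$ is the third variable), with $B$ chosen after $\varepsilon$ and $c$ and the exponent $k$ replaced by $\omega$ in the accounting, and finally \Cref{lemma: reduction monadic and binary predicates to single binary predicate}, which needs only $k\ge 3$. One correction: the $(3,2)$ case of Alman--Vassilevska Williams is their Theorem~5, not Theorem~23, which requires $r\ge 3$; you should invoke \Cref{proposition: alman VVW} as stated, since it also converts their partitioned instances into well-formed ones, and the block construction's correctness depends on that well-formedness.
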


\subsection{Technical overview of algorithmic bounds}
\AP
On the upper-bound side,
we establish the following results%
\footnote{\label{footnote: dense fo3} In the dense setting of \Cref{theorem: fo3}, the maximum "arity" must be at most $2$.
Otherwise, a fixed query requires $n^{3-o(1)}$ time by an input-reading lower bound\ifthenelse{\boolean{arxiv}}{\ (see also \Cref{section: trivial lower bound}).%
}{.}} using $\CoR$, $\BMLP$, and $\kC{1}$.
The proof of \Cref{theorem: monadic}
\ifthenelse{\boolean{arxiv}}{is given in \Cref{section: monadic}}{is omitted due to the page limitation (see the extended version~\cite{nakamuraUezatoDichotomyBoundedVariable2026arxiv})};
its baseline idea of replacing "equalities" with "counting quantifiers" is the same as \cite[Theorem 5]{mintsLogicalEquationsMonadic2009}, and we give a bottom-up algorithm using "types".\ifcameraready\AP\phantomintro{type}\fi

\begin{theorem}[\Cref{section: fo3}, via $\CoR$]
\label{theorem: fo3}
\gdef\theoremfothreemodelchecking{%
In the dense setting, $\MC{\kFO{3}\sigres{(\ordinalomega,\ordinalomega)}}$ is solvable in $2^{\asympO(\len{\fml})} \cdot n^{\matrixexponent}$ time.
}
\theoremfothreemodelchecking
Furthermore, in the sparse setting, even if the maximum "arity" is unbounded,
\gdef\theoremfothreemodelcheckingsparse{%
$\MC{\kFO{3}}$ is solvable in $2^{\asympO(\len{\fml})} \cdot m^{\matrixexponent}$ time.
}%
\theoremfothreemodelcheckingsparse
\end{theorem}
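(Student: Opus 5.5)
The plan is to translate every $\kFO{3}$-"sentence" into an equivalent "term@@CoR" of Tarski's calculus of relations $\CoR$, of ("DAG-")size $2^{\asympO(\len{\fml})}$, and then evaluate that term bottom-up. In the evaluation, each subterm is materialized as an $n\times n$ Boolean matrix. Union, intersection, complement, converse and $\corid$ cost $\asympO(n^2)$ per node. Composition $\compo$ is a Boolean matrix product and costs $\asympO(n^{\matrixexponent})$ per node. Altogether this gives $2^{\asympO(\len{\fml})}\cdot n^{\matrixexponent}$ in the dense setting with binary signature.

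\textbf{Step 1: translation.} I would use the classical Tarski--Givant correspondence between $\kFO{3}$ over binary signatures and $\CoR$. First push negations to normal form and handle each subformula with free variables among $\set{x,y,z}$. A subformula $\fml[1](x,y)$ with at most two free variables is translated to a term $\term_{\fml[1]}$ such that $(a,b)\in\sem{\term_{\fml[1]}}$ iff $\fml[1](a,b)$ holds. The connectives $\land,\lor,\lnot$ become $\intersection$, $\union$ and complement.

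The delicate case is $\exists z.\fml[1](x,y,z)$, where $\fml[1]$ has three free variables. Here I would put $\fml[1]$ (a Boolean combination of two-variable subformulas) into DNF over its maximal subformulas with at most two free variables. This yields disjuncts of the form $\alpha(x,y)\land\beta(x,z)\land\gamma(z,y)$. Existential quantification distributes over the disjuncts, and each disjunct becomes $\term_\alpha\intersection(\term_\beta\compo\term_\gamma)$. This DNF step is the source of the $2^{\asympO(\len{\fml})}$ blowup. With DAG sharing it stays single exponential, since the number of distinct conjuncts is bounded by the number of subformulas. Variable renamings ($x\leftrightarrow y$, or substitution of one variable for another) are handled by converse and by intersecting with $\corid$ or $\cortop$. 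Unary subformulas $\psi(x)$ are encoded as $(\term_\psi\intersection\corid)\compo\cortop$ or a similar term. For a sentence, the result is checked via $\cortop\compo\term\compo\cortop$.

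\textbf{Step 2: sparse setting with unbounded arity.} By \Cref{proposition: isolated} we may assume $n=\asympO(\fmlarity{\fml}\, m)$, but relations of arity $\ge 3$ are not binary. Since only three variables occur, an atom $R(\bar{x})$ mentions at most three distinct variables. For relations of arity $\ge 3$ I would apply the guarded-pairing translation announced earlier (\Cref{lemma: guarded-pairing}). It introduces new elements representing tuples (pairs), with binary relations linking each such element to its components. Each $R$-atom is rewritten inside $\kFO{3}$ over binary relations, and the new structure has $\asympO(m)$ elements and tuples. Running the dense algorithm on the result then yields $m^{\matrixexponent}$.

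\textbf{Main obstacle.} Rewriting a ternary atom $R(x,y,z)$ naively requires a fourth variable to name the tuple element. The fix I would try is pairing. Encode a ternary relation as a binary relation $R'(p,z)$, where $p$ is a new element representing the pair $(x,y)$, guarded by projection relations $\pi_1(p,x)$ and $\pi_2(p,y)$. The pair elements are created only for pairs that actually occur in $R$, so there are at most $m$ of them. The rewriting must then reuse variables so that only three are ever live. I expect proving that this preserves semantics under nested quantifiers, without a fourth variable, to be the hardest part.
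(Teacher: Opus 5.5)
Your overall route is the paper's. You translate into $\mathrm{CoR}$ by the Tarski--Givant construction, with a local DNF expansion at each quantifier and memoization on the DAG; this is how \Cref{proposition: fo3 to atomic CoR} is obtained. You then evaluate with $O(n^{\omega})$ Boolean matrix products, and in the sparse case you pass to a binary signature by guarded-pairing. Step 1 is fine, with one small omission. $\mathrm{CoR}$ has no unary relation symbols, so for the dense statement over $(\infty,\infty)$ you must first replace each unary $U$ by the diagonal relation $D_U=\{(a,a) : a\in U\}$. This is trivial, but your Step 1 only covers binary signatures.

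The real gap is in Step 2. \Cref{lemma: guarded-pairing} is stated for $\mathsf{FO}^3[(C_1,C_2,C_3)]$, i.e.\ maximum arity $3$. So you cannot invoke it for arbitrary relations of arity at least $3$, and unbounded arity is exactly what the sparse claim is about. You note the right fact, that each atom mentions at most three distinct variables, but you never turn it into a reduction.

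The missing step is the projection of \Cref{lemma: projection}. Take a non-equality atom $\alpha=R(x_{i_1},\dots,x_{i_p})$ of $\varphi$, with distinct indices $j_1<\dots<j_\ell$ and $\ell\le 3$. Introduce a fresh $\ell$-ary symbol $P_\alpha$. Interpret it, by one scan of $R^{\mathbb A}$, as $\{(a_{j_1},\dots,a_{j_\ell}) : (a_{i_1},\dots,a_{i_p})\in R^{\mathbb A}\}$, keeping only tuples consistent with the repetition pattern. Replace $\alpha$ by $P_\alpha(x_{j_1},\dots,x_{j_\ell})$. Each atom gets its own copy, so the structure has $O(|\varphi|\cdot m)$ tuples, not the $O(m)$ you claim. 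The bound still closes because $(|\varphi|\,m)^{\omega}\le 2^{O(|\varphi|)}m^{\omega}$. Alternatively, you could re-prove guarded-pairing for arbitrary arity, with projections $\pi_i$ up to the maximum arity $r$ and $[r]^2$-fold disjunctions. The construction does extend, but that is not what the lemma you cite provides.

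If your Main obstacle paragraph is meant as a plan for proving guarded-pairing itself, it lacks the key idea: a pair element $p$ with $R'(p,z)$ still needs a fourth variable to name $p$. The paper's translation instead keeps a case split as an invariant. When the current values of $x_1,x_2$ co-occur in some tuple, $x_1$ is reinterpreted as a tuple-vertex whose $j_1$- and $j_2$-projections are those values. This frees $x_2$ to name the witness tuple-vertex of each atom. When the values do not co-occur, every atom mentioning both $x_1$ and $x_2$ is replaced by false. Each rebinding of $x_1$ or $x_2$ then branches on a Gaifman-adjacency test.
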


\begin{theorem}[{\Cref{section: fo2}}, via $\BMLP$]
\label{theorem: fo2}
\gdef\theoremFOTWOLIN{
$\MC{\kFO{2}}$ is solvable in $2^{\asympO(\len{\fml})} \cdot m$ time.
}
\theoremFOTWOLIN
\end{theorem}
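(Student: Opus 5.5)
Recall that \(\MC{\kFO{2}}\) asks for an algorithm running in \(2^{\asympO(\len{\fml})} \cdot m\) time. The plan is to translate an arbitrary $\kFO{2}$-"sentence" into the Boolean modal logic $\BMLP$ and then evaluate it bottom-up. By \Cref{proposition: isolated}, I first remove isolated "elements", so that $n \le \asympO(\fmlarity{\fml}\, m)$.

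Next I reduce to binary-or-lower "arity". Any "atom" $R(\bar z)$ with $\vars \subseteq \set{x,y}$ depends only on which positions hold $x$ and which hold $y$. So each $R$ of "arity" $r$ splits into at most $2^{r}$ "relations" over $\set{x,y}$: unary ones $R_x(x)$ and binary ones $R_{xy}(x,y)$. Each derived "relation" is computed in $\asympO(r\, m)$ time by scanning $R^{\struc}$ and keeping only the tuples whose pattern matches. The total number of new tuples stays $\asympO(m)$ per pattern actually used in $\fml$, and the number of used patterns is at most $\len{\fml}$.

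The main step is the translation into a normal form, which I expect to be the obstacle. For each "subformula" $\psi(x,y)$, I maintain a representation of $\psi$ as a Boolean combination of three kinds of pieces:
\begin{itemize}
\item unary formulas in $x$;
\item unary formulas in $y$;
\item a "sparse" binary part: a disjunction, over finitely many ``types'', of conjunctions of binary "atoms", equality, and their negations, where each type that is satisfied off the equality/edge set is determined purely by the unary parts.
\end{itemize}
Concretely, I would show by induction that every $\psi(x,y)$ is equivalent to
\[
\textstyle\bigvee_{\tau} \bigl(\alpha_\tau(x) \land \beta_\tau(y) \land \tau(x,y)\bigr),
\]
where $\tau$ ranges over the $2^{\asympO(\len{\fml})}$ atomic binary types in the signature of $\fml$, namely truth assignments to $x=y$ and to each $E(x,y)$, $E(y,x)$. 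Among these, the all-false type is the ``co-sparse'' case.

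Negation, conjunction and disjunction are handled pointwise per type. This is where the $2^{\asympO(\len{\fml})}$ blowup arises, kept polynomial in DAG size by sharing. Quantification $\exists y.\psi$ becomes
\[
\textstyle\bigvee_\tau \bigl(\alpha_\tau(x) \land \exists y (\tau(x,y) \land \beta_\tau(y))\bigr).
\]
For a non-empty type, this is a modal diamond $\dia{\tau}\beta_\tau$ over the edges. For the empty type $\tau_0$, it is a ``window''-style statement: there exists $y$ with $\beta_{\tau_0}(y)$ that is not an $E$-neighbour of $x$ and differs from $x$. This is expressed as a counting comparison: the total count of $\beta_{\tau_0}$-elements exceeds the number of $\beta_{\tau_0}$-elements among the neighbours of $x$ together with $x$ itself. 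Universal quantifiers are treated by duality. This is precisely why $\BMLP$ separates sparse diamonds from co-sparse windows instead of allowing term complement. The care point is checking that every connective keeps the normal form, with the co-sparse component always represented as ``global count minus local count''.

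Finally, evaluation proceeds bottom-up over the DAG. Each unary formula is stored as a Boolean array of length $n$. A diamond over an edge-type is computed by one scan of the $\asympO(m)$ tuples of that type. A window is computed from global counts together with a per-vertex neighbour count obtained by one scan, again in $\asympO(m)$ time. Each of the $2^{\asympO(\len{\fml})}$ DAG nodes thus costs $\asympO(m + n) = \asympO(m)$, which gives the claimed $2^{\asympO(\len{\fml})} \cdot m$ bound. The sentence-level answer is read off from the final unary formula: it is true at some element, or respectively at all elements, under the top-level quantifier.
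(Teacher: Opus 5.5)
Your overall route is the paper's. You reduce to arity $\le 2$, translate in the style of Lutz--Sattler--Wolter into a complement-free modal form, send non-empty binary types to diamonds over sparse terms and the all-negative type to a counting-evaluated window, and evaluate bottom-up. The problem is the concrete invariant in the main step. Your first description is correct: a Boolean combination of $x$-unary pieces, $y$-unary pieces and binary atoms. But the form $\bigvee_\tau(\alpha_\tau(x)\land\beta_\tau(y)\land\tau(x,y))$, with a \emph{single} product per type, is false in general, and handling connectives pointwise per type does not preserve it. Conjunction is fine. Negation, however, yields $\lnot\alpha_\tau(x)\lor\lnot\beta_\tau(y)$, and disjunction yields a union of two products. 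Concretely, take $(P(x)\land Q(y))\lor(P'(x)\land Q'(y))$ with $P=\{a_1\}$, $P'=\{a_2\}$, $Q=\{b_1\}$, $Q'=\{b_2\}$. It holds at $(a_1,b_1)$ and $(a_2,b_2)$ but not at $(a_1,b_2)$. All three pairs have the same type ($x\neq y$, no edges), so no product $\alpha_\tau(x)\land\beta_\tau(y)$ represents it. Your quantifier step $\exists y.\psi\leadsto\bigvee_\tau\alpha_\tau(x)\land\exists y(\tau(x,y)\land\beta_\tau(y))$ depends precisely on this factorization, so as stated it has no valid input.

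The missing idea is to split on the $x$-side at each quantifier rather than maintaining a product form. Write the body of $\exists y.\psi$ as a Boolean combination $B$ of three kinds of pieces: the binary atoms, the maximal subformulas $\chi_1(x),\dots,\chi_p(x)$ with only $x$ free, and the maximal subformulas with only $y$ free. Then
\[
\exists y.\psi\;\equiv\;\bigvee_{\sigma\in\{0,1\}^p}\bigvee_{\tau}\Bigl(\textstyle\bigwedge_{i}\chi_i^{\sigma_i}(x)\land\exists y\bigl(\tau(x,y)\land B[\sigma,\tau](y)\bigr)\Bigr),
\]
where $\chi^1=\chi$ and $\chi^0=\lnot\chi$. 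Here $B[\sigma,\tau]$ replaces the $x$-pieces and binary atoms by the constants dictated by $\sigma$ and $\tau$, leaving a formula in $y$ alone. Memoizing per pair (subformula, free variable) keeps the DAG at $2^{\asympO(\len{\fml})}$; this is exactly the translation behind \Cref{proposition: fo2 to bml+c}. The naive repair would allow a disjunction of products per type and push negations through, but that blows up doubly exponentially, so the guessing must happen at quantifier nodes.

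With this fix the rest of your plan coincides with the paper. Your direct sparse/co-sparse split is the complement elimination of \Cref{proposition: bml+c to bml+} done on the fly, since $\exists y(\lnot t(x,y)\land\beta(y))$ is $\lnot\windowbo{t}\beta$. Your counting is the paper's window algorithm. Compute the local count as a sum over the non-empty types, which partition $N(x)\cup\{x\}$, so that neighbours adjacent via several relations are not double-counted.
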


\begin{theorem}[\ifthenelse{\boolean{arxiv}}{\Cref{section: monadic}}{see the extended version~\cite{nakamuraUezatoDichotomyBoundedVariable2026arxiv}}, via $\kC{1}$]\label{theorem: monadic}
Fix $k \ge 1$.
First,
\gdef\theoremmonadicFOEk{%
\textup{$\MC{\kFO{k}\sigres{(\ordinalomega)}}$} is solvable in $2^{\asympO(\len{\fml})} \cdot n$ time (hence, in $2^{\asympO(\len{\fml})} \cdot m$ time by \Cref{proposition: isolated}).
}%
\theoremmonadicFOEk
Second,
\gdef\theoremzeroadicFOEk{%
\textup{$\MC{\kFO{k}\sigres{(0)}}$} is solvable in $\asympO(\len{\fml})$ time (with no dependence on $n$).
}%
\theoremzeroadicFOEk
\end{theorem}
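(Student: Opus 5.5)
The plan is to translate every $\kFO{k}\sigres{(\ordinalomega)}$-sentence into one-variable counting logic $\kC{1}$ and then evaluate it bottom-up using types. Over a unary signature, a valuation of the $k$ variables is determined, up to what atoms can observe, by two things: the colour of each assigned element (its set of satisfied unary predicates, a subset of $\fmlsig{\fml}$) and the equality pattern among the variables. So an element $a$ is fully described by its colour $c(a) \subseteq \fmlsig{\fml}$, and the structure is determined up to $\FO$-equivalence by the count vector $N_c \defeq \card{\set{a \mid c(a) = c}}$ for each colour $c$.

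Since $k$ variables can distinguish at most $k$ elements of the same colour, it suffices to know $\min(N_c, k)$. To reach the stated runtime, I would argue that only colours actually realized matter, and there are at most $\min(n, 2^{\card{\fmlsig{\fml}}})$ of them. The counts $N_c$ are computed in $\asympO(n \cdot \card{\fmlsig{\fml}})$ time by bucketing: compute each element's colour as a bitmask, using word-RAM words of length $\card{\fmlsig{\fml}}/\log n$. If needed, the bitmask handling can be absorbed into the $2^{\asympO(\len{\fml})}$ factor.

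\textbf{Evaluation.} Proceed bottom-up over subformulas $\psi$ with $\FV(\psi) \subseteq \set{x_1,\dots,x_k}$. For each $\psi$, compute the set of \emph{abstract types} $\tau = (\mathrm{eq}, \bar c)$ satisfying $\psi$, where $\mathrm{eq}$ is a partition of the variables and $\bar c$ assigns a colour to each block. Realizability of such a type means that for each colour $c$, the number of blocks coloured $c$ is at most $N_c$.

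The connectives are handled as follows.
\begin{itemize}
\item Atoms, $\land$, $\lor$ and $\lnot$ are pointwise operations on the table of types.
\item For $\exists x_i.\psi$, type $\tau$ holds iff some realizable $\tau'$ that agrees with $\tau$ off $x_i$ is in $\psi$'s table. The choices for $x_i$ are: join an existing block of the others, or form a new block of colour $c$. The new block is allowed only if the count of $c$-coloured blocks among the other variables is less than $N_c$.
\item $\forall$ is dual.
\end{itemize}
Sharing in the DAG is handled by memoizing one table per DAG node.

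\textbf{Main obstacle.} The difficulty is bounding the table size by $2^{\asympO(\len{\fml})}$ and not by $(2^{\card{\fmlsig{\fml}}})^k$. Since $k$ is fixed and $\card{\fmlsig{\fml}} \le \len{\fml}$, the number of types is at most $B_k \cdot 2^{k\card{\fmlsig{\fml}}} = 2^{\asympO_k(\len{\fml})}$. Each table operation costs polynomial time in the table size, so the total is $2^{\asympO(\len{\fml})}$ plus $\asympO(n\len{\fml})$ preprocessing. This gives $2^{\asympO(\len{\fml})} \cdot n$, and by \Cref{proposition: isolated} also $2^{\asympO(\len{\fml})} \cdot m$.

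\textbf{Relation to $\kC{1}$.} Equivalently, the procedure shows that every formula is equivalent to a Boolean combination of counting sentences $\exists^{\ge j} x.\, \chi_c(x)$ with $j \le k$, where $\chi_c$ defines colour $c$. This is the $\kC{1}$ normal form, replacing equalities by counting quantifiers as in \cite[Theorem 5]{mintsLogicalEquationsMonadic2009}.

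\textbf{Zero-ary case.} Here there are no relation symbols, so only the equality pattern and $\min(n,k)$ matter. We have $B_k$ types, with $k$ fixed, and $\min(n,k)$ is read in constant time. The same bottom-up procedure therefore runs in $\asympO(\len{\fml})$ time, with each node processed in $\asympO_k(1)$ time.
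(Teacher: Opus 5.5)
Your proposal is correct and is essentially the paper's proof. Your per-node type tables are exactly the paper's memoized translations $T_\beta(\psi)$, and your test ``a new block of colour $c$ is allowed only if fewer than $N_c$ such blocks already exist'' is what the paper's $\mathrm{Rlz}_{\beta'}$ expresses with $\exists^{\ge N}$ quantifiers. The only difference is that the paper first materializes this recursion as a DAG-represented $\kC{1}$-sentence and then evaluates it in $O(|\psi|)\cdot n$ time. You instead evaluate the recursion directly against precomputed capped counts $\min(N_c,k)$, which even gives an additive $2^{O(|\varphi|)}+O(n|\varphi|)$ bound. Your zero-ary argument via $\min(n,k)$ likewise matches the paper's step of replacing the structure by a $(d+1)$-element one when $n>d$.
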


For \Cref{theorem: fo3,theorem: fo2,theorem: monadic},
we extract appropriate normal forms, namely, $\CoR$, $\BMLP$, and $\kC{1}$, respectively,
and we use translations to them while analyzing the sizes of the translations:
\begin{itemize}
    \item For $\CoR$:
    Tarski and Givant~\cite[Section 3.9]{tarskiFormalizationSetTheory1987} presented a translation from $\kFO{3}\sigres{(0, \ordinalomega)}$ (with at most two free "variables") to $\CoR$; see also, "eg", \cite{AndrekaAlgebraicLogic2001,givantCalculusRelationsFoundation2007,nakamuraExpressivePowerSuccinctness2022}.
    Such a translation can be computed in $2^{\asympO(\len{\fml})}$ time
\cite[Theorem 3.1]{nakamuraExpressivePowerSuccinctness2022};
    its implementation is in \cite{brogniTranslatingThreeVariableFirstOrder2025}.

    \item For $\BMLP$:
    Lutz, Sattler, and Wolter
    presented the Boolean modal logic called $\intro*\MLLSW$ in \cite{lutzModalLogicTwoVariable2001},
    which has an exponential-time translation from $\kFO{2}$ \cite[Theorem 1]{lutzModalLogicTwoVariable2001}.
    In this paper, we introduce $\BMLP$, a new Boolean modal logic inspired by $\MLLSW$,
    which has a $2^{\asympO(\len{\fml})}$-time translation from $\kFO{2}$.
    Unlike $\MLLSW$, $\BMLP$ has no term complement (which makes sparse relations dense), and thus admits an $\asympO(\len{\fml} \cdot m)$-time algorithm.

    \item For $\kC{1}$:
    Mints and Hoshi~\cite[Theorem 5]{mintsLogicalEquationsMonadic2009} presented
    a quantifier-elimination procedure for $\FO\sigres{(\ordinalomega)}$ ("ie", monadic FO with "equality"), yielding a normal form expressible in $\kC{1}$, but the complexity was not analyzed.
    In this paper, we present a translation to "DAG-represented" $\kC{1}$-"sentences" in $2^{\asympO(\len{\fml})}$ time by providing an explicit inductive transformation.
\end{itemize}

For unbounded "arity" in the sparse setting (\Cref{theorem: fo3,theorem: fo2}), we reduce the maximum "arity" to at most $2$:
\begin{align*}
\MC{\kFO{3}}
&~~\leadsto_{\text{("projection")}~\Cref{section: projection}} \MC{\kFO{3}\sigres{(\ordinalomega, \ordinalomega, \ordinalomega)}}~~
~~\leadsto_{\text{("guarded-pairing")}~\Cref{section: guarded-pairing}}~~ \MC{\kFO{3}\sigres{(0, \ordinalomega)}},\\
\MC{\kFO{2}}
&~~\leadsto_{\text{("projection")}~\Cref{section: projection}} \MC{\kFO{2}\sigres{(\ordinalomega, \ordinalomega)}}.
\end{align*}
After eliminating repeated "variables" inside "atoms" through the "projection" step (\Cref{lemma:reduction higher order predicates}),
we apply the "guarded-pairing" construction.
In the "guarded-pairing", a single "variable" encodes pairs of "variables" that appear together in "atoms".
For the "PNF" case, it can be done by fixing "variables" to constants \cite[Lemma 7.1]{gaoCompletenessFirstorderProperties2018}, but it does not work for non-"PNF" $\kFO{k}$, which allows nested "quantifiers" and hence reuse of "variables".
Our "guarded-pairing" construction (\Cref{lemma:sparse3:reduction higher order}) avoids "variable" fixing entirely by encoding only "guarded" pairs.

\begin{remark}\label{remark: DAG representation in algorithmic bounds}
In our translation-based upper bounds, "DAG" representations of "formulas" are crucial for keeping the blowup in formula "size" bounded by $2^{\asympO(\len{\fml})}$.
This use of "DAG" representations is needed for \Cref{theorem: monadic,lemma: guarded-pairing,proposition: fo3 to atomic CoR,proposition: fo2 to bml+c}.
\end{remark}

\section{Hardness: Reducing \texorpdfstring{\SparseMAXxSATProblem{(k-1)} to $\kPP{k}\sigres{(0,1)}$}{Sparse MAX-(k-1)-SAT to PPk[(0, 1)]}}
\label{section: reduction from OV Hyperclique to MCFO}
In this section, we prove \Cref{theorem:sparse-k4,theorem:sparse-k3} by
presenting a reduction from \SparseMAXxSATProblem{(k-1)} to $\MC{\kPP{k}\sigres{(0,1)}}$.

\subsection{To the \texorpdfstring{$(k,k-1)$-\OVHyperCliqueProblem{} problem}{(k,k-1)-OV-HyperClique problem}}\label{section: to OVHyperclique}
An $r$-""uniform hypergraph"" is a ""hypergraph"" $\graph = \tuple{V, E}$ where
$V$ is a finite set of ""vertices"" and
$E \subseteq \binom{V}{r}$ is a set of $r$-element undirected ""hyperedges"".
An $r$-uniform $k$-""hyperclique"", denoted by $(k,r)$-"hyperclique", is a $k$-element set $C \subseteq V$ such that $\binom{C}{r} \subseteq E$.
The $(k,r)$-""hyperclique problem"", denoted by $(k,r)$-\intro*\HyperCliqueProblem{}, is the following problem:
given an $r$-"uniform hypergraph" $\graph = \tuple{V, E}$,
decide whether there exists a $(k,r)$-"hyperclique" in $\graph$.
\ifcameraready\AP\phantomintro*\HyperCliqueHypothesis{}\fi%

\AP
For $t \ge 1$ and \kl{vectors} $v_1, \dots, v_t \in \set{0, 1}^{d}$, 
the ""generalized inner product"" is defined by $\intro*\GIP{v_1, \dots, v_t} \defeq \sum^d_{i = 1} \prod_{j = 1}^{t} v_j\ith{i}$,
which coincides with the standard inner product when $t = 2$.
We say that $v_1, \dots, v_t$ are ""generalized-orthogonal"" if $\GIP{v_1, \dots, v_t} = 0$.
Note that $\GIP{v_1, \dots, v_t}$ is invariant under permutations and repetitions of its arguments,
as the $v_j$ are 0/1 vectors.
We consider the ""OV-hyperclique problem"" \cite{almanOVGraphsAre2020}.\ifcameraready\AP\phantomintro{dimension}\fi{}
Given $k > r \ge 1$ and an $n$-element set $\mathcal{V} \subseteq \set{0, 1}^d$ of "vectors",
the $r$-""uniform OV-hypergraph"" $\intro*\OVgraph{\mathcal{V}, r}$ is defined as the $r$-"uniform hypergraph" $\OVgraph{\mathcal{V}, r} = \tuple{
\mathcal{V}, E
}$ where $S \in E$ "iff" $S$ is an $r$-element subset of $\mathcal{V}$ such that $\GIP{ S\ith{1}, \dots, S\ith{r} } = 0$.
Below we consider only \emph{"well-formed"} instances.
We say that an instance $\mathcal{V} \subseteq \set{0, 1}^d$ is ""well-formed"" if 
$\GIP{S \ith{1}, \dots, S \ith{r'}} \neq 0$ for every nonempty subset $S \subseteq \mathcal{V}$ with $r' \defeq \card{S} < r$.
The $(k, r)$-\intro*\OVHyperCliqueProblem{n, d} is the following problem:
given a "well-formed" $n$-element set $\mathcal{V} \subseteq \set{0, 1}^d$,
decide whether $\OVgraph{\mathcal{V}, r}$ has a $(k, r)$-"hyperclique".
In particular, we write \intro*\OVTriangleProblem{n, d} for the $(3, 2)$-\OVHyperCliqueProblem{n, d}.
We write $(k, r)$-\OVHyperCliqueProblem{n, \le c \log n} (and \OVTriangleProblem{n, \le c \log n}) for the restrictions to $d \le c \log n$.
The reduction of Alman and Vassilevska Williams \cite{almanOVGraphsAre2020} implies the corresponding hypotheses below (\Cref{proposition: alman VVW}).
Although our definition of \OVHyperCliqueProblem{} restricts inputs to be "well-formed",
this restriction does not weaken the lower bound.
\Cref{section: proposition: alman VVW} proves this by first
keeping the partite inputs in the Alman--Vassilevska Williams reduction and then
appending constant-length vectors that make no tuple of fewer than $r$ vectors "generalized-orthogonal" while preserving the orthogonality of exactly the $r$-tuples with at most one vector from each part.
\AP
\begin{hypothesis*}[$(k, r)$-\intro*\OVHyperCliqueHypothesis{} {\cite[Theorem~23]{almanOVGraphsAre2020}}%
]
Fix $k > r \ge 3$.
For any $\eps > 0$,
there is some $c$ such that
the $(k, r)$-\OVHyperCliqueProblem{n, \le c \log n} cannot be solved in $\asympO(n^{k - \eps})$ time.
\end{hypothesis*}
\begin{hypothesis*}[\intro*\OVTriangleHypothesis{}, $(3, 2)$-\OVHyperCliqueHypothesis{} {\cite[Theorem~5]{almanOVGraphsAre2020}}]
For any $\eps > 0$,
there is some $c$ such that
the \OVTriangleProblem{n, \le c \log n} cannot be solved in $\asympO(n^{\matrixexponent - \eps})$ time.
\end{hypothesis*}
\begin{proposition}[\Cref{section: proposition: alman VVW}]\label{proposition: alman VVW}
For each fixed $k > r \ge 3$ ("resp", $k = 3$ and $r = 2$),
the \SparseMAXxSATHypothesis{r} implies the $(k, r)$-\OVHyperCliqueHypothesis{} ("resp", \OVTriangleHypothesis{}).
\end{proposition}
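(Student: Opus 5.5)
We derive \Cref{proposition: alman VVW} from the reduction of Alman and Vassilevska Williams~\cite[Thm.~23 and Thm.~5]{almanOVGraphsAre2020}, and add a post-processing step that makes the produced instances "well-formed". We work contrapositively. Suppose the $(k,r)$-\OVHyperCliqueProblem{n, \le c\log n} is solvable in $\asympO(n^{k-\eps})$ time for every $c$. We must then show that \SparseMAXxSATProblem{r} with $N$ variables and $c'N$ clauses is solvable in $\asympO_{c'}(2^{(1-\eps')N})$ time, for some $\eps'>0$ and all $c'$. For $k=3$, $r=2$ the exponents become $\matrixexponent - \eps$ and $\matrixexponent/3-\eps'$.

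\textbf{Step 1: recall the reduction.} Split the $N$ Boolean variables into $k$ groups of size $N/k$. Each partial assignment to a group becomes a vertex, so each of the $k$ parts has $n=2^{N/k}$ vertices. Since every clause has at most $r$ literals, it touches at most $r$ groups. The Alman--Vassilevska Williams construction first guesses the number of satisfied clauses in each pattern of groups; there are $\poly(N)$ such guesses, or $2^{\delta N}$ after sparsification-type bucketing. It then encodes the verification of each guess as generalized orthogonality of $r$-tuples of vectors of dimension $\asympO_{c'}(\log n)$. In the produced instance, an $r$-tuple taking one vector from each of $r$ distinct parts is "generalized-orthogonal" exactly when the corresponding partial assignments are consistent with the guess. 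A $k$-clique with one vertex per part then witnesses a full assignment that achieves the guessed count. We take this construction as a black box, and we keep it $k$-partite, with parts $P_1,\dots,P_k$.

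\textbf{Step 2: enforce partiteness and well-formedness.} Two things remain. First, the answer must not be affected by tuples that contain two vectors from the same part, or by cliques that are not transversal. Second, no fewer than $r$ vectors may be "generalized-orthogonal". We handle both by appending a constant number of coordinates, depending only on $k$ and $r$. For each size-$r$ multiset-free choice pattern, we add coordinates indexed by subsets $T\subseteq\range{k}$ with $\card{T}\le r-1$. A vector from part $P_j$ gets a $1$ at coordinate $T$ iff $j\notin T$. Then a collection $S$ of vectors whose set of parts is $J$ has a common $1$ at coordinate $T$ iff $J\cap T=\emptyset$. Such a $T$ with $\card{T}\le r-1$ exists iff $k-\card{J}\ge 1$, i.e., always when $\card{J}<k$, which is too strong. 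We therefore refine the construction: we use coordinates indexed by $T$ with $\card{T}=r-1$ and set a $1$ at coordinate $T$ iff $j\in T$. Then $S$ is nonzero at coordinate $T$ iff $J\subseteq T$. This happens for some such $T$ iff $\card{J}\le r-1$.

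Consequently, every tuple of fewer than $r$ vectors, and every $r$-tuple with a repeated part, gets a nonzero generalized inner product. For an $r$-tuple with $r$ distinct parts, the added coordinates contribute $0$, so its orthogonality is unchanged. As a result, every $(k,r)$-"hyperclique" in $\OVgraph{\mathcal{V},r}$ uses $k$ distinct parts, the instance is "well-formed", and the dimension grows only by $\binom{k}{r-1}=\asympO(1)$. Distinct parts can share identical vectors; we make vectors distinct by appending a part tag, using the same one-hot-in-$T$ scheme, which is already injective on parts, plus an index if needed. This keeps $d\le c\log n$ for a slightly larger $c$.

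\textbf{Step 3: account for running time.} We run the assumed algorithm on $kn$ vectors for each of the guesses, which costs $2^{\delta N}\cdot (k2^{N/k})^{k-\eps}=2^{(1-\eps/k+\delta)N}$. Choosing $\delta<\eps/(2k)$ gives the required $\eps'$. The triangle case is identical, with exponent $\matrixexponent$. The main obstacle is Step 1. We must verify that the cited reduction really produces partite vectors of dimension $\asympO(\log n)$ in the sparse regime, with the constant $c$ depending only on $c'$ and $\eps$. This is precisely what \cite[Thm.~23]{almanOVGraphsAre2020} guarantees, so we invoke it rather than reprove it.
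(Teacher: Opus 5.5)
Your proposal is correct and follows essentially the same route as the paper. You keep the Alman--Vassilevska Williams reduction $k$-partite, append a constant-length part label that vanishes exactly on $r$-tuples from $r$ distinct parts and is nonzero on any set spanning at most $r-1$ parts, and finish with the contrapositive time accounting on $kn$ vectors of dimension $c\log n + O(1)$. Your label (a $1$ at each $(r-1)$-subset $T \ni j$, so $\binom{k}{r-1}$ coordinates) is a more compact equivalent of the paper's $r\binom{k}{r}$-coordinate encoding, since both are nonzero on a part set $J$ iff $|J| \le r-1$; you should drop the remark about appending an index, because index bits could destroy orthogonality and are unnecessary once duplicates within a part are discarded.
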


\newcommand{\length}{L}
\subsection{To \texorpdfstring{$\kPP{k}\sigres{(\ordinalomega, k-1)}$}{PPk[(∞, k-1)]}}\label{section: to PPk}
We give a reduction from the $(k,r)$-\OVHyperCliqueProblem{n, d} to $\MC{\kPP{k}\sigres{(\ordinalomega, r)}}$, and use it with $r=k-1$ for the present hardness result.
The parameter $B$ below is the number of "blocks" into which we split "dimensions"; it will become the number of unary "relation symbols".
\begin{lemma}\label{lemma: new new partitioned binary hyperclique hypothesis}
Fix $k > r \ge 2$ and $B \ge 1$.
There is a \emph{fixed} $\kPP{k}\sigres{(B, r)}$-"sentence" $\fml$ with the following property:
For every instance $\mathcal{V}$ of the $(k,r)$-\OVHyperCliqueProblem{n, d},
where $d$ is divisible by $B$,
one can construct, in $\asympO_{k, r, B}(r^{d/B + 1} n)$ time, a "structure" $\struc_{\mathcal{V}, r}$ such that
\[\text{$\OVgraph{\mathcal{V}, r}$ has a $(k, r)$-"hyperclique"} \quad\iff\quad \struc_{\mathcal{V}, r} \models \fml.\]
Moreover, $\struc_{\mathcal{V}, r}$ satisfies $\strucn{\struc_{\mathcal{V}, r}} = \asympO_{k, r, B}(r^{d/B} + n)$ and $\strucm{\struc_{\mathcal{V}, r}} = \asympO_{k, r, B}(r^{d/B + 1} n)$.
Additionally, $\len{\fml} = \asympO(k + \binom{k}{r} rB) = \asympO_{k, r, B}(1)$.
\end{lemma}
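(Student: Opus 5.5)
The plan is to split the $d$ coordinates into $B$ consecutive "blocks" of length $\length \defeq d/B$, and to witness generalized-orthogonality block by block with a "covering pattern". For vectors $u_1,\dots,u_r$, we have $\GIP{u_1,\dots,u_r}=0$ iff every coordinate $i$ has some $j$ with $u_j\ith{i}=0$. Restricted to a block $b$, this holds iff there is a function $f\colon \range{\length}\to\range{r}$ such that, for every $j$, $u_j$ vanishes on all coordinates of block $b$ in $f^{-1}(j)$. So $\GIP{u_1,\dots,u_r}=0$ iff for every $b\in\range{B}$ such an $f$ exists. There are only $r^{\length}=r^{d/B}$ candidate functions per block. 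They become the "pattern" elements of the structure.

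\textbf{Structure.} The "universe" of $\struc_{\mathcal{V},r}$ is $\mathcal{V}\cup\set{(b,f)\mid b\in\range{B},\ f\in\range{r}^{\range{\length}}}$. There are $B$ unary relations $U_b\defeq\set{(b,f)\mid f}$. There are $r$ binary relations $E_1,\dots,E_r$, with $E_j(v,(b,f))$ holding iff $v\ith{i}=0$ for every coordinate $i$ of block $b$ with $f(i)=j$. This gives $\strucn{} = n + B r^{d/B}$ and $\strucm{} \le rBr^{d/B}n = \asympO_{k,r,B}(r^{d/B+1}n)$.

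\textbf{Sentence.} Take
\[
\fml \defeq \exists x_1\cdots\exists x_k.\ \bigwedge_{S\in\binom{\range{k}}{r}}\ \bigwedge_{b\in\range{B}} \exists x_{t_S}.\Bigl(U_b(x_{t_S})\land \bigwedge_{j\in\range{r}} E_j(x_{S\ith{j}},x_{t_S})\Bigr),
\]
where $t_S\in\range{k}\setminus S$ is any index outside $S$, which exists since $r<k$. The key point is that the witness variable is not a fresh $(k+1)$-st variable: we re-quantify an outer variable not occurring in $S$. The inner subformula only mentions $x_{S\ith{1}},\dots,x_{S\ith{r}}$ and the rebound $x_{t_S}$, so the semantics are unaffected. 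The result is a fixed $\kPP{k}\sigres{(B,r)}$-"sentence" of "size" $\asympO(k+\binom{k}{r}rB)$.

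\textbf{Correctness.} Suppose $\struc_{\mathcal{V},r}\models\fml$ with witnesses $a_1,\dots,a_k$. Every $x_i$ lies in some $S$, and first arguments of the $E_j$ are always vectors, so all $a_i\in\mathcal{V}$. For each $S$, the block-wise patterns give $\GIP{a_{S\ith{1}},\dots,a_{S\ith{r}}}=0$, using the symmetry of $\GIP{\cdot}$ under permutation.

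The main subtlety is distinctness of the $a_i$. If the $a_i$ are not pairwise distinct, pick an $r$-subset $S$ containing a repeated pair. Its set of values has size $r'<r$, and the generalized inner product is invariant under repetition. Hence $\GIP{}$ of that $r'$-element set is $0$, contradicting "well-formed"ness. So $\set{a_1,\dots,a_k}$ is a $(k,r)$-"hyperclique".

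Conversely, given a hyperclique, assign its $k$ vectors to $x_1,\dots,x_k$. For each $S$ and block $b$, choose $f(i)$ to be some $j$ with $a_{S\ith{j}}\ith{i}=0$; this witnesses the inner conjunct.

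\textbf{Running time.} I expect the construction time to be the main technical step, because each tuple must cost $\asympO(1)$ amortized rather than $\asympO(\length)$. For each $v$, $b$ and $j$, first compute the zero set $Z_{v,b}$ of $v$ on block $b$. Then enumerate exactly those $f$ with $f^{-1}(j)\subseteq Z_{v,b}$ by a DFS over positions $i=1,\dots,\length$: at each position we may choose value $j$ only if $i\in Z_{v,b}$, and otherwise any value in $\range{r}\setminus\set{j}$. The DFS tree has at most $\asympO(r)$ internal nodes per leaf on average (branching $\ge 1$, and every leaf is an output), and the index of $f$ is maintained incrementally in base $r$. Since $d\le c\log n$, indices fit in a word, so each emitted tuple costs $\asympO(1)$ amortized. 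Together with $\asympO_{B}(r^{d/B})$ for listing the universe and the $U_b$, the total is $\asympO_{k,r,B}(r^{d/B+1}n)$.
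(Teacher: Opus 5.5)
Your proposal is correct and follows the paper's proof essentially step for step. The paper uses the same block-wise choice-function encoding with $B$ unary and $r$ binary predicates, re-quantifies an index outside $S$ to stay within $k$ variables, and uses the same well-formedness argument (via a repeated pair inside some $r$-subset) to force distinctness. It differs only in the construction-time argument, where it performs an $O(1)$ bitmask subset test for each of the $O(Br^{d/B+1}n)$ candidate tuples.

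Your DFS works, but the claim of $O(r)$ internal nodes per leaf is false for $r=2$: an all-ones block of $v$ gives a path of length $d/B$ with a single leaf. The correct justification is that each DFS tree is a subtree of the complete $r$-ary tree of depth $d/B$, so it has $O(r^{d/B})$ nodes per $(v,b,j)$ and $O(Br^{d/B+1}n)$ in total.
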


\AP
For a "vector" $v \in \set{0, 1}^{q}$, we write $\intro*\Zeropos(v) \defeq \set{ i \in \range{q} \mid v\ith{i} = 0 }$.
\phantomintro{choice function}%
The following is clear:
\[\GIP{ y_1, y_2, \ldots, y_r } = 0 \;\Longleftrightarrow\; \text{ there is a "choice function" $P \colon \range{q} \to \range{r}$ "st" $\forall j \in \range{r}.\, P^{-1}(j) \subseteq \Zeropos(y_{j})$}.\]
For example, when $y_1 = 10\,01$, $y_2 = 11\,11$, and $y_3 = 01\,00$,
$\GIP{ y_1, y_2, y_3 } = 0$ is shown by the "choice function" $P$ defined as $1,2,3,4 \mapsto 3,1,1,3$.

We partition each $d$-"dimensional" "vector" $y$ into $B$ ""blocks"",
each of "dimension" $\length \defeq d/B$, as follows:
\def\bit{\lambda}
\[
y ~=~
\underbracket{\,\bit_1 \bit_2 \ldots \bit_{\length}\,}_{y^{(1)}~\text{"block" 1}}\ 
\underbracket{\,\bit_{\length+1} \bit_{\length+2} \ldots \bit_{2\length}\,}_{y^{(2)}~\text{"block" 2}} \cdots
\underbracket{\,\bit_{(b-1)\length+1} \bit_{(b-1)\length+2} \ldots \bit_{b\length}\,}_{y^{(b)} ~\text{"block" $b$}} \cdots
\underbracket{\,\bit_{(B-1)\length+1} \bit_{(B-1)\length+2} \ldots \bit_{B\length}\,}_{y^{(B)}~\text{"block" $B$}}.
\]
For $b \in \range{B}$ and $y \in \set{0, 1}^{d}$, let $y^{(b)}$ denote the $b$-th "block" of $y$, "ie", $y^{(b)}\ith{\ell} = y\ith{(b-1)\cdot\length+\ell}$ for $\ell \in \range{\length}$.
Our reduction is based on the following equivalence:
\begin{align*}
\GIP{ y_1, y_2, \ldots, y_r } = 0
&~\iff~ \forall b \in \range{B}.
\exists P \colon \range{\length} \to \range{r}.\forall j \in \range{r}.\,
P^{-1}(j) \subseteq \Zeropos(y_{j}^{(b)}).
\end{align*}

\begin{proof}[Proof of \Cref{lemma: new new partitioned binary hyperclique hypothesis}]
Let $\fml$ be the $\kPP{k}\sigres{(B, r)}$-"sentence" defined as follows:%
\AP\phantomintro{\COVER}
\begin{align*}
\fml &\defeq \exists x_1 \dots \exists x_k.
\bigwedge_{S \in \binom{\range{k}}{r}} \COVER(S), & \text{where }
\reintro*\COVER(S) &\defeq \bigwedge_{b=1}^B \exists x_{\mu(S)}. \left(U_b(x_{\mu(S)}) \land \bigwedge_{j = 1}^{r} E_{j}(x_{\mu(S)}, x_{S\ith{j}}) \right),
\end{align*}
where $\mu(S) \defeq \min(\range{k} \setminus S)$,
which exists since $k > r$.
By construction, $\fml$ uses only the "variables" $x_1, \dots, x_k$, and hence $\fml \in \kPP{k}\sigres{(B,r)}$.
Moreover,
$\len{\fml} = \asympO(k + \binom{k}{r} \cdot B r) = \asympO_{k, r, B}(1)$, "ie", $\fml$ is fixed.
Intuitively, the outer existential quantifiers of \(\fml\) choose
\(k\) candidate "vectors" \(v_1, \dots, v_k \in \mathcal{V}\).
For each \(r\)-subset \(S = \set{i_1, i_2, \dots, i_r} \subseteq \range{k}\), the subformula \(\COVER(S)\) asserts that
the selected "vectors" \(v_{i_1}, \dots, v_{i_r}\) are generalized-orthogonal:
for every "block" \(b\), there is a "choice function" indicated by \(x_{\mu(S)}\),
certifying that each coordinate of the "block" is zero in at least one of those \(r\) "vectors".

Let $\mathcal{V}$ be an instance of $(k, r)$-\OVHyperCliqueProblem{n, d}.
We define the "structure" $\struc_{\mathcal{V}, r} \defeq \struc \defeq \tuple{\,\univ{\struc}, \set{U_b^{\struc}}_{b \in \range{B}}, \set{E_j^{\struc}}_{j \in \range{r}}}$ as follows:
\begin{align*}
\univ{\struc}
&\defeq \mathcal{V} \sqcup \smashoperator{\bigsqcup_{b \in \range{B}}} U_b^{\struc},&
U_b^{\struc}
&\defeq \set{ b } \times \range{r}^{\range{\length}},&
E_j^{\struc}
&\defeq \smashoperator{\bigcup_{b \in \range{B}}} \{ \tuple{\tuple{b, P},~ v} \in U_b^{\struc} \times \mathcal{V} \mid P^{-1}(j) \subseteq \Zeropos(v^{(b)}) \}.
\end{align*}
\begin{figure}[t]
    \centering
\begin{tikzpicture}[
  font=\small,
  block/.style={
    draw = gray!50,
    minimum width=2cm,
    minimum height=0.6cm,
    rounded corners=1pt,
    line width=0.55pt,
    align=center
  },
  vertex/.style={
    mynode,
    circle,
    draw,
    inner sep=2pt,
    line width=0.55pt
  },
  arrow/.style={
    -{Latex[length=1.8mm]},
    line width=0.55pt
  },
  form/.style={
    align=center,
    minimum width=3.2cm
  },
  token/.style={
    mynode,
    rectangle,
    rounded corners=3pt,
    inner xsep=5pt,
    inner ysep=3pt,
    line width=0.55pt
  },
  labelbox/.style={
    align=left,
    fill=gray!20,
    inner sep=2pt
  },
  tiny/.style={font=\scriptsize}
]
\node at (-2.6, 1.0) {\tiny \scalebox{.6}{$\vdots$}};
\node[block, minimum width=2.5cm,
    minimum height=1.7cm] (blockb) at (-2.6, 0) {};
\node[left=0pt of blockb] {\footnotesize $U_b^{\struc}$};
\node[token, vertex] (seq) at (blockb.center) {$123231$};
\node at (-2.6,-0.95) {\tiny \scalebox{.6}{$\vdots$}};
\node at (-2.6, 0.67) {\tiny \scalebox{.6}{$\vdots$}};
\node[font=\tiny, inner sep=0pt] at (-2.6,-0.6) {$= P(1) \cdots P(6)$};
\node[anchor=north east, font=\scriptsize, inner sep=1pt, align = right, fill = gray!10] (cfnote) at ($(blockb.west)+(-0.05,-0.35)$) {a \kl{choice function} $P$\\ on \kl{block} $b$};
\path (cfnote.east) edge[draw = gray!50] (seq.south west);
\node[above right = -1.6ex and -.2em of seq, text = red!60!black](varseq) {\tiny $x_{\mu(S)}$};

\node[minimum width=1.2cm] (part1) at (1.3, .7) {};
\node[vertex] (v1) at ($(part1.west)+(0.42,0)$) {\footnotesize $v_{i_1}$};
\node[above right = -1.6ex and -.2em of v1, text = red!60!black]{\tiny $x_{i_1}$};
\node[minimum width=1.2cm] (part2) at (1.3, 0) {};
\node[vertex] (v2) at ($(part2.west)+(0.42,0)$) {\footnotesize $v_{i_2}$};
\node[above right = -1.6ex and -.2em of v2, text = red!60!black]{\tiny $x_{i_2}$};
\node[minimum width=1.2cm] (part3) at (1.3,-.7) {};
\node[vertex] (v3) at ($(part3.west)+(0.42,0)$) {\footnotesize $v_{i_3}$};
\node[above right = -1.6ex and -.2em of v3, text = red!60!black]{\tiny $x_{i_3}$};

\node[anchor=west, align=left, right = 1em of part1.east, yshift = -1ex] (row1) {\footnotesize there is an $E_1$-edge iff $v_{i_1} =$
    $\cdots \underbracket[0.5pt][3pt]{0\,{\ast}\,{\ast}\,{\ast}\,{\ast}\,0}_{\phantom{\text{block}\ b}} \cdots$
};
\node[anchor=west, align=left, right = 1em of part2.east, yshift = -1ex] {\footnotesize there is an $E_2$-edge iff $v_{i_2} =$
    $\cdots \underbracket[0.5pt][3pt]{{\ast}\,0\,{\ast}\,0\,{\ast}\,{\ast}}_{\phantom{\text{block}\ b}} \cdots$
};
\node[anchor=west, align=left, right = 1em of part3.east, yshift = -1ex] {\footnotesize there is an $E_3$-edge iff $v_{i_3} =$
    $\cdots \underbracket[0.5pt][3pt]{{\ast}\,{\ast}\,0\,{\ast}\,0\,{\ast}}_{\text{\kl{block}}\ b} \cdots$
};
\node[anchor=south west, font=\scriptsize, inner sep=1pt] at (row1.north west) {($0$ at each $\ell$ with $P(\ell)=j$ in the $E_j$-row; ${\ast} \in \set{0,1}$)};
\foreach \v/\j in {v1/1, v2/2, v3/3} {
  \draw (seq.east) -- ($(seq.east)!0.30!(\v.west)$);
  \draw[densely dashed] ($(seq.east)!0.30!(\v.west)$) -- ($(seq.east)!0.75!(\v.west)$);
  \draw[earrow, ->] ($(seq.east)!0.75!(\v.west)$) -- (\v.west);
  \node[fill=white, inner sep=1pt] at ($(seq.east)!0.5!(\v.west)$) {\scriptsize $E_{\j}$};
}
\end{tikzpicture}     \vspace{-1.5ex}
    \caption{
    Illustration of a \kl{structure} in \Cref{lemma: new new partitioned binary hyperclique hypothesis}, where $k > r = 3$ and $\length = 6$.
    The \kl{formula} in \Cref{lemma: new new partitioned binary hyperclique hypothesis} 
    is defined so that there are "vectors" $v_1, \dots, v_k$ in $\mathcal{V}$ indicated by $x_1, \dots, x_k$, respectively, such that 
    for each $S = \set{i_1, i_2, i_3} \in \binom{\range{k}}{r}$ with $i_1 < i_2 < i_3$ and each \kl{block} $b$,
    there is a "choice function" $P \colon \range{\length} \to \range{r}$ indicated by $x_{\mu(S)}$ such that $v_{i_{P(\ell)}}^{(b)}\ith{\ell} = 0$ for all $\ell \in \range{\length}$.
    }
    \label{figure: reduction from HClique to PPk}
    \Description{Illustration of the reduction from the (k,r)-OV-hyperclique problem to model checking of PP^k[(omega,r)].}
\end{figure}
See \Cref{figure: reduction from HClique to PPk} for an illustration of $\struc_{\mathcal{V}, r}$.
We have:
$\strucn{\struc}
= \asympO(n + Br^{\length})
= \asympO_{k, r, B}(r^{\length} + n)$ and 
$\strucm{\struc}
= \asympO(B r^{\length} + B r^{\length + 1} n)
= \asympO_{k, r, B}(r^{\length + 1} n)$.
The "relations" can be constructed in $\asympO(B r^{\length + 1} n)$ time,
"eg", by implementing the subset tests $P^{-1}(j) \subseteq \Zeropos(v^{(b)})$ using bitmasks per "block".
Thus, the total construction time is $\asympO((B r^{\length + 1} + d) n) = \asympO_{k, r, B}(r^{\length + 1} n)$.
For correctness,
for every $v_1, \dots, v_k \in \mathcal{V}$ and $S \in \binom{\range{k}}{r}$,
we have:
\begin{align*}
    &\GIP{v_{S\ith{1}}, \ldots, v_{S\ith{r}}} = 0
    \Leftrightarrow \forall b \in \range{B}.\,\exists P \colon \range{\length} \to \range{r}.\forall j \in \range{r}.\,
P^{-1}(j) \subseteq \Zeropos(v_{S\ith{j}}^{(b)}) %
    \\
    &\Leftrightarrow \forall b \in \range{B}.\,\exists P \colon \range{\length} \to \range{r}.~
    \struc, \set{x_i \mapsto v_i \mid i \in S} \cup \set{x_{\mu(S)} \mapsto \tuple{b, P}} \models \bigwedge_{j = 1}^{r} E_{j}(x_{\mu(S)}, x_{S\ith{j}}) \\
    &\Leftrightarrow \struc, \set{x_{i} \mapsto v_i \mid i \in S} \models \COVER(S).
\end{align*}
We show that
$\OVgraph{\mathcal{V}, r}$ has a $(k, r)$-"hyperclique"
"iff"
$\struc \models \fml$.
\proofcase{($\Rightarrow$)}
Straightforward.
\proofcase{($\Leftarrow$)}
Suppose $\struc \models \fml$,
witnessed by $a_1,\ldots,a_k$.
By the definition of $\fml$, for each $S \in \binom{\range{k}}{r}$ and each $b \in \range{B}$, there is some $a_{S, b}$
(the witness for the existential quantifier $\exists x_{\mu(S)}$ in the $b$-th conjunct of $\COVER(S)$; it may depend on $b$) such that
$\struc, \set{x_i \mapsto a_i \mid i \in S} \cup \set{x_{\mu(S)} \mapsto a_{S, b}} \models E_{j}(x_{\mu(S)}, x_{S\ith{j}})$ for each $j \in \range{r}$,
and then $a_{S\ith{j}} \in \mathcal{V}$.
Thus, $a_1, \dots, a_k \in \mathcal{V}$.
By the above equivalence, for each $r$-element subset $S \in \binom{\range{k}}{r}$,
we have $\GIP{a_{S\ith{1}}, \ldots, a_{S\ith{r}}} = 0$.
Because the input $\mathcal{V}$ is "well-formed",
$a_{S\ith{1}}, \ldots, a_{S\ith{r}}$ are pairwise distinct for each $S \in \binom{\range{k}}{r}$, and hence $a_1, \dots, a_k$ are pairwise distinct.
Hence, $\set{a_1,\ldots,a_k}$ is a $(k,r)$-"hyperclique" in $\OVgraph{\mathcal{V}, r}$.
\end{proof}

By \Cref{lemma: new new partitioned binary hyperclique hypothesis},
we have the following.
\begin{lemma}\label{lemma:sparse-k4}
Fix $k > r \geq 3$.
Fix $\eps > 0$.
Assuming the $(k, r)$-\OVHyperCliqueHypothesis{},
there is a $\kPP{k}\sigres{(\ordinalomega, r)}$-"sentence" $\fml_{\eps}$ such that
$\MC{\set{\fml_{\eps}}}$ cannot be solved in $\asympO(m^{k - \eps})$ time.
\end{lemma}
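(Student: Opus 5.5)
The plan is a contrapositive argument that uses \Cref{lemma: new new partitioned binary hyperclique hypothesis} with a number of blocks $B$ chosen depending on $\eps$ and on the constant $c$ from the hypothesis. Suppose that, for the fixed sentence $\fml_\eps$ we are about to construct, $\MC{\set{\fml_{\eps}}}$ could be solved in $\asympO(m^{k-\eps})$ time. We will derive an $\asympO(n^{k-\eps/2})$-time algorithm for the $(k,r)$-\OVHyperCliqueProblem{n, \le c \log n}, where $c$ is the constant that the $(k,r)$-\OVHyperCliqueHypothesis{} supplies for $\eps' \defeq \eps/2$. That contradicts the hypothesis.

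Concretely, fix $\eps' = \eps/2$ and let $c$ be the corresponding constant. Choose $B$ large enough that $r^{(c\log n)/B} \le n^{\delta}$, i.e.\ $c\log_2 r / B \le \delta$, with $\delta > 0$ small. We want $(1+\delta)(k-\eps) \le k - \eps/2$, so it suffices to take $\delta \le \eps/(2k)$. Let $\fml_\eps$ be the fixed $\kPP{k}\sigres{(B,r)}$-sentence given by \Cref{lemma: new new partitioned binary hyperclique hypothesis} for this $B$. Since $B$ depends only on $\eps$, $c$, $k$ and $r$, the sentence is genuinely fixed, and it lies in $\kPP{k}\sigres{(\ordinalomega, r)}$.

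Given an instance $\mathcal{V}$ with $d \le c\log n$, we first pad $d$ up to a multiple of $B$, at most $\lceil c\log n\rceil + B$, so that $d/B$ is an integer. We pad each vector with coordinates equal to $1$. This preserves every generalized inner product except that it could create new nonzero contributions; those contributions matter only when all vectors in the tuple have a $1$ there. So we instead pad with coordinates that are $0$ in every vector. All-zero coordinates contribute nothing to any $\GIP{\cdot}$, so orthogonality of every tuple, and therefore well-formedness, is unchanged.

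Then $L = d/B \le c\log n/B + 1$, so $r^{L+1} \le r^2 n^{\delta}$. The lemma builds $\struc_{\mathcal{V},r}$ in time $\asympO(r^{L+1} n) = \asympO(n^{1+\delta})$, with $m = \asympO(n^{1+\delta})$. Running the assumed algorithm then costs $\asympO(n^{(1+\delta)(k-\eps)}) \subseteq \asympO(n^{k-\eps/2})$. Adding the construction time, which is $\asympO(n^{1+\delta})$ and is dominated because $k - \eps/2 > 1 + \delta$ for small $\eps$, we solve the problem in $\asympO(n^{k-\eps'})$ time. This contradicts the hypothesis. Here we may assume $\eps < 1$, since hardness for a smaller $\eps$ implies hardness for a larger one.

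The main subtlety is the order of quantifiers: $c$ depends on $\eps'$, $B$ depends on $c$, and the sentence depends on $B$. This order is why the sentence depends on $\eps$. Beyond that, the steps are the padding argument and checking that the exponent arithmetic $(1+\delta)(k-\eps) \le k-\eps/2$ holds with $\delta = \eps/(2k)$. Both are routine.
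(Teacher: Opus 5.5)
Your proposal is correct and follows the paper's proof essentially step by step. It uses the same exponent loss $\eps/2$ in the hypothesis, chooses $B$ after $c$ so that $(c/B)\log r \le \eps/(2k)$, pads with all-zero coordinates to make $B \mid d$, and runs the same estimate $(1+\delta)(k-\eps) < k-\eps/2$. The aborted ``pad with $1$s'' sentence should simply be deleted, since the zero-padding you settle on is exactly what the paper does.
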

\begin{proof}
We may assume $\eps < 1$ (otherwise, take $\fml_{\eps} \defeq \fml_{1/2}$).
Let $\delta \defeq \eps/2$ and let $c$ be the constant supplied by the
$(k,r)$-\OVHyperCliqueHypothesis{} for the exponent loss $\delta$.

Choose $B$ so large that $\gamma \defeq c/B \cdot \log r < \delta/k$, or equivalently, $k\gamma < \delta$.
Note that $B$ is chosen after $\eps$ and after the constant $c$ (here, $c$ is determined by $\eps$).%
\footnote{This is the only source of the dependence of the hard "sentence" on $\eps$.}
Let $\fml_{\eps}$ be the fixed "sentence" produced by \Cref{lemma: new new partitioned binary hyperclique hypothesis} for the parameters $k$, $r$, and $B$.
Suppose, toward a contradiction, that $\MC{\set{\fml_{\eps}}}$ can be solved in $\asympO(m^{k - \eps})$ time.
Let $\mathcal{V} \subseteq \set{0, 1}^d$ be an arbitrary instance of $(k,r)$-\OVHyperCliqueProblem{n, \le c \log n}.
Without loss of generality, we may assume that $B$ divides $d$ by padding each "vector" with at most $B-1$ zeros.
This padding does not change the "generalized orthogonality@generalized-orthogonal" of any tuple of "vectors", so the instance stays "well-formed" and the answer is unchanged.
By $d \le c \log n + B - 1$, we have:
\[
r^{{d/B}}
~\le~ r^{(\frac{c}{B} \cdot \log n) + 1}
~=~ r \cdot n^{\frac{c}{B} \cdot \log r}
~=~ \asympO(n^{\gamma}).
\]
Let $\struc \defeq \struc_{\mathcal{V}, r}$ be given by \Cref{lemma: new new partitioned binary hyperclique hypothesis}.
By $r^{{d/B}} = \asympO(n^{\gamma})$ and $\gamma < \delta/k < 1$, we have:
\begin{align*}
\strucn{\struc} &= \asympO(n^{\gamma} + n) = \asympO(n), &\strucm{\struc} &= \asympO(n^{1+\gamma}), & \text{and the construction time is } \asympO(n^{1 + \gamma}).
\end{align*}
By applying the assumed $\asympO(m^{k - \eps})$-time algorithm,
$(k,r)$-\OVHyperCliqueProblem{n, \le c \log n} can be solved in time $\asympO(n^{1+\gamma}) + \asympO(n^{(1+\gamma)(k - \eps)}) \le \asympO(n^{k - \delta})$, where
$(1+\gamma)(k-\eps)
< k-\eps + k\gamma
< k-\eps + \delta
= k-\delta$
by $k\,\gamma <\delta$ and $\eps = 2 \delta$.
This contradicts the choice of $c$.
Hence, $\MC{\set{\fml_{\eps}}}$ cannot be solved in $\asympO(m^{k - \eps})$ time.
\end{proof}

\subsection{To \texorpdfstring{$\kPP{k}\sigres{(0, 1)}$}{PPk[(0,1)]}}\label{section: to PPk one binary relation}
Finally, a unary encoding of multiple binary "relations" gives the following reduction:
\begin{lemma}[\Cref{section: lemma: reduction monadic and binary predicates to single binary predicate}\ifcameraready\else{}; see \Cref{section: lemma: reduction monadic and binary predicates to single binary predicate for tree} for the "tree-represented" case\fi]\label{lemma: reduction monadic and binary predicates to single binary predicate}
\gdef\reductionmonadicandbinarypredicatestosinglebinarypredicatelemma{
Fix $k \ge 3$.
Let $\fml$ be a $\kPP{k}\sigres{(C_1,C_2)}$ ("resp", $\kEP{k}\sigres{(C_1,C_2)}$, $\kFO{k}\sigres{(C_1,C_2)}$)-"sentence", and let $\struc$ be a "structure".
Let $C \defeq \card{\fmlsig{\fml}} + 1$, where the extra binary relation is used
to eliminate "equality".
One can construct, in $\asympO(\len{\fml} + C \, \strucn{\struc} + \strucm{\struc})$ time,
a $\kPP{k}\sigres{(0,1)}$ ("resp", $\kEP{k}\sigres{(0,1)}$, $\kFO{k}\sigres{(0,1)}$)-"sentence" $\fml[2]$ without "equality" and a "structure" $\struc[2]$
such that $\struc \models \fml$ "iff" $\struc[2] \models \fml[2]$.
Moreover,
$\len{\fml[2]} = \asympO(\len{\fml})$,
$\strucn{\struc[2]} = \asympO(C \, \strucn{\struc})$, and
$\strucm{\struc[2]} = \asympO(C \, \strucn{\struc} + \strucm{\struc})$.}
\reductionmonadicandbinarypredicatestosinglebinarypredicatelemma
\end{lemma}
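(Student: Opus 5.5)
The plan is to simulate every unary symbol, every binary symbol, and equality by one fresh binary relation $E$. Each symbol becomes a short directed gadget, and each atom becomes a walk formula that uses only three variables; this is why $k \ge 3$ is needed. First enumerate $\fmlsig{\fml}$ together with one extra binary symbol $\mathrm{Eq}$ as $Q_1, \dots, Q_C$. Set $\mathrm{Eq}^{\struc}$ to be the identity relation, and replace every atom $x = y$ in $\fml$ by $\mathrm{Eq}(x,y)$. This costs $\asympO(\strucn{\struc})$ tuples and leaves an equality-free sentence over at most $C$ symbols.

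Next I build $\struc[2]$. Its universe consists of $\univ{\struc}$ together with, for every $a \in \univ{\struc}$, an out-chain $a \to p_{a,1} \to \dots \to p_{a,C}$ and an in-chain $q_{a,C} \to \dots \to q_{a,1} \to a$. All chain edges go into $E^{\struc[2]}$. A binary fact $Q_i(a,b)$ is encoded by the single edge $p_{a,i} \to q_{b,i}$. A unary fact $Q_i(a)$ is encoded by $p_{a,i} \to q_{a,i}$, i.e.\ it is treated as the binary fact $Q_i(a,a)$. This gives $\strucn{\struc[2]} = \asympO(C\,\strucn{\struc})$ and $\strucm{\struc[2]} = \asympO(C\,\strucn{\struc} + \strucm{\struc})$, and the construction is a single linear pass.

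On the formula side, replace $Q_i(x,y)$ by a walk: $i$ forward steps from $x$, one middle edge, then $i$ forward steps ending at $y$. The walk alternates a scratch variable with a variable that is no longer needed. For example, $\exists z(E(x,z) \land \exists x(E(z,x) \land \cdots))$ reuses $x$ once we have left it, while $y$ stays free. The last step must land exactly on $y$, so the second half is written with intermediate variables ending in the atom $E(\cdot, y)$. The walk uses only the variables $x, y, z$, so the result stays in $\kPP{k}$, $\kEP{k}$, or $\kFO{k}$ respectively. Replace $Q_i(x)$ by the same walk with $y$ instantiated to $x$; this needs a third variable to keep $x$, again available because $k \ge 3$. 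Nothing else in $\fml$ changes, so the three fragments are preserved simultaneously.

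Correctness will follow from a level argument. Give original elements level $0$, give $p_{a,j}$ level $j$, and give $q_{a,j}$ level $2C+1-j$. Every edge of $E^{\struc[2]}$ then strictly increases the level, except the final edges $q_{a,1} \to a$, which return from level $2C$ to level $0$. Consider a walk of length $2i+1$ from level $0$ that ends at an original element. It cannot pass through any other original element before its final step, because reaching one needs a full in-chain. So for lengths below $2C+2$ it must be exactly $a \to p_{a,1} \to \dots \to p_{a,i} \to q_{b,i} \to \dots \to b$. The middle edge must leave $p_{a,i}$ at level $i$ and must land at some $q_{b,i'}$ from which exactly $i$ further steps reach level $0$, which forces $i' = i$. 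Hence $\struc[2]$ satisfies the walk formula at $(a,b)$ iff $\struc \models Q_i(a,b)$. An induction on $\fml$ then gives $\struc \models \fml$ iff $\struc[2] \models \fml[2]$. Quantifiers in $\fml[2]$ now range over chain elements too, so every variable should be relativized to original elements. The cleanest way is to conjoin, at each quantifier, a guard saying the element is original, such as "has an $E$-walk of length $1$ to a $p$-vertex". I must check that this guard is PP-expressible and exact: a dedicated length-$(C{+}1)$ loop gadget is a fallback. Under negation this guard is exactly what is needed in the $\kFO{k}$ case.

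The main obstacle is the size bound $\len{\fml[2]} = \asympO(\len{\fml})$. Written naively, the walk for $Q_i$ has length $\Theta(i)$, which gives $\asympO(C\len{\fml})$. Since $C \le \len{\fml}$ this is only quadratic. My first attempt is DAG sharing: the walk for $Q_i$ is the walk for $Q_{i-1}$ extended by one step, so walk prefixes are shared nested subformulas. The catch is that each walk is anchored at the free variable pair of its occurrence. I would try to share one copy of each symbol's walk per variable pair, at $\asympO(k^2 C)$ extra size, which is $\asympO(\len{\fml})$ for fixed $k$. If that fails, the fallback is a logarithmic-depth gadget: encode $i$ in binary along branching chains, which makes the blowup $\asympO(\len{\fml}\log C)$. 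I would then revisit whether a tighter gadget achieves linear size.
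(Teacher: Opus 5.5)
There is a genuine gap: your atom encoding is not faithful, because the key step of your level argument is false. The cross edge $p_{a,j}\to q_{b,j}$ jumps from level $j$ to level $2C+1-j$, i.e.\ it enters the in-chain of $b$ only $j$ steps above $b$. So a walk between original elements needs only $2j+1\ge 3$ steps, not a full in-chain. Level monotonicity between returns to level $0$ does not bound how many returns occur.

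Hence your translation of $Q_i(x,y)$, an $E$-walk of length $2i+1$, holds whenever some odd number of facts $Q_{j_1},\dots,Q_{j_s}$ chain $x$ to $y$ with $\sum_t(2j_t+1)=2i+1$. This becomes possible as soon as $i\ge 4$. For instance, let $C\ge 4$ with $Q_1,Q_4$ unary, $\mathbb{A}\models Q_1(a)$ and $\mathbb{A}\not\models Q_4(a)$. Going three times around the closed walk $a\to p_{a,1}\to q_{a,1}\to a$ gives a walk of length $9=2\cdot 4+1$, so the translation of $Q_4(a)$ holds in $\mathbb{B}$. The binary analogue is the walk $a,p_{a,1},q_{c,1},c,p_{c,1},q_{d,1},d,p_{d,1},q_{b,1},b$ coming from $Q_1(a,c),Q_1(c,d),Q_1(d,b)$.

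This failure bites exactly in the intended application, where the number of unary block symbols is large. Sending every cross edge to the top vertex $q_{b,C}$ and using walk length $i+1+C$ would repair this particular defect, since every original-to-original segment would then have length at least $C+2$.

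The relativization guard is the second gap, and it is tied to the first. A $p$-vertex is not definable from $E$ in your structure, and in $\mathsf{PP}^k$ the guard must be a positive property. But any marker gadget attached to original elements (a $2$-cycle, or your length-$(C+1)$ loop) can itself be traversed inside an unconstrained atom walk to pad its length.

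The missing idea, which is how the paper proceeds, is to mark vertex kinds by short cycles of distinct lengths and to test the mark at \emph{every} intermediate vertex of the atom walk:
\begin{itemize}
\item Each original $v$ becomes a $2$-cycle $v^{0,0},v^{0,1}$, and the quantifier guard is $E^2(x,x)$.
\item Each $v$ gets a private tower of layers $1,\dots,2(C+2)$. Odd layers are $3$-cycles, even layers are $5$-cycles, and there are complete forward edges to the next layer of the same $v$.
\item A fact $R_i(v,w)$ becomes edges from layer $2(i+2)$ of $v$ to layer $0$ of $w$.
\item The atom formula $\theta_{i+2}$ alternately checks $E^3$ and $E^5$ closed walks at each intermediate vertex.
\end{itemize}
The only layer-decreasing edges drop by at least $6$, so closed walks of length at most $5$ stay inside the local cycle and these tests are exact. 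They pin the walk to the tower of $x$, and the only edges from layer $2(i+2)$ back to layer $0$ are the $R_i$-edges.

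Your size argument is the same as the paper's and is fine once the gadget is corrected: share the walk chains per ordered variable pair in the DAG, for $O(k^2C)$ extra size.
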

Using \Cref{lemma: reduction monadic and binary predicates to single binary predicate},
we can reduce $\kPP{k}\sigres{(\ordinalomega, r)}$ to $\kPP{k}\sigres{(0, 1)}$.
\begin{lemma}\label{lemma:sparse-k4'}
Fix $k > r \geq 3$ and $\eps > 0$.
Assuming the $(k, r)$-\OVHyperCliqueHypothesis{},
there exists a $\kPP{k}\sigres{(0, 1)}$-"sentence" $\fml[2]_{\eps}$ such that
$\MC{\set{\fml[2]_{\eps}}}$ cannot be solved in $\asympO(m^{k - \eps})$ time.
\end{lemma}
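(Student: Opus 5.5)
The plan is to compose the fixed-sentence hardness of \Cref{lemma:sparse-k4} with the signature-collapsing translation of \Cref{lemma: reduction monadic and binary predicates to single binary predicate}, and to check that the latter's blowup on the input side is only linear in $n+m$. Since the translated sentence depends only on $\fml_{\eps}$, it will again be a single fixed sentence.

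Given $\eps>0$, let $\fml_{\eps}$ be the $\kPP{k}\sigres{(\ordinalomega, r)}$-sentence from \Cref{lemma:sparse-k4}. Concretely, it is the sentence of \Cref{lemma: new new partitioned binary hyperclique hypothesis} for the chosen $B$, so it is fixed and has $C=\card{\fmlsig{\fml_{\eps}}}+1 \le B+r+1 = \asympO(1)$.

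The translation must apply to $\fml_{\eps}$. This sentence has arity up to $r$, whereas the lemma is stated for $\sigres{(C_1,C_2)}$. I would therefore first reduce the $r$-ary relations $E_j$ to binary ones. In the construction of \Cref{lemma: new new partitioned binary hyperclique hypothesis}, however, each $E_j$ is already binary: the atoms are $E_j(x_{\mu(S)},x_{S\ith{j}})$. So $\fml_{\eps}$ in fact lies in $\kPP{k}\sigres{(B,r)}$, with $B$ unary and $r$ binary symbols. That is, the ``$r$'' in the signature notation counts binary symbols, not arity, and no arity reduction is needed.

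Apply \Cref{lemma: reduction monadic and binary predicates to single binary predicate} with $k\ge 4\ge 3$ to the fixed $\fml_{\eps}$. This yields a fixed $\kPP{k}\sigres{(0,1)}$-sentence $\fml[2]_{\eps}$ and, for each structure $\struc$, a structure $\struc[2]$ with $\struc\models\fml_{\eps}$ iff $\struc[2]\models\fml[2]_{\eps}$. The sentence $\fml[2]_{\eps}$ depends only on $\fml_{\eps}$, because the lemma's output sentence is computed from $\fml$ alone.

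Now suppose $\MC{\set{\fml[2]_{\eps}}}$ is solvable in $\asympO(m^{k-\eps})$ time. Given an input $\struc$ for $\fml_{\eps}$, first apply \Cref{proposition: isolated} so that $\strucn{\struc}=\asympO(\strucm{\struc})$; this costs linear time. Then build $\struc[2]$ in $\asympO(C\,\strucn{\struc}+\strucm{\struc})=\asympO(\strucm{\struc})$ time, so that $\strucm{\struc[2]}=\asympO(\strucm{\struc})$. Running the assumed algorithm decides $\struc\models\fml_{\eps}$ in $\asympO(\strucm{\struc}) + \asympO\bigl((c'\strucm{\struc})^{k-\eps}\bigr)=\asympO(\strucm{\struc}^{k-\eps})$ time, using $k-\eps\ge 1$ (we may assume $\eps<1$). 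This contradicts \Cref{lemma:sparse-k4}.

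The main care point is the size bookkeeping. The term $C\,\strucn{\struc}$ must not dominate, and this requires $C=\asympO(1)$, which holds because $\fml_{\eps}$ is fixed, together with the isolated-element removal that bounds $n$ by $\asympO(m)$. It also requires the constants to be independent of the instance. Both hold since everything depends only on $\eps$ through $B$.
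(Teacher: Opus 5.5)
Your proposal is correct and follows essentially the same route as the paper: take the fixed $\fml_{\eps}$ from \Cref{lemma:sparse-k4}, translate it once via \Cref{lemma: reduction monadic and binary predicates to single binary predicate} into a sentence over a single binary relation that depends only on $\eps$, remove isolated elements with \Cref{proposition: isolated} so that the $C\,n$ term is $O(m)$, and use $k-\eps>1$ to derive a contradiction with \Cref{lemma:sparse-k4}. Your mid-course correction is also right: the notation $(\infty, r)$ counts $r$ binary symbols rather than fixing arity $r$, so no arity reduction is needed.
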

\begin{proof}
As before, we may assume $\eps < 1$, so that $k - \eps > 1$.
Let $\fml_{\eps}$ be the $\kPP{k}\sigres{(\ordinalomega, r)}$-"sentence" given by \Cref{lemma:sparse-k4} for the parameters $k$, $r$, and $\eps$.
Let $\fml[2]_{\eps}$ be the "sentence" obtained from $\fml_{\eps}$ by \Cref{lemma: reduction monadic and binary predicates to single binary predicate}; it depends only on $\eps$.
Let $\struc$ be an arbitrary "structure".
Let $\widehat{\struc}$ be the "structure" obtained from $\struc$ by applying \Cref{proposition: isolated}.
Then $\strucn{\widehat{\struc}} = \asympO_{\eps}(\strucm{\struc})$ and $\strucm{\widehat{\struc}} = \asympO(\strucm{\struc})$.
Let $\struc[2]$ be the "structure" obtained from $\widehat{\struc}$ by applying \Cref{lemma: reduction monadic and binary predicates to single binary predicate} to $\fml_{\eps}$ and $\widehat{\struc}$.
Then $\strucn{\struc[2]} = \asympO_{\eps}(\strucn{\widehat{\struc}}) = \asympO_{\eps}(\strucm{\struc})$, and $\strucm{\struc[2]} = \asympO_{\eps}(\strucn{\widehat{\struc}} + \strucm{\widehat{\struc}}) = \asympO_{\eps}(\strucm{\struc})$.
Toward a contradiction, we assume that $\MC{\set{\fml[2]_{\eps}}}$ is solvable in $\asympO(m^{k - \eps})$ time.
Then, by constructing $\struc[2]$ from $\struc$ and running the assumed algorithm on $\struc[2]$, $\MC{\set{\fml_{\eps}}}$ can be solved in time
$\asympO_{\eps}(\strucm{\struc}) + \asympO(\strucm{\struc[2]}^{k-\eps})
\le \asympO_{\eps}(\strucm{\struc}^{k-\eps})$,
which contradicts \Cref{lemma:sparse-k4}.
\end{proof}
Moreover, by the same argument as in the proof of \Cref{lemma:sparse-k4'} with $k = 3$, we have the following.
Note that \Cref{lemma: new new partitioned binary hyperclique hypothesis} works for $k = 3$ as well.
\ifthenelse{\boolean{arxiv}}{}{See the extended version~\cite{nakamuraUezatoDichotomyBoundedVariable2026arxiv} for the proof.}
\ifthenelse{\boolean{arxiv}}{%
\begin{lemma}[\Cref{section:lemma:sparse-k3'}]%
}{%
\begin{lemma}%
}\label{lemma:sparse-k3'}
\gdef\sparsekthreeprimelemma{%
Fix $\eps > 0$.
Assuming the \OVTriangleHypothesis{},
there exists a $\kPP{3}\sigres{(0, 1)}$-"sentence" $\fml[2]_{\eps}$ such that
$\MC{\set{\fml[2]_{\eps}}}$ cannot be solved in $\asympO(m^{\matrixexponent - \eps})$ time.
}%
\sparsekthreeprimelemma
\end{lemma}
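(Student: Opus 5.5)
The plan is to repeat the two-stage argument of \Cref{lemma:sparse-k4,lemma:sparse-k4'} with $k = 3$ and $r = 2$, with target exponent $\matrixexponent$ in place of $k$.

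\emph{Stage 1.} I first prove the $k=3$ analogue of \Cref{lemma:sparse-k4}: under the \OVTriangleHypothesis{}, for every $\eps>0$ there is a $\kPP{3}\sigres{(\ordinalomega,2)}$-"sentence" $\fml_{\eps}$ with $\MC{\set{\fml_{\eps}}}$ not solvable in $\asympO(m^{\matrixexponent-\eps})$ time.
\begin{itemize}
\item Assume $\eps<1$. Set $\delta \defeq \eps/2$, and let $c$ be the constant from the \OVTriangleHypothesis{} for the loss $\delta$.
\item Choose $B$ with $\gamma \defeq (c/B)\log 2 < \delta/\matrixexponent$.
\item Apply \Cref{lemma: new new partitioned binary hyperclique hypothesis} with $k=3$, $r=2$ and this $B$. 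It is stated for $k>r\ge 2$, so it applies, and $\mu(S)$ exists since $3>2$. This gives a fixed "sentence" $\fml_{\eps}$.
\item Given a "well-formed" instance $\mathcal{V}\subseteq\set{0,1}^d$ with $d\le c\log n$, pad with fewer than $B$ zeros so that $B \mid d$. Then $2^{d/B} = \asympO(n^{\gamma})$, so the structure has $\strucn{\struc} = \asympO(n)$ and $\strucm{\struc} = \asympO(n^{1+\gamma})$, and it is built in $\asympO(n^{1+\gamma})$ time.
\item A hypothetical $\asympO(m^{\matrixexponent-\eps})$ algorithm would then solve \OVTriangleProblem{n, \le c\log n} in time $\asympO(n^{1+\gamma}) + \asympO(n^{(1+\gamma)(\matrixexponent-\eps)})$.
\item The exponent satisfies $(1+\gamma)(\matrixexponent-\eps) < \matrixexponent-\eps+\matrixexponent\gamma < \matrixexponent-\delta$.
\item The first term also fits, since $1+\gamma < 2 < \matrixexponent - \delta$ because $\matrixexponent\ge 2$ and $\delta<1/2$. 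Here I may need to take $\eps$ small, say $\eps<\min(1,\matrixexponent-2+\dots)$. It is safest to shrink $\eps$ so that $1+\gamma\le \matrixexponent-\delta$, which holds as soon as $\gamma<1-\delta$, and this is ensured by the choice of $B$.
\item This contradicts the hypothesis.
\end{itemize}

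\emph{Stage 2.} Transfer to a single binary relation exactly as in \Cref{lemma:sparse-k4'}.
\begin{itemize}
\item Apply \Cref{lemma: reduction monadic and binary predicates to single binary predicate} to $\fml_{\eps}$. This is where $k=3$ matters: the lemma requires $k\ge 3$, which holds. It yields a fixed $\kPP{3}\sigres{(0,1)}$-"sentence" $\fml[2]_{\eps}$.
\item Given an arbitrary structure $\struc$, first remove isolated elements by \Cref{proposition: isolated}, giving $\strucn{\widehat{\struc}} = \asympO_{\eps}(\strucm{\struc})$.
\item Then apply the encoding. Since $C$ depends only on $\fml_{\eps}$, we get $\strucm{\struc[2]} = \asympO_{\eps}(\strucm{\struc})$, and the construction runs in linear time.
\item An $\asympO(m^{\matrixexponent-\eps})$ algorithm for $\fml[2]_{\eps}$ would give $\asympO_{\eps}(\strucm{\struc}) + \asympO(\strucm{\struc}^{\matrixexponent-\eps})$ for $\fml_{\eps}$. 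Since $\matrixexponent-\eps>1$, this is $\asympO(\strucm{\struc}^{\matrixexponent-\eps})$, contradicting Stage 1.
\end{itemize}

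\emph{Main obstacle.} The only delicate point is the exponent bookkeeping in Stage 1 with the non-integer target $\matrixexponent$. One must check that the linear-time construction term $n^{1+\gamma}$ is still dominated by $n^{\matrixexponent-\delta}$. One must also check that the blow-up factor $2^{d/B}$ (now with $r=2$) can be made $n^{\gamma}$ for an arbitrarily small $\gamma$, with $B$ chosen after $c$. Both follow from $\matrixexponent\ge 2$ and the choice of $B$.
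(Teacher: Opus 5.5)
Your proposal is correct and is essentially the paper's proof: the paper first proves the $k=3$, $r=2$ analogue of \Cref{lemma:sparse-k4} with the same choices $\delta=\eps/2$ and $\gamma=(c/B)\log 2<\delta/\matrixexponent$ (including the zero-padding step), and then transfers to a single binary relation exactly as in \Cref{lemma:sparse-k4'}. One minor slip: your claim $2<\matrixexponent-\delta$ fails when $\matrixexponent=2$, but you do not need it, because $\matrixexponent-\eps>1$ already gives $n^{1+\gamma}\le n^{(1+\gamma)(\matrixexponent-\eps)}$ (your fallback condition $\gamma<1-\delta$ also suffices).
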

\Cref{theorem:sparse-k4} follows from \Cref{lemma:sparse-k4'}
and the implication from the \SparseMAXxSATHypothesis{(k-1)} to the
$(k,k-1)$-\OVHyperCliqueHypothesis{} for $k\ge4$
(\Cref{proposition: alman VVW}; "cf" \cite[Theorem~23]{almanOVGraphsAre2020}).
Similarly, \Cref{theorem:sparse-k3} follows from
\Cref{lemma:sparse-k3'} and the implication from
the \SparseMAXxSATHypothesis{2} to the \OVTriangleHypothesis{}
(\Cref{proposition: alman VVW}; "cf" \cite[Theorem~5]{almanOVGraphsAre2020}).

\section{\texorpdfstring{$\kFO{3}$ Model Checking via Tarski's $\CoR$}{FO3 Model Checking via Tarski's CoR}}\label{section: fo3}
In this section, we show \Cref{theorem: fo3}.
\AP
We first recall ""Tarski's calculus of relations"" ($\intro*\CoR$) \cite{tarskiCalculusRelations1941}.%

\AP
$\CoR$ ""terms@@CoR"" and $\CoR$ ""atomic FO-formulas@@CoR"" are generated by the following grammar,
where $R$ ranges over binary "relation symbols" and $x$ and $y$ range over "variables":
$\phantomintro*{\union}$
$\phantomintro*{\intersection}$
$\phantomintro*{\compo}$
$\phantomintro*{\corid}$
\begin{align*}
    \term[1], \term[2] ~\Coloneqq~
    & R \mid \corid \mid \term[1]^{\smile} \mid \term[1] \intersection \term[2] \mid \term[1] \union \term[2] ~\mid~ \term[1] \compo \term[2] \mid \term[1]^{-}, \tag*{($\CoR$ "terms@@CoR")}\\
    \fml[1], \fml[2] ~\Coloneqq~
    & \term[1](x, y). %
    \tag*{($\CoR$ "atomic FO-formulas@@CoR")}
\end{align*}
Here, $\term[1]$ and $\term[2]$ are meta-variables ranging over $\CoR$ "terms@@CoR", and $\fml[1]$ and $\fml[2]$ are meta-variables ranging over $\CoR$ "atomic FO-formulas@@CoR".
We write $\intro*\cortop \defeq \corid \union \corid^{-}$ for the full "relation"
and write $\intro*\corz \defeq \cortop^{-}$ for the empty "relation".

Given a "structure" $\struc$ over a "signature" consisting only of binary "relation symbols",
the ""semantics@@CoR"" $\sem{\term}^{\struc} \subseteq \univ{\struc}^2$ is defined as follows:
\begin{align*}
    \sem{R}^{\struc} & \defeq R^{\struc},&
    \sem{\corid}^{\struc} &\defeq \set{\tuple{x, x} \mid x \in \univ{\struc}},&
    \sem{\term[1]^{\smile}}^{\struc} &\defeq \set{ \tuple{x, y} \mid \tuple{y, x} \in \sem{\term[1]}^{\struc} },\\
    \sem{\term[1] \intersection \term[2]}^{\struc} &\defeq \sem{\term[1]}^{\struc} \cap \sem{\term[2]}^{\struc},&
    \sem{\term[1] \union \term[2]}^{\struc} &\defeq \sem{\term[1]}^{\struc} \cup \sem{\term[2]}^{\struc},&
    \sem{\term[1]^{-}}^{\struc} &\defeq \univ{\struc}^2 \setminus \sem{\term[1]}^{\struc},\\
    \sem{\term[1] \compo \term[2]}^{\struc} &\defeq \set{\tuple{x, z} \mid \text{$\exists y \in \univ{\struc}$, } \tuple{x, y} \in \sem{\term[1]}^{\struc} \text{ and } \tuple{y, z} \in \sem{\term[2]}^{\struc} }. \span\span\span\span
\end{align*}
\begin{proposition}\label{proposition: CoR term evaluation}
Let $\struc$ be an $n$-element "structure" over a "signature" consisting only of binary "relation symbols".
For a $\CoR$ term $\term$, we can compute $\sem{\term}^{\struc}$ in $\asympO(\len{\term}) \cdot n^{\matrixexponent}$ time.
\end{proposition}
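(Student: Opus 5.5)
We evaluate $\term$ bottom-up over its DAG representation. Each binary relation $\sem{\term[2]}^{\struc}$, for every subterm $\term[2]$ of $\term$, is stored as an $n \times n$ Boolean matrix, indexed by a fixed bijection $\univ{\struc} \to \range{n}$. Since $\len{\term}$ counts edges of the DAG, the number of distinct subterms (vertices) is $\asympO(\len{\term})$. Processing them in a topological order therefore takes $\asympO(\len{\term})$ steps. It remains to show that each step costs $\asympO(n^{\matrixexponent})$, which holds because $\matrixexponent \ge 2$.

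We treat the constructors case by case.
\begin{itemize}
\item For a relation symbol $R$, we build the adjacency matrix of $R^{\struc}$. In the dense setting it is given directly; in the sparse setting we fill it from the tuple list in $\asympO(n^2 + \strucm{\struc})$ time, where $\strucm{\struc} \le n^2$ for binary relations.
\item For $\corid$, we write down the identity matrix in $\asympO(n^2)$ time.
\item For $\term[1]^{\smile}$, we transpose; for $\term[1] \intersection \term[2]$ and $\term[1] \union \term[2]$, we take entrywise AND/OR; for $\term[1]^{-}$, we take entrywise negation. Each costs $\asympO(n^2)$.
\end{itemize}

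The only nontrivial case is composition $\term[1] \compo \term[2]$, which is Boolean matrix multiplication. We compute the integer product $M_{\term[1]} \cdot M_{\term[2]}$ over $\mathbb{Z}$ with a fast matrix multiplication algorithm in $\asympO(n^{\matrixexponent})$ arithmetic operations, then threshold each entry to $[\,\cdot > 0\,]$. The entries are bounded by $n$, so they fit in $\asympO(\log n)$-bit words in the "word-RAM model". Hence each operation is $\asympO(1)$ time, and the step costs $\asympO(n^{\matrixexponent})$. This is the main point requiring care: we must justify that the arithmetic stays within word size. We also treat $\matrixexponent$ as the exponent of a fixed concrete algorithm, so that the bound $n^{\matrixexponent}$ is attained rather than only $n^{\matrixexponent + o(1)}$, or we absorb this convention as the paper does.

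Summing over the $\asympO(\len{\term})$ subterms gives total time $\asympO(\len{\term}) \cdot n^{\matrixexponent}$. Correctness follows by a straightforward induction on the DAG, since each matrix operation matches the corresponding clause of the "semantics@@CoR" exactly.
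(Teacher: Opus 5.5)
Your proposal is correct and follows essentially the same route as the paper. Both evaluate the term bottom-up, using $O(n^2)$-time entrywise operations for intersection, union, converse, and complement, and an $O(n^{\omega})$-time Boolean matrix multiplication for composition.

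One small refinement: your word-size argument should cover the intermediate values of the fast multiplication algorithm, not only the final entries. These are also polynomially bounded in $n$, or you can compute modulo a prime larger than $n$, so each composition step still costs $O(n^{\omega})$.
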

\begin{proof}
We evaluate $\term$ in a bottom-up manner.
Given $\sem{\term[2]}^{\struc}$ and $\sem{\term[3]}^{\struc}$ where $\term[2]$ and $\term[3]$ are ""subterms"" of $\term[1]$,
we can compute
$\sem{\term[2] \intersection \term[3]}^{\struc}$, 
$\sem{\term[2] \union \term[3]}^{\struc}$, 
$\sem{\term[2]^{\smile}}^{\struc}$, and 
$\sem{\term[2]^{-}}^{\struc}$ in $\asympO(n^2)$ time.
For the composition $\sem{\term[2] \compo \term[3]}^{\struc}$, we use an $\asympO(n^{\matrixexponent})$-time Boolean matrix multiplication algorithm.
\end{proof}

For $\CoR$ "atomic FO-formulas@@CoR" $\term(x, y)$,
we define $\struc, I \models \term(x, y)$ if and only if $\tuple{I(x), I(y)} \in \sem{\term}^{\struc}$.
The ""model-checking problem@@CoR"", denoted by $\intro*\CoRMC{\CoR}$, is the following problem:
given a "structure" $\struc$, a "valuation" $I$, and a $\CoR$ "atomic FO-formula@@CoR" $\fml = \term(x, y)$, decide whether $\struc, I \models \fml$.

\begin{proposition}\label{proposition: CoR model checking}
$\CoRMC{\CoR}$ is solvable in $\asympO(\len{\fml}) \cdot n^{\matrixexponent}$ time.
\end{proposition}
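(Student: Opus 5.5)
The plan is to reduce directly to \Cref{proposition: CoR term evaluation}. Given a structure $\struc$, a valuation $I$, and an atomic formula $\fml = \term(x, y)$, first compute the relation $\sem{\term}^{\struc} \subseteq \univ{\struc}^2$ as an $n \times n$ Boolean matrix, evaluating bottom-up over the (possibly DAG-represented) term. Then look up whether the single entry $\tuple{I(x), I(y)}$ lies in it. By definition, $\struc, I \models \term(x, y)$ holds iff $\tuple{I(x), I(y)} \in \sem{\term}^{\struc}$, so this decides the problem correctly.

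For the cost, note that $\len{\fml} = \len{\term} + \asympO(1)$, since the atomic formula adds only the two variable edges. The evaluation step costs $\asympO(\len{\term}) \cdot n^{\matrixexponent}$ by \Cref{proposition: CoR term evaluation}. That proposition processes each distinct subterm once, which is at most the number of DAG vertices and hence $\asympO(\len{\term})$. Each non-composition operator costs $\asympO(n^2) \le \asympO(n^{\matrixexponent})$, and each composition is one Boolean matrix product. The final lookup is $\asympO(1)$ in the word-RAM model once $I(x)$ and $I(y)$ are identified with indices in $\range{n}$. Initializing the matrices $R^{\struc}$ from the input costs $\asympO(n^2)$ per relation symbol occurring in $\term$, and there are at most $\len{\term}$ such symbols, so this is absorbed.

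The only point needing care is the representation. In the dense setting, relations arrive as adjacency arrays, so the base case of the evaluation is immediate. If relations are instead given as tuple lists, converting them to matrices costs $\asympO(n^2 + m)$, where $m \le n^2$ for binary relations, and this is again absorbed. I therefore do not expect a genuine obstacle: the statement is essentially a corollary of \Cref{proposition: CoR term evaluation}, combined with the observation that the free variables $x, y$ merely select one entry of the computed matrix.
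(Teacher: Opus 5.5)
Your proposal is correct and takes the same approach as the paper's proof: compute $\sem{\term}^{\struc}$ bottom-up via \Cref{proposition: CoR term evaluation}, then test whether $\tuple{I(x), I(y)} \in \sem{\term}^{\struc}$ in $\asympO(1)$ time. One small correction to your tuple-list remark: with several binary relation symbols you get $m \le \card{\fmlsig{\fml}} \cdot n^2$, not $m \le n^2$, but this is still absorbed into $\asympO(\len{\fml}) \cdot n^{\matrixexponent}$.
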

\begin{proof}
Let $\fml = \term(x, y)$.
Given $\sem{\term}^{\struc}$, we can check whether $\tuple{I(x), I(y)} \in \sem{\term}^{\struc}$ or not in $\asympO(1)$ time.
Thus, using \Cref{proposition: CoR term evaluation}, we can decide $\struc, I \models \fml$ in $\asympO(\len{\fml}) \cdot n^{\matrixexponent}$ time.
\end{proof}

\subsection{Dense model checking of \texorpdfstring{$\kFO{3}$}{FO3}}
We recall the translation from $\kFO{3}\sigres{(0,\ordinalomega)}$-"formulas" with at most two free "variables" to $\CoR$ "atomic FO-formulas@@CoR" \cite{tarskiFormalizationSetTheory1987}.
Such a translation can be computed in $2^{\asympO(\len{\fml})}$ time \cite[Theorem 3.1]{nakamuraExpressivePowerSuccinctness2022}.%
\footnote{%
Here we introduce only five operators, rather than the richer set of operators used in \cite{nakamuraExpressivePowerSuccinctness2022} to track the correspondence with the quantifier hierarchy.
This does not affect \Cref{proposition: fo3 to atomic CoR}, since each omitted operator is expressible using the operators introduced here with only a constant-factor blowup in the "formula" size.}
We say that two "formulas" ($\FO$-"formulas" or $\CoR$ "atomic FO-formulas@@CoR") $\fml[1]$ and $\fml[2]$ are ""semantically equivalent"" if $\struc, I \models \fml[1]$ and $\struc, I \models \fml[2]$ have the same "truth value" for every "structure" $\struc$ and "valuation" $I$.%
\begin{proposition}[{\cite[Theorem 3.1]{nakamuraExpressivePowerSuccinctness2022}}]
\label{proposition: fo3 to atomic CoR}
Let $\fml$ be a "DAG-represented" $\kFO{3}\sigres{(0,\ordinalomega)}$-"formula" with at most two free "variables".
One can translate $\fml$ into a "semantically equivalent" "DAG-represented" $\CoR$ "atomic FO-formula@@CoR" in $2^{\asympO(\len{\fml})}$ time.
\end{proposition}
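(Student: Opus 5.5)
We prove this by induction on the structure of $\fml$, following the classical Tarski--Givant translation. Each "subformula" $\fml[2]$ (a node of the "DAG") is mapped to a $\CoR$ "term@@CoR" that represents it relative to a chosen ordered pair of "variables". Fix the three "variables" $x, y, z$. For every "subformula" $\fml[2]$ (with $\FV(\fml[2]) \subseteq \set{x,y,z}$, possibly all three) and every ordered pair $(u,w)$ of distinct "variables", we want a term $\term_{\fml[2]}^{(u,w)}$ such that $\term_{\fml[2]}^{(u,w)}(u,w)$ is "semantically equivalent" to $\fml[2]$, provided $\FV(\fml[2]) \subseteq \set{u,w}$.

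Three-free-variable subformulas cannot be represented by a single binary term. We therefore normalize first. Push negations to "atoms" by De Morgan. Then, at each quantifier $\exists v.\,\fml[2]$ with $\set{u,w} = \set{x,y,z}\setminus\set{v}$, put the body $\fml[2]$ into a Boolean combination (DNF over its maximal "subformulas") of pieces, each having at most two free "variables". A piece with free variables among $\set{u,w}$, $\set{u,v}$, or $\set{v,w}$ is then translated recursively. A conjunction $\alpha(u,v) \land \beta(v,w) \land \gamma(u,w)$ under $\exists v$ becomes $(\term_\alpha \compo \term_\beta) \intersection \term_\gamma$, with converse $\smile$ used to reorient pairs. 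The universal quantifier is handled dually via complement, i.e.\ $\forall v$ is expressed as $(\term^- \compo \cdots)^-$.

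The base cases are direct. Binary atoms $R(u,w)$ translate to $R$ or $R^\smile$, and $R(u,u)$ translates to $(R \intersection \corid) \compo \cortop$ (or its converse). Equalities translate to $\corid$, and atoms mentioning only one "variable" are padded with $\cortop$. Connectives translate to $\intersection$, $\union$, and $^-$; for $^-$ one must intersect appropriately so the result does not depend on the unused coordinate, which works because the free variables are contained in $\set{u,w}$.

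For the size bound, each DAG node yields a constant number of terms, one per ordered pair of variables. The DNF expansion at each quantifier is the only blowup, and it is at most exponential in the number of immediate pieces. Shared subterms stay shared in the output "DAG", so the total is $2^{\asympO(\len{\fml})}$, and the construction is computable in time linear in the output size.

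The main obstacle is exactly this step: distributing a quantifier over a body with three free "variables" without incurring a tower-type blowup across nested quantifiers. To handle it, we apply the DNF to the \emph{atomized} pieces of the body, treating its quantified "subformulas" as opaque letters. We then argue that the total number of distinct pieces, taken over all quantifier nodes, stays within $2^{\asympO(\len{\fml})}$, because every piece is a conjunction of literals drawn from the $\asympO(\len{\fml})$ nodes of the input "DAG". This accounting is the content of \cite[Theorem 3.1]{nakamuraExpressivePowerSuccinctness2022}, which we may invoke directly. The only remaining point is that our restricted operator set simulates theirs with constant-factor overhead, which is immediate since each omitted operator is definable from $\corid, \smile, \intersection, \union, \compo, {}^-$.
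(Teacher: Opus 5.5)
Your proposal is correct and takes essentially the same route as the paper. Both reuse the bottom-up translation of \cite[Theorem 3.1]{nakamuraExpressivePowerSuccinctness2022}, memoize each subformula once per two-variable interface so that shared DAG nodes are translated only a constant number of times, and confine the $2^{O(|\varphi|)}$ blowup to the local DNF/CNF expansion at each quantifier, whose letters point to memoized terms (your note that complement needs an extra intersection is unnecessary, since $t(u,w)\equiv\psi$ with free variables among $u,w$ already gives $t^{-}(u,w)\equiv\neg\psi$).
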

\begin{proof}
We use the same bottom-up translation as \cite[Theorem 3.1]{nakamuraExpressivePowerSuccinctness2022}.
For "DAG-represented" inputs, we memoize the translation of each triple
$\tuple{\fml[2], x, y}$, where $\fml[2]$ is a "subformula" and $\FV(\fml[2]) \subseteq \set{x, y}$,
so that a shared "subformula" is translated only once for each possible two-variable interface.
The exponential blowup comes only from the local DNF/CNF-like expansion at each quantifier, whose leaves refer to memoized translations;
thus, the translated $\CoR$ "DAG" has "size" $2^{\asympO(\len{\fml})}$.%
\end{proof}
The key step of the translation is the following quantifier elimination:
\begin{align*}
\fml(x, z) \equiv \exists y. P(x, y) \land Q(y, z)
\quad \leadsto \quad 
\fml[2](x, z) \equiv (P \compo Q)(x, z). \tag{QE} \label{QE}
\end{align*}
When the input is a "sentence", we introduce dummy "variables" for $\CoR$ "atomic FO-formulas@@CoR";
for instance, $\exists x \exists y. R(x, y)$ is translated to $(\cortop \compo R \compo \cortop)(v, v)$, where $v$ is a dummy "variable".
\ifarxiv See also \Cref{section: example: proposition: fo3 to atomic CoR} for more examples including the case when the input is a "sentence".\fi%
Now we can prove \Cref{theorem: fo3} for the dense case.
\begin{proof}[Proof of \Cref{theorem: fo3} in the dense case]
By encoding each unary "relation" $U$
as the diagonal binary "relation" $D_U \defeq \set{\tuple{x, x} \mid x \in U}$
(\Cref{section: proposition: arity monotone}),
it suffices to consider $\kFO{3}\sigres{(0, \ordinalomega)}$-"sentences".
The latter case immediately follows from \Cref{proposition: fo3 to atomic CoR,proposition: CoR model checking}.
\end{proof}

\subsection{Sparse model checking of \texorpdfstring{$\kFO{3}$}{FO3}}

In this subsection, we show \Cref{theorem: fo3} in the sparse case.
We reduce it to the dense case over binary "signatures":
we first reduce the maximum "arity" to three by "projection" (\Cref{section: projection}),
which also eliminates repeated "variables" inside "atoms",
and then reduce the "arity" to two by "guarded-pairing" (\Cref{section: guarded-pairing}).

\subsubsection{"Projection" (reducing the maximum arity to $k$)}
\label{section: projection}
Here, for $\kFO{k}$,
we present a reduction that reduces the maximum "arity" to at most $k$,
using fresh "relations" for encoding ""projections"".

\begin{lemma}
\label{lemma:reduction higher order predicates}
\label{lemma: projection}
\gdef\lemmaTranslatingHigherArityPredicatesToKPredicates{
Fix an integer $k \geq 1$.
Let $\fml$ be an $\kFO{k}$-"sentence" and $\struc$ be a "structure".
One can construct, in $\asympO(\len{\fml} \cdot \strucm{\struc} + \strucn{\struc})$ time, an $\kFO{k}\sigres{(\card{\Atoms_{1}(\fml)}, \dots, \card{\Atoms_{k}(\fml)})}$-"sentence" $\fml[2]$ and a "structure" $\struc[2]$ such that $\struc \models \fml$~"iff"~$\struc[2] \models \fml[2]$,
where $\intro*\Atoms_{\ell}(\fml)$ denotes the set of non-"equality" "atoms" of $\fml$ with exactly $\ell$ distinct "variables" and $\Atoms(\fml) \defeq \bigcup_{\ell = 1}^k \Atoms_{\ell}(\fml)$.
Moreover,
$\strucn{\struc[2]} = \strucn{\struc}$,
$\strucm{\struc[2]} = \asympO(\len{\fml} \cdot \strucm{\struc})$, and
$\len{\fml[2]} = \asympO(\len{\fml})$.
}
\lemmaTranslatingHigherArityPredicatesToKPredicates
\end{lemma}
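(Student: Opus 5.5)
The plan is to introduce one fresh relation symbol for each non-equality atom of $\fml$, namely a symbol that stores the projection of the original relation onto the pattern of that atom. Fix an atom $\alpha = R(z_1, \dots, z_{\arity{R}}) \in \Atoms_{\ell}(\fml)$. Its distinct variables, listed in order of first occurrence, are $y_1, \dots, y_\ell$, and $\ell \le k$ because $\fml \in \kFO{k}$. Let $\sigma_\alpha \colon \range{\arity{R}} \to \range{\ell}$ be the map with $z_i = y_{\sigma_\alpha(i)}$. I introduce a fresh $\ell$-ary symbol $R_\alpha$ and interpret it as
\[
R_\alpha^{\struc[2]} \defeq \set{ \tuple{a_1, \dots, a_\ell} \mid \tuple{a_{\sigma_\alpha(1)}, \dots, a_{\sigma_\alpha(\arity{R})}} \in R^{\struc} }.
\]
Equivalently, this is the set of tuples $\bar b \in R^{\struc}$ satisfying the equality pattern of $\sigma_\alpha$ (that is, $b_i = b_{i'}$ whenever $\sigma_\alpha(i) = \sigma_\alpha(i')$), projected onto the first-occurrence positions. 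The sentence $\fml[2]$ is obtained from $\fml$ by replacing each occurrence of $\alpha$ with $R_\alpha(y_1, \dots, y_\ell)$ and leaving equalities, connectives and quantifiers unchanged. The universe stays $\univ{\struc}$, so $\strucn{\struc[2]} = \strucn{\struc}$, and the variable set of $\fml[2]$ is that of $\fml$, so $\fml[2] \in \kFO{k}\sigres{(\card{\Atoms_{1}(\fml)}, \dots, \card{\Atoms_{k}(\fml)})}$.

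Correctness is a routine structural induction on subformulas. For every valuation $I$ we have $\struc, I \models \alpha$ iff $\struc[2], I \models R_\alpha(y_1, \dots, y_\ell)$; this follows directly from the definition, since $I(z_i) = I(y_{\sigma_\alpha(i)})$. All other constructors are untouched. In the DAG representation each distinct atom vertex is replaced once, and the new atom has at most as many argument edges as the old one, so $\len{\fml[2]} = \asympO(\len{\fml})$.

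The remaining work is the complexity accounting. Collecting the distinct atoms $\Atoms(\fml)$ and computing each $\sigma_\alpha$ takes $\asympO(\len{\fml})$ time. For each $\alpha$ over $R$, I scan the tuple list of $R^{\struc}$ once. For each tuple I check the equality pattern and, if it passes, output the projected tuple. This costs $\asympO(\arity{R})$ per tuple, which is charged to the argument edges of $\alpha$ in $\fml$, so the total is $\asympO(\len{\fml} \cdot \strucm{\struc})$ with $\strucm{\struc[2]} \le \card{\Atoms(\fml)} \cdot \strucm{\struc}$. The projection is injective on tuples that satisfy the pattern, so no duplicates arise. Copying the universe adds $\asympO(\strucn{\struc})$. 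Symbols not occurring in $\fml$ are dropped, consistent with the convention that inputs range over $\fmlsig{\fml}$.

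I expect the only delicate point to be getting the time bound without a hidden factor of $\arity{R}$. Reading a tuple of an $\arity{R}$-ary relation costs $\arity{R}$ word operations, but $\arity{R} \le \len{\fml}$ because the atom itself has $\arity{R}$ argument edges in the DAG. One fix is to measure $\strucm{\struc}$ as a tuple count and charge the per-tuple reading cost against $\len{\fml}$. This keeps the bound at $\asympO(\len{\fml} \cdot \strucm{\struc} + \strucn{\struc})$ rather than $\asympO(\len{\fml} \cdot \arity{\sig} \cdot \strucm{\struc})$.
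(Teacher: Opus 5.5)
Your proposal is correct and takes essentially the same route as the paper. You introduce one fresh $\ell$-ary symbol per non-equality atom, interpreted as the tuples of $R^{\mathbb{A}}$ that match the atom's equality pattern, projected onto its distinct variables; correctness follows by a routine induction, and the $O(|\varphi|\cdot m + n)$ time bound comes from charging the $O(\mathrm{ar}(R))$ per-tuple cost to the atom's argument edges, whose total over all atoms is at most $|\varphi|$. Ordering the new arguments by first occurrence rather than by variable index, as the paper does, is immaterial.
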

\begin{proof}
Fix "variables" $x_1, \dots, x_k$ including $\vars(\fml)$.
For each "atom" $\alpha = R(x_{i_1}, \ldots, x_{i_p})$ occurring in $\fml$ (without "equality"),
let $j_1 < \dots < j_\ell$ be the distinct indices among $i_1, \dots, i_p$, and
let $P_\alpha$ be a fresh $\ell$-ary "relation symbol".
We then define $\fml[2]$ and $\struc[2]$ as follows:
\begin{itemize}
    \item $\fml[2]$ is the "formula" $\fml$ in which
    each non-"equality" "atom" $\alpha$ in $\fml$ has been replaced with $P_{\alpha}(x_{j_1}, \dots, x_{j_\ell})$;
    \item $\struc[2]$ is the "structure" such that
    $\univ{\struc[2]} \defeq \univ{\struc}$ and
    for each "atom" $\alpha = R(x_{i_1}, \ldots, x_{i_p})$ occurring in $\fml$,
    $P_\alpha^{\struc[2]} \defeq \set{\tuple{a_{j_1}, \dots, a_{j_\ell}} \in \univ{\struc[2]}^\ell
    \mid \tuple{a_{i_1},\dots,a_{i_p}} \in R^{\struc}}$.
\end{itemize}
Clearly, $\fml[2]$ is an $\kFO{k}$-"sentence" and $\len{\fml[2]} = \asympO(\len{\fml})$.
Also, $\strucn{\struc[2]} = \strucn{\struc}$
and $\strucm{\struc[2]}
= \max(1, \sum_{\alpha \in \Atoms(\fml)} \card{P_\alpha^{\struc[2]}})
\le \max(1, \sum_{\alpha \in \Atoms(\fml)} \card{R^{\struc}})
\le \max(1, \card{\Atoms(\fml)} \cdot \strucm{\struc})
\le \asympO(\len{\fml} \cdot \strucm{\struc})$.
Constructing $\fml[2]$ from $\fml$ takes $\asympO(\len{\fml})$ time by a single scan and replacement.
To construct $\struc[2]$,
for each "atom" $\alpha = R(x_{i_1}, \ldots, x_{i_p}) \in \Atoms(\fml)$,
we scan all tuples in $R^{\struc}$ once.
For each tuple $\tuple{b_1,\dots,b_p}$,
we check that $b_s=b_t$ whenever $i_s=i_t$.
If this condition holds,
we insert the corresponding $\ell$-tuple of the values of $x_{j_1}, \dots, x_{j_\ell}$ (where the value of $x_{i_s}$ is $b_s$) into $P_\alpha^{\struc[2]}$.
For example, if $\alpha=R(x_2,x_2,x_1)$, then a tuple $\tuple{b_1,b_2,b_3}$ passes the check "iff" $b_1=b_2$, and we insert $\tuple{b_3,b_1}$.
Each tuple is processed in time linear in the "arity" of $R$ in the "word-RAM model", and the "arities" of all "atoms" in $\fml$ sum to at most $\len{\fml}$.
Hence, the total running time is $\asympO(\len{\fml} \cdot \strucm{\struc} + \strucn{\struc})$.
For correctness, for all "valuations" $I$ (note that $\univ{\struc} = \univ{\struc[2]}$),
we have $\struc, I \models \alpha$ "iff" $\struc[2], I \models P_{\alpha}(x_{j_1}, \dots, x_{j_\ell})$.
Thus, by induction on "subformulas" of $\fml$,
we have $\struc \models \fml$ "iff" $\struc[2] \models \fml[2]$.
\end{proof}

\subsubsection{"Guarded-pairing" (reducing the maximum arity to $2$)}\label{section: sparse fo3}\label{section: guarded-pairing}
Here, we reduce the maximum arity to at most $2$.
\ifthenelse{\boolean{arxiv}}{}{See the extended version~\cite{nakamuraUezatoDichotomyBoundedVariable2026arxiv} for the full proof.}
\ifthenelse{\boolean{arxiv}}{%
\begin{lemma}[\Cref{section: lemma: guarded-pairing}]%
}{%
\begin{lemma}%
}\label{lemma: guarded-pairing}\label{lemma:sparse3:reduction higher order}
\gdef\guardedpairinglemma{%
Let $\fml$ be an $\kFO{3}\sigres{(C_1, C_2, C_3)}$-"sentence" and $\struc$ be a "structure".
One can construct, in $\asympO(\strucm{\struc} + \len{\fml})$ time, an $\kFO{3}\sigres{(0, 1 + \sum_{i = 1}^{3} i C_i)}$-"sentence" $\fml[2]$ and a "structure" $\struc[2]$
such that $\struc \models \fml$ "iff" $\struc[2] \models \fml[2]$.
Moreover, $\strucn{\struc[2]} = \asympO(\strucm{\struc})$, $\strucm{\struc[2]} = \asympO(\strucm{\struc})$, and $\len{\fml[2]} = \asympO(\len{\fml})$.
}
\guardedpairinglemma
\end{lemma}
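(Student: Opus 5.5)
The plan is to add one \emph{pair element} to the universe for each guarded pair, and to translate the formula so that, whenever a ternary atom is in scope, one variable ranges over such a pair element instead of two variables ranging over its two components. This frees a variable, so we stay within three variables.

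\textbf{Step 1: the structure $\struc[2]$.} Call an ordered pair $(a,b)$ \emph{guarded} if $a$ and $b$ occur together, in some order of positions, in a tuple of a ternary relation of $\struc$. By \Cref{lemma: projection} we may assume that no atom repeats a variable. Set $\univ{\struc[2]} \defeq \univ{\struc} \sqcup \set{\langle a,b\rangle \mid (a,b) \text{ guarded}}$. There are $\asympO(\strucm{\struc})$ pair elements, and isolated elements are removed first by \Cref{proposition: isolated}, so $\strucn{\struc[2]}=\asympO(\strucm{\struc})$. The binary relations of $\struc[2]$ are as follows.
\begin{itemize}
\item Projections $\pi_1=\set{(\langle a,b\rangle,a)}$ and $\pi_2=\set{(\langle a,b\rangle,b)}$.
\item The original binary relations, and the unary ones encoded as diagonals.
\item For each ternary $R$ and each choice of the two coordinates that form the pair (and their order), a relation $R'\subseteq$ pairs $\times$ elements, for example $R'_{12,3}=\set{(\langle a,b\rangle,c)\mid (a,b,c)\in R^{\struc}}$.
\end{itemize}
Each tuple contributes $\asympO(1)$ entries, so $\strucm{\struc[2]}=\asympO(\strucm{\struc})$. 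The whole structure can be built in linear time by hashing, or radix sorting, the guarded pairs. The number of binary symbols is $1+\sum_i iC_i$ up to the bookkeeping of which coordinates are paired; I expect this count to come out of a careful choice of the $R'$ family.

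\textbf{Step 2: the translation.} The key observation is that in $\kFO{3}$ every quantified subformula has at most two free variables. So a ternary atom $R(x,y,z)$ can occur only inside a Boolean combination $\psi(x,y,z)$ that sits directly under a quantifier, say in $\chi(x,y)=\exists z.\,\psi$. Here $\psi$ is a Boolean combination of the following:
\begin{itemize}
\item ternary atoms over $x,y,z$;
\item quantified subformulas with free variables in $\set{x,y}$, $\set{x,z}$ or $\set{y,z}$;
\item lower-arity atoms.
\end{itemize}
Guardedness is expressible by the three-variable formula $\mathrm{Gd}(x,y)\defeq\exists z.(\pi_1(z,x)\land\pi_2(z,y))$. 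We split
\[
\chi(x,y)\equiv\bigl(\lnot \mathrm{Gd}(x,y)\land \exists z.\,\psi^{\bot}\bigr)\lor\exists z.\bigl(\pi_1(z,x)\land\pi_2(z,y)\land \theta(z)\bigr).
\]
In the first disjunct, $\psi^{\bot}$ is $\psi$ with every ternary atom replaced by $\fmlfalse$. This is sound because a ternary atom can only hold on a guarded $(x,y)$.

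The formula $\theta(u)$ has a single free variable $u$, which ranges over pair elements $\langle x,y\rangle$. It is built as $\exists z.\,\psi^{u}(u,z)$, where $\psi^u$ is obtained from $\psi$ by the following replacements:
\begin{itemize}
\item each ternary atom becomes the matching $R'(u,z)$;
\item each piece $\rho(x,z)$ becomes $\exists w.(\pi_1(u,w)\land\rho^\ast(w,z))$, and symmetrically $\rho(y,z)$ uses $\pi_2$;
\item each piece $\rho(x,y)$ becomes $\exists w.\exists z.(\pi_1(u,w)\land\pi_2(u,z)\land\rho^\ast(w,z))$, with the outer $z$ occurrence handled by renaming.
\end{itemize}
Here $\rho^\ast$ is the recursive translation of $\rho$. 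Only $u$, $z$ and $w$ are used, so $\theta$ and the whole result remain in $\kFO{3}$. Finally, we rename $u$ to $z$ in the outer disjunct, as in the display.

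Each of the two translated disjuncts is built from $\psi$ by a case distinction on its direct ingredients, so every subformula is translated a constant number of times. With DAG sharing and memoization per (subformula, variable interface), this gives $\len{\fml[2]}=\asympO(\len{\fml})$. Correctness follows by induction on subformulas, using the invariant that $\rho^\ast$ is equivalent to $\rho$ on original elements.

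\textbf{Main obstacle.} The delicate step is Step 2, and specifically the negated ternary atoms. In the guarded branch they become $\lnot R'(u,z)$. This is fine because $u$ fixes the guarded pair and $z$ ranges over all elements, so no co-sparse relation is ever materialized. The remaining work is to check that ternary atoms with the variables in other orders, for example $R(z,x,y)$, always find a matching $R'$ relation. I would also ensure that the quantified variables range only over original elements, by relativizing every original quantifier to $\lnot\exists w.\,\pi_1(z,w)$, which marks non-pair elements, without exceeding three variables.
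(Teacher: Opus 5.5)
\textbf{Verdict: sound core, different route from the paper; as written it falls short of the stated bounds in two repairable ways.} Your construction adds a pair element per guarded ordered pair. At each quantifier whose body contains a ternary atom, you split on whether that quantifier's two free variables are guarded, and evaluate the body once through $\theta(u)$. This works.

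\textbf{First defect: the appeal to the projection lemma.} Removing repeated variables this way is not free. It takes $O(|\varphi|\cdot m+n)$ time, produces $O(|\varphi|\cdot m)$ tuples, and replaces $C_1,C_2,C_3$ by the numbers of atoms of each kind. So the claimed time $O(m+|\varphi|)$, the bound $m_{\mathbb B}=O(m_{\mathbb A})$, and the symbol bound $1+\sum_i iC_i$ (stated in terms of the input $C_i$) all fail. This is harmless for the sparse $\mathrm{FO}^3$ theorem, where projection is applied anyway and polynomial factors in $|\varphi|$ are absorbed, but the lemma itself is not proved. The fix is to drop projection and treat every atom with at most two distinct variables as one of your pieces, binding the variable absent from the atom to a pair element. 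For example, $R(x,x,z)\equiv\exists y.\,(\pi_1(y,x)\land\pi_2(y,x)\land R'_{12,3}(y,z))$. This mirrors the paper's clause $\Lambda_\emptyset(R(y_1,\dots,y_\ell))=\exists z.\bigwedge_i R^{[i]}(z,y_i)$ with $z\notin\{y_1,\dots,y_\ell\}$.

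\textbf{Second defect: the symbol count.} It does not come out of your family as given. Counting $\pi_1,\pi_2$, $C_1$ diagonals, $C_2$ binary symbols and six $R'$ per ternary symbol gives $2+C_1+C_2+6C_3$. This already exceeds $1+C_1+2C_2+3C_3$ for $C_1=C_2=0$, $C_3=1$. Keep only $R'_{12,3}$ and realize the other orders by re-pairing inside three variables. For instance, with $u$ encoding $(x,y)$,
\[
R(y,x,z)\equiv\exists w.\bigl((\exists z.\,\pi_1(w,z)\land\pi_2(u,z))\land(\exists z.\,\pi_2(w,z)\land\pi_1(u,z))\land R'_{12,3}(w,z)\bigr).
\]
This gives $2+C_1+C_2+C_3$ symbols, within the bound whenever $C_3\ge 1$; when $C_3=0$ no $\pi_i$ are needed.

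\textbf{Comparison with the paper.} With these repairs, the two arguments differ as follows.
\begin{itemize}
\item \emph{Paper's encoding.} It adds one vertex per tuple of every relation, including an equality diagonal $\mathrm{EQ}$, linked to its components by the symbols $R^{[i]}$. This makes lower-arity and repeated-variable atoms trivial and yields $1+\sum_i iC_i$ directly.
\item \emph{Paper's translation.} $\Lambda_{\breve p}$ fixes one designated pair $(x_1,x_2)$ and carries a global mode $\breve p\in\{\emptyset\}\cup[3]^2$. In mode $\langle j_1,j_2\rangle$, $x_1$ holds a tuple vertex encoding both values and $x_2$ is freed. The mode persists through quantifiers over $x_3$ and is switched, via a guarded/unguarded split, only at quantifiers over $x_1$ or $x_2$.
\item \emph{Your route.} You split locally at every relevant quantifier and confine the encoded variable to one quantifier level. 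This keeps the invariant uniform ($\rho^\ast\equiv\rho$ on original elements) and is closer to the standard observation that quantified $\mathrm{FO}^3$ subformulas have at most two free variables. The cost is memoizing constantly many variable-renamed copies of each translated subformula, and handling repeated variables separately.
\end{itemize}
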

\begin{figure}[t]
\captionsetup[subfigure]{justification=raggedright}
\begin{subfigure}[t]{0.53\textwidth}
\parbox[b][1.85cm][c]{\linewidth}{\centering
\begin{tikzpicture}[baseline = -3.5ex]
    \tikzstyle{vert}=[draw = black, circle, fill = gray!10, inner sep = .5pt, minimum size = 11pt, font = \scriptsize]
    \tikzstyle{myedge}=[draw = black, rectangle, fill = gray!10, inner sep = 1.5pt, minimum size = 5pt, font = \scriptsize]
    \graph[grow right = 0.85cm, branch down = 6ex]{ 
    {s/{$R$}[myedge], s1/{$v_1$}[vert, inner sep = 1pt]} -!-
    {/, s2/{$v_2$}[vert, inner sep = 1pt]} -!-
    {v/{$S$}[myedge], s3/{$v_3$}[vert, inner sep = 1pt]}
    };
    \path (s) edge[earrow] node[fill=white, inner sep=1pt,font= \scriptsize, pos = .6]{$1$} (s1);
    \path (s) edge[earrow] node[fill=white, inner sep=1pt,font= \scriptsize, pos = .4]{$3$} (s2);
    \path (s) edge[earrow] node[fill=white, inner sep=1pt,font= \scriptsize, pos = .4, yshift = 0.1ex]{$2$} (s3);
    \path (v) edge[earrow] node[fill=white, inner sep=1pt,font= \scriptsize, pos = .4]{$1$} (s2);
    \path (v) edge[earrow] node[fill=white, inner sep=1pt,font= \scriptsize, pos = .6]{$2$} (s3);
    \node[right = -0.3em of s2, color = red!60!black]{\scriptsize $x_1$};
    \node[right = -0.3em of s3, color = red!60!black]{\scriptsize $x_2$};
\end{tikzpicture}
\hspace{-.5em}$\leadsto$
\begin{tikzpicture}[baseline = -3.5ex]
    \tikzstyle{vert}=[draw = black, circle, fill = gray!10, inner sep = 1pt, minimum size = 11pt, font = \scriptsize]
    \node[vert] (s) at (0,0) {$w$};
    \node[vert] (s1) at (0,-0.9) {$v_1$};
    \node[vert] (s2) at (1.3,-0.9) {$v_2$};
    \node[vert] (s3) at (2.7,-0.9) {$v_3$};
    \node[vert] (v) at (2.7,0) {$w'$};
    \node[right = -0.3em of v, font = \scriptsize, text = red!60!black]{$z'$};
    \path (s) edge[->, earrow] node[left, inner sep=1pt, font= \tiny, pos = .5]{$R^{[1]}$} (s1);
    \path (s) edge[->, earrow] node[fill=white, inner sep=1pt, font= \tiny, pos = .55]{$R^{[3]}$} (s2);
    \path (s) edge[->, earrow] node[fill=white, inner sep=1pt, font= \tiny, pos = .5]{$R^{[2]}$} (s3);
    \path (v) edge[-{Latex[length=1.3mm, width=1.4mm]}, draw = red!60!black, line width = .8pt, preaction={draw=white, line width=2.4pt}] node[fill=white, inner sep=1pt, font= \scriptsize, pos = .35, text = red!60!black]{$S^{[1]}$} (s2);
    \path (v) edge[-{Latex[length=1.3mm, width=1.4mm]}, draw = red!60!black, line width = .8pt] node[right, inner sep=1pt, font= \scriptsize, pos = .45, text = red!60!black]{$S^{[2]}$} (s3);
    \node[anchor = east, font = \scriptsize, inner sep = 1pt, fill = gray!10] (n1) at ($(s2.center)+(-0.25,-0.4)$) {$x_1$ via $\tuple{z', S^{[1]}}$};
    \node[anchor = east, font = \scriptsize, inner sep = 1pt, fill = gray!10] (n2) at ($(s3.center)+(0.5,-0.4)$) {$x_2$ via $\tuple{z', S^{[2]}}$};
    \draw[gray!50] (s2.180) -| ($(n1.north east)+(-0.1,-0.1)$);
    \draw[gray!50] (s3.0) -- ($(s3.0)+(0.35,0)$) |- (n2.east);
    \node[font = \scriptsize, inner sep = 1pt, fill = gray!10] at (n1.center) {$x_1$ via $\tuple{z', S^{[1]}}$};
    \node[font = \scriptsize, inner sep = 1pt, fill = gray!10] at (n2.center) {$x_2$ via $\tuple{z', S^{[2]}}$};
\end{tikzpicture}}
\caption{$\tuple{x_1, x_2}$ guarded.
The pair is encoded by the "vertex" $w'$ of a tuple containing $I(x_1), I(x_2)$ (here, of $S$) via $S^{[1]}, S^{[2]}$.}
\end{subfigure}%
\hfill
\begin{subfigure}[t]{0.44\textwidth}
\parbox[b][1.85cm][c]{\linewidth}{\centering
{\normalsize $R(\dots \textcolor{red!60!black}{x_1} \dots \textcolor{red!60!black}{x_2} \dots) \leadsto \fmlfalse$}}
\caption{$\tuple{x_1, x_2}$ unguarded.
No tuple contains both $I(x_1)$ and $I(x_2)$; "atoms" with $x_1, x_2$ become $\fmlfalse$.}
\end{subfigure}
 \caption{Idea of the "guarded-pairing".}
\label{figure: guarded-pairing}
\Description{Idea of the guarded-pairing: a pair of variables whose values occur together in a tuple is encoded by the vertex of that tuple, while atoms containing an unguarded pair are replaced by false.}
\end{figure}
Below we present the key idea of \Cref{lemma: guarded-pairing}. %
As in \Cref{figure: guarded-pairing},
a standard binary encoding of higher-arity "relations" puts a new "vertex" (denoted by $w$ and $w'$) for each tuple of each "relation" (e.g., $R$ and $S$)
and adds fresh binary "relation symbols" $R^{[i]}$ to indicate the $i$-th element of the tuple.
In this encoding, for instance, a "formula" $R(x_3, x_2, x_1)$ is translated straightforwardly into the following "formula":
$\exists z.~~ R^{[1]}(z, x_3) \land R^{[2]}(z, x_2) \land R^{[3]}(z, x_1)$.
However, this translation may increase the number of "variables", as an additional "variable" $z$ distinct from $x_1, x_2, x_3$ is introduced.
To avoid this problem, we consider ""guarded-pairing"".\footnote{The notations in the reduction, such as $R^{[i]}$, are inspired by \cite[\S 3]{figueiraGuardedNegationTransitive2026}.}
For a pair $\tuple{x_1, x_2}$ under a "valuation" $I$,
we distinguish two cases:
\begin{itemize}
    \item Case $\tuple{x_1, x_2}$ is \emph{"guarded"}:\ifcameraready\AP\phantomintro{guarded}\fi{}
    Let us consider the case when $I(x_1)$ and $I(x_2)$ occur together in a tuple of some "relation" $S$.
We can then encode the pair as a single "variable" under $I$.
For instance, in \Cref{figure: guarded-pairing}, when $I$ maps $\tuple{x_1, x_2}$ to $\tuple{v_2, v_3}$,
we introduce a "variable"%
\ifthenelse{\boolean{arxiv}}{\footnote{For simplicity, $z'$ is fixed to $x_1$ in the translation given in the proof of \Cref{lemma: guarded-pairing} (\Cref{section: lemma: guarded-pairing}).}}{} $z'$ mapping to $w'$.
This single "variable" $z'$ can encode both $x_1$ and $x_2$ without explicitly using $x_1$ and $x_2$ in the "formula",
by treating $x_1$ and $x_2$ as the unique "elements" such that $S^{[1]}(z', x_1)$ and $S^{[2]}(z', x_2)$ hold, respectively.
Under the mapping above, for instance, we can translate $R(x_3, x_2, x_1)$ into the following "formula" within $3$ "variables" (we can replace $z''$ with $x_3$, so the "formula" uses only $3$ "variables" indeed):
\[\exists z.~~
R^{[1]}(z, x_3) \land
(\exists z''.~ S^{[2]}(z', z'') \land R^{[2]}(z, z'')) \land
(\exists z''.~ S^{[1]}(z', z'') \land R^{[3]}(z, z'')).\]

\item Otherwise, "ie", when $x_1$ and $x_2$ are not "guarded":
We can then replace each "atom" that contains both $x_1$ and $x_2$ with the false "sentence" $\fmlfalse$.
\end{itemize}
In either case, we can translate the "formula" into an equivalent "formula" over the encoded "structure" with only binary "relations", without increasing the number of "variables".

\subsubsection{Combining the reductions}
Finally, combining "projection" (\Cref{lemma: projection}) with "guarded-pairing" (\Cref{lemma: guarded-pairing})
reduces arbitrary sparse $\MC{\kFO{3}}$ to the binary-"signature" case.
\begin{proof}[Proof of \Cref{theorem: fo3} in the sparse case]
(Below we use $\fml[2]$ and $\struc[2]$ to denote the resulting "formula" and "structure" after the reduction, respectively.)
By \Cref{theorem: fo3} in the dense case with $n \le \asympO(m)$ (\Cref{proposition: isolated})
(building the adjacency matrices takes $\asympO(\len{\fml} \cdot n^2 + m)$ time),
$\MC{\kFO{3}\sigres{(\ordinalomega, \ordinalomega)}}$ is in $2^{\asympO(\len{\fml})} \cdot m^{\matrixexponent}$ time.
By \Cref{lemma:sparse3:reduction higher order}
with $\strucn{\struc[2]}, \strucm{\struc[2]} = \asympO(\strucm{\struc[1]})$ and $\len{\fml[2]} = \asympO(\len{\fml[1]})$,
$\MC{\kFO{3}\sigres{(\ordinalomega, \ordinalomega, \ordinalomega)}}$ is also solvable in $2^{\asympO(\len{\fml})} \cdot m^{\matrixexponent}$ time.
After removing isolated "elements" so that $n \le \asympO(\len{\fml} \cdot m)$ (\Cref{proposition: isolated}),
by \Cref{lemma:reduction higher order predicates}
with $\strucm{\struc[2]} = \asympO(\len{\fml} \strucm{\struc[1]})$ and $\len{\fml[2]} = \asympO(\len{\fml[1]})$,
$\MC{\kFO{3}}$ is also solvable in time $2^{\asympO(\len{\fml})} \cdot \asympO(\len{\fml} m)^{\matrixexponent} \le 2^{\asympO(\len{\fml})} \cdot m^{\matrixexponent}$.
\end{proof}
\section{\texorpdfstring{$\kFO{2}$}{FO2} Model Checking via Boolean Modal Logic \texorpdfstring{$\BMLP$}{ML*}}\label{section: fo2}
In this section, we show \Cref{theorem: fo2}.
We introduce a Boolean modal logic $\BMLP$, which has a simple and efficient model-checking algorithm running in $\asympO(\len{\fml} \cdot m)$ time.

\AP
The syntax of $\intro*\BMLPwithC$ ($\BMLP$ with term complement) is defined as follows,
where $R$ ranges over binary "relation symbols",
$U$ ranges over unary "relation symbols", and
$x, y$ range over "variables":\footnote{We use the notation of \Cref{section: fo3}, rather than \cite{lutzModalLogicTwoVariable2001}, for consistency.}
\phantomintro(BMLP){term}%
\phantomintro(BMLP){formula}%
\phantomintro(BMLP){atomic FO-formula}%
\begin{align*}
    \term[1], \term[2] ~\Coloneqq~
    & R \mid \corid \mid \term[1]^{\smile} \mid \term[1] \intersection \term[2] \mid \term[1] \union \term[2] ~\mid~ \term[1] \setminus \term[2] \mid \term^{-}, \tag*{($\BMLPwithC$-"terms@@BMLP")}\\
    \mlfml[1], \mlfml[2] ~\Coloneqq~
    & U \mid \fmltrue \mid \mlfml[1] \land \mlfml[2] \mid \mlfml[1] \lor \mlfml[2] \mid \lnot \mlfml[1] \mid \bo{\term} \mlfml \mid \windowbo{\term} \mlfml,
    \tag*{($\BMLPwithC$-"formulas@@BMLP")}\\
    \fml[1], \fml[2] ~\Coloneqq~
    & %
    \mlfml[1](x).
    \tag*{($\BMLPwithC$ "atomic FO-formulas@@BMLP")}
\end{align*}
The "formula@@BMLP" $\intro*\bo{\term}\mlfml$ denotes the standard ""box"" "formula@@BMLP".
The "formula@@BMLP" $\intro*\windowbo{\term}\mlfml$ denotes the ""window"" "formula@@BMLP" \cite{gargovNoteBooleanModal1990},
which is alternatively defined as $\windowbo{\term} \mlfml \defeq \bo{\term^{-}} \lnot \mlfml$.
We write $\fmlfalse$ for $\lnot \fmltrue$.

\AP
An \intro*$\BMLP$ "term@@BMLP" ("resp", "formula@@BMLP", "atomic FO-formula@@BMLP")
is an $\BMLPwithC$ "term@@BMLP" ("resp", "formula@@BMLP", "atomic FO-formula@@BMLP")
such that the term complement ($\cdot^{-}$) does not occur (but the difference ($\setminus$) may occur).

Given a "structure" $\struc$,
the ""semantics@@BMLP"" $\sem{\mlfml}^{\struc} \subseteq \univ{\struc}$ and $\sem{\term}^{\struc} \subseteq \univ{\struc}^2$
are defined as follows:

\resizebox{.95\textwidth}{!}{
\begin{minipage}{\textwidth}
\begin{align*}
\sem{R}^{\struc} & \defeq R^{\struc},&\hspace{-.5em}
\sem{\term[1] \intersection \term[2]}^{\struc} & \defeq \sem{\term[1]}^{\struc} \cap \sem{\term[2]}^{\struc},&\hspace{-1.em}
\sem{\term[1]^{\smile}}^{\struc} & \defeq \set{ \tuple{x, y} \mid \tuple{y, x} \in \sem{\term[1]}^{\struc} }, \hspace{.5em}
\sem{\term[1]^{-}}^{\struc} \defeq \univ{\struc}^2 \setminus \sem{\term[1]}^{\struc},  \span\span\\
\sem{U}^{\struc} & \defeq U^{\struc},\ &\hspace{-.5em}
\sem{\term[1] \union \term[2]}^{\struc} & \defeq \sem{\term[1]}^{\struc} \cup \sem{\term[2]}^{\struc},&\hspace{-1.em}
\sem{\corid}^{\struc} & \defeq \{\tuple{x, x} \mid x \in \univ{\struc}\},&
\sem{\term[1] \setminus \term[2]}^{\struc} & \defeq \sem{\term[1]}^{\struc} \setminus \sem{\term[2]}^{\struc},\\
\sem{\fmltrue}^{\struc} &\defeq \univ{\struc}, &\hspace{-.5em}
\sem{\mlfml[1] \land \mlfml[2]}^{\struc} & \defeq \sem{\mlfml[1]}^{\struc} \cap \sem{\mlfml[2]}^{\struc},\ &\hspace{-1.em}
\sem{\mlfml[1]  \lor \mlfml[2]}^{\struc} & \defeq \sem{\mlfml[1]}^{\struc} \cup \sem{\mlfml[2]}^{\struc},&
\sem{\lnot \mlfml[1]}^{\struc} & \defeq \univ{\struc} \setminus \sem{\mlfml[1]}^{\struc},\\
   \sem{\bo{\term} \mlfml[1]}^{\struc} & \defeq \set{ x \in \univ{\struc} \mid \forall y \in \univ{\struc}.~ \tuple{x, y} \in \sem{\term}^{\struc} \text{ implies } y \in \sem{\mlfml[1]}^{\struc} },\span\span\span\span\span\span\\
   \sem{\windowbo{\term} \mlfml[1]}^{\struc} & \defeq \set{ x \in \univ{\struc} \mid \forall y \in \univ{\struc}.~ y \in \sem{\mlfml[1]}^{\struc} \text{ implies } \tuple{x, y} \in \sem{\term}^{\struc} }.\span\span\span\span\span\span
\end{align*}
\end{minipage}
}

For $\BMLPwithC$ "atomic FO-formulas@@BMLP" $\mlfml(x)$,
we define $\struc, I \models \mlfml(x)$ if and only if $I(x) \in \sem{\mlfml}^{\struc}$.

Our $\BMLPwithC$ is a syntactic extension of 
Lutz--Sattler--Wolter's $\MLLSW$ \cite{lutzModalLogicTwoVariable2001}.
Here, the extra "window" operator and the difference operator ($\setminus$) are introduced only to remove term complements later.
We use their inductive translation from
$\kFO{2}$-"formulas" with at most one free "variable",
where we store the resulting modal "formula" as a DAG by memoizing the pair of a "subformula" and its free "variable".
From this, the translated $\BMLPwithC$-"formula@@BMLP" has size $2^{\asympO(\len{\fml})}$.%
\footnote{
See
\ifthenelse{\boolean{arxiv}}{\Cref{section: proposition: fo2 to bml+c} for a toy example and}{}
\cite[Theorem 1]{lutzModalLogicTwoVariable2001} for the full translation.
As a "tree-represented" "formula@@BMLP", its "size" is in $2^{\asympO(\len{\fml}^2)}$ if the nesting depth of quantifiers is not fixed.}
We say that two "formulas" ($\FO$-"formulas" or $\BMLPwithC$ "atomic FO-formulas@@BMLP") $\fml[1]$ and $\fml[2]$ are \reintro*\kl{semantically equivalent} if $\struc, I \models \fml[1]$ and $\struc, I \models \fml[2]$ have the same "truth value" for every "structure" $\struc$ and "valuation" $I$.%
\begin{proposition}[{Cor.~of~\cite[Theorem 1]{lutzModalLogicTwoVariable2001}}]
\label{proposition: fo2 to bml+c}
Let $\fml$ be a ("DAG-represented")
$\kFO{2}\sigres{(\ordinalomega, \ordinalomega)}$-"formula" with at most one free "variable" $x$.
One can construct, in $2^{\asympO(\len{\fml})}$ time,
a "DAG-represented" $\BMLPwithC$-"formula@@BMLP" $\mlfml$,
such that $\fml$ and $\mlfml(x)$ are "semantically equivalent".
\end{proposition}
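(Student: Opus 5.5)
We prove \Cref{proposition: fo2 to bml+c} by induction on the DAG, memoizing each pair (subformula, free variable), following Lutz--Sattler--Wolter's translation but recording sizes. \textbf{Normalization.} First push negations to atoms is unnecessary; instead we keep $\lnot,\land,\lor$ as Boolean connectives. Rename variables so that $\vars(\fml)\subseteq\set{x,y}$. For every subformula $\fml[2]$ and every $v\in\set{x,y}$ with $\FV(\fml[2])\subseteq\set{v}$, we build a modal formula $\mlfml_{\fml[2],v}$ with $\fml[2]\equiv\mlfml_{\fml[2],v}(v)$. Atoms: $U(v)\mapsto U$; $v=v\mapsto\fmltrue$; $R(v,v)\mapsto\lnot\bo{R\intersection\corid}\fmlfalse$. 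Boolean connectives are translated componentwise. Note that formulas of $\kFO{2}$ with one free variable are Boolean combinations of atoms in $v$ and formulas $\exists w.\fml[3]$ where $\fml[3]$ has free variables among $\set{v,w}$.

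\textbf{Quantifier step.} Consider $\exists w.\fml[3](v,w)$, $w\neq v$. View $\fml[3]$ as a Boolean combination of: (i) binary atoms $R(v,w),R(w,v),v=w$, (ii) one-free-variable subformulas in $v$ only (atoms, or quantified formulas $\exists w'.\cdots$ whose free variable is $v$), and (iii) one-free-variable subformulas in $w$ only. Each item of (ii),(iii) is already translated by memoization. Let $\beta_1,\dots,\beta_p$ be the binary atoms. We rewrite $\fml[3]$ as a disjunction over all $2^p$ sign patterns $\sigma$ of the binary atoms:
\[\fml[3]\equiv\textstyle\bigvee_\sigma\bigl(\tau_\sigma(v,w)\land\fml[3]_\sigma\bigr),\]
where $\tau_\sigma$ is the conjunction of the signed binary atoms and $\fml[3]_\sigma$ is $\fml[3]$ with binary atoms replaced by constants. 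Then $\fml[3]_\sigma$ is a Boolean combination of formulas in $v$ only and in $w$ only. Expand it as $\bigvee_{\rho}(\alpha_\rho(v)\land\gamma_\rho(w))$ over sign patterns $\rho$ of the $v$-only items, where $\alpha_\rho$ is the corresponding conjunction and $\gamma_\rho$ the residual formula in $w$. Then
\[\exists w.\fml[3]\equiv\textstyle\bigvee_{\sigma,\rho}\alpha_\rho\land\lnot\bo{t_\sigma}\lnot\gamma_{\sigma,\rho}.\]
Here $t_\sigma$ is the term $\bigcap$ of $R$, $R^{\smile}$, $\corid$ or their complements $R^{-}$, $(R^{\smile})^{-}$, $\corid^{-}$ according to $\sigma$, with the empty intersection read as $\corid\union\corid^{-}$. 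This is exactly where $\BMLPwithC$ needs term complement. The case $w=v$ (vacuous rebinding) is handled by translating $\fml[3]$ with free variable $v$ and using $\lnot\bo{\cortop}\lnot$.

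\textbf{Size bound (main obstacle).} Naively the number of items is bounded by the number of immediate "maximal" subformulas of the quantifier body, which is at most $\len{\fml[3]}$ locally, so each quantifier step creates $2^{\asympO(\ell)}$ new DAG nodes, where $\ell$ is the local size of the body. Here the local size of the body means the part above its maximal one-variable subformulas, which are referenced via memoized pointers rather than copied. The total is then $\sum 2^{\asympO(\ell_i)}\le 2^{\asympO(\len{\fml})}$ since $\sum_i\ell_i\le\len{\fml}$. The delicate point is ensuring that each DAG node is counted once per pair (node, free variable), and that the local Boolean skeleton of each quantifier body is disjoint from others except through memoized leaves. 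This is the step I expect to require care: shared subformulas with two free variables may be reused under several quantifiers. To handle this, I would define $\ell_i$ as the number of DAG nodes in the skeleton and accept that a two-variable node can be shared by several skeletons. The bound then becomes $\prod$-free but $\sum_i 2^{\asympO(\len{\fml})}\le\len{\fml}\cdot2^{\asympO(\len{\fml})}$, which is still $2^{\asympO(\len{\fml})}$. The running time is linear in the output size, and correctness follows by induction from the displayed equivalences.
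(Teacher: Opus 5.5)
Your proposal is correct and takes essentially the same approach as the paper. Both apply the Lutz--Sattler--Wolter translation: each quantifier body is case-split over its binary atoms and its outer-variable one-variable parts, giving disjuncts $\alpha\land\lnot[t]\lnot\gamma$ whose terms $t$ may be complemented, and the result is stored as a DAG by memoizing each pair of a subformula and its free variable. Your final count (at most $2|\varphi|$ memoized quantifier pairs, each adding $2^{O(|\varphi|)}$ fresh nodes) is the size argument that the paper's memoization remark leaves implicit.
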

Here, when the input is a "sentence", we use a dummy "variable" in the resulting $\BMLPwithC$-"formula@@BMLP".

We next eliminate the term complement ($\cdot^{-}$) from $\BMLPwithC$.
First, every "term@@BMLP" is rewritten as either a complement-free "term@@BMLP" $\term$ or the complement $\term^{-}$ of such a "term@@BMLP",
using the difference operator $\setminus$.
The remaining outermost term complement is then eliminated at the modal level using the "window" modality.
For instance, we can normalize the "formula@@BMLP" $\bo{R \union S^{-}} U$ as follows.
\begin{align*}
\bo{R \union S^{-}} U
&~\leadsto_{\text{(De Morgan's law)}}~ \bo{(R^{-} \intersection S)^{-}} U
~\leadsto_{\text{using $\setminus$} }~ \bo{(S \setminus R)^{-}} U
~\leadsto_{\text{using $\windowbo{-}$} }~ \windowbo{S \setminus R} \lnot U.
\end{align*}
Based on the example above, we can give a linear-time translation from $\BMLPwithC$ to $\BMLP$.
\ifthenelse{\boolean{arxiv}}{}{See the extended version~\cite{nakamuraUezatoDichotomyBoundedVariable2026arxiv} for the proof.}
\ifthenelse{\boolean{arxiv}}{%
\begin{proposition}[\Cref{section: proposition: bml+c to bml+}]%
}{%
\begin{proposition}%
}\label{proposition: bml+c to bml+}
\gdef\propositionbmlctobml{%
Let $\mlfml$ be an $\BMLPwithC$-"formula@@BMLP".
One can construct, in $\asympO(\len{\mlfml})$ time, a "semantically equivalent" $\BMLP$-"formula@@BMLP".
}\propositionbmlctobml
\end{proposition}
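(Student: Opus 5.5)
We have to prove Proposition \ref{proposition: bml+c to bml+}: every $\BMLPwithC$-formula can be turned, in linear time, into a semantically equivalent $\BMLP$-formula. The plan is to work bottom-up over the DAG of $\mlfml$. Each term node $\term$ gets a \emph{signed normal form}: a pair $(\term', s)$ with $\term'$ a complement-free term and $s \in \set{+,-}$. It represents $\term'$ if $s=+$ and $\term'^{-}$ if $s=-$. We allow a distinguished complement-free term $\corz$ for the empty relation, expressible as $\corid \setminus \corid$. Memoizing per DAG node gives constant work per node, so linear time and size, provided each rule adds only $\asympO(1)$ new nodes.

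The rules for terms, with $\term[2]\mapsto(a,s)$ and $\term[3]\mapsto(b,s')$, all follow from De Morgan and the difference operator.
\begin{itemize}
\item Base cases: $R\mapsto(R,+)$ and $\corid\mapsto(\corid,+)$.
\item Complement: $\term[2]^{-}\mapsto(a,\bar s)$, where $\bar s$ flips the sign.
\item Converse: $\term[2]^{\smile}\mapsto(a^{\smile},s)$, since converse commutes with complement.
\item Intersection $\term[2]\intersection\term[3]$, by signs:
  \begin{itemize}
  \item $(+,+)$: $(a\intersection b,+)$;
  \item $(+,-)$: $(a\setminus b,+)$;
  \item $(-,+)$: $(b\setminus a,+)$;
  \item $(-,-)$: $(a\union b,-)$.
  \end{itemize}
\item Union $\term[2]\union\term[3]$, by signs:
  \begin{itemize}
  \item $(+,+)$: $(a\union b,+)$;
  \item $(+,-)$: $(b\setminus a,-)$, because $a\cup \overline b=\overline{b\setminus a}$;
  \item $(-,+)$: $(a\setminus b,-)$;
  \item $(-,-)$: $(a\intersection b,-)$.
  \end{itemize}
\item Difference $\term[2]\setminus\term[3]=\term[2]\intersection\term[3]^{-}$: handled as an intersection with the second sign flipped.
\end{itemize}
Each case is a one-line set identity, and induction on the term shows $\sem{\term}^{\struc}$ equals the semantics of the signed form.

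Modal formulas are then translated, again bottom-up with memoization. Propositional connectives, $U$ and $\fmltrue$ are copied. For the box $\bo{\term}\mlfml$ with $\term\mapsto(a,s)$:
\begin{itemize}
\item if $s=+$, output $\bo{a}\mlfml'$;
\item if $s=-$, output $\windowbo{a}\lnot\mlfml'$, using $\bo{a^{-}}\mlfml=\windowbo{a}\lnot\mlfml$. This identity follows from the definitions: $\windowbo{a}\lnot\mlfml=\bo{a^{-}}\lnot\lnot\mlfml$, or directly, ``$(x,y)\notin a\Rightarrow y\in\mlfml$'' is the contrapositive of ``$y\notin\mlfml\Rightarrow(x,y)\in a$''.
\end{itemize}
For the window $\windowbo{\term}\mlfml$:
\begin{itemize}
\item if $s=+$, output $\windowbo{a}\mlfml'$;
\item if $s=-$, output $\bo{a}\lnot\mlfml'$, since $\windowbo{a^{-}}\mlfml$ says every $y\in\mlfml$ has $(x,y)\notin a$.
\end{itemize}
Atomic formulas $\mlfml(x)$ become $\mlfml'(x)$. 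Correctness follows by simultaneous induction, using the term invariant above.

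The only step needing care is the size accounting in the DAG setting. A shared term node may be used under several modalities, but its signed normal form is computed once and reused. Each rule introduces $\asympO(1)$ new nodes, including at most one $\lnot$, pointing to already-built nodes, so the output has $\asympO(\len{\mlfml})$ edges and is built in $\asympO(\len{\mlfml})$ time. I also expect to check that no rule ever needs a complement-free expression for $\cortop$. That holds here because $\cortop$ only arises as $(\corz,-)$, and a negative sign is always absorbed by a modality.
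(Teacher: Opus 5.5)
Your proposal is correct and follows the paper's proof essentially step for step. The paper also assigns each term a complement-free term plus a polarity bit (its maps $\Lambda$ and $\Omega$), uses the same four-case intersection table, and reduces union and difference via De Morgan and $\term[1]\intersection\term[2]^{-}$. It then absorbs a negative polarity at the modal level by swapping box and window and inserting a negation, exactly as you do. Your extra remarks about $\corz$ and $\cortop$ are harmless but unnecessary, since neither constant is primitive in the $\BMLPwithC$ grammar.
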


\AP
For $\fmlclass = \BMLP$ or $\BMLPwithC$,
the ""model-checking problem@@BMLP"" for $\fmlclass$, denoted by $\intro*\BMLPMC{\fmlclass}$, is the following problem:
given a "structure" $\struc$, a "valuation" $I$, and an $\fmlclass$ "atomic FO-formula@@BMLP" $\fml = \mlfml(x)$,
decide whether $\struc, I \models \fml$.\AP\phantomintro(BMLP){DAG-} %
$\BMLP$ has the following sparsity property.
\begin{proposition}\label{proposition: BMLP term sparsity}
Every $\BMLP$-"term@@BMLP" $\term$ satisfies $\sem{\term}^{\struc} \subseteq \sem{\corid}^{\struc} \cup \bigcup_{R \in \sig : \arity{R} = 2} \sem{R \union R^{\smile}}^{\struc}$.
\end{proposition}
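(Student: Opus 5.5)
We argue by structural induction on the $\BMLP$-"term@@BMLP" $\term$, with $\struc$ fixed. Write $G \defeq \sem{\corid}^{\struc} \cup \bigcup_{R \in \sig : \arity{R} = 2} \sem{R \union R^{\smile}}^{\struc}$ for the target set. First we record that $G$ is symmetric: the diagonal is symmetric, and $\sem{R \union R^{\smile}}^{\struc} = R^{\struc} \cup (R^{\struc})^{-1}$ is symmetric for every binary $R$. This symmetry is what lets us handle the converse operator.

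\proofcase{Base cases} For a binary "relation symbol" $R$, we have $\sem{R}^{\struc} = R^{\struc} \subseteq \sem{R \union R^{\smile}}^{\struc} \subseteq G$. For $\corid$, we have $\sem{\corid}^{\struc} \subseteq G$ by definition.

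\proofcase{Inductive cases} For converse, if $\sem{\term[1]}^{\struc} \subseteq G$, then $\sem{\term[1]^{\smile}}^{\struc} \subseteq G^{-1} = G$ by the symmetry of $G$. Union is closed because both sides lie in $G$. For intersection and difference, we have $\sem{\term[1] \intersection \term[2]}^{\struc} \subseteq \sem{\term[1]}^{\struc}$ and $\sem{\term[1] \setminus \term[2]}^{\struc} \subseteq \sem{\term[1]}^{\struc}$, so only the hypothesis on $\term[1]$ is needed; this is why difference is harmless while complement is not. Term complement does not occur in $\BMLP$-terms, so there is no remaining case.

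The only point needing care is the converse case, and symmetry of $G$ settles it. The induction itself is routine.

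A "DAG-represented" term unfolds to a tree, so induction over its "subterms" is well-founded. As a consequence, $\card{\sem{\term}^{\struc}} \le \strucn{\struc} + 2\strucm{\struc}$. This bound is what the $\asympO(\len{\fml}\cdot m)$ evaluation relies on later.
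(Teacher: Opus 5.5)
Your proof is correct and matches the paper's approach: the paper's proof is just ``by an easy induction on $t$'', and your structural induction is that induction written out. The details you supply are the ones it needs, namely that the target set is symmetric (for the converse case) and that intersection and difference are bounded by their left operand.
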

\begin{proof}
By an easy induction on $\term$.
\end{proof}
Hence, $\BMLP$ has no "term@@BMLP" for the (dense) full "relation" $\cortop$; $\corz$ is $\corid \setminus \corid$.
This sparsity implies $\card{\sem{\term}^{\struc}} \le \asympO(\strucm{\struc} + \strucn{\struc}) \le \asympO(\strucm{\struc})$ by removing isolated "elements" ("cf" \Cref{proposition: isolated}; see \ifthenelse{\boolean{arxiv}}{\Cref{section: proposition: BMLP model checking}}{the extended version~\cite{nakamuraUezatoDichotomyBoundedVariable2026arxiv}}),
which shows the following algorithmic result.
\ifthenelse{\boolean{arxiv}}{}{See the extended version~\cite{nakamuraUezatoDichotomyBoundedVariable2026arxiv} for the proof.}
\ifthenelse{\boolean{arxiv}}{%
\begin{proposition}[\Cref{section: proposition: BMLP model checking}]%
}{%
\begin{proposition}%
}\label{proposition: BMLP model checking}
\gdef\propositionBMLPmodelchecking{%
$\BMLPMC{\BMLP}$ is solvable in $\asympO(\len{\fml} \cdot m)$ time.
}\propositionBMLPmodelchecking
\end{proposition}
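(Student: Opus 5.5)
We evaluate the $\BMLP$-formula bottom-up over its DAG, computing $\sem{\mlfml'}^{\struc}$ for every subformula $\mlfml'$ and $\sem{\term'}^{\struc}$ for every subterm $\term'$. Each node is processed once, and each step costs $\asympO(\strucm{\struc})$, which gives the total bound $\asympO(\len{\fml} \cdot m)$. As preprocessing, we remove isolated elements via \Cref{proposition: isolated}. Isolated elements must still be accounted for in unary formulas, but they behave uniformly: an isolated element has no incident term-pairs except possibly $\tuple{a,a}$ via $\corid$. So we keep the universe of size $n$ and only need $n \le \asympO(m)$, or we treat isolated elements as one representative class. The simplest route is to note that, after \Cref{proposition: isolated}, $n = \asympO(m)$ for fixed arity at most $2$. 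We represent unary sets as bit arrays of length $n$, and terms as explicit lists of pairs together with a hash or sorted index.

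By \Cref{proposition: BMLP term sparsity}, every term denotes a subset of the fixed set $D \defeq \sem{\corid}^{\struc} \cup \bigcup_R \sem{R \union R^{\smile}}^{\struc}$, and $\card{D} \le n + 2m = \asympO(m)$. We precompute $D$ once, sorted lexicographically, via radix sort in $\asympO(n + m)$ time. Each term is then stored as a bit vector indexed by positions in $D$. Under this representation the term operations cost $\asympO(\card{D})$ each:
\begin{itemize}
\item $R$ and $\corid$ are obtained by marking their positions;
\item $\intersection$, $\union$ and $\setminus$ are bitwise operations;
\item the converse $\term^{\smile}$ uses a precomputed permutation of $D$ mapping each pair to the index of its reverse, which lies in $D$ because $D$ is symmetric.
\end{itemize}
This step is where the absence of term complement is essential: complement would leave $D$.

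For formulas, $U$, $\fmltrue$, $\land$, $\lor$ and $\lnot$ take $\asympO(n)$ time each. For $\bo{\term}\mlfml$, we compute the set of $x$ having some $\tuple{x,y} \in \sem{\term}$ with $y \notin \sem{\mlfml}$; scanning the pairs of $\sem{\term}$ once does this, and the answer is the complement of that set.

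For $\windowbo{\term}\mlfml$, we need every $y \in \sem{\mlfml}$ to satisfy $\tuple{x,y} \in \sem{\term}$. Let $c \defeq \card{\sem{\mlfml}}$. For each $x$, count the pairs $\tuple{x,y} \in \sem{\term}$ with $y \in \sem{\mlfml}$; this is a single scan over the term's pairs (which are distinct). Then $x$ qualifies iff its count equals $c$. In particular, if $c = 0$ every $x$ qualifies, and an $x$ with no pairs qualifies only then. All of this is $\asympO(n + m)$.

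Finally, we answer $\struc, I \models \mlfml(x)$ by looking up $I(x)$ in the computed set. The main obstacle I expect is the window modality, because it quantifies over the possibly dense set $\sem{\mlfml}$. The counting trick above resolves it, and correctness relies on pairs in $D$ being distinct, which the sorted indexing guarantees.
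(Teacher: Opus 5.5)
This is essentially the paper's proof: bottom-up evaluation over the DAG, the sparsity bound of \Cref{proposition: BMLP term sparsity} to keep every term of size $\asympO(n+m)$, a single scan for $\bo{\term}\mlfml$, and the same counting trick for $\windowbo{\term}\mlfml$. The only implementation difference is the term representation.
\begin{itemize}
\item You store every term as a bit vector over the precomputed symmetric set $D$, so $\union$, $\intersection$, $\setminus$ are bitwise operations and converse is a fixed permutation of $D$.
\item The paper keeps duplicate-free, lexicographically sorted pair lists, combines them by linear merge scans, and radix-sorts again after swapping coordinates for converse.
\end{itemize}
Both cost $\asympO(n+m)$ per node. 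Your version needs a one-time indexing of $D$; do it by radix sort with back-pointers rather than hashing if you want the deterministic word-RAM bound.

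The one step that needs repair is the preprocessing. \Cref{proposition: isolated} is stated for sentences, but here the input is $\mlfml(x)$ together with a valuation $I$. If $I(x)$ is isolated, blindly removing isolated elements can delete the very element you must test.

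Neither of your fallbacks works as written. Keeping all $n$ elements fails because the $\asympO(n)$ array operations at each node exceed $\asympO(\len{\fml}\cdot m)$ when $n \gg m$. Collapsing the isolated elements to a single representative element also fails: on a structure consisting of two isolated elements $a \neq b$ and no tuples, $\windowbo{\corid}\fmltrue$ is false at $a$ but becomes true after collapsing.

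The paper's fix is to keep at most two isolated elements, including $I(x)$ whenever it is isolated. Truth preservation then follows from the standard translation of $\BMLP$ into $\kFO{2}$ with one free variable together with a $2$-pebble-game argument. Alternatively, your uniformity observation can be turned into a direct induction showing that only whether there are $0$, $1$, or at least $2$ isolated elements matters. With that adjustment your argument is complete.
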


Combining the "projection" (\Cref{lemma: projection}) with the translations above reduces $\MC{\kFO{2}}$ to $\BMLPMC{\BMLP}$:
\begin{theorem*}[Restatement of {\Cref{theorem: fo2}}]
\theoremFOTWOLIN
\end{theorem*}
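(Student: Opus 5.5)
The plan is to chain the reductions already established. Given an $\kFO{2}$-"sentence" $\fml$ and a "structure" $\struc$, first apply \Cref{proposition: isolated} to remove isolated "elements". This takes $\asympO(\max(\fmlarity{\fml},1)\, m)$ time and yields $n \le \fmlarity{\fml}\, m + 2$. Since $\fmlarity{\fml} \le \len{\fml}$, we get $n = \asympO(\len{\fml}\cdot m)$.

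Next apply "projection" (\Cref{lemma: projection}) with $k=2$. This produces, in $\asympO(\len{\fml}\cdot m + n)$ time, an equivalent $\kFO{2}\sigres{(\ordinalomega,\ordinalomega)}$-"sentence" $\fml[2]$ and a "structure" $\struc[2]$ with $\len{\fml[2]} = \asympO(\len{\fml})$, $\strucn{\struc[2]} = n$ and $\strucm{\struc[2]} = \asympO(\len{\fml}\cdot m)$. All "atoms" of $\fml[2]$ are then unary or binary, with no repeated "variables". Remaining "equality" "atoms" $x = y$ are handled either natively by the Lutz--Sattler--Wolter translation via $\corid$, or by treating them as $\corid(x,y)$.

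Then translate. Introduce a dummy free "variable" and apply \Cref{proposition: fo2 to bml+c}, which gives a "DAG-represented" $\BMLPwithC$-"formula@@BMLP" $\mlfml$ of size $2^{\asympO(\len{\fml})}$ in that time. Apply \Cref{proposition: bml+c to bml+} to obtain a "semantically equivalent" $\BMLP$-"formula@@BMLP" $\mlfml'$ with $\len{\mlfml'} = \asympO(\len{\mlfml}) = 2^{\asympO(\len{\fml})}$. Because $\fml[2]$ is a "sentence", $\struc[2] \models \fml[2]$ iff $\struc[2], I \models \mlfml'(x)$ for an arbitrary $I$. Finally, run \Cref{proposition: BMLP model checking} on $\struc[2]$ in $\asympO(\len{\mlfml'}\cdot \strucm{\struc[2]}) = 2^{\asympO(\len{\fml})}\cdot \len{\fml}\cdot m$ time. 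Absorbing the polynomial factor into $2^{\asympO(\len{\fml})}$ gives the claimed bound, and the preprocessing costs are dominated by it.

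The main obstacle is ensuring that the BMLP evaluation really costs $\asympO(\len{\mlfml'}\cdot m)$ on $\struc[2]$, which may still contain isolated "elements" relative to its new relations and have $n$ larger than its tuple count. \Cref{proposition: BMLP model checking} itself handles this through the sparsity of \Cref{proposition: BMLP term sparsity}. If needed, the argument only has to check that $\strucn{\struc[2]} = n = \asympO(\len{\fml}\cdot m)$, so any $\asympO(n)$-per-node cost for unary operations such as $\lnot$ is also within $\asympO(\len{\fml}\cdot m)$ per node. A second, minor point is that the DAG memoization in \Cref{proposition: fo2 to bml+c} keeps the size singly exponential. This is already guaranteed by that proposition, so no extra work is needed there.
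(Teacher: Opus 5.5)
Your proposal is correct and matches the paper's proof. It uses the same chain: remove isolated elements so that $n = O(|\varphi| \cdot m)$, apply the projection lemma to get a signature of arity at most $2$, translate to the complement-bearing modal logic in singly exponential size via the DAG-memoized Lutz--Sattler--Wolter translation, eliminate term complements in linear time, and evaluate in $O(|\varphi'| \cdot m')$ time, with the polynomial factor absorbed into $2^{O(|\varphi|)}$.
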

\begin{proof}
(Below we use $\fml[2]$ and $\struc[2]$ to denote the formula and "structure" after the 
reduction, respectively.)
By \Cref{proposition: BMLP model checking},
$\BMLPMC{\BMLP}$ is in $\asympO(\len{\fml} \cdot m)$ time.
By \Cref{proposition: bml+c to bml+} with $\len{\fml[2]} = \asympO(\len{\fml[1]})$,
$\BMLPMC{\BMLPwithC}$ is also in $\asympO(\len{\fml} \cdot m)$ time.
By \Cref{proposition: fo2 to bml+c} with $\len{\fml[2]} = 2^{\asympO(\len{\fml})}$,
$\MC{\kFO{2}\sigres{(\ordinalomega, \ordinalomega)}}$ is in $2^{\asympO(\len{\fml})} \cdot m$ time (note that the input is a "sentence").
After removing isolated "elements" so that $n \le \asympO(\len{\fml} \cdot m)$ (\Cref{proposition: isolated}),
by \Cref{lemma:reduction higher order predicates} with $\strucm{\struc[2]} = \asympO(\len{\fml} \, \strucm{\struc[1]})$ and $\len{\fml[2]} = \asympO(\len{\fml[1]})$,
$\MC{\kFO{2}}$ is also solvable in time $2^{\asympO(\len{\fml})} \cdot m$.
This completes the proof.
\end{proof}

\begin{remark}\label{remark: BMLP to CoR}
After encoding each unary "relation symbol" as its diagonal binary "relation",
$\BMLP$ is strictly less expressive than $\CoR$:
$\BMLP$ is as expressive as $\kFO{2}$-"formulas" with at most one free "variable" (\Cref{proposition: fo2 to bml+c,proposition: bml+c to bml+} and the standard translation),
$\CoR$ is as expressive as $\kFO{3}$-"formulas" with at most two free "variables" \cite{tarskiFormalizationSetTheory1987},
and $\kFO{2}$ cannot express, "eg", the existence of a directed $3$-cycle\ifthenelse{\boolean{arxiv}}{ (\Cref{section: expressive power comparison})}{}.
See \ifthenelse{\boolean{arxiv}}{\Cref{section: BMLP to CoR}}{the extended version~\cite{nakamuraUezatoDichotomyBoundedVariable2026arxiv}} for a $\CoR$ variant of $\BMLP$ ("cf" $\CoRtwo$ in \cite[\S 3.2]{nakamuraExpressivePowerSuccinctness2022}).\ifcameraready\AP\phantomintro*\CoRtwo\fi
\end{remark}
\section{Further Consequences and Open Problems}\label{sec: conclusion}
We have established complexity dichotomies for
$\kPP{k}\sigres{\sig}$ / $\kFO{k}\sigres{\sig}$ "model checking" in the sparse setting;
the dense setting is also discussed in \Cref{sec: dichotomy dense}.
We conclude by discussing several consequences of our main dichotomy theorem, with some open problems.

\subsection{Connection to graph homomorphism}\label{sec: graph homomorphism reformulations}
The correspondence recalled in \Cref{section: bounded treewidth} turns the
"model-checking problem" for each fixed
\(\kPP{k}\sigres{(0,1)}\)-"sentence" into a fixed-pattern directed "graph
homomorphism problem" whose pattern has "treewidth" at most \(k-1\).
Thus, \Cref{theorem:sparse-k4} yields the following sparse lower bound for this problem.
\begin{corollary}\label{corollary: sparse hardness graph}
Fix \(k \ge 4\).
Assuming the \SparseMAXxSATHypothesis{(k-1)},
for any \(\eps > 0\), there exists a directed "graph" \(\graph[1]_{\eps}\) of
"treewidth" at most \(k-1\) such that the fixed-pattern
"graph homomorphism problem" \(\HOM(\graph[1]_{\eps}, -)\), over
directed target "graphs", cannot be solved in \(\asympO(m^{k-\eps})\) time,
where \(m\) denotes the number of edges of the target "graph".
\end{corollary}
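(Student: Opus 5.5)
The corollary will follow directly from \Cref{theorem:sparse-k4} via the correspondence between $\kPP{k}$-"sentences" over a single binary "relation" and directed pattern "graphs" of "treewidth" at most $k-1$. Fix $k \ge 4$ and $\eps > 0$, and let $\fml[2]_{\eps}$ be the $\kPP{k}\sigres{(0,1)}$-"sentence" given by \Cref{theorem:sparse-k4}. By \Cref{lemma: reduction monadic and binary predicates to single binary predicate}, we may assume $\fml[2]_{\eps}$ contains no "equality". Its only "relation symbol" is then a binary $E$.

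\textbf{Step 1: from the sentence to a pattern graph.} Unfold the DAG of $\fml[2]_{\eps}$ into a tree and rename each quantifier occurrence to a fresh variable. This gives a "PNF" "conjunctive query" $\exists \bar z.\, \bigwedge E(z_i,z_j)$, equivalent to $\fml[2]_{\eps}$, whose variables can be covered by a tree decomposition with bags of size at most $k$. The standard argument is that each node of the syntax tree contributes the bag of the at most $k$ variable occurrences in scope there; this is \cite[Remark~5.8]{kolaitisConjunctiveQueryContainmentConstraint2000}, \cite[Lemma 1]{dalmauConstraintSatisfactionBounded2002}. Define $\graph[1]_{\eps}$ as the canonical directed graph: its vertices are the renamed variables and it has an edge $(z_i,z_j)$ for each atom $E(z_i,z_j)$. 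Isolated variables, if any, are harmless as vertices. Since there are no equality atoms, no identification of variables is needed. Then $\graph[1]_{\eps}$ has "treewidth" at most $k-1$, and for every directed graph $\graph[2]$ we have $\graph[2] \models \fml[2]_{\eps}$ iff there is a "homomorphism" $\graph[1]_{\eps} \to \graph[2]$. This is the Chandra--Merlin correspondence.

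\textbf{Step 2: transfer the lower bound.} Each input "structure" for $\MC{\set{\fml[2]_{\eps}}}$ is a directed graph $\graph[2]$ with $m$ tuples, that is, $m$ edges. So an $\asympO(m^{k-\eps})$-time algorithm for $\HOM(\graph[1]_{\eps},-)$ would decide $\MC{\set{\fml[2]_{\eps}}}$ in the same time. The only conversion cost is at most linear, for passing $n$ and the edge list, and after \Cref{proposition: isolated} we have $n = \asympO(m)$. This contradicts \Cref{theorem:sparse-k4}.

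\textbf{Main obstacle.} The main technical point is Step 1: checking that the treewidth bound survives DAG sharing and the reuse of variables under nested quantifiers. Unfolding the DAG removes sharing; the graph $\graph[1]_{\eps}$ may become large, but it is fixed and depends only on $\eps$, so this is harmless. The treewidth argument is then the standard induction on the tree-represented formula, with bags given by the scope of each subformula. One also has to make sure that eliminating equality does not reintroduce unary relations. The cited \Cref{lemma: reduction monadic and binary predicates to single binary predicate} already outputs a single binary relation without equality, so it does not.
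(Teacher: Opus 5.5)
Your proposal is correct and follows the paper's argument. The paper obtains the corollary directly from \Cref{theorem:sparse-k4} through the standard correspondence between $k$-variable primitive positive sentences over one binary relation and directed patterns of treewidth at most $k-1$. Your DAG unfolding with fresh variables, the equality-free form from \Cref{lemma: reduction monadic and binary predicates to single binary predicate}, the Chandra--Merlin step, and the removal of isolated elements via \Cref{proposition: isolated} make explicit details that the paper leaves implicit.
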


\begin{remark}\label{remark: sparse hardness graph}
The same lower bound also applies to the fixed-pattern "graph homomorphism problem" restricted to undirected fixed patterns and undirected target "graphs",
by the standard edge-replacement construction ("cf" \cite[\S~4.4]{hellGraphsHomomorphisms2004}):
in both the fixed pattern and the target, replace every directed edge \((u,v)\) by a copy of the undirected "graph" \(G_0\) of \cite[Fig.~4.7]{hellGraphsHomomorphisms2004},
identifying its distinguished vertices \(a\) and \(b\) with \(u\) and \(v\).
This preserves the existence of "graph homomorphisms" and increases \(m\) only by a constant factor.
It also preserves "treewidth" at most \(k-1\) for \(k \ge 4\), since \(G_0\) has a tree decomposition of width \(3\) with a bag containing both \(a\) and \(b\).
Similarly, in the sparse setting and for \(k \ge 4\), the unary-encoding step in the proof
of \Cref{lemma: reduction monadic and binary predicates to single binary predicate} can be made undirected by using the undirected
"graphs" \(G_0,G_1,\dots\) of \cite[Fig.~4.7]{hellGraphsHomomorphisms2004}.
\end{remark}

\subsection{Dichotomy in the dense setting}\label{sec: dichotomy dense}
Combining \Cref{proposition: dense hardness} with the encoding into a single binary "relation" of \Cref{lemma: reduction monadic and binary predicates to single binary predicate}, we have:
\begin{corollary}\label{corollary: dense hardness}
Fix $k \ge 4$.
Assuming the $(qk, q)$-\HyperCliqueHypothesis{} for some $3 \le q < k$,
there is a fixed $\kPP{k}\sigres{(0, 1)}$-"sentence" $\fml$ such that for any $\eps > 0$,
$\MC{\set{\fml}}$ cannot be solved in $\asympO(n^{k - \eps})$ time.
\end{corollary}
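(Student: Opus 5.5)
The plan is to combine \Cref{proposition: dense hardness} with \Cref{lemma: reduction monadic and binary predicates to single binary predicate}, and then to check that the blowup in the number of elements is only a constant factor, so that the $n^{k-\eps}$ lower bound survives. Fix $k \ge 4$ and $q$ as in the hypothesis. By \Cref{proposition: dense hardness}, there is a fixed $\kPP{k}$-"sentence" $\fml$ over one binary "relation" and finitely many unary "relations" (say $C_1$ of them). For every $\eps>0$, $\MC{\set{\fml}}$ cannot be solved in $\asympO(n^{k-\eps})$ time. Apply \Cref{lemma: reduction monadic and binary predicates to single binary predicate} to $\fml$; this is legitimate since $k \ge 4 \ge 3$. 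It yields a $\kPP{k}\sigres{(0,1)}$-"sentence" $\fml[2]$ that depends only on $\fml$, and so is fixed. It also gives a "structure" translation $\struc \mapsto \struc[2]$ with $\struc \models \fml$ "iff" $\struc[2] \models \fml[2]$.

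Next I check the parameters. The constant $C = \card{\fmlsig{\fml}}+1$ is fixed, so $\strucn{\struc[2]} = \asympO(\strucn{\struc})$. The construction time is $\asympO(\len{\fml} + C\strucn{\struc} + \strucm{\struc})$. In the dense setting $\strucm{\struc} \le \asympO(\strucn{\struc}^2)$, so this is $\asympO(n^2)$. The step that needs care is the representation. In the dense setting the target "structure" $\struc[2]$ must be handed over as an adjacency matrix, which takes $\asympO(\strucn{\struc[2]}^2) = \asympO(n^2)$ time to write down from the tuple lists. We also need the input $\struc$, given by adjacency tensors, to be converted to tuple lists, which again takes $\asympO(n^2)$ time. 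Both costs are dominated by $n^{k-\eps}$ whenever $k-\eps \ge 2$, and we may assume $\eps < 1$ so this holds since $k \ge 4$.

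Now suppose, for contradiction, that $\MC{\set{\fml[2]}}$ could be solved in $\asympO(n^{k-\eps})$ time for some $\eps>0$. Then $\MC{\set{\fml}}$ could be solved in time $\asympO(n^2) + \asympO((Cn)^{k-\eps}) = \asympO(n^{k-\eps})$, contradicting \Cref{proposition: dense hardness}. The quantifier order matters here: $\fml[2]$ is chosen once, before $\eps$, because the sentence from \Cref{proposition: dense hardness} is uniform in $\eps$ and the reduction does not depend on $\eps$. This is exactly what the corollary claims.

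The main obstacle I expect is making sure \Cref{lemma: reduction monadic and binary predicates to single binary predicate}, which is phrased in terms of the sparse size $\strucm{\struc}$, incurs no hidden blowup in the number of elements. That is guaranteed by the bound $\strucn{\struc[2]} = \asympO(C\,\strucn{\struc})$. The rest is bookkeeping about input formats.
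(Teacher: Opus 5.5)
Your proposal is correct and matches the paper's route: the paper obtains this corollary by combining \Cref{proposition: dense hardness} (which already gives a single sentence, uniform in $\eps$) with the single-binary-relation encoding of \Cref{lemma: reduction monadic and binary predicates to single binary predicate}, using that the number of elements grows only by the constant factor $C$. Your added bookkeeping fills in details the paper leaves implicit, namely the $O(n^2)$ conversions between adjacency tensors and tuple lists and the fact that the new sentence is chosen before $\eps$.
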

\Cref{corollary: dense hardness} (with \Cref{theorem: fo3,theorem: monadic} and the unconditional lower bound for $k \le \sigarity{\sig}$\ifarxiv{} in \Cref{section: trivial lower bound}\fi{}) 
also yields a dichotomy in the dense setting analogous to \Cref{theorem: dichotomy sparse}
\ifthenelse{\boolean{arxiv}}{(\Cref{section: theorem: dichotomy dense})}{(see the extended version~\cite{nakamuraUezatoDichotomyBoundedVariable2026arxiv})}.
\Cref{corollary: dense hardness} also shows that the "graph homomorphism problem" is hard even without colors ("cf" \Cref{section: bounded treewidth} for the "colored subgraph isomorphism problem").
\begin{corollary}\label{corollary: dense hardness graph}
Fix $k \ge 4$.
Assuming the $(qk, q)$-\HyperCliqueHypothesis{} for some $3 \le q < k$,
there exists a directed "graph" $\graph[1]$ of "treewidth" at most $k-1$ such that 
the fixed-pattern "graph homomorphism problem" $\HOM(\graph[1], -)$ cannot be solved in $\asympO(n^{k - \eps})$ time for any $\eps > 0$, where $n$ denotes the number of "vertices" of the target "graph".
\end{corollary}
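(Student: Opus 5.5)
The plan is to obtain \Cref{corollary: dense hardness graph} directly from \Cref{corollary: dense hardness} through the correspondence between $\kPP{k}$-"sentences" and bounded-"treewidth" patterns recalled in \Cref{section: bounded treewidth}. Fix $k \ge 4$ and $q$ as in the hypothesis. \Cref{corollary: dense hardness} gives a fixed $\kPP{k}\sigres{(0,1)}$-"sentence" $\fml$ over a single binary "relation" $E$ such that $\MC{\set{\fml}}$ is not solvable in $\asympO(n^{k-\eps})$ time for any $\eps > 0$. Before building the pattern I will remove "equality" atoms, since they do not correspond to edges of a graph; the plan is to rely on \Cref{lemma: reduction monadic and binary predicates to single binary predicate}, which already produces a "sentence" without "equality".

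\textbf{Step 1: the canonical pattern.} I take $\graph[1]$ to be the canonical "structure" of $\fml$. Rename bound "variables" apart so that every quantifier occurrence gets its own vertex, and put a directed edge $(u,v)$ for each atom $E(x,y)$ in which $x$ and $y$ are bound to $u$ and $v$. The standard fact (\cite[Remark~5.8]{kolaitisConjunctiveQueryContainmentConstraint2000}, \cite[Lemma 1]{dalmauConstraintSatisfactionBounded2002}) then gives two things. First, for every directed "graph" $\graph[2]$, $\graph[2] \models \fml$ "iff" there is a "homomorphism" $\graph[1] \to \graph[2]$; this holds because primitive positive sentences are existential-conjunctive, so witnesses are exactly homomorphisms of the canonical structure. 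Second, $\graph[1]$ has "treewidth" at most $k-1$. For the tree decomposition I follow the syntax tree of $\fml$: at each node, the bag is the set of at most $k$ currently active vertices, namely the current bindings of $x_1,\dots,x_k$. Each atom lies inside one bag, and each vertex occupies a connected subtree, namely the scope of its quantifier.

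\textbf{Step 2: the reduction.} Suppose $\HOM(\graph[1],-)$ could be solved in $\asympO(n^{k-\eps})$ time. Given an input $\struc$ over $\set{E}$, which is exactly a directed graph $\graph[2]$ on $n$ vertices, we decide $\struc \models \fml$ by deciding whether $\HOM(\graph[1],\graph[2])$ holds. This solves $\MC{\set{\fml}}$ in $\asympO(n^{k-\eps})$ time and contradicts \Cref{corollary: dense hardness}. Note that $\graph[1]$ does not depend on $\eps$, matching the quantifier order in the statement.

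\textbf{Main obstacle.} The delicate point is making sure $\fml$ really is equality-free with a single binary symbol, so that the canonical structure is a plain digraph. If some "equality" atoms survived, I would eliminate each $x=y$ in a primitive positive formula by substitution. Since substitution could raise the variable count, I would instead check that \Cref{lemma: reduction monadic and binary predicates to single binary predicate} indeed outputs an equality-free sentence, which its statement asserts. Checking the treewidth bound for nested, variable-reusing formulas is routine by the bag construction above.
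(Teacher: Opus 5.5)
Your proposal is correct and follows the paper's route: the paper derives this corollary from \Cref{corollary: dense hardness} via the correspondence recalled in \Cref{section: bounded treewidth}. Concretely, the canonical digraph $\graph[1]$ of the fixed equality-free $\kPP{k}\sigres{(0,1)}$-sentence has treewidth at most $k-1$, and model checking that sentence on an $n$-vertex input is exactly $\HOM(\graph[1],-)$ on the same input. One small correction to your decomposition: the bag-region of a renamed variable is its quantifier's scope minus any subtrees where the same variable name is re-bound, not the full scope; this region is still connected, so your treewidth bound stands.
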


\subsection{On \texorpdfstring{$\eps$}{ε}-uniform fixed "sentence" hardness}\label{sec: fixed sentence}
The hardness results for $\kPP{k}$ and $\kPP{3}$ in \Cref{theorem:sparse-k4,theorem:sparse-k3} are non-uniform in $\eps$:
for each $\eps > 0$, the proof produces a fixed "sentence" $\fml_{\eps}$.
What remains open is the stronger uniform statement:
whether a single $\kPP{k}\sigres{(0, 1)}$-"sentence", independent of $\eps$,
rules out $m^{k-\eps}$ algorithms for any $\eps > 0$.
\begin{question}["Cf" \Cref{theorem:sparse-k4}]\label{problem: hardness by H-Clique fixed 4}
Let $k \ge 4$.
Assuming the \SparseMAXxSATHypothesis{(k-1)}, does there exist a $\kPP{k}\sigres{(0, 1)}$-"sentence" $\fml$ such that
$\MC{\set{\fml}}$ cannot be solved in $\asympO(m^{k - \eps})$ time for any $\eps > 0$?
Equivalently,
does there exist a directed "graph" $\graph[1]$ of "treewidth" at most $k-1$ for which the fixed-pattern "graph homomorphism problem" $\HOM(\graph[1], -)$ cannot be solved in $\asympO(m^{k - \eps})$ time for any $\eps > 0$?
\end{question}
\begin{question}["Cf" \Cref{theorem:sparse-k3}]\label{problem: hardness by H-Clique fixed 3}
Assuming the \SparseMAXxSATHypothesis{2}, does there exist a $\kPP{3}\sigres{(0, 1)}$-"sentence" $\fml$ such that
$\MC{\set{\fml}}$ cannot be solved in $\asympO(m^{\matrixexponent - \eps})$ time for any $\eps > 0$?
Equivalently,
does there exist a directed "graph" $\graph[1]$ of "treewidth" at most $2$ for which the fixed-pattern "graph homomorphism problem" $\HOM(\graph[1], -)$ cannot be solved in $\asympO(m^{\matrixexponent - \eps})$ time for any $\eps > 0$?
\end{question}
In the database context, queries with "self-joins" and ""self-join-free"" queries (where every "relation symbol" occurs at most once) are often treated separately; see, "eg", \cite{carmeliConjunctiveQueriesSelfJoins2023,bringmannTightFineGrainedBounds2025,mengelLowerBoundsConjunctive2025arxiv}.
For each fixed $\eps$, a "self-join-free" hard "sentence" is easily obtained from $\fml_{\eps}$ by replacing different occurrences of the single "relation symbol" with fresh "relation symbols" and interpreting them identically.
However, the resulting "signature" depends on $\eps$, because
the number of occurrences of "relation symbols" in the hard "sentences" $\fml_{\eps}$ is not uniformly bounded.
Thus, the corresponding "self-join-free" versions of \Cref{theorem:sparse-k4,theorem:sparse-k3} remain open over a fixed finite "signature" independent of $\eps$.

In contrast, if we admit disjunction and one universal quantifier, an $\eps$-independent hard "sentence" exists in the following positive fragment.
Here, $\intro*\kPFO{k}_{\exists^{k}\forall^{1}}$ consists of the \emph{positive} $\kFO{k}$-"formulas" ("ie", built from "atoms" using $\land$, $\lor$, $\exists$, and $\forall$ \cite{kozenPositiveFirstOrderLogic1981}) such that on every path from the root of the formula tree to an "atom", the sequence of "quantifiers" is a subsequence of $\exists^{k}\forall^{1}$.
This is shown by modifying the proofs of \Cref{theorem:sparse-k4,theorem:sparse-k3},
replacing the split-and-list technique with a single $\forall$-quantifier ranging over all "dimensions".
\ifthenelse{\boolean{arxiv}}{}{See the extended version~\cite{nakamuraUezatoDichotomyBoundedVariable2026arxiv} for the proofs.}
\ifthenelse{\boolean{arxiv}}{%
\begin{theorem}[\Cref{section: proof of hardness by H-Clique fo}]%
}{%
\begin{theorem}%
}\label{theorem:sparse-k4 fo}
\gdef\sparseFOhardnesstheoremFO{%
Fix $k \ge 4$.
Assuming the \SparseMAXxSATHypothesis{(k-1)},
there is a $\kPFO{k}_{\exists^{k}\forall^{1}}\sigres{(0, 1)}$-"sentence" $\fml$ such that
$\MC{\set{\fml}}$ cannot be solved in $\asympO(m^{k - \eps})$ time for any $\eps > 0$.
}
\sparseFOhardnesstheoremFO
\end{theorem}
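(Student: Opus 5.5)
We follow the chain used for \Cref{theorem:sparse-k4}: \SparseMAXxSATProblem{(k-1)} $\leadsto$ $(k,k-1)$-\OVHyperCliqueProblem{} (\Cref{proposition: alman VVW}) $\leadsto$ model checking of a fixed sentence. The one change is that the split-and-list step of \Cref{lemma: new new partitioned binary hyperclique hypothesis} is replaced by a single universal quantifier over coordinates. Because the blocks disappear, the sentence no longer depends on $B$, and hence not on $\eps$.

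\textbf{The sentence.} Set $r \defeq k-1$ and take the universe $\mathcal{V} \sqcup \range{d}$. We use unary relations $V$ and $D$ marking the two sorts, and a binary relation $Z(i,v)$ meaning $v\ith{i}=0$ (stored only for $v\in\mathcal V$, $i\in\range d$). Then
\[
\fml \defeq \exists x_1\cdots\exists x_k.\ \bigwedge_{j} V(x_j)\land \bigwedge_{S\in\binom{\range{k}}{r}} \forall x_{\mu(S)}.\Bigl(\lnot D(x_{\mu(S)}) \lor \bigvee_{j=1}^{r} Z(x_{\mu(S)},x_{S\ith{j}})\Bigr).
\]
To stay positive, replace $\lnot D(y)$ by a unary relation $\bar D \defeq V$, i.e.\ the complement of $D$, which is materialized explicitly. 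On each root-to-atom path the quantifier sequence is $\exists^k\forall$, so $\fml \in \kPFO{k}_{\exists^{k}\forall^{1}}$ with $k$ variables. The disjunct for $S$ says that every coordinate is zero in some $v_{S\ith{j}}$, i.e.\ $\GIP{v_{S\ith{1}},\dots,v_{S\ith{r}}}=0$. Correctness then follows exactly as in the proof of \Cref{lemma: new new partitioned binary hyperclique hypothesis}: the atoms $V(x_j)$ force every witness into $\mathcal V$, and well-formedness forces the witnesses to be distinct.

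\textbf{Size and time.} The structure has $n_{\struc}=n+d=\asympO(n)$ elements and $m=\asympO(nd)=\asympO(cn\log n)$ tuples. It can be built in $\asympO(nd)$ time. Suppose $\MC{\set{\fml}}$ were solvable in $\asympO(m^{k-\eps})$ time. Then the instance is decided in $\asympO((n\log n)^{k-\eps})=\asympO(n^{k-\eps/2})$ time, which contradicts the $(k,r)$-\OVHyperCliqueHypothesis{} for loss $\eps/2$. The constant $c$ affects only the hidden constants, not the sentence. This logarithmic slack is the reason the uniform result is available here but not for \kPP{k}.

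\textbf{Collapsing to one binary relation.} Finally, apply the $\kFO{k}$ version of \Cref{lemma: reduction monadic and binary predicates to single binary predicate} to encode $V,\bar D,Z$ into a single binary relation, and remove equality with it. We expect this to be the main obstacle. That lemma is stated for \kPP{k}, \kEP{k} and \kFO{k}, but we must check that its output stays positive and keeps the quantifier pattern $\exists^{k}\forall^{1}$ on every path.

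The unary encodings in that lemma introduce existential gadget quantifiers, for example $U(x)\leadsto\exists y(E(x,y)\land\dots)$. Such gadgets occurring under the $\forall$ would break the pattern. To avoid this, the relations under the $\forall$ are chosen so that each atom there is binary and in a single sort, and the gadget for $\bar D$ is pushed outside. Concretely, let $Z'(i,v)$ hold for all $v\in\mathcal V$ and non-coordinate $i$ as well as for zero coordinates. Then the body under $\forall$ becomes $\bigvee_j Z'(x_{\mu(S)},x_{S\ith{j}})$, which contains no unary atoms, and $m$ stays $\asympO(n^2)$.

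That dense bound is too large, so instead we let the universally quantified variable range only over coordinates. The $\forall$ uses no guard, and the encoding into one relation $E$ must make elements of $\mathcal V$ satisfy the disjunction trivially as $x_{\mu(S)}$. We do this by adding the tuples $E(v,v')$ for all $v,v'\in \mathcal V$ with $v\neq v'$ only in a sparse gadget form. If that fails, we fall back on tagging vertices with bounded-length paths, checked only by the outer existential variables. Remaining bookkeeping is the sparsity check and $k\ge4$ ensuring enough variables for the gadgets.
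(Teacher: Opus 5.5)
Your sentence over a unary $V$ and a binary $Z$ matches the paper. So do the structure on $\mathcal{V}\sqcup[d]$ with $n+d$ elements and $O(n(d+1))$ tuples, and the counting argument: the $\log n$ factor is absorbed into $n^{\varepsilon/2}$, and the sentence depends on neither $\varepsilon$ nor $c$. The gap is the last step. What you prove is hardness over one unary and one binary symbol, and you never actually reach a single binary relation.

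\Cref{lemma: reduction monadic and binary predicates to single binary predicate} cannot be applied here, for two reasons:
\begin{itemize}
\item It translates $\forall x.\psi$ into $\forall x.\,E^2(x,x)\to\psi^{\dagger}$. This contains a negation, so the output is not positive.
\item The guards $E^2$ and the path formulas $\theta_i$ add existential quantifiers beyond the $k$ outer ones, including beneath the $\forall$. This destroys the pattern $\exists^{k}\forall^{1}$.
\end{itemize}
None of your fallbacks is a working construction. $Z'$ is dense, as you note yourself. Adding $E(v,v')$ for all $v\neq v'$ gives $\Theta(n^2)$ tuples, whatever gadget form is intended. Tags checked only by the outer existential variables miss the real difficulty: it is the \emph{universally} quantified variable whose sort must be recognized, and beneath the $\forall$ the pattern permits only atoms.

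The missing observation is that no gadget is needed. Put the diagonal copy of $V$ and the relation $Z$ into one symbol:
\[
E = \{(v,v) : v\in\mathcal{V}\}\cup\{(\ell,v)\in[d]\times\mathcal{V} : v[\ell]=0\}.
\]
Since $Z$ relates two disjoint sorts, it has no loops, so $E(x,x)$ holds exactly at vectors. Use $E(x_i,x_i)$ as the outer guards, and use
\[
E(x_{\mu(S)},x_{\mu(S)})\lor\bigvee_{j\in S}E(x_{\mu(S)},x_j)
\]
as the body of the universal. The loop disjunct plays the role of your $\bar D=V$: it makes the body true whenever $x_{\mu(S)}$ is a vector. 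At a coordinate $\ell$ there is no loop, and $E(\ell,\cdot)$ coincides with $Z(\ell,\cdot)$. The resulting sentence is positive, has pattern $\exists^{k}\forall^{1}$, and uses a single binary symbol. The structure still has $n+d$ elements and at most $n(d+1)$ tuples, so your complexity argument goes through unchanged. This is exactly the paper's construction.
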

\ifthenelse{\boolean{arxiv}}{%
\begin{theorem}[\Cref{section: proof of hardness by H-Clique fo}]%
}{%
\begin{theorem}%
}\label{theorem:sparse-k3 fo}
\gdef\sparsehardnessbyovtriangleFO{%
Assuming the \SparseMAXxSATHypothesis{2},
there is a $\kPFO{3}_{\exists^{3}\forall^{1}}\sigres{(0, 1)}$-"sentence" $\fml$ such that
$\MC{\set{\fml}}$ cannot be solved in $\asympO(m^{\matrixexponent - \eps})$ time for any $\eps > 0$.
}
\sparsehardnessbyovtriangleFO
\end{theorem}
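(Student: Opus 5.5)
We follow the proof of \Cref{lemma:sparse-k3'}, but now aim for a single sentence that does not depend on $\eps$. We replace the split-and-list step of \Cref{lemma: new new partitioned binary hyperclique hypothesis} by one universal quantifier that ranges over all coordinates. The chain of reductions stays the same: \SparseMAXxSATHypothesis{2} $\Rightarrow$ \OVTriangleHypothesis{} (\Cref{proposition: alman VVW}), then a reduction to one fixed positive sentence over a unary/binary signature, then \Cref{lemma: reduction monadic and binary predicates to single binary predicate} together with \Cref{proposition: isolated}. That last lemma also covers positive $\kFO{3}$ via its $\kFO{k}$ clause. We must check that the translation keeps the formula positive and preserves the quantifier-prefix pattern $\exists^3\forall^1$. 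We expect this because the equality-elimination relation is used only positively.

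\textbf{Construction.} Given a well-formed $\mathcal{V}\subseteq\set{0,1}^d$ with $d\le c\log n$, take as universe $\mathcal{V}\sqcup\range{d}$. Add a unary relation $D$ for the coordinates and a binary relation $Z \defeq \set{(v,i)\mid v\ith{i}=0}$, so that $m=\asympO(nd)=\asympO(n\log n)$. The fixed sentence is
\[
\exists x_1\exists x_2\exists x_3.\;\bigwedge_{\{p,q\}\in\binom{\range{3}}{2}} \Bigl( V(x_p)\land V(x_q)\land \forall x_{s}.\,\bigl(N(x_{s}) \lor Z(x_p,x_{s})\lor Z(x_q,x_{s})\bigr)\Bigr),
\]
where $s$ is the index not in $\{p,q\}$. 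Here $V$ marks the vectors and $N$ is a unary relation marking the \emph{non}-coordinates, $N\defeq\mathcal{V}$; using $N$ keeps the formula positive instead of writing $\lnot D$. Every root-to-atom path has prefix $\exists^3\forall^1$, and the sentence uses only three variables because $x_s$ is reused.

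\textbf{Correctness.} For the pair $\{p,q\}$, the conjunct holds iff $v_p,v_q\in\mathcal{V}$ and every coordinate $i$ satisfies $v_p\ith{i}=0$ or $v_q\ith{i}=0$, i.e.\ iff $\GIP{v_p,v_q}=0$. Well-formedness rules out $v_p=v_q$, because $\GIP{v,v}=\GIP{v}\neq0$. So the sentence holds iff there is an OV-triangle.

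\textbf{Timing.} An $\asympO(m^{\matrixexponent-\eps})$ algorithm would run in $\asympO((n\log n)^{\matrixexponent-\eps})=\asympO(n^{\matrixexponent-\eps/2})$ time. For each $\eps$, the hypothesis supplies some constant $c$, and since the sentence is independent of $c$ this yields a contradiction for every $\eps$ at once; this independence is exactly the uniformity we want. The final step applies \Cref{lemma: reduction monadic and binary predicates to single binary predicate}, which changes $m$ only by a constant factor.

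\textbf{Main obstacle.} The difficult part is verifying that the single-binary-relation encoding of \Cref{lemma: reduction monadic and binary predicates to single binary predicate} keeps the sentence inside $\kPFO{3}_{\exists^{3}\forall^{1}}$. Unary atoms may be encoded by gadgets that introduce fresh existential quantifiers under the $\forall$, which would break the prefix pattern. If that happens, we would first try to absorb $N$ and $V$ into the binary relation itself, using self-loops or distinct out-degree patterns that the atoms $Z(x_p,x_s)$ can test directly without extra quantifiers.
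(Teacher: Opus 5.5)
Up to the signature reduction, your proposal matches the paper's proof. You reuse the third variable as the universally quantified coordinate, use the universe $\mathcal{V}\sqcup[d]$ with zero-position edges, invoke well-formedness to force distinct witnesses, and note that a $c$-independent sentence defeats every $\varepsilon$ at once. The step that fails is the one you rely on to reach the signature $(0,1)$. \Cref{lemma: reduction monadic and binary predicates to single binary predicate} does not preserve membership in $\kPFO{3}_{\exists^{3}\forall^{1}}$; its $\kFO{k}$ clause only promises some $\kFO{3}$-sentence over one binary relation. The construction in \Cref{lemma: unary encoding} shows why:
\begin{itemize}
\item It relativizes $\forall x.\,\psi$ to $\forall x.\,E^2(x,x)\to\psi^\dagger$. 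This places the guard under a negation, so the output is not positive.
\item The guard $E^2(x,x)=\exists z.\,E(x,z)\land E(z,x)$ introduces fresh existential quantifiers. Through the guard on $x_3$, a root-to-atom path has quantifier sequence $\exists^4$, which is not a subsequence of $\exists^3\forall^1$.
\item The path gadgets $\theta_{i+2}$ that replace your $V$-, $N$- and $Z$-atoms also introduce existentials. Inside the translated atoms under $\forall x_s$, paths have sequence $\exists^3\forall\exists\cdots$.
\end{itemize}
So your expectation, justified by the equality relation being used only positively, is unfounded. The obstruction is the relativization and the path gadgets, not equality.

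The remedy is the fallback you mention. It is exactly the paper's construction, and it makes the encoding lemma unnecessary. Since $N=V=\mathcal{V}$, a single self-loop relation encodes both. Set $E\defeq\{(v,v)\mid v\in\mathcal{V}\}\cup\{(\ell,v)\in[d]\times\mathcal{V}\mid v[\ell]=0\}$ and use
\[
\exists x_1\exists x_2\exists x_3.\,\bigwedge_{i=1}^{3}E(x_i,x_i)\land\bigwedge_{\{p,q\}}\forall x_s.\,\bigl(E(x_s,x_s)\lor E(x_s,x_p)\lor E(x_s,x_q)\bigr).
\]
You still need to check that merging the unary and binary information creates no spurious witnesses. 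First, $E(x_i,x_i)$ holds exactly on vectors. Second, if $x_s$ denotes a vector, the disjunct $E(x_s,x_s)$ is already true, so the self-loops never matter inside the disjunction. Third, if $x_s$ denotes a coordinate $\ell$, it has no self-loop and its out-neighbours are exactly the vectors that are $0$ at $\ell$. The structure then has $n+d$ elements and at most $n(d+1)$ tuples, is built in $O(n(d+1))$ time, and your timing argument applies verbatim.
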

Since the hard "sentences" are fixed,
\Cref{theorem:sparse-k4 fo,theorem:sparse-k3 fo} hold for $\kPFO{k}_{\exists^{k}\forall^{1}}\sigres{(0, \ordinalomega)}$-"sentences" where every "relation symbol" occurs at most once.%

\subsection{Hardness for \texorpdfstring{$\CoR$}{CoR}}\label{sec: target language hardness}
\emph{Primitive positive} $\CoR$ "terms@@CoR" \cite{nakamuraExpressivePowerSuccinctness2022} (the class $\intro*\PPCoR$)
are built from binary "relation symbols" and the constants $\corid$, $\cortop$, and $\corz$ (as primitives) using converse, intersection, and composition.
The $\kPP{3}$ lower bound of \Cref{theorem:sparse-k3} transfers to $\PPCoR$ "terms@@CoR"
(respectively, the $\kFO{3}$ lower bound of \Cref{theorem:sparse-k3 fo} transfers to $\CoR$ "terms@@CoR")
via the translation of \cite{nakamuraExpressivePowerSuccinctness2022}, yielding the following corollaries.
\ifarxiv \begin{corollary}[\Cref{section: CoR hardness}]\label{corollary: CoR fixed term hardness}
\else%
\begin{corollary}
\fi
Assuming the \SparseMAXxSATHypothesis{2},
for every $\eps>0$, there exists a fixed $\PPCoR$ "term@@CoR" $\term_{\eps}$ over a single binary "relation symbol" such that
deciding whether $\struc, I \models \term_{\eps}(x,y)$ cannot be done in $\asympO(m^{\matrixexponent-\eps})$ time.
The same holds for deciding whether $\sem{\term_{\eps}}^{\struc}$ is nonempty.
\end{corollary}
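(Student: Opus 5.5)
The plan is to transfer the fixed-sentence lower bound of \Cref{theorem:sparse-k3} through the translation of \Cref{proposition: fo3 to atomic CoR}, specialized to primitive positive inputs as in \cite{nakamuraExpressivePowerSuccinctness2022}. Fix $\eps>0$ and let $\fml[2]_{\eps}$ be the $\kPP{3}\sigres{(0,1)}$-"sentence" given by \Cref{theorem:sparse-k3} (via \Cref{lemma:sparse-k3'}). It uses a single binary "relation symbol" $E$ and no "equality". I would translate it with the bottom-up procedure of \cite[Theorem 3.1]{nakamuraExpressivePowerSuccinctness2022}. The point to check is that, on $\land,\exists$-formulas, this procedure stays inside $\PPCoR$. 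For conjunctions of two-variable formulas it introduces only $\intersection$ and converse. The quantifier step is the elimination \eqref{QE}; when a conjunct has a single free variable, it becomes a composition with $\cortop$ or a restriction to $\corid$. Since no disjunction or negation occurs, the DNF/CNF expansion degenerates to a single conjunct, so no $\union$ or complement is produced. Because the sentence is fixed, the resulting term $\term_{\eps}$ over $E$ is fixed too; its size is irrelevant.

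Since $\fml[2]_{\eps}$ is a "sentence", I would use the dummy-variable convention and obtain $\term_{\eps}$ with $\struc\models\fml[2]_{\eps}$ iff $\struc,I\models\term_{\eps}(v,v)$ for every "valuation" $I$. More usefully, I would arrange the translation to be of the form $\cortop\compo s\compo\cortop$. Then the semantics of $\term_{\eps}$ is either empty or all of $\univ{\struc}^2$, according to the truth value of $\fml[2]_{\eps}$.

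The reductions are then immediate. Given a "structure" $\struc$ (with isolated elements already removed by \Cref{proposition: isolated}, which is harmless here), pick any element $a$ and set $I(x)=I(y)=a$. An $\asympO(m^{\matrixexponent-\eps})$ algorithm for deciding $\struc,I\models\term_{\eps}(x,y)$ would then decide $\struc\models\fml[2]_{\eps}$ in the same time plus $\asympO(m)$ overhead. Likewise, $\sem{\term_{\eps}}^{\struc}\neq\emptyset$ iff $\struc\models\fml[2]_{\eps}$. Either algorithm would contradict \Cref{theorem:sparse-k3}, which rests on the \SparseMAXxSATHypothesis{2}.

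The main obstacle is confirming that the translation of a $\kPP{3}$-"sentence" lands in $\PPCoR$ and does not merely produce a general $\CoR$ term. I would handle it by a direct induction on primitive positive formulas with $\FV\subseteq\set{x,y}$, proving that each such formula is equivalent to $t(x,y)$ for some $\PPCoR$ term $t$. The cases are as follows. Atoms $E(x,y)$, $E(y,x)$, $E(x,x)$ become $E$, $E^{\smile}$, and $E\intersection\corid$ composed with $\cortop$ appropriately. A conjunction becomes an $\intersection$. For $\exists z.\,\fml[2]$ with $\fml[2]$ a conjunction of parts over $\set{x,z}$, $\set{z,y}$ and $\set{x,y}$, I would group the conjuncts and use \eqref{QE}, which gives $(P\compo Q)\intersection T$. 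Unary subformulas $\alpha(z)$ are absorbed as $(\alpha\text{-term}\intersection\corid)$ before composing. This is the standard argument, so I expect it to be routine but notationally fiddly, especially the bookkeeping of which variable names play the roles of $x,y,z$ under variable reuse.
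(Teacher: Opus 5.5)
Your proposal is correct and follows essentially the same route as the paper. You translate the fixed hard $\kPP{3}\sigres{(0,1)}$-sentence of \Cref{theorem:sparse-k3} into a semantically equivalent $\PPCoR$ term via the translation of \cite{nakamuraExpressivePowerSuccinctness2022}, note that this term denotes either the full or the empty relation, and read off both membership and nonemptiness; your induction showing that primitive positive three-variable formulas land in $\PPCoR$ only spells out what the paper takes from that citation.
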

\ifarxiv \begin{corollary}[\Cref{section: CoR hardness}]\label{corollary: CoR uniform fixed term hardness}
\else%
\begin{corollary}%
\fi
Assuming the \SparseMAXxSATHypothesis{2},
there exists a fixed $\CoR$ "term@@CoR" $\term$ over a single binary "relation symbol" such that
deciding whether $\struc, I \models \term(x,y)$ cannot be done in $\asympO(m^{\matrixexponent-\eps})$ time for any $\eps>0$.
The same holds for deciding whether $\sem{\term}^{\struc}$ is nonempty.
\end{corollary}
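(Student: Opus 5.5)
We derive this from \Cref{theorem:sparse-k3 fo} by translating its fixed hard sentence into a fixed $\CoR$ term with \Cref{proposition: fo3 to atomic CoR}. Let $\fml$ be the fixed $\kPFO{3}_{\exists^{3}\forall^{1}}\sigres{(0, 1)}$-"sentence" given by \Cref{theorem:sparse-k3 fo}. The only relation symbol in $\fml$ is one binary symbol $E$, and $\fml$ uses at most three "variables". Hence $\fml \in \kFO{3}\sigres{(0,\ordinalomega)}$ and has no free variables. \Cref{proposition: fo3 to atomic CoR} translates $\fml$ into a "semantically equivalent" $\CoR$ "atomic FO-formula@@CoR" $\term_0(v,v)$, where $v$ is a dummy "variable". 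This translation depends only on $\fml$, so $\term_0$ is a fixed term over the single symbol $E$; it may also use $\corid$ and complement.

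Next we make the truth value independent of the chosen valuation and of the pair $(x,y)$. Define
\[
\term \defeq \cortop \compo \term_0 \compo \cortop,
\]
where $\cortop = \corid \union \corid^{-}$ is itself a $\CoR$ term. For every "structure" $\struc$, the relation $\sem{\term}^{\struc}$ is either $\univ{\struc}^2$ or $\emptyset$. It is $\univ{\struc}^2$ exactly when $\sem{\term_0}^{\struc}$ is nonempty. Since $\fml$ is a "sentence", $\struc \models \fml$ holds iff $\struc,I\models \term_0(v,v)$ for every (equivalently, some) valuation $I$. This in turn holds iff the diagonal of $\sem{\term_0}^{\struc}$ is nonempty. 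To match ``$\sem{\term_0}^{\struc}$ nonempty'' exactly, we apply the translation to the sentence $\fml$ with the dummy variable and take $\term_0 \intersection \corid$ in place of $\term_0$ inside $\term$. With this choice, for every $\struc$ and every valuation $I$:
\[
\struc \models \fml \iff \struc, I \models \term(x,y) \iff \sem{\term}^{\struc} \neq \emptyset .
\]

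Now suppose, for contradiction, that some $\eps>0$ admits an $\asympO(m^{\matrixexponent-\eps})$-time algorithm deciding $\struc, I \models \term(x,y)$, or deciding nonemptiness of $\sem{\term}^{\struc}$. Given an instance $\struc$ of $\MC{\set{\fml}}$, we pick an arbitrary element $a \in \univ{\struc}$ in constant time; the universe is nonempty by definition. We set $I(x)=I(y)=a$ and run that algorithm on the same "structure". The structure is unchanged, so $m$ is unchanged. Thus $\MC{\set{\fml}}$ would be solvable in $\asympO(m^{\matrixexponent-\eps})$ time, contradicting \Cref{theorem:sparse-k3 fo}.

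The only delicate point is ensuring that the dummy-variable translation yields a genuinely fixed term whose truth does not depend on $I$. This is exactly what the wrapping with $\cortop$ and $\corid$ above handles. No step requires controlling the $2^{\asympO(\len{\fml})}$ blowup of the translation, because $\len{\fml}$ is a constant.
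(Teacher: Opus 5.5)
Your proof is correct and takes essentially the paper's approach: translate the fixed $\kFO{3}$ hard sentence of \Cref{theorem:sparse-k3 fo} into a fixed $\CoR$ term, then observe that both membership and nonemptiness decide that sentence on the unchanged input structure, so $m$ is preserved. The only difference is minor. The paper uses the translation to $t(x,y)$ with two distinct free variables, so the term is automatically either the full relation or the empty relation, whereas you start from the dummy-variable form $t_0(v,v)$ and force the same full-or-empty behaviour explicitly with $\cortop \compo (t_0 \intersection \corid) \compo \cortop$.
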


\subsection{Polynomial formula-size dependence and succinctness} \label{section: succinctness}
The exponential dependence $2^{\asympO(\len{\fml})}$ in \Cref{theorem: fo3,theorem: fo2,theorem: monadic} (for the first part of \Cref{theorem: monadic}, when $k \ge 2$) cannot be replaced by a polynomial dependence, assuming the corresponding OV hypotheses\ifarxiv{} (recalled at the beginning of the appendices to \Cref{section: succinctness})\else{}~\cite{gaoCompletenessFirstorderProperties2018}\AP\phantomintro*\OVHypothesis{}\phantomintro{SETH}\fi{}, which follow from "SETH".
\ifthenelse{\boolean{arxiv}}{}{See the extended version~\cite{nakamuraUezatoDichotomyBoundedVariable2026arxiv} for the proofs.}
\ifthenelse{\boolean{arxiv}}{%
\begin{theorem}[\Cref{section: theorem: FO model checking hardness}]%
}{%
\begin{theorem}%
}
\label{theorem: FO model checking hardness}
\label{theorem: monadic FO model checking hardness}
    \gdef\theoremFOmodelcheckinghardness{%
    Fix $k \ge 2$.
    Assuming the $k$-\OVHypothesis{},
    $\MC{\PNFkEP{k}\sigres{(\ordinalomega)}}$
    cannot be solved in $\poly(\len{\fml}) \cdot m^{k - \eps}$ time
    for any $\eps > 0$.
    }%
    \theoremFOmodelcheckinghardness
\end{theorem}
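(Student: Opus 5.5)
The plan is a direct reduction from the $k$-\OVProblem{} problem: we encode the input vectors as a unary structure and write the orthogonality condition as a single existential-positive sentence in prenex normal form. I take the $k$-\OVHypothesis{} in its standard form: for every $\eps>0$, deciding whether $k$ given sets $\mathcal{V}_1,\dots,\mathcal{V}_k\subseteq\set{0,1}^d$ of $n$ vectors each contain $v_1\in\mathcal{V}_1,\dots,v_k\in\mathcal{V}_k$ with $\GIP{v_1,\dots,v_k}=0$ cannot be done in $\poly(d)\cdot n^{k-\eps}$ time. The version with $d=\omega(\log n)$ is the one implied by "SETH". The key design choice is to avoid negation, since the target class is $\PNFkEP{k}$. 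So instead of unary relations for ones, we introduce unary relations for zeros.

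\textbf{Construction.} The universe is $\bigsqcup_{i\in\range{k}}\mathcal{V}_i$, using disjoint copies if a vector occurs in several sets. We use unary relations $A_1,\dots,A_k$ with $A_i^{\struc}\defeq\mathcal{V}_i$, and unary relations $Z_1,\dots,Z_d$ with $Z_j^{\struc}\defeq\set{v \mid v\ith{j}=0}$. The sentence is
\[
\fml \defeq \exists x_1\cdots\exists x_k.\ \bigwedge_{i=1}^k A_i(x_i)\land\bigwedge_{j=1}^d\bigvee_{i=1}^k Z_j(x_i).
\]
This is a $\PNFkEP{k}\sigres{(\ordinalomega)}$-"sentence" with $\len{\fml}=\asympO(kd)$. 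Correctness is immediate from the definition: a coordinate $j$ contributes $0$ to $\GIP{v_1,\dots,v_k}$ iff some $v_i\ith{j}=0$. Hence $\struc\models\fml$ iff there is a generalized-orthogonal tuple with $v_i\in\mathcal{V}_i$.

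\textbf{Size accounting.} The structure has $\strucn{\struc}=kn$ and $\strucm{\struc}\le kn(d+1)$. It can be built in $\asympO(knd)$ time. Suppose $\MC{\PNFkEP{k}\sigres{(\ordinalomega)}}$ were solvable in $\poly(\len{\fml})\cdot m^{k-\eps}$ time. Then $k$-\OVProblem{} would be solvable in
\[
\asympO(knd)+\poly(kd)\cdot(kn(d+1))^{k-\eps}=\poly(d)\cdot n^{k-\eps}
\]
time for fixed $k$. This contradicts the hypothesis.

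\textbf{Main obstacle.} The only delicate point is matching the exact form of the hypothesis as recalled in the appendix. The $\poly(\len{\fml})$ factor must be absorbed into the $\poly(d)$ slack that the $k$-\OVHypothesis{} allows, which is why the formula size must be polynomial (here linear) in $d$. If the hypothesis is instead phrased with $d=c\log n$ for some $c$, the same argument works because $\poly(d)=n^{o(1)}$ can be absorbed by shrinking $\eps$. A minor check is that $k\ge2$ is needed only so that the hypothesis is meaningful.
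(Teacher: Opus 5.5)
Your proposal is correct and follows the paper's proof essentially verbatim. The paper uses the same unary structure (its $\hat{S}_i$ are your $A_i$, and it uses the same zero-indicator relations $Z_\ell$) and the same \PNFkEP{k}-sentence of size $\asympO(kd)$, and it likewise absorbs the $\poly(\len{\fml})$ factor into the $\poly(d)$ slack of the $k$-\OVHypothesis{}. The one detail the paper states explicitly that you leave implicit is the assumption $\eps<1$, so that $k-\eps>1$ and the $\asympO(knd)$ construction time is dominated by $\poly(d)\cdot n^{k-\eps}$.
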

\ifthenelse{\boolean{arxiv}}{%
\begin{theorem}[\Cref{section: theorem: FO model checking hardness one binary}]%
}{%
\begin{theorem}%
}\label{theorem: FO model checking hardness one binary}
    \gdef\theoremFOmodelcheckinghardnessonebinary{%
    Fix $k \ge 3$.
    Assuming the $k$-\OVHypothesis{},
    $\MC{\kEP{k}\sigres{(0, 1)}}$
    cannot be solved in $\poly(\len{\fml}) \cdot m^{k - \eps}$ time
    for any $\eps > 0$.
    }%
    \theoremFOmodelcheckinghardnessonebinary
\end{theorem}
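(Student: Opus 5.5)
The plan is to reduce the $k$-\OVHypothesis{} directly to $\MC{\kEP{k}\sigres{(0, 1)}}$. First I build an existential positive "sentence" of size $\asympO(kd)$ over unary "relations" only. Then I collapse those unary "relations" into a single binary "relation" using \Cref{lemma: reduction monadic and binary predicates to single binary predicate}; this is where the hypothesis $k \ge 3$ is needed. Since the dimension is only $d = \asympO(\log n)$, every blowup in formula size and structure size is polylogarithmic. It is therefore absorbed by the $\poly(\len{\fml})$ factor and the slack in $\eps$.

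Concretely, I take a (partitioned) $k$-OV instance $A_1, \dots, A_k \subseteq \set{0,1}^d$, each of size $n$, with $d \le c\log n$. Let $\struc$ have "universe" $A_1 \sqcup \dots \sqcup A_k$, with unary "relations" $P_j \defeq A_j$ for $j \in \range{k}$ and $Z_i \defeq \set{v \mid v\ith{i} = 0}$ for $i \in \range{d}$. The hard "sentence" is
\[
\fml \defeq \exists x_1 \cdots \exists x_k.\ \bigwedge_{j=1}^{k} P_j(x_j) \land \bigwedge_{i=1}^{d} \bigvee_{j=1}^{k} Z_i(x_j).
\]
It lies in $\PNFkEP{k}\sigres{(\ordinalomega)}$ and has $\len{\fml} = \asympO(kd)$. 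Correctness is immediate: the inner disjunction at coordinate $i$ holds exactly when some chosen vector has a $0$ there, which is the condition $\GIP{v_1,\dots,v_k} = 0$. The structure satisfies $\strucn{\struc} = kn$ and $\strucm{\struc} \le kn(d+1)$, and it is built in $\asympO(nd)$ time. This construction is presumably also what underlies \Cref{theorem: FO model checking hardness}, so I may simply cite it.

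Next I apply \Cref{lemma: reduction monadic and binary predicates to single binary predicate} in its $\kEP{k}$ version. This requires $k \ge 3$, and it produces an equality-free $\kEP{k}\sigres{(0,1)}$-"sentence" $\fml[2]$ and a "structure" $\struc[2]$. Here $C = d + k + 1$, $\len{\fml[2]} = \asympO(d)$, $\strucn{\struc[2]} = \asympO(dn)$ and $\strucm{\struc[2]} = \asympO(dn)$, and the whole construction takes $\asympO(dn)$ time. Now suppose $\MC{\kEP{k}\sigres{(0,1)}}$ were solvable in $\len{\fml}^{a} \cdot m^{k-\eps}$ time. Then $k$-OV would be solvable in $\asympO(d^{a} (dn)^{k-\eps}) = n^{k-\eps}\,\mathrm{polylog}(n) \le \asympO(n^{k-\eps/2})$ time, contradicting the $k$-\OVHypothesis{}.

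The main point needing care is the size bookkeeping inside \Cref{lemma: reduction monadic and binary predicates to single binary predicate}. Its encoding adds roughly $C$ gadget "elements" and edges per original "element", so the target stays sparse only because $C = \asympO(\log n)$. In the dense analogue this step is harmless for a different reason, but here the sparse bound $\strucm{\struc[2]} = \asympO(C\,\strucn{\struc} + \strucm{\struc})$ is what keeps $m$ near-linear in $n$. I must also check that the formula blowup stays polynomial in $d$, which \Cref{remark: DAG and tree representations} flags as sensitive to representation. With "DAG-represented" input, the lemma gives linear size directly. If a "tree-represented" bound were required, I would fall back on the tree variant of the lemma and verify that it still gives $\poly(d)$ size, since $\fml$ itself is a shallow tree of size $\asympO(kd)$.
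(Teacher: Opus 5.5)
Your reduction is the paper's own argument unrolled. The paper proves \Cref{theorem: FO model checking hardness} with exactly your unary sentence (\Cref{lemma: reduction from kOV to PNF-FO}), and then moves to a single binary relation via \Cref{lemma: reduction monadic and binary predicates to single binary predicate}, using $k\ge 3$ there; you simply compose the two steps.

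The gap is in the final contradiction. In this paper the $k$-OV hypothesis does \emph{not} restrict the dimension. It asserts that $k$-OV on $n$-element sets in $\{0,1\}^d$ has no $\mathrm{poly}(d)\cdot n^{k-\varepsilon}$-time algorithm, for arbitrary $d$. You work only with $d\le c\log n$ and derive an $O(n^{k-\varepsilon/2})$ algorithm for every $c$. That refutes the \emph{low-dimensional} $k$-OV hypothesis instead. The low-dimensional hypothesis is the stronger assumption: it implies the $k$-OV hypothesis, and no converse implication is known. So as written, your argument proves the theorem only under a stronger assumption than the one stated. It is really the argument behind the paper's separate low-dimensional result, \Cref{theorem: FO model checking hardness low-dimensional binary low-dimensional}.

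The repair is to drop the restriction $d\le c\log n$; none of your bookkeeping needs it. For arbitrary $d$ and fixed $k$, your construction gives $|\psi| = O(d)$ under DAG representation, and still $\mathrm{poly}(d)$ in the tree-represented variant. It also gives $m_{\mathbb{B}} = O(dn)$ and construction time $O(dn)$. We may assume $\varepsilon<1$. Then an algorithm running in $|\psi|^{a}\cdot m^{k-\varepsilon}$ time would solve $k$-OV in $O(dn) + O\bigl(d^{a}(dn)^{k-\varepsilon}\bigr) \le \mathrm{poly}(d)\cdot n^{k-\varepsilon}$ time. This contradicts the hypothesis as the paper defines it.

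Your concern that the target stays sparse only because $C = O(\log n)$ is therefore unnecessary. The factor $C = O(d)$ in $m_{\mathbb{B}} = O(C\, n_{\mathbb{A}} + m_{\mathbb{A}})$ is absorbed by the $\mathrm{poly}(d)$ slack built into the $k$-OV hypothesis. This is exactly how the paper absorbs it into the $\mathrm{poly}(|\varphi|)$ factor.
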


Indeed, a polynomial-time translation to the corresponding target language,
combined with the polynomial model-checking algorithm for that target language,
would give a $\poly(\len{\fml})\cdot m^{k-\eps}$ algorithm for the source fragment,
contradicting \Cref{theorem: FO model checking hardness,theorem: FO model checking hardness one binary}.
Thus, we obtain:
\begin{corollary}\label{corollary: succinctness}
Assuming the $3$-\OVHypothesis{}, there is no polynomial-time translation from $\kFO{3}\sigres{(0,\ordinalomega)}$-"formulas" with at most two free "variables" to $\CoR$.
\end{corollary}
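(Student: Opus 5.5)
We argue by contradiction, exactly as the paragraph preceding the statement suggests. Suppose there were a polynomial-time translation $T$ taking each $\kFO{3}\sigres{(0,\ordinalomega)}$-"formula" with at most two free "variables" to a "semantically equivalent" $\CoR$ "atomic FO-formula@@CoR". Running in polynomial time, $T$ produces outputs with $\len{T(\fml)} \le \poly(\len{\fml})$; whether the output is "DAG-represented" or "tree-represented" only affects this polynomial. We will combine $T$ with the existing pipeline to get a $\poly(\len{\fml}) \cdot m^{3-\eps}$ algorithm for $\MC{\kEP{3}\sigres{(0,1)}}$. This contradicts \Cref{theorem: FO model checking hardness one binary} with $k=3$ under the $3$-\OVHypothesis{}.

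The algorithm proceeds as follows. Given a $\kEP{3}\sigres{(0,1)}$-"sentence" $\fml$ and a "structure" $\struc$ with $m$ tuples:
\begin{enumerate}
\item Apply \Cref{proposition: isolated}, which takes $\asympO(m)$ time and yields $n = \asympO(m)$.
\item Since $\fml$ is a $\kFO{3}\sigres{(0,\ordinalomega)}$-"sentence", compute $T(\fml)$ in $\poly(\len{\fml})$ time. If $\fml$ is a "sentence", we use a dummy free "variable" as in \Cref{proposition: fo3 to atomic CoR}.
\item Build the adjacency matrix of the single binary relation in $\asympO(n^2 + m)$ time.
\item Evaluate $T(\fml)$ using \Cref{proposition: CoR model checking}, which takes $\asympO(\len{T(\fml)}) \cdot n^{\matrixexponent}$ time.
\end{enumerate}
The total running time is $\poly(\len{\fml}) \cdot m^{\matrixexponent} + \asympO(m^2)$. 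Since $\matrixexponent < \matrixexponentupperbound < 3$, this is at most $\poly(\len{\fml}) \cdot m^{3-\eps}$ for $\eps \defeq 3 - \matrixexponentupperbound > 0$.

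The main point to check is that the evaluation of \Cref{proposition: CoR term evaluation} has only linear dependence on the "term" size for DAG-shared "terms". This holds because it is bottom-up with one matrix per distinct "subterm", so no exponential factor enters. The other point is that the hardness of \Cref{theorem: FO model checking hardness one binary} concerns exactly the fragment covered by the assumed translation: $\kEP{3}\sigres{(0,1)} \subseteq \kFO{3}\sigres{(0,\ordinalomega)}$, and "sentences" have no free "variables". Both checks are immediate, so the argument is essentially the one-line reduction described before the corollary.
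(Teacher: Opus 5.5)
Your proposal is correct and is essentially the paper's own argument. The paper also composes the hypothetical polynomial-time translation with the $\asympO(\len{\fml}) \cdot n^{\matrixexponent}$ evaluation of $\CoR$, after removing isolated elements so that $n = \asympO(m)$. This yields a $\poly(\len{\fml}) \cdot m^{3-\eps}$ algorithm for $\MC{\kEP{3}\sigres{(0,1)}}$, contradicting \Cref{theorem: FO model checking hardness one binary} with $k = 3$.
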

The same argument shows that there is no polynomial-time translation from $\kFO{2}\sigres{(\ordinalomega,\ordinalomega)}$ to $\BMLP$ under the $2$-\OVHypothesis{},
and from $\kFO{k}\sigres{(\ordinalomega)}$ ($k \ge 2$) to $\kC{1}$ under the $k$-\OVHypothesis{};
however, both are known to hold even unconditionally (\ifthenelse{\boolean{arxiv}}{see \Cref{section: succinctness unconditional}}{see the extended version~\cite{nakamuraUezatoDichotomyBoundedVariable2026arxiv}}).

The following related problem, posed, "eg", by Fletcher et al.~\cite[Remark~4.5]{fletcherRelativeExpressivePower2015} and Nakamura~\cite[Section~8]{nakamuraExpressivePowerSuccinctness2022},
remains open, but \Cref{corollary: succinctness} may be viewed as supporting evidence for a negative answer.
An unconditional negative answer is currently known for \emph{existential} $\CoR$ \cite{nakamuraExistentialCalculiRelations2023}, \cite[Theorem~6.1]{nakamuraExpressivePowerSuccinctness2022}, the fragment of $\CoR$ in which complement is restricted to the "atomic@@CoR" level.
\begin{question}[{\cite{fletcherRelativeExpressivePower2015,nakamuraExpressivePowerSuccinctness2022}}]\label{problem: succinctness gap between FO3 and CoR}
\gdef\problemsuccinctnessgapbetweenFOthreeandCoR{%
Is there a polynomial-\emph{size} translation from $\kFO{3}\sigres{(0,\ordinalomega)}$-"formulas" with at most two free "variables" to $\CoR$ "terms@@CoR"?
}
\problemsuccinctnessgapbetweenFOthreeandCoR
\end{question}

\begin{acks}
We thank the anonymous reviewers for their valuable comments.
Yoshiki Nakamura thanks Nobutaka Shimizu for suggesting a paper on fine-grained complexity theory in connection with \cite{nakamuraExpressivePowerSuccinctness2022}.
This work was supported by JSPS KAKENHI Grant Number JP25K14985.
Yuya Uezato was supported by JST, CREST Grant Number JPMJCR21M3.
\end{acks}

\bibliographystyle{ACM-Reference-Format}
\bibliography{main}
\vphantom{\cite{mengelLowerBoundsConjunctive2025}\cite{bringmannStructuralInvestigationApproximability2022}\cite{bringmannCurrentAlgorithmsDetecting2021}}

\clearpage
\appendix
\crefalias{section}{appendix}
\crefalias{subsection}{appendix}

\section{Supplement to Theorem~\ref{theorem: dichotomy sparse}}
In this section, we give some facts supplementing \Cref{theorem: dichotomy sparse}.

\subsection{Increasing "arity"}\label{section: proposition: arity monotone}
In "model checking", clearly,
we can increase the "arities" of "relation symbols" in the "signature" without changing the complexity of the problem, as follows.
By the following proposition, \Cref{theorem:sparse-k4} immediately implies that, for $k \ge 4$ and when $\sigarity{\sig} \ge 2$, assuming the \SparseMAXxSATHypothesis{(k-1)}, for any $\eps > 0$ there is a $\kPP{k}\sigres{\sig}$-"sentence" $\fml_{\eps}$ such that $\MC{\set{\fml_{\eps}}}$ cannot be solved in $\asympO(m^{k-\eps})$ time.

\begin{proposition}\label{proposition: arity monotone}
Let $\fmlclass \in \set{\kPP{k}, \kEP{k}, \kFO{k} \mid k \ge 1}$.
Let $\fml[1]$ be an $\fmlclass\sigres{(C_1, \dots, C_r)}$-"sentence" and $\struc[1]$ be a "structure".
One can construct, in $\asympO(r (\len{\fml[1]} + \strucm{\struc[1]}))$ time, an $\fmlclass\sigres{(\underbracket{0, \dots, 0}_{\text{$r-1$ times}}, \sum_{i=1}^r C_i)}$-"sentence" $\fml[2]$ and a "structure" $\struc[2]$
such that
$\struc[1] \models \fml[1]$ "iff" $\struc[2] \models \fml[2]$.
Moreover, $\len{\fml[2]} = \asympO(r \len{\fml[1]})$,
$\strucn{\struc[2]} = \strucn{\struc[1]}$, and
$\strucm{\struc[2]} = \strucm{\struc[1]}$.
Additionally, if $\fml[1]$ is "tree-represented", then so is $\fml[2]$.
\end{proposition}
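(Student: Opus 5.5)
The plan is to pad every relation symbol up to arity $r$ by repeating its last argument. Concretely, for each relation symbol $R$ of arity $i \le r$ occurring in $\fml[1]$, I introduce a fresh $r$-ary symbol $\widehat{R}$. The $r$-ary symbols obtained this way are pairwise distinct, and there are at most $\sum_{i=1}^r C_i$ of them, so $\fml[2]$ lies in the signature $\sigres{(0,\dots,0,\sum_{i=1}^r C_i)}$. The formula $\fml[2]$ is obtained from $\fml[1]$ by replacing each atom $R(x_1,\dots,x_i)$ by
\[
\widehat{R}(x_1,\dots,x_i,\underbrace{x_i,\dots,x_i}_{r-i}).
\]
Equality atoms are left untouched. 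The structure $\struc[2]$ has the same universe as $\struc[1]$, and
\[
\widehat{R}^{\struc[2]} \defeq \set{(a_1,\dots,a_i,a_i,\dots,a_i) \mid (a_1,\dots,a_i) \in R^{\struc[1]}}.
\]

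The structural facts follow directly from this construction.
\begin{itemize}
\item No new variable is introduced, since the padding only repeats $x_i$. Likewise no new connective or quantifier appears. Hence $\fml[2]$ stays in the same class $\kPP{k}$, $\kEP{k}$, or $\kFO{k}$, and remains a sentence.
\item Each atom grows by at most $r-1$ argument edges, so $\len{\fml[2]} = \asympO(r\len{\fml[1]})$.
\item If $\fml[1]$ is tree-represented, each tree node is replaced locally by a node with more children. Each added child is a fresh leaf occurrence of the variable $x_i$, so the result is again a tree.
\item The map $(a_1,\dots,a_i)\mapsto(a_1,\dots,a_i,a_i,\dots,a_i)$ is injective. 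Therefore $\card{\widehat{R}^{\struc[2]}} = \card{R^{\struc[1]}}$, which gives $\strucm{\struc[2]}=\strucm{\struc[1]}$ and $\strucn{\struc[2]}=\strucn{\struc[1]}$.
\item Writing out each padded tuple costs $\asympO(r)$ on the word-RAM, and rewriting the formula costs $\asympO(r\len{\fml[1]})$. The total time is therefore $\asympO(r(\len{\fml[1]}+\strucm{\struc[1]}))$.
\end{itemize}

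Correctness rests on an atom-level equivalence. For every valuation $I$,
\[
\struc[1],I\models R(x_1,\dots,x_i) \iff \struc[2],I\models \widehat{R}(x_1,\dots,x_i,x_i,\dots,x_i).
\]
This holds because a tuple $(b_1,\dots,b_r)$ belongs to $\widehat{R}^{\struc[2]}$ exactly when $b_{i}=b_{i+1}=\dots=b_r$ and $(b_1,\dots,b_i)\in R^{\struc[1]}$. Equality atoms agree because the universes coincide. A routine induction on subformulas then shows that $\struc[1],I\models\psi$ iff $\struc[2],I\models\widehat{\psi}$ for every subformula $\psi$ and every valuation $I$. Quantifiers range over the same universe, so the quantifier cases go through unchanged. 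Applying this to the whole sentence gives $\struc[1]\models\fml[1]$ iff $\struc[2]\models\fml[2]$.

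The only delicate point is bookkeeping. The new symbols must be kept distinct across different original arities, which is why the count is $\sum_i C_i$ rather than something smaller. The size and time bounds are linear in $r$ because of the padding. I do not expect any real obstacle.
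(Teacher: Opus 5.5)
Your proposal is correct and is essentially the paper's proof: replace each atom $R(x_1,\dots,x_i)$ by a fresh $r$-ary symbol applied to $(x_1,\dots,x_i,x_i,\dots,x_i)$, and interpret that symbol as the padded copy of $R^{\struc[1]}$ over the same universe. Your write-up also spells out the size, time, tree-representation, and injectivity (tuple-count) checks that the paper leaves implicit.
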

\begin{proof}
    Let $\fml[2]$ be obtained from the "sentence" $\fml[1]$ by replacing each "atom" $R(x_1, \dots, x_i)$ with the "atom" $S(x_1, \dots, x_{i-1}, x_i, \dots, x_i)$, where $S$ is a fresh $r$-ary "relation symbol" prepared for each $R$.
    Let $\struc[2]$ be obtained from $\struc[1]$ by replacing each $R^{\struc[1]}$ with $S^{\struc[2]} \defeq \set{ \tuple{a_1, \dots, a_{i-1}, a_i, \dots, a_i} \mid \tuple{a_1, \dots, a_i} \in R^{\struc[1]}}$.
    This gives the desired construction.
\end{proof}

\subsection{Removing "equality"}\label{section: removing equality}
Although the logics in this paper include "equality" ($=$) by default,
we can remove it from the "signature" without changing the complexity of the problem, as follows.

\begin{proposition}\label{proposition: equality elimination}
Let $\fmlclass \in \set{\kPP{k}, \kEP{k}, \kFO{k} \mid k \ge 1}$ be a class of "formulas".
Let $\fml[1]$ be a ["tree-represented"] $\fmlclass\sigres{(C_1, C_2, C_3, \dots, C_r)}$-"sentence" and $\struc[1]$ be a "structure".
One can construct, in $\asympO(\len{\fml[1]} + \strucn{\struc[1]} + \strucm{\struc[1]})$ time,
a ["tree-represented"] $\fmlclass\sigres{(C_1, C_2 + 1, C_3, \dots, C_r)}$-"sentence" $\fml[2]$ without "equality" ($=$) and a "structure" $\struc[2]$
such that $\struc[1] \models \fml[1]$ "iff" $\struc[2] \models \fml[2]$.
Moreover, $\len{\fml[2]} = \asympO(\len{\fml[1]})$,
$\strucn{\struc[2]} = \strucn{\struc[1]}$, and $\strucm{\struc[2]} = \strucm{\struc[1]} + \strucn{\struc[1]}$.
\end{proposition}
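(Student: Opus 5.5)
We replace every "equality" "atom" $x = y$ in $\fml[1]$ by $\mathit{Eq}(x, y)$, where $\mathit{Eq}$ is a fresh binary "relation symbol". We then extend $\struc[1]$ to $\struc[2]$ by interpreting $\mathit{Eq}^{\struc[2]} \defeq \set{\tuple{a, a} \mid a \in \univ{\struc[1]}}$ and keeping the "universe" and all other "relations" unchanged. This is a purely atom-local substitution. It introduces no new "variables" and no new connectives or "quantifiers". Hence $\fml[2]$ stays in $\kPP{k}$ ("resp", $\kEP{k}$, $\kFO{k}$) and uses at most $k$ "variables". The "signature" gains exactly one binary symbol, which gives $\sigres{(C_1, C_2+1, C_3, \dots, C_r)}$.

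For the bounds, the substitution is a single scan of the "DAG" (or tree), replacing one leaf label per "equality" "atom". So $\len{\fml[2]} = \asympO(\len{\fml[1]})$, the construction takes $\asympO(\len{\fml[1]})$ time, and tree-representedness is preserved because the shape of the "formula" does not change. Building $\mathit{Eq}^{\struc[2]}$ means listing $\strucn{\struc[1]}$ diagonal pairs, and copying the rest of $\struc[1]$ costs $\asympO(\strucm{\struc[1]})$. This gives total time $\asympO(\len{\fml[1]} + \strucn{\struc[1]} + \strucm{\struc[1]})$, with $\strucn{\struc[2]} = \strucn{\struc[1]}$ and $\strucm{\struc[2]} = \strucm{\struc[1]} + \strucn{\struc[1]}$. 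There is one small check on the $\max(1, \cdot)$ convention in $\strucm{}$: when $\struc[1]$ has no tuples, the diagonal contributes $\strucn{\struc[1]} \ge 1$ tuples. The equality $\strucm{\struc[2]} = \strucm{\struc[1]} + \strucn{\struc[1]}$ then holds up to this boundary case, and I would either state it as $\le$ or note that the boundary case is harmless.

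For correctness, I would first show that for every "valuation" $I$ into $\univ{\struc[1]} = \univ{\struc[2]}$ we have $\struc[1], I \models x = y$ "iff" $I(x) = I(y)$ "iff" $\tuple{I(x), I(y)} \in \mathit{Eq}^{\struc[2]}$ "iff" $\struc[2], I \models \mathit{Eq}(x, y)$. Every other "atom" is interpreted identically in both "structures". I would then run a routine induction on "subformulas" to conclude $\struc[1], I \models \fml[2]'$ "iff" $\struc[2], I \models \fml[2]''$ for corresponding "subformulas"; this works because both "structures" share the same "universe", so the quantifier cases go through verbatim. Specializing to "sentences" gives $\struc[1] \models \fml[1]$ "iff" $\struc[2] \models \fml[2]$.

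I do not expect a real obstacle. The only points needing care are the exact $\strucm{}$ accounting just discussed, and the "DAG" semantics: shared "equality" leaves are replaced once, and the unfolded-tree semantics is unaffected.
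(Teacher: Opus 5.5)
Your proposal is correct and is essentially the paper's proof. The paper likewise replaces each equality atom $x = y$ by $E(x,y)$, where $E$ is a fresh binary relation symbol interpreted as the diagonal of the unchanged universe, and concludes by induction over subformulas.

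Your remark about the $\max(1,\cdot)$ convention is also right. When the input structure has no tuples, the stated identity for $m$ holds only as an inequality, and this is harmless wherever the proposition is used.
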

\begin{proof}
Let $\fml[2]$ be obtained from $\fml[1]$ by replacing $x = y$ with $E(x, y)$,
where $E$ is a fresh binary "relation symbol".
Let $\struc[2]$ be obtained from $\struc[1]$ by setting $E^{\struc[2]} \defeq \set{ \tuple{a,a} \mid a \in \univ{\struc[1]}}$.
This gives the desired construction.
\end{proof}

\section{Proof of \texorpdfstring{\Cref{lemma: reduction monadic and binary predicates to single binary predicate}}{Lemma~\ref{lemma: reduction monadic and binary predicates to single binary predicate}}:
\texorpdfstring{$\kPP{k}\sigres{(\ordinalomega, \ordinalomega)}$}{PPk[(∞, ∞)]} to \texorpdfstring{$\kPP{k}\sigres{(0, 1)}$}{PPk[(0, 1)]}}\label{section: lemma: reduction monadic and binary predicates to single binary predicate}
In this section, we give a reduction from $\kPP{k}\sigres{(\ordinalomega, \ordinalomega)}$ to $\kPP{k}\sigres{(0, 1)}$, as follows:
\begin{align*}
\kPP{k}\sigres{(\ordinalomega, \ordinalomega)}
~\leadsto_{\text{\Cref{proposition: arity monotone}}}~ \kPP{k}\sigres{(0, \ordinalomega)}
&~\leadsto_{\text{\Cref{proposition: equality elimination}}}~ \kPP{k}\sigres{(0, \ordinalomega)} \text{ without "equality"} \\
&~\leadsto_{\text{\Cref{lemma: unary encoding}}}~ \kPP{k}\sigres{(0, 1)} \text{ without "equality"},
\end{align*}
where $k \ge 3$.
The reduction works for $\kEP{k}$ and $\kFO{k}$ as well.

In \Cref{proposition: arity monotone,proposition: equality elimination},
we encode unary "relations" by diagonal binary "relations" and eliminate "equality" by introducing a fresh binary "relation".
In \Cref{lemma: unary encoding},
we encode multiple "relations" by a single "relation symbol" via a unary encoding.

\begin{lemma}\label{lemma: unary encoding}
Let $\fmlclass \in \set{\kPP{k}, \kEP{k}, \kFO{k} \mid k \ge 3}$.
Let $\fml[1]$ be an $\fmlclass\sigres{(0, \ordinalomega)}$-"sentence" without "equality", let $C$ be the number of "relation symbols" occurring in $\fml[1]$, and let $\struc[1]$ be a "structure".
We can construct, in $\asympO_k(\len{\fml[1]} + C \, \strucn{\struc[1]} + \strucm{\struc[1]})$ time, an $\fmlclass\sigres{(0, 1)}$-"sentence" $\fml[2]$ without "equality" and a "structure" $\struc[2]$
such that
$\struc[1] \models \fml[1]$ "iff" $\struc[2] \models \fml[2]$.
Moreover, $\len{\fml[2]} = \asympO_k(\len{\fml[1]})$,
$\strucn{\struc[2]} = \asympO(C \, \strucn{\struc[1]})$,
and $\strucm{\struc[2]} = \asympO(C \strucn{\struc[1]} + \strucm{\struc[1]})$.
Additionally, if $\fml[1]$ is "tree-represented",
then the "tree-represented" $\fml[2]$ is constructed in $\asympO(C \, \len{\fml[1]} + C\, \strucn{\struc[1]} + \strucm{\struc[1]})$ time
and has "size" $\len{\fml[2]} = \asympO(C \, \len{\fml[1]})$.
\end{lemma}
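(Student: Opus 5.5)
The plan is a \emph{unary-length encoding}: each original element gets a private directed chain of fresh tag elements. A tuple of the $i$-th relation becomes a single $E$-edge into position $C+i$ of a chain, so an atom $R_i(x,y)$ becomes a fixed-length walk. The walk can be written with three variables by reusing a bound variable, and originals are separated from tags by a walk-length property.

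Let $R_1,\dots,R_C$ be the relation symbols of $\fml[1]$ and let $A=\univ{\struc[1]}$.
\begin{itemize}
\item \textbf{Structure.} $\univ{\struc[2]} \defeq A \times \range{0\,..\,2C}$, with $a$ identified with $(a,0)$ and $a^{(j)} \defeq (a,j)$. The chain edges are $E(a^{(j)},a^{(j+1)})$ for $0 \le j < 2C$. For each $(a,b) \in R_i^{\struc[1]}$ add the tuple edge $E(b,\,a^{(C+i)})$.
\item \textbf{Sizes.} This gives $\strucn{\struc[2]}=(2C+1)\strucn{\struc[1]}$ and $\strucm{\struc[2]} \le 2C\,\strucn{\struc[1]}+\strucm{\struc[1]}$, and the construction runs in the claimed time.
\item \textbf{Degree facts.} Every tuple edge leaves an original element and enters a tag at a position in $\range{C+1\,..\,2C}$. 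The only out-edge of a tag is its chain edge.
\end{itemize}

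The first key fact (Claim~1) is that for an original $a$ and $1\le \ell\le 2C$, the only walk of length $\ell$ from $a$ ends at $a^{(\ell)}$. A first step along a tuple edge lands at some $c^{(C+j)}$ with $j \ge 1$. After that only chain steps are possible, so a walk of length $\ell$ would have to reach position $C+j+\ell-1$. For $\ell = C+i$ with $i\ge 1$ this exceeds $2C$, so no such walk exists (for $\ell\le C$ it is fine to rely only on the chain, since we only use $\ell=C+i$ and $\ell=2C$).

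The second key fact (Claim~2) is that an element $u$ has a walk of length $2C$ iff $u$ is original. A tag $a^{(j)}$ with $j\ge1$ has only the chain path of length $2C-j$.

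Hence, with $\mathrm{Orig}(u)$ the PP formula ``there is a walk of length $2C$ from $u$'', $\mathrm{Orig}$ defines exactly $A$. Moreover, for originals $a,b$, the formula $\exists z.(\mathrm{Walk}_{C+i}(x,z)\land E(y,z))$ holds iff $(a,b)\in R_i^{\struc[1]}$. This holds because $z$ is forced to be $a^{(C+i)}$, whose only in-edges are its chain predecessor (a tag, excluded since $y$ is original) and the tuple edges from $b$ with $(a,b) \in R_i^{\struc[1]}$.

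For the formula translation, write $u,v,w$ for three distinct variables among $x_1,\dots,x_k$ (this uses $k\ge3$).
\begin{itemize}
\item $\mathrm{Walk}_\ell(u,w)$ is written in the alternating style of the introduction, $\exists w(E(u,w)\land\exists u(E(w,u)\land\cdots))$, which rebinds $u$ and $w$. Rebinding $u$ inside is harmless because $u$ is not needed after the walk starts.
\item Each atom $R_i(u,v)$ becomes $\exists w.(\mathrm{Walk}_{C+i}(u,w)\land E(v,w))$, where $w$ is any variable different from $u,v$. The case $u=v$ is handled the same way.
\item Each quantifier is relativized: $\exists u.\psi \leadsto \exists u.(\mathrm{Orig}(u)\land\psi)$, and, for $\kFO{k}$, $\forall u.\psi\leadsto\forall u.(\lnot\mathrm{Orig}(u)\lor\psi)$. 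Negation and $\lor$ are kept as they are.
\end{itemize}
These rewritings stay within $\kPP{k}$, $\kEP{k}$, $\kFO{k}$ respectively and introduce no equality. Correctness is then a routine induction on subformulas: under valuations into $A$, the translation agrees with the original formula by Claims~1 and~2. Since $\fml[1]$ is a sentence, all values ever assigned are original.

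The main obstacle I expect is the size bound $\asympO_k(\len{\fml[1]})$ in the DAG case, because the walks have length $\asympTheta(C)$. I would share the walk formulas in the DAG. Define nodes $W_\ell(u,w)$ for $\ell\le 2C$ recursively as $\exists w(E(u,w)\land W_{\ell-1}(w,u))$, with the roles of the two variables swapped at each step. This gives $\asympO(C)$ nodes per ordered variable pair, hence $\asympO_k(C)$ in total. The $\mathrm{Orig}(x_l)$ nodes are shared as well, and every atom and quantifier then adds only $\asympO(1)$ edges. Since $C\le\len{\fml[1]}$, this gives $\len{\fml[2]}=\asympO_k(\len{\fml[1]})$. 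In the tree-represented case each of these gadgets is instead copied, which gives the stated $\asympO(C\,\len{\fml[1]})$ bound.
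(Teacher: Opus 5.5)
Your overall route works and differs from the paper's, but the walk gadget is wrong as written. The atom translation $\exists w.(\mathrm{Walk}_{C+i}(u,w)\land E(v,w))$ needs the walk to end at the \emph{free} variable $w$. Yet your displayed form $\exists w(E(u,w)\land\exists u(E(w,u)\land\cdots))$ and your sharing recursion $W_\ell(u,w)=\exists w(E(u,w)\land W_{\ell-1}(w,u))$ both bind $w$ at the first step. Read literally, $W_2(u,w)=\exists w(E(u,w)\land E(w,u))$ only asserts a closed walk of length $2$ through $u$, and the conjunct $E(v,w)$ is no longer tied to the end of the walk.

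The repair is to never rebind the endpoint and to let the intermediate vertices alternate between $v$ and $u$. Rebinding $v$ is harmless because $E(v,w)$ lies outside the walk's scope. Concretely, take three distinct variables $s,t,e$ and set $W^{e}_1(s)=E(s,e)$ and $W^{e}_\ell(s)=\exists t\,(E(s,t)\land W^{e}_{\ell-1}(t))$, where the recursive call uses $s$ as its intermediate. This is the shape of the paper's $E^i$ and $\theta_i$. Sharing nodes indexed by ordered triples still gives $O_k(C)$ nodes. Only $\mathrm{Orig}$ may use the pure two-variable alternation, since it has no endpoint.

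Separately, Claim~1 as stated for all $1\le\ell\le 2C$ fails for $\ell\le C$: a first tuple edge into $c^{(C+j)}$ followed by chain steps gives other endpoints. You only use $\ell\ge C+1$, where your argument is right, and Claim~2 is fine.

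With that repair, your construction is a correct and simpler alternative to the paper's.
\begin{itemize}
\item \textbf{The paper's construction.} Each element gets $2(C+2)+1$ layers. Their copies form directed cycles of length $2$ (layer $0$), $3$ (odd layers) or $5$ (other even layers), and consecutive layers of the same element are joined completely. $R_i$ is encoded by edges from layer $2(i+2)$ of $v$ to layer $0$ of $w$. Originals are recognized by the constant-size guard $E^2(x,x)$, and $\theta_{i+2}$ alternately checks $E^3$ and $E^5$ loops. Correctness rests on every closed walk that leaves its layer having length at least $7$, and each original has two copies, handled by a projection.
\item \textbf{Your construction.} Your structure is acyclic, with one chain per element. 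Originals are exactly the elements with a walk of length $2C$. The endpoint after $C+i$ steps is forced because tuple edges enter chains above position $C$, so any walk using one has length at most $C$.
\item \textbf{Trade-off.} Your version avoids the cycle-length case analysis and the duplicated originals. The price is a guard $\mathrm{Orig}$ of size $\Theta(C)$ rather than $O(1)$. Here that costs nothing: it is shared in the DAG case, and in the tree case it fits within the $O(C\,|\varphi|)$ budget, which the paper's $O(C)$-size atom gadgets already incur.
\end{itemize}
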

\begin{proof}
We may assume that $\struc[1]$ interprets only these $C$ "relation symbols".
Note that $C \le \len{\fml[1]}$.
Let $\fml \in \fmlclass\sigres{\set{R_1, \ldots, R_C}}$,
where $R_1, \dots, R_C$ are binary "relation symbols".
Let $E$ be a fresh binary "relation symbol".
We construct $\struc[2]$ by extending $\struc[1]$ with dummy "elements" that encode the "relation symbols" $R_i$.
In the construction,
we put an $E^{2}$-labeled self-loop on "elements" originally in $\struc[1]$
and put an $E^{3}$- or $E^{5}$-labeled self-loop on the other (dummy) "elements", to distinguish them.
For $i \ge 1$, we define the $\kPP{3}$-"formula" $E^i(x, y)$ as follows:
\[E^i(x, y) \defeq E(x, y) \text{ if $i = 1$, and } E^i(x, y) \defeq \exists z.~ E(x, z) \land E^{i-1}(z, y) \text{ if $i \ge 2$,}\]
where $z$ is a "variable" distinct from $x$ and $y$; in recursive
occurrences, bound "variables" are implicitly alpha-renamed so that the displayed definition is read as a well-formed $\kPP{3}$ expression.
We then define the following "formula" $\intro*\pathfml_i(z, z')$:
\[\pathfml_i(z, z')
~\defeq~
\begin{cases}
  E(z, z') & \text{if $i = 0$,}\\
  \exists z''. E(z, z'') \land E^3(z'', z'') \land
  \exists z'''. E(z'', z''') \land E^5(z''', z''') \land \pathfml_{i-1}(z''', z') & \text{ if $i \ge 1$},
\end{cases}
\]
where $z'' \notin \set{z, z'}$ and $z''' \notin \set{z'', z'}$ ("eg", $z''' = z$ if $z \neq z'$).
Namely, $\pathfml_i(z,z')$ expresses the existence of a path from $z$ to $z'$
in which each recursive step first moves to a "vertex" carrying an $E^3$ self-loop,
then to a "vertex" carrying an $E^5$ self-loop, and then continues recursively.
Observe that $\pathfml_i(z, z')$ can be implemented using only three "variables" by repeatedly renaming bound "variables",
and hence $\pathfml_i(z, z')$ is a $\kPP{3}$-"formula".

For each "subformula" $\fml[3]$ of $\fml[1]$ (without "equality"),
we define its translation $\fml[3]^{\dagger}$ inductively as follows,
where $(\cdot)^{\dagger}$ commutes with the Boolean connectives:
\begin{align*}
(\exists x. \fml[3])^{\dagger} &\defeq \exists x. E^2(x,x) \land \fml[3]^{\dagger},&
(\forall x. \fml[3])^{\dagger} &\defeq \forall x. E^2(x,x) \to \fml[3]^{\dagger},&
R_i(x,y)^{\dagger} &\defeq \pathfml_{i+2}(x,y).
\end{align*}
Let $\fml[2] \defeq \fml[1]^{\dagger}$.
Since $\pathfml_i(z, z')$ is a $\kPP{3}$-"formula" and $k \geq 3$,
$\fml[2]$ is in $\fmlclass\sigres{\set{E}}$.
By $\len{\pathfml_i} = \asympO(C)$,
$\len{\fml[2]} = \asympO(k^2 C + \len{\fml[1]}) = \asympO_k(\len{\fml[1]})$ for "DAG-represented" "sentences"
and $\len{\fml[2]} = \asympO(C \len{\fml[1]})$ for "tree-represented" "sentences".

Let $\struc[2] = \tuple{\univ{\struc[2]}, E^{\struc[2]}}$ be the "structure" defined as follows:
\begin{itemize}
    \item 
    Let
    \[p(i) \defeq \begin{cases}
        2 & (\text{if $i = 0$})\\
        3 & (\text{if $i \ge 1$ odd})\\
        5 & (\text{if $i \ge 1$ even}).
    \end{cases}\]
    \item 
    Let $\univ{\struc[2]} \defeq \set{\tuple{v, i, j} \mid v \in \univ{\struc}, i \in \range{0\,..\,2(C+2)}, j \in \range{0\,..\,p(i)-1}}$.
    We write $v^{i,j}$ for $\tuple{v, i, j} \in \univ{\struc[2]}$, and say that $v^{i,j}$ is in \emph{layer} $i$.
    \item
    $E^{\struc[2]} \defeq E_{\circlearrowright} \cup E_{\to} \cup \bigcup^C_{i = 1} E_{R_i}$ where
    \begin{align*}
    E_{\circlearrowright} &\defeq
    \set{\tuple{v^{i, j}, v^{i, (j + 1) \bmod p(i)}} \mid  v \in \univ{\struc}, i \in \range{0\,..\,2(C+2)}, j \in \range{0\,..\, p(i)-1}}, \\
    E_{\to} &\defeq \set{\tuple{v^{i,j}, v^{i+1,j'}} \mid v \in \univ{\struc}, i \in \range{0\,..\,2(C+2)-1}, j \in \range{0\,..\, p(i)-1},
    j' \in \range{0\,..\, p(i+1)-1}},\\
    E_{R_i} &\defeq \set{\tuple{v^{2(i+2), j}, w^{0, j'}} \mid \tuple{v, w} \in R_i^{\struc}, j \in \range{0\,..\, p(2(i+2))-1},
    j' \in \set{0, 1}}.
    \end{align*}
\end{itemize}

The following schematically illustrates $\struc[2]$ for $C = 3$:
\knowledgeconfigure {quotation=false}
\begin{align*}
    & \struc =\quad \begin{tikzpicture}[baseline = -3.5ex]
    \tikzstyle{vert}=[draw = black, circle, fill = gray!10, inner sep = .5pt, minimum size = 3pt, font = \scriptsize]
    \graph[grow right = 1.cm, branch down = 6.ex]{
    {s1/{$v$}[vert, inner sep = 1pt, minimum size = 15pt],s2/{$w$}[vert, inner sep = 1pt, minimum size = 15pt]} -!-
    };
    \path (s1) edge[->, bend right = 75, earrow] node[fill=white, inner sep=1pt,font= \scriptsize, pos = .5, xshift = -.5em]{$R_1, R_3$} (s2);
    \path (s2) edge[->, bend right = 75, earrow] node[fill=white, inner sep=1pt,font= \scriptsize, pos = .5]{$R_2$} (s1);
\end{tikzpicture}
\quad \leadsto \\
& \struc[2] =\quad
\begin{tikzpicture}[baseline = -5.5ex]
    \tikzstyle{vert}=[draw = black, circle, fill = gray!10, inner sep = .5pt, minimum size = 3pt, font = \scriptsize]
    \graph[grow right = 1.1cm, branch down = 11.ex]{
    {s10/{$v^{0,*}$}[vert], s20/{$w^{0,*}$}[vert]} -!-
    {s11/{$v^{1,*}$}[vert], s21/{$w^{1,*}$}[vert]} -!-
    {s12/{$v^{2,*}$}[vert], s22/{$w^{2,*}$}[vert]} -!-
    {s13/{$v^{3,*}$}[vert], s23/{$w^{3,*}$}[vert]} -!-
    {s14/{$v^{4,*}$}[vert], s24/{$w^{4,*}$}[vert]} -!-
    {s15/{$v^{5,*}$}[vert], s25/{$w^{5,*}$}[vert]} -!-
    {s16/{$v^{6,*}$}[vert], s26/{$w^{6,*}$}[vert]} -!-
    {s17/{$v^{7,*}$}[vert], s27/{$w^{7,*}$}[vert]} -!-
    {s18/{$v^{8,*}$}[vert], s28/{$w^{8,*}$}[vert]} -!-
    {s19/{$v^{9,*}$}[vert], s29/{$w^{9,*}$}[vert]} -!-
    {s110/{$v^{10,*}$}[vert], s210/{$w^{10,*}$}[vert]} -!-
    };
    \graph[use existing nodes, edges={color=black, pos = .5, earrow}, edge quotes={fill=white, inner sep=1pt,font= \scriptsize}]{
        s10 ->[] s11 ->[] s12 ->[] s13 ->[] s14 ->[] s15 -> s16 ->[] s17 ->[] s18 ->[] s19 ->[] s110;
        {s10} ->["$E^2$", out = 45, in = 135, looseness =6] {s10};
        {s11,s13,s15,s17,s19} ->["$E^3$", out = 45, in = 135, looseness =6] {s11,s13,s15,s17,s19};
        {s12,s14,s16,s18,s110} ->["$E^5$", out = 45, in = 135, looseness =6] {s12,s14,s16,s18,s110};
        s20 ->[] s21 ->[] s22 ->[] s23 ->[] s24 ->[] s25 -> s26 ->[] s27 ->[] s28 ->[] s29 ->[] s210;
        {s20} ->["$E^2$", out = -45, in = -135, looseness =6] {s20};
        {s21,s23,s25,s27,s29} ->["$E^3$", out = -45, in = -135, looseness =6] {s21,s23,s25,s27,s29};
        {s22,s24,s26,s28,s210} ->["$E^5$", out = -45, in = -135, looseness =6] {s22,s24,s26,s28,s210};
        s16 ->[out=-135, in = 45, looseness = 0] s20;
        s110->[out=-135, in = 45, looseness = 0] s20;
        s28 ->[out=135, in=-45, looseness = 0] s10;
    };
\end{tikzpicture}
\end{align*}
(Each node $v^{i,*}$ represents the copies $v^{i,0}, \dots, v^{i,p(i)-1}$,
and the self-loop labeled $E^{p(i)}$ represents their $E_{\circlearrowright}$-cycle $v^{i,0} \to v^{i,1} \to \dots \to v^{i,p(i)-1} \to v^{i,0}$;
each arrow between nodes represents the edges between all these copies.)
\knowledgeconfigure {quotation}

By construction, we have:
\begin{align*}
\strucn{\struc[2]}
&= \asympO(C \, \strucn{\struc}), &
\strucm{\struc[2]}
&= \card{E^{\struc[2]}}
= \card{E_{\circlearrowright}} + \card{E_{\to}} + \sum^C_{i = 1} \card{E_{R_i}}
= \asympO(C \, \strucn{\struc} + C \, \strucn{\struc} + \strucm{\struc}) = \asympO(C \, \strucn{\struc} + \strucm{\struc}).
\end{align*}
Also, the total construction time is
$\asympO(\len{\fml[2]} + \strucn{\struc[2]} + \strucm{\struc[2]}) \le \asympO_k(\len{\fml} + C \, \strucn{\struc[1]} + \strucm{\struc[1]})$
for "DAG-represented" "sentences" and
$\asympO(C \, \len{\fml} + C \, \strucn{\struc[1]} + \strucm{\struc[1]})$ for "tree-represented" "sentences".

Finally, we prove correctness.
Note that $\struc[2] \models E^\ell(u, u)$ "iff" there is a walk of length $\ell$ from $u$ back to $u$.
Since $E_{\to}$-edges increase the layer by one and only $E_{R_i}$-edges decrease it (by $2(i+2) \ge 6$),
every walk from $u$ back to $u$ that leaves the layer of $u$ has length at least $7$.
Hence, for $\ell \le 5$, every walk of length $\ell$ from $u$ back to $u$ stays in the $E_{\circlearrowright}$-cycle through $u$ (of length $p(h)$, where $u$ is in layer $h$). This gives the following.

\begin{claim*}
$\set{u \in \univ{\struc[2]} \mid \struc[2] \models E^2(u, u)} = \univ{\struc} \times \set{0} \times \set{0,1}$.
\end{claim*}
\begin{claim*}
$\set{u \in \univ{\struc[2]} \mid \struc[2] \models E^3(u, u)} = \univ{\struc} \times \set{i \in \range{2(C+2)} \mid i \text{ odd}} \times \range{0\,..\,2}$.
\end{claim*}
\begin{claim*}
$\set{u \in \univ{\struc[2]} \mid \struc[2] \models E^5(u, u)} = \univ{\struc} \times \set{i \in \range{2(C+2)} \mid i \text{ even}} \times \range{0\,..\,4}$.
\end{claim*}
Using them, we have the following.
\begin{claim*}
For all $u \in \univ{\struc}$, $s \in \range{0\,..\,C+1}$, and $j \in \range{0\,..\,p(2s)-1}$,
the pairs $\tuple{c, d}$ of "elements" with $\struc[2] \models E(u^{2s,j}, c) \land E^3(c, c) \land E(c, d) \land E^5(d, d)$
are exactly the pairs $\tuple{u^{2s+1,j_1}, u^{2s+2,j_2}}$.
\end{claim*}
\begin{proof}
Every $c$ with $\struc[2] \models E(u^{2s,j}, c)$ lies in layer $2s$ or $0$, or is of the form $u^{2s+1,j_1}$;
by the claims above, only the latter satisfies $E^3(c, c)$.
Similarly, every $d$ with $\struc[2] \models E(u^{2s+1,j_1}, d)$ lies in layer $2s+1$ or is of the form $u^{2s+2,j_2}$,
and only the latter satisfies $E^5(d, d)$.
\end{proof}
This gives the correctness for atomic formulas:
\begin{claim*}
For every $i \in \range{C}$, $u, v\in \univ{\struc}$, and $j,j' \in \set{0,1}$,
$\struc[2] \models \pathfml_{i+2}(u^{0,j},v^{0,j'}) \iff \struc[1] \models R_{i}(u,v)$.
\end{claim*}
\begin{proof}
Unfolding $\pathfml_{i+2}$ by applying the previous claim $i+2$ times (for $s = 0, 1, \dots, i+1$),
we have that $\struc[2] \models \pathfml_{i+2}(u^{0,j},v^{0,j'})$ "iff" $\struc[2] \models E(u^{2(i+2),j''}, v^{0,j'})$ for some $j''$.
Since the edges from layer $2(i+2)$ to layer $0$ are the $E_{R_i}$-edges,
this is equivalent to $\struc[1] \models R_i(u, v)$.
\end{proof}
Let $\pi \colon \univ{\struc} \times \set{0} \times \set{0,1} \to \univ{\struc}$ be the projection $\pi(v^{0,j}) \defeq v$.
Then, by a straightforward induction on "subformulas",
for every "subformula" $\fml[3]$ of $\fml[1]$, its translation $\fml[3]^{\dagger}$ in $\fml[2]$, and
every "valuation" $I$ of "variables" to $\univ{\struc} \times \set{0} \times \set{0,1}$,
we have:
$\struc, \pi \circ I \models \fml[3]$ "iff" $\struc[2], I \models \fml[3]^{\dagger}$.
Hence, $\struc \models \fml[1]$ "iff" $\struc[2] \models \fml[2]$.
\end{proof}

Combining the reductions above yields the following.
\begin{lemma*}[Restatement of {\Cref{lemma: reduction monadic and binary predicates to single binary predicate}}]
\reductionmonadicandbinarypredicatestosinglebinarypredicatelemma
\end{lemma*}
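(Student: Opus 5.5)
The plan is to compose the three reductions already set up at the start of this appendix, in the order displayed there, and then check that sizes and running times compose to the stated bounds.

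\emph{Step 1.} Start with $\fml$ in $\kPP{k}\sigres{(C_1,C_2)}$ (resp.\ $\kEP{k}$, $\kFO{k}$) and $\struc$. Apply \Cref{proposition: arity monotone} with $r=2$. This encodes each unary symbol as a binary symbol by duplicating its argument, so each unary relation becomes a diagonal relation. The result is a sentence over $\sigres{(0, C_1+C_2)}$ in the same class. It runs in $\asympO(\len{\fml}+\strucm{\struc})$ time, keeps $\strucn{}$ and $\strucm{}$ unchanged, and has formula size $\asympO(\len{\fml})$.

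\emph{Step 2.} Apply \Cref{proposition: equality elimination}. This replaces every equality atom $x = y$ by $E_{=}(x,y)$, where $E_{=}$ is a fresh binary symbol interpreted as the diagonal. It adds at most $\strucn{\struc}$ tuples, takes $\asympO(\len{\fml}+\strucn{\struc}+\strucm{\struc})$ time, and keeps the formula size linear.

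\emph{Step 3.} Apply \Cref{lemma: unary encoding} to the resulting equality-free sentence over binary symbols only. Since $k\ge 3$, the lemma applies. Its relation count is $C' \le \card{\fmlsig{\fml}}+1 = C$: one binary symbol per original symbol, plus the equality symbol. It yields an equality-free $\kPP{k}\sigres{(0,1)}$ (resp.\ $\kEP{k}$, $\kFO{k}$) sentence $\fml[2]$ with $\len{\fml[2]}=\asympO_k(\len{\fml})$. The structure $\struc[2]$ satisfies $\strucn{\struc[2]}=\asympO(C\,\strucn{\struc})$ and $\strucm{\struc[2]}=\asympO(C\,\strucn{\struc}+\strucm{\struc}+\strucn{\struc})=\asympO(C\,\strucn{\struc}+\strucm{\struc})$.

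Correctness follows by chaining the three ``iff'' statements. Every step preserves the fragment (PP, EP or FO) and the variable bound $k$: Steps 1 and 2 replace atoms by atoms, and Step 3 relies on $\pathfml_i$ and $E^i$ being $\kPP{3}$ formulas, which the lemma already establishes.

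The only delicate point is the bookkeeping, which I expect to be the main obstacle. First, the time bound of Step 3 is stated in terms of the intermediate structure. Its vertex count equals $\strucn{\struc}$, and its tuple count is $\strucm{\struc}+\strucn{\struc}$. The extra $\strucn{\struc}$ is absorbed into $C\,\strucn{\struc}$ because $C\ge 1$, giving total time $\asympO(\len{\fml}+C\,\strucn{\struc}+\strucm{\struc})$. Second, the constant hidden in $\asympO_k$ is harmless since $k$ is fixed. Third, the ``$+1$'' in $C$ is needed even when $\fml$ contains no equality; in that case the bound simply holds with slack. For the tree-represented variant, the same chain works using the tree clause of \Cref{lemma: unary encoding}, with the $C$ blowup noted in the paper's remark.
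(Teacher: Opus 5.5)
Your proposal is correct and matches the paper's proof: it composes \Cref{proposition: arity monotone}, \Cref{proposition: equality elimination}, and \Cref{lemma: unary encoding} in the same order, and uses the intermediate bounds $n' = n$ and $m' = m + n$ to absorb the extra diagonal tuples into the $C\,n$ term. Your explicit checks that the number of symbols $C'$ seen by the unary encoding satisfies $C' \le C$, and that $C \ge 1$, are exactly the bookkeeping the paper leaves implicit.
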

\begin{proof}
Let $\fml'$ and $\struc'$ be obtained from $\fml$ and $\struc$ by applying \Cref{proposition: arity monotone,proposition: equality elimination}.
Then $\len{\fml'} = \asympO(\len{\fml})$, $\strucn{\struc'} = \asympO(\strucn{\struc})$, and $\strucm{\struc'} = \strucm{\struc} + \strucn{\struc}$.
Let $\fml[2]$ and $\struc[2]$ be obtained from $\fml'$ and $\struc'$ by applying \Cref{lemma: unary encoding}.
We then have:
\begin{align*}
\len{\fml[2]} &= \asympO(\len{\fml'}) = \asympO(\len{\fml}),&
\strucn{\struc[2]} &= \asympO(C \, \strucn{\struc'}) = \asympO(C \, \strucn{\struc}),&
\strucm{\struc[2]} &= \asympO(C \, \strucn{\struc'} + \strucm{\struc'}) = \asympO(C \, \strucn{\struc} + \strucm{\struc}).
\end{align*}
The construction takes $\asympO(\len{\fml} + \strucn{\struc} + \strucm{\struc}) + \asympO(\len{\fml'} + C \, \strucn{\struc'} + \strucm{\struc'})
= \asympO(\len{\fml} + C \, \strucn{\struc} + \strucm{\struc})$ time.
As the truth of $\fml$ on $\struc$ is preserved through the reductions, this completes the proof.
\end{proof}
 
\ifcameraready
\clearpage
\section{Supplement to \texorpdfstring{\Cref{proposition: alman VVW}}{Proposition~\ref{proposition: alman VVW}}}\label{section: proposition: alman VVW}
In this section, we supplement \Cref{proposition: alman VVW} by giving a proof.
\subsection{\texorpdfstring{\PartitionedOVHyperCliqueProblem{n, d}}{Partitioned-OV-HyperClique}: revisiting the reduction of \texorpdfstring{\cite{almanOVGraphsAre2020}}{[..]}}
Given $k > r \ge 1$ and $n$-element sets $\mathcal{V}_1, \dots, \mathcal{V}_k \subseteq \set{0, 1}^d$ of "vectors",
the $k$-partite $r$-\reintro*\kl{uniform OV-hypergraph} $\reintro*\OVgraph{\mathcal{V}_1, \dots, \mathcal{V}_k, r}$ is defined as the $r$-"uniform hypergraph" $\OVgraph{\mathcal{V}_1, \dots, \mathcal{V}_k, r} = \tuple{
  \bigsqcup_{i = 1}^{k} \mathcal{V}_i,
  E
}$ where $S \in E$ "iff" $S$ is an $r$-element subset of $\bigsqcup_{i = 1}^{k} \mathcal{V}_i$ such that $\GIP{S\ith{1}, \dots, S\ith{r}} = 0$ and $S$ contains at most one "vector" from each $\mathcal{V}_i$.
Here, $\bigsqcup$ denotes the disjoint union ("ie", a "vector" occurring in several $\mathcal{V}_i$ yields distinct vertices, one for each such $i$).
The $(k, r)$-\intro*\PartitionedOVHyperCliqueProblem{n, d} is the following problem:
given $\mathcal{V}_1, \dots, \mathcal{V}_k \subseteq \set{0, 1}^d$ where each $\mathcal{V}_i$ is an $n$-element set, decide whether $\OVgraph{\mathcal{V}_1, \dots, \mathcal{V}_k, r}$ has a $(k, r)$-"hyperclique".
In particular, we write \intro*\PartitionedOVTriangleProblem{n, d} for the $(3, 2)$-\PartitionedOVHyperCliqueProblem{n, d}.
The reduction of Alman and Vassilevska Williams \cite{almanOVGraphsAre2020} implies the corresponding hypotheses below.
\AP
\begin{hypothesis*}[$(k, r)$-\intro*\PartitionedOVHyperCliqueHypothesis{} {\cite[Theorem~23]{almanOVGraphsAre2020}}%
]
Fix $k > r \ge 3$.
For any $\eps > 0$,
there is a "constant" $c$ such that
the $(k, r)$-\PartitionedOVHyperCliqueProblem{n, \le c \log n} cannot be solved in $\asympO(n^{k - \eps})$ time.
\end{hypothesis*}
\begin{hypothesis*}[\intro*\PartitionedOVTriangleHypothesis{}, $(3, 2)$-\PartitionedOVHyperCliqueHypothesis{} {\cite[Theorem~5]{almanOVGraphsAre2020}}]
For any $\eps > 0$,
there is a "constant" $c$ such that
the \PartitionedOVTriangleProblem{n, \le c \log n} cannot be solved in $\asympO(n^{\matrixexponent - \eps})$ time.
\end{hypothesis*}
\begin{proposition}\label{proposition: alman VVW'}
For each fixed $k > r \ge 3$ ("resp", $k = 3$ and $r = 2$),
the \SparseMAXxSATHypothesis{r} implies the $(k, r)$-\PartitionedOVHyperCliqueHypothesis{} ("resp", \PartitionedOVTriangleHypothesis{}).
\end{proposition}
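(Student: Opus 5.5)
We follow the split-and-list reduction of Alman and Vassilevska Williams \cite[Thms.~5 and~23]{almanOVGraphsAre2020}, checking only that it lands exactly in the partitioned problem with dimension $d \le c\log n$. Fix $k > r \ge 3$ and $\eps > 0$, and suppose the $(k,r)$-\PartitionedOVHyperCliqueProblem{n, \le c \log n} is solvable in $\asympO(n^{k-\eps})$ time for every $c$. We aim to refute the \SparseMAXxSATHypothesis{r} with loss $\eps' \defeq \eps/(2k)$. Let $c_0$ be the clause-density constant that the hypothesis provides for $\eps'$. We will pick $c$ depending on $c_0$, $k$, $r$.

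\textbf{Reduction.} Take a \MAXxSATProblem{r} instance with $N$ variables and $M \le c_0 N$ clauses, each of at most $r$ literals. Partition the variables into $k$ groups $X_1,\dots,X_k$ of size about $N/k$, so that each group has $n \defeq 2^{N/k}$ partial assignments. Each clause touches at most $r$ groups. Assign each clause to one $r$-subset $T \in \binom{\range{k}}{r}$ of groups containing all groups it touches.

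Guess a target count $t$, together with its distribution over the $\binom{k}{r}$ subsets $T$ as numbers $t_T$ of \emph{unsatisfied} clauses. There are only $M^{\asympO(1)}$ such guesses. For a fixed guess, we must encode the condition ``the combined assignment leaves exactly $t_T$ clauses of $T$ unsatisfied'' as a generalized-orthogonality condition on the $r$ vectors chosen from the groups in $T$. Following \cite{almanOVGraphsAre2020}, we do this in two steps. First, a clause is unsatisfied iff each touched group's partial assignment falsifies its literals there, so for each clause the indicator of ``group $i$ falsifies its part'' becomes a product coordinate. Second, an exact count is enforced by splitting the clauses of $T$ into blocks of $\asympO(1/\delta)$ clauses and guessing how many clauses are unsatisfied in each block. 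Coordinates then range over the admissible unsatisfied patterns within a block, for total dimension $\asympO_{c_0,\delta}(N) = \asympO(\log n)$. This yields sets $\mathcal{V}_1,\dots,\mathcal{V}_k$ of $n$ vectors each, in dimension $d \le c\log n$ with $c = c(c_0,k,r,\delta)$. Vectors for groups outside $T$ get all-ones on $T$'s coordinates, so coordinates for $T$ constrain only the $r$ vectors in $T$; this is exactly the partite condition. A $(k,r)$-hyperclique with one vector per part then corresponds to a full assignment matching the guessed counts.

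\textbf{Runtime and main obstacle.} The number of block-count guesses is $2^{\asympO(\delta N)}$ in total. Each guess costs $\asympO(n^{k-\eps}) = 2^{(1-\eps/k)N}$. Choosing $\delta$ small makes the total $2^{(1-\eps/(2k))N}$, contradicting the hypothesis.

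The delicate step is the counting encoding, that is, turning ``exactly $t_T$ unsatisfied'' into an OV-type constraint among exactly the $r$ relevant parts while keeping $d = \asympO(\log n)$. Here sparsity $M \le c_0 N$ is essential. I would reuse the block-guessing gadget of \cite[Thm.~23]{almanOVGraphsAre2020} verbatim rather than rederive it.

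For $k=3$, $r=2$ the same scheme applies with three groups and pairwise clause assignment. The $\asympO(n^{\matrixexponent-\eps})$ runtime then gives $2^{(\matrixexponent/3-\eps')N}$, matching the \SparseMAXxSATHypothesis{2} as in \cite[Thm.~5]{almanOVGraphsAre2020}.
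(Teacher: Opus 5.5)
Your overall route matches the paper's. The paper observes that the Alman--Vassilevska Williams chain (Sparse MAX-$r$-SAT $\to$ partitioned Exact-IP-HyperClique $\to$ OV-HyperClique via their Lemma~19) never mixes the $k$ parts. It therefore already produces instances of the partitioned problem, with the same number, size and dimension. You unfold that chain into one direct split-and-list reduction, which is fine in principle.

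\textbf{The padding step fails as written.} You give vectors of groups \emph{outside} $T$ all-ones on $T$'s coordinates. In an OV-hypergraph, each $r$-subset $S$ of parts is tested separately over \emph{all} $d$ coordinates. So for $S \neq T$, the coordinates of $T$ contribute $\sum_{\ell}\prod_{i\in S\cap T} v_i[\ell]$ to the generalized inner product of the vectors of $S$.
\begin{itemize}
\item This contribution is typically positive, so the per-$T$ constraints do not decouple.
\item If $S\cap T=\emptyset$, which is possible once $k\ge 2r$, it equals the number of $T$-coordinates, and no hyperclique exists at all.
\item Already for $k=3$, $r=2$, $T=\{1,2\}$, $S=\{2,3\}$, the contribution is the number of ones of $v_2$ on $T$'s coordinates.
\end{itemize}
The padding must be all-zeros. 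Since $|S|=|T|=r$ and $S\neq T$, some part in $S\setminus T$ then annihilates every coordinate of $T$. All-ones is the correct neutral padding only \emph{inside} $T$, for groups of $T$ that a given clause does not touch. You have carried over the $k$-OV convention, where a single product over all $k$ vectors is taken. Also, partiteness is built into the definition of the partitioned hypergraph; it is not what the padding provides.

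\textbf{The counting gadget has the same polarity issue.} An orthogonality constraint can only \emph{forbid} that all $r$ vectors have a $1$ at some coordinate. So the coordinates of block $j$ of $T$ must be indexed by the \emph{forbidden} $r$-tuples $(x_1,\dots,x_r)$ of local block patterns. These are the tuples whose generalized inner product differs from the guessed block count $t_{T,j}$. The vector in position $p$ of $T$ has a $1$ exactly where $x_p$ equals its own pattern. Indexing coordinates by ``admissible'' patterns, as you describe, enforces the opposite condition.

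With both corrections the argument goes through. Using blocks of $b$ clauses, the dimension is about $\bigl(c_0N/b+\binom{k}{r}\bigr)2^{rb}=O(N)=O(\log n)$. The number of block-count guesses is $(b+1)^{O(c_0N/b)}=2^{O(\delta N)}$ with $\delta\to 0$ as $b\to\infty$. Your runtime accounting is then correct, for $r\ge 3$ and for $(k,r)=(3,2)$.
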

\begin{proof}
We use the reduction of Alman and Vassilevska Williams~\cite[Theorem~23 ("resp", Theorem~5)]{almanOVGraphsAre2020}.
For \cite[Theorem~5]{almanOVGraphsAre2020},
this follows by combining
the reduction from \SparseMAXxSATProblem{2} to multiple instances of the \emph{partitioned version} of \intro*\ExactIPTriangleProblem{} and
the reduction from the \emph{partitioned version} of \ExactIPTriangleProblem{}
to multiple instances of the \emph{non-partitioned version} of \OVTriangleProblem{} in \cite[Lemma~19]{almanOVGraphsAre2020}
(see \cite{almanOVGraphsAre2020} for the definition of \ExactIPTriangleProblem{}).
The reduction of \cite[Lemma~19]{almanOVGraphsAre2020}
works even if we keep the original partition in \ExactIPTriangleProblem{} (denoted by $A, B, C$ in \cite[Lemma~19]{almanOVGraphsAre2020}),
and hence it also gives a reduction to multiple instances of \PartitionedOVTriangleProblem{}.
In the same way,
the reduction of \cite[Theorem~23]{almanOVGraphsAre2020}
also works for the partitioned version of \intro*\ExactIPHyperCliqueProblem{} ("ie", the problem corresponding to \ExactIPTriangleProblem{} for $k > r \ge 3$), keeping the partition.
The number of output instances, their dimensions, and their sizes are the same as in the original reduction.
\end{proof}

\subsection{From \texorpdfstring{\PartitionedOVHyperCliqueProblem{n, d}}{Partitioned-OV-HyperClique} to \texorpdfstring{\OVHyperCliqueProblem{\asympO(n), d + \asympO(1)}}{OV-HyperClique}}

\begin{proposition}\label{proposition: (kr)-labelling}
Fix $k > r \ge 2$.
There exist $\const{enc}_1, \dots, \const{enc}_k \in \set{0,1}^{r \binom{k}{r}}$
such that for all $i_1, \dots, i_r \in \range{k}$, we have:
$\GIP{\const{enc}_{i_1}, \dots, \const{enc}_{i_r}} = 0$
"iff"
$i_1, \dots, i_r$ are pairwise distinct.
Hence, for every nonempty subset $S$ of $\range{k}$ with $r' \defeq \card{S} < r$,
$\GIP{\const{enc}_{S\ith{1}}, \dots, \const{enc}_{S\ith{r'}}} \neq 0$.
\end{proposition}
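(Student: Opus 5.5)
We index the $r\binom{k}{r}$ coordinates by pairs $(T, j)$ with $T \in \binom{\range{k}}{r}$ and $j \in \range{r}$. For $i \in \range{k}$, define $\const{enc}_i[(T,j)] \defeq 0$ if $i = T\ith{j}$ (the $j$-th smallest element of $T$), and $\const{enc}_i[(T,j)] \defeq 1$ otherwise. Thus each $\const{enc}_i$ is zero exactly at the coordinates $(T,j)$ where $T \ni i$ and $i$ sits in position $j$ of $T$. The proof is a direct verification of the two directions, followed by a short corollary.

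\textbf{Distinct indices give orthogonality.} Suppose $i_1, \dots, i_r$ are pairwise distinct. Fix a coordinate $(T, j)$; we need some $i_\ell$ with $\const{enc}_{i_\ell}[(T,j)] = 0$. Let $S \defeq \set{i_1, \dots, i_r}$, an $r$-element set.
\begin{itemize}
\item If $T = S$, then $T\ith{j}$ is one of the $i_\ell$, and that vector is zero at $(T,j)$.
\item If $T \neq S$, then, since both sets have size $r$, $T\ith{j}$ need not lie in $S$, so this naive choice fails.
\end{itemize}
To fix the second case, we enlarge the block for each $T$ so that its coordinates also cover every $r$-set $S \neq T$. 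Concretely, we need per-$T$ gadgets with the following two properties:
\begin{itemize}
\item every $r$-set $S \neq T$ covers the whole gadget, that is, each coordinate of the gadget is zero in some member of $S$;
\item $T$ itself covers the gadget, while no multiset with a repeated index covers every gadget.
\end{itemize}

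A cleaner construction meets these requirements with exactly $r\binom{k}{r}$ coordinates. Use the block for $T$ to test only $T$, via the \emph{complement} convention: set $\const{enc}_i[(T,j)] \defeq 0$ iff $i \neq T\ith{j}$.

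\textbf{Checking the complement convention.}
\begin{itemize}
\item A product $\prod_\ell \const{enc}_{i_\ell}[(T,j)]$ equals $1$ iff all $i_\ell$ equal $T\ith{j}$. This means the multiset $\set{i_1,\dots,i_r}$ is the single repeated value $T\ith{j}$.
\item Hence $\GIP{\cdot} \neq 0$ iff all $i_\ell$ coincide. This only detects the all-equal case and misses partial repeats, so we must refine again.
\end{itemize}

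\textbf{Final encoding.} Index coordinates by pairs $(\set{a,b}, \cdot)$ with $a \neq b$, or more precisely by ordered position pairs $(p,q)$ and values. We set the coordinates so that a product survives exactly when $i_p = i_q$ for some $p < q$. Zeroing enc$_i$ wherever the detected value differs from $i$ achieves this, but the product then involves all $r$ vectors and not only positions $p$ and $q$. To restrict the product to two positions, we exploit the fact that $\GIP$ is symmetric: a coordinate indexed by a value $a$ and a set $T \ni a$ of size $r$ is set to $1$ in $\const{enc}_i$ iff $i \in T$ and $i \neq$ (something). Choosing this so that the product is $1$ iff $\set{i_\ell} \subsetneq T$ pins down the construction $\const{enc}_i[(T,j)] \defeq 1$ iff $i \in T \setminus \set{T\ith{j}}$. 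We then verify:
\begin{itemize}
\item The product at $(T,j)$ is $1$ iff every $i_\ell$ lies in $T \setminus \set{T\ith{j}}$.
\item If the indices are distinct, they form $r$ distinct values, which cannot fit in the $(r-1)$-set $T \setminus \set{T\ith{j}}$; so every product is $0$ and $\GIP = 0$.
\item If two indices coincide, the set of values has size at most $r-1 < k$. Extend it to some $T$ of size $r$ and pick $j$ with $T\ith{j}$ outside the set. The product at $(T,j)$ is then $1$, so $\GIP \neq 0$.
\end{itemize}
This choice is the intended encoding.

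\textbf{The corollary.} Given a nonempty $S$ with $r' < r$, pad the list by repeating $S\ith{1}$ until there are $r$ entries. The padded list is not pairwise distinct, so its $\GIP$ is nonzero. Since $\GIP$ is invariant under repetitions, this equals $\GIP{\const{enc}_{S\ith{1}}, \dots, \const{enc}_{S\ith{r'}}}$, which is therefore nonzero. The main obstacle was choosing the encoding so that repeats are detected by a single coordinate; the subset construction above resolves it.
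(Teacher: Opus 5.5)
Your final encoding, $\mathsf{enc}_i[(T,j)] = 1$ iff $i \in T \setminus \{T[j]\}$, is exactly the paper's construction: the block for $S_j$ is $1\cdots101\cdots1$ with the $0$ at $i$'s position if $i \in S_j$, and all zeros otherwise. Your two-direction check matches the paper's per-block observation that a block has nonzero inner product iff $\{i_1,\dots,i_r\} \subsetneq T$, and deriving the second claim by padding with repetitions is a slightly shorter alternative to the paper's direct argument. The proof is correct, but in a write-up you should drop the two abandoned encodings and the garbled intermediate paragraph and present only the final construction.
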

\begin{proof}
We enumerate the $r$-element subsets of $\range{k}$ as $S_1, \dots, S_{\binom{k}{r}}$.
For each $i \in \range{k}$ and $j \in \range{\binom{k}{r}}$,
let $\const{enc}_{i, j} \in \set{0, 1}^{r}$ be the following "vector":
\[\const{enc}_{i, j} ~\defeq~
\begin{cases}
\quad\underbracket{1 \dots 1}_{\text{$\ell-1$ times}}0\underbracket{1 \dots 1}_{\text{$r-\ell$ times}}
& \text{ if the $\ell$-th smallest element of $S_{j}$ is $i$ for some $\ell$, and}\\
\quad\underbracket{0 \dots 0}_{\text{$r$ times}}
& \text{ otherwise ("ie", $i \not\in S_{j}$)}.
\end{cases}\]
We then let $\const{enc}_i \defeq \const{enc}_{i, 1} \const{enc}_{i, 2} \dots \const{enc}_{i, \binom{k}{r}}$.
Observe that:
\[\GIP{\const{enc}_{i_1, j}, \dots, \const{enc}_{i_r, j}} = 0 \quad\iff\quad \text{$\set{i_1, \dots, i_r} = S_j$ or $\set{i_1, \dots, i_r} \not\subseteq S_j$}.\]
Since $\GIP{\const{enc}_{i_1}, \dots, \const{enc}_{i_r}} = \sum_{j} \GIP{\const{enc}_{i_1, j}, \dots, \const{enc}_{i_r, j}}$ is a sum of non-negative integers,
it is $0$ "iff" every block $j$ has $\GIP{\const{enc}_{i_1, j}, \dots, \const{enc}_{i_r, j}} = 0$.
Using this, we obtain the desired equivalence:
\proofcase{($\Leftarrow$)}
Suppose that $i_1, \dots, i_r$ are pairwise distinct, "ie", $\card{\set{i_1, \dots, i_r}} = r$.
For each $j$, since $\card{S_j} = r$, if $S_j \neq \set{i_1, \dots, i_r}$, then $\set{i_1, \dots, i_r} \not\subseteq S_j$.
Hence every block $j$ has $\GIP{\const{enc}_{i_1, j}, \dots, \const{enc}_{i_r, j}} = 0$.
\proofcase{($\Rightarrow$)}
If $i_1, \dots, i_r$ are not pairwise distinct, then $\card{\set{i_1, \dots, i_r}} < r \le k$, so there is an $S_j$ such that $\set{i_1, \dots, i_r} \subsetneq S_j$.
Then this block $j$ has $\GIP{\const{enc}_{i_1, j}, \dots, \const{enc}_{i_r, j}} \neq 0$, and hence $\GIP{\const{enc}_{i_1}, \dots, \const{enc}_{i_r}} \neq 0$.

For the second sentence of the statement, let $S \subseteq \range{k}$ with $r' = \card{S} < r$.
As $r' < r \le k$, there is an $S_j$ such that $S \subsetneq S_j$.
Let $\ell$ be such that the $\ell$-th smallest element of $S_j$ is not in $S$.
Then, for every $i \in S$, the $\ell$-th bit of $\const{enc}_{i, j}$ is $1$;
hence $\GIP{\const{enc}_{S\ith{1}, j}, \dots, \const{enc}_{S\ith{r'}, j}} \ge 1$, and thus $\GIP{\const{enc}_{S\ith{1}}, \dots, \const{enc}_{S\ith{r'}}} \neq 0$.
\end{proof}

\begin{lemma}\label{lemma: non-partitioned binary hyperclique hypothesis}
Fix $k > r \ge 2$.
Then there exists an $\asympO(n (d + r \binom{k}{r}))$-time reduction
from an instance $(\mathcal{V}_1, \dots, \mathcal{V}_k)$ of the $(k,r)$-\PartitionedOVHyperCliqueProblem{n, d} to an instance $\mathcal{V}'$ of the $(k,r)$-$\OVHyperCliqueProblem{k n, d + r \binom{k}{r}}$ such that:
$\OVgraph{\mathcal{V}_1,\dots, \mathcal{V}_k, r}$ has some $(k, r)$-"hyperclique" "iff"
$\OVgraph{\mathcal{V}', r}$ has some $(k, r)$-"hyperclique".
\end{lemma}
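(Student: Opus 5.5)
The plan is to tag each vector with its part index using the vectors $\const{enc}_1,\dots,\const{enc}_k$ of \Cref{proposition: (kr)-labelling}. Given $(\mathcal{V}_1,\dots,\mathcal{V}_k)$, define
\[
\mathcal{V}' \defeq \set{ v \cdot \const{enc}_i \mid i \in \range{k},\ v \in \mathcal{V}_i },
\]
where $\cdot$ denotes concatenation. The map $(i,v) \mapsto v\cdot\const{enc}_i$ is injective, because the $\const{enc}_i$ are pairwise distinct: if $\const{enc}_i = \const{enc}_{i'}$ with $i\neq i'$, then the $r$-tuple $(\const{enc}_i,\const{enc}_i,\dots)$ would be orthogonal, contradicting the ``iff'' of the proposition. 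Hence $\card{\mathcal{V}'} = kn$ (if one insists on exactly $kn$ this is automatic), the dimension is $d + r\binom{k}{r}$, and writing out $\mathcal{V}'$ takes $\asympO(kn(d + r\binom{k}{r}))$ time, which is $\asympO_k(n(d + r\binom{k}{r}))$.

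The key identity is that generalized inner products split over the concatenation:
\[
\GIP{v_1\cdot\const{enc}_{i_1},\dots,v_t\cdot\const{enc}_{i_t}} = \GIP{v_1,\dots,v_t} + \GIP{\const{enc}_{i_1},\dots,\const{enc}_{i_t}},
\]
and both summands are non-negative. I first check well-formedness. A nonempty subset $S \subseteq \mathcal{V}'$ with $\card{S} = r' < r$ has part-index set $T$ with $1 \le \card{T} \le r' < r$. By the second sentence of \Cref{proposition: (kr)-labelling} together with invariance of $\GIP{\cdot}$ under repetitions, the $\const{enc}$ summand is nonzero, so the whole product is nonzero.

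Next, for an $r$-subset $S$ of $\mathcal{V}'$, the identity combined with the ``iff'' of \Cref{proposition: (kr)-labelling} shows that $S$ is a hyperedge of $\OVgraph{\mathcal{V}',r}$ exactly when its original vectors are generalized-orthogonal and come from pairwise distinct parts. This is precisely the hyperedge condition of $\OVgraph{\mathcal{V}_1,\dots,\mathcal{V}_k,r}$ under the bijection $\bigsqcup_i \mathcal{V}_i \to \mathcal{V}'$. The two hypergraphs are therefore isomorphic, which gives the claimed equivalence of hyperclique existence.

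The main subtlety is this bijection: distinct tagged copies must remain distinct elements of the set $\mathcal{V}'$, and I have argued above that they do. A second point to note is that a $k$-hyperclique in the partitioned graph automatically uses one vector from each part, since $k > r$ forces every $r$-subset to be rainbow. Under the isomorphism, however, that fact is not needed for the argument.
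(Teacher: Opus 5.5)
Your proof is correct and is essentially the paper's argument: the same tagging $\mathcal{V}' = \bigsqcup_{i} \set{v\,\const{enc}_i \mid v \in \mathcal{V}_i}$, with well-formedness obtained from \Cref{proposition: (kr)-labelling}. Two things you spell out are left implicit in the paper: you package the two directions as a hypergraph isomorphism, using that $\GIP{\cdot}$ is additive over concatenation, and you check injectivity so that $\card{\mathcal{V}'} = kn$. The paper instead argues the two directions ($\Rightarrow$) and ($\Leftarrow$) separately and relies on the $\bigsqcup$ notation for the cardinality.
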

\begin{proof}
Let $\const{enc}_1, \dots, \const{enc}_k \in \set{0, 1}^{r \binom{k}{r}}$ be the "vectors"
from \Cref{proposition: (kr)-labelling}.
Given an instance $\tuple{\mathcal{V}_1, \dots, \mathcal{V}_k}$,
where each $\mathcal{V}_i$ is an $n$-element set of "vectors" in $\set{0, 1}^{d}$,
let $\mathcal{V}' \defeq \bigsqcup_{i = 1}^{k} \set{v \ \const{enc}_{i} \mid v \in \mathcal{V}_i}$.
By \Cref{proposition: (kr)-labelling}, $\mathcal{V}'$ is indeed "well-formed" (several "vectors" may share a label, but repeated $0/1$ "vectors" do not change the "generalized inner product").
Also, $\mathcal{V}'$ can be computed in $\asympO(n (d + r \binom{k}{r}))$ time.
We show that $\OVgraph{\mathcal{V}_1,\dots, \mathcal{V}_k, r}$ has a $(k, r)$-"hyperclique" "iff"
$\OVgraph{\mathcal{V}', r}$ has a $(k, r)$-"hyperclique".
\proofcase{($\Rightarrow$)}
Suppose $\OVgraph{\mathcal{V}_1, \dots, \mathcal{V}_k, r}$ has a $(k, r)$-"hyperclique" $v_1 \in \mathcal{V}_1, \dots, v_k \in \mathcal{V}_k$.
Then $v_1 \const{enc}_1, \dots, v_k \const{enc}_k$ form a $(k, r)$-"hyperclique" in $\OVgraph{\mathcal{V}', r}$,
since $\GIP{\const{enc}_{i_1}, \dots, \const{enc}_{i_r}} = 0$ for every $r$-element subset $\set{i_1, \dots, i_r}$ of $\range{k}$, by the property of \Cref{proposition: (kr)-labelling}.
\proofcase{($\Leftarrow$)}
Suppose $\OVgraph{\mathcal{V}', r}$ has a $(k, r)$-"hyperclique" $S$.
By \Cref{proposition: (kr)-labelling},
$S$ consists of $v_1' \const{enc}_{i_1}, \dots, v_k' \const{enc}_{i_k}$ such that
$i_1, \dots, i_k$ are pairwise distinct.
Hence, $S$ consists of $v_1 \const{enc}_1, \dots, v_k \const{enc}_k$.
Then $v_1 \in \mathcal{V}_1, \dots, v_k \in \mathcal{V}_k$ form a $(k, r)$-"hyperclique" in $\OVgraph{\mathcal{V}_1,\dots, \mathcal{V}_k, r}$.
This completes the proof.
\end{proof}
Using this reduction, we have the following.
\begin{proposition}\label{proposition: partitioned OV-hyperclique hypothesis}
Fix $k > r \ge 3$.
The $(k, r)$-\PartitionedOVHyperCliqueHypothesis{} implies the $(k, r)$-\OVHyperCliqueHypothesis{}.
Similarly, the \PartitionedOVTriangleHypothesis{} implies the \OVTriangleHypothesis{}.
\end{proposition}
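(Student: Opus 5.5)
We argue by contraposition, using the reduction of \Cref{lemma: non-partitioned binary hyperclique hypothesis}. Fix $k > r \ge 3$ and $\eps > 0$, and assume the $(k,r)$-\PartitionedOVHyperCliqueHypothesis{}. Let $c_0$ be the constant it supplies for exponent loss $\eps/2$. We will show that the $(k,r)$-\OVHyperCliqueProblem{N, \le c \log N} cannot be solved in $\asympO(N^{k-\eps})$ time, where $c \defeq 2c_0$ (any constant larger than $c_0$ will do).

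Suppose, toward a contradiction, that some algorithm $\mathcal{A}$ solves the $(k,r)$-\OVHyperCliqueProblem{N, \le c\log N} in $\asympO(N^{k-\eps})$ time. Take an instance $(\mathcal{V}_1,\dots,\mathcal{V}_k)$ of the partitioned problem with $d \le c_0 \log n$. Apply \Cref{lemma: non-partitioned binary hyperclique hypothesis}. This produces, in $\asympO(n(d + r\binom{k}{r})) = \asympO(n\log n)$ time, a well-formed instance $\mathcal{V}'$ with the same answer. Its parameters are
\[ N \le kn, \qquad d' = d + r\tbinom{k}{r} = d + \asympO(1). \]

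The main point to check is that $\mathcal{V}'$ meets the dimension bound required by $\mathcal{A}$. We need $d' \le c_0\log n + \asympO(1) \le c\log N$. This holds for all sufficiently large $n$, since $c > c_0$ absorbs the additive constant and $N \ge n$. Small $n$ is handled by brute force in constant time.

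Two minor details also need care.
\begin{itemize}
\item $\card{\mathcal{V}'}$ may be smaller than $kn$ if the tagged vectors coincide. Tagging by $\const{enc}_i$ keeps different parts distinct whenever the $\const{enc}_i$ are distinct, and the $\const{enc}_i$ are distinct for $k>r$ by \Cref{proposition: (kr)-labelling}. In any case $N \le kn$ is the only bound we use.
\item The hypothesis is phrased for an exact instance size. This is harmless, because any size $N \le kn$ only helps the running-time bound.
\end{itemize}

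Running $\mathcal{A}$ on $\mathcal{V}'$ then takes $\asympO((kn)^{k-\eps}) = \asympO_k(n^{k-\eps})$ time. Adding the reduction time, the partitioned problem is solved in $\asympO(n^{k-\eps/2})$ time, which contradicts the choice of $c_0$.

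The triangle case is the same argument with $k=3$, $r=2$ and exponent $\matrixexponent$ in place of $k$. \Cref{lemma: non-partitioned binary hyperclique hypothesis} covers $r=2$, and the additive $\asympO(n\log n)$ overhead is negligible against $n^{\matrixexponent-\eps}$ because $\matrixexponent \ge 2$.
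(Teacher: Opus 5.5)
Your proof is correct and is essentially the paper's argument. Both prove the contrapositive via \Cref{lemma: non-partitioned binary hyperclique hypothesis}, pick a dimension constant strictly larger than the partitioned one so that the additive $r\binom{k}{r}$ label length is absorbed for large $n$, and run the assumed algorithm on the resulting $kn$-vector instance; two cosmetic points are that your passage to $\eps/2$ is unnecessary, and that you should say explicitly (as the paper does with $k-\eps>1$) that $\eps$ may be assumed small so that the $O(n\log n)$ reduction overhead is dominated.
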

\begin{proof}
We show the contrapositive, that is, if
there is some $\eps > 0$ such that for all $c > 0$,
the $(k, r)$-\OVHyperCliqueProblem{n, \le c \log n} can be solved in $\asympO(n^{k - \eps})$ time, then
for all $c' > 0$,
the $(k, r)$-\PartitionedOVHyperCliqueProblem{n, \le c' \log n} can be solved in $\asympO(n^{k - \eps})$ time.
Let $\eps > 0$ witness this. We may assume $k - \eps > 1$.
Fix any $c' > 0$ and choose $c$ so that $c > c'$.
By $c > c'$, for all sufficiently large $n$, we have $c' \log n + r \binom{k}{r} \le c \log n$.
By \Cref{lemma: non-partitioned binary hyperclique hypothesis},
the $(k, r)$-\PartitionedOVHyperCliqueProblem{n, \le c' \log n} reduces to
the $(k, r)$-\OVHyperCliqueProblem{k n, \le c \log (k n)};
hence, by the assumption, it can be solved in
time $\asympO(n (c' \log n + r \binom{k}{r})) + \asympO((k n)^{k - \eps}) \le \asympO(n^{k - \eps})$.
The case of the \OVTriangleHypothesis{} is shown in the same way with $(k, r) = (3, 2)$, replacing the exponent $k - \eps$ with $\matrixexponent - \eps$;
note that \Cref{lemma: non-partitioned binary hyperclique hypothesis} works for $k = 3$ and $r = 2$ as well.
\end{proof}
\Cref{proposition: alman VVW} follows from \Cref{proposition: alman VVW',proposition: partitioned OV-hyperclique hypothesis}.
 \else
\clearpage
\section{Supplement to \texorpdfstring{\Cref{proposition: alman VVW}}{Proposition~\ref{proposition: alman VVW}}}\label{section: proposition: alman VVW}
In this section, we supplement \Cref{proposition: alman VVW} by giving a proof.
\subsection{\texorpdfstring{\PartitionedOVHyperCliqueProblem{n, d}}{Partitioned-OV-HyperClique}: revisiting the reduction of \texorpdfstring{\cite{almanOVGraphsAre2020}}{[..]}}
Given $k > r \ge 1$ and $n$-element sets $\mathcal{V}_1, \dots, \mathcal{V}_k \subseteq \set{0, 1}^d$ of "vectors",
the $k$-partite $r$-\reintro*\kl{uniform OV-hypergraph} $\reintro*\OVgraph{\mathcal{V}_1, \dots, \mathcal{V}_k, r}$ is defined as the $r$-"uniform hypergraph" $\OVgraph{\mathcal{V}_1, \dots, \mathcal{V}_k, r} = \tuple{
  \bigsqcup_{i = 1}^{k} \mathcal{V}_i,
  E
}$ where $S \in E$ "iff" $S$ is an $r$-element subset of $\bigsqcup_{i = 1}^{k} \mathcal{V}_i$ such that $\GIP{S\ith{1}, \dots, S\ith{r}} = 0$ and $S$ contains at most one "vector" from each $\mathcal{V}_i$.
Here, $\bigsqcup$ denotes the disjoint union ("ie", a "vector" occurring in several $\mathcal{V}_i$ yields distinct vertices, one for each such $i$).
The $(k, r)$-\intro*\PartitionedOVHyperCliqueProblem{n, d} is the following problem:
given $\mathcal{V}_1, \dots, \mathcal{V}_k \subseteq \set{0, 1}^d$ where each $\mathcal{V}_i$ is an $n$-element set, decide whether $\OVgraph{\mathcal{V}_1, \dots, \mathcal{V}_k, r}$ has a $(k, r)$-"hyperclique".
In particular, we write \intro*\PartitionedOVTriangleProblem{n, d} for the $(3, 2)$-\PartitionedOVHyperCliqueProblem{n, d}.
The reduction of Alman and Vassilevska Williams \cite{almanOVGraphsAre2020} implies the corresponding hypotheses below.
\AP
\begin{hypothesis*}[$(k, r)$-\intro*\PartitionedOVHyperCliqueHypothesis{} {\cite[Theorem~23]{almanOVGraphsAre2020}}%
]
Fix $k > r \ge 3$.
For any $\eps > 0$,
there is a "constant" $c$ such that
the $(k, r)$-\PartitionedOVHyperCliqueProblem{n, \le c \log n} cannot be solved in $\asympO(n^{k - \eps})$ time.
\end{hypothesis*}
\begin{hypothesis*}[\intro*\PartitionedOVTriangleHypothesis{}, $(3, 2)$-\PartitionedOVHyperCliqueHypothesis{} {\cite[Theorem~5]{almanOVGraphsAre2020}}]
For any $\eps > 0$,
there is a "constant" $c$ such that
the \PartitionedOVTriangleProblem{n, \le c \log n} cannot be solved in $\asympO(n^{\matrixexponent - \eps})$ time.
\end{hypothesis*}
\begin{proposition}\label{proposition: alman VVW'}
For each fixed $k > r \ge 3$ ("resp", $k = 3$ and $r = 2$),
the \SparseMAXxSATHypothesis{r} implies the $(k, r)$-\PartitionedOVHyperCliqueHypothesis{} ("resp", \PartitionedOVTriangleHypothesis{}).
\end{proposition}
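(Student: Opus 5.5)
The plan is to re-run the Alman--Vassilevska Williams reduction \cite[Theorem~23 (resp., Theorem~5)]{almanOVGraphsAre2020} and observe that it already produces partitioned instances. That reduction goes in two stages. First, \SparseMAXxSATProblem{r} with $N$ variables and $cN$ clauses is reduced to a subexponential number of instances of the exact-inner-product hyperclique problem (\ExactIPHyperCliqueProblem{}, resp.\ \ExactIPTriangleProblem{}). This uses split-and-list: the variables are split into $k$ groups of $N/k$, giving $n = 2^{N/k}$ partial assignments per group. Second, each exact-IP instance is converted into OV-hyperclique instances (\cite[Lemma~19]{almanOVGraphsAre2020} and its higher-order analogue).

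The key observation concerns the first stage. The vectors from the $i$-th variable group form a natural part $A_i$, so the exact-IP instances are inherently partitioned: a hyperclique of interest takes exactly one vector from each part. I will therefore phrase the intermediate problem in its partitioned version $(A_1,\dots,A_k)$.

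I then check that \cite[Lemma~19]{almanOVGraphsAre2020} and its $r\ge3$ generalization act coordinatewise on each vector. Each vector $a\in A_i$ is mapped to a set of vectors via a gadget that depends only on $a$, the target value, and the guessed decomposition (enumerated in the multiple output instances). The transformed vectors can therefore be kept labeled by their original part $i$, giving sets $\mathcal{V}_1,\dots,\mathcal{V}_k$. Correctness then needs two directions.
\begin{itemize}
\item A partitioned OV-hyperclique in some output instance yields a partitioned exact-IP hyperclique. This direction is immediate, since the original argument only inspects the $r$-tuples involved, and those tuples now take one vector per part.
\item A partitioned exact-IP hyperclique yields a partitioned OV-hyperclique in some output instance, by the same witness construction.
\end{itemize}
The number of instances, the dimension $d = O(c'\log n)$, and the sizes are unchanged. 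Hence an $O(n^{k-\eps})$ (resp.\ $O(n^{\omega-\eps})$) algorithm for the partitioned problem with $d\le c\log n$ for every $c$ would give a $2^{(1-\eps')N}$ (resp.\ $2^{(\omega/3-\eps')N}$) algorithm for sparse \MAXxSATProblem{r}, contradicting the hypothesis.

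I expect the main obstacle to be verifying that the intermediate reductions never exploit non-partitioned structure. Two places need checking. The first is whether Lemma~19 ever mixes vectors of different parts into one vertex set in a way that creates spurious cliques. The second is whether the non-partitioned version silently relies on vectors from the same part not forming edges. In the partitioned version such same-part tuples are simply disallowed by definition, which only helps the soundness direction. So the check reduces to confirming that the completeness witness uses one vector per part, and it does, because it comes from one partial assignment per variable group.
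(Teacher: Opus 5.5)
Your proposal is correct and follows essentially the same route as the paper. The paper also observes that the split-and-list stage of the Alman--Vassilevska Williams reduction already produces partitioned Exact-IP instances, and that Lemma~19 (and its hyperclique analogue behind Theorem~23) still works when the original partition is kept, with the number, dimension, and size of the output instances unchanged; your two-direction correctness check just spells out what the paper states in one line.
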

\begin{proof}
We use the reduction of Alman and Vassilevska Williams~\cite[Theorem~23 ("resp", Theorem~5)]{almanOVGraphsAre2020}.
For \cite[Theorem~5]{almanOVGraphsAre2020},
this follows by combining
the reduction from \SparseMAXxSATProblem{2} to multiple instances of the \emph{partitioned version} of \intro*\ExactIPTriangleProblem{} and
the reduction from the \emph{partitioned version} of \ExactIPTriangleProblem{}
to multiple instances of the \emph{non-partitioned version} of \OVTriangleProblem{} in \cite[Lemma~19]{almanOVGraphsAre2020}
(see \cite{almanOVGraphsAre2020} for the definition of \ExactIPTriangleProblem{}).
The reduction of \cite[Lemma~19]{almanOVGraphsAre2020}
works even if we keep the original partition in \ExactIPTriangleProblem{} (denoted by $A, B, C$ in \cite[Lemma~19]{almanOVGraphsAre2020}),
and hence it also gives a reduction to multiple instances of \PartitionedOVTriangleProblem{}.
In the same way,
the reduction of \cite[Theorem~23]{almanOVGraphsAre2020}
also works for the partitioned version of \intro*\ExactIPHyperCliqueProblem{} ("ie", the problem corresponding to \ExactIPTriangleProblem{} for $k > r \ge 3$), keeping the partition.
The number of output instances, their dimensions, and their sizes are the same as in the original reduction.
\end{proof}

\subsection{From \texorpdfstring{\PartitionedOVHyperCliqueProblem{n, d}}{Partitioned-OV-HyperClique} to \texorpdfstring{\OVHyperCliqueProblem{\asympO(n), d + \asympO(1)}}{OV-HyperClique}}

\begin{proposition}\label{proposition: (kr)-labelling}
Fix $k > r \ge 2$.
There exist $\const{enc}_1, \dots, \const{enc}_k \in \set{0,1}^{r \binom{k}{r}}$
such that for all $i_1, \dots, i_r \in \range{k}$, we have:
$\GIP{\const{enc}_{i_1}, \dots, \const{enc}_{i_r}} = 0$
"iff"
$i_1, \dots, i_r$ are pairwise distinct.
Hence, for every nonempty subset $S$ of $\range{k}$ with $r' \defeq \card{S} < r$,
$\GIP{\const{enc}_{S\ith{1}}, \dots, \const{enc}_{S\ith{r'}}} \neq 0$.
\end{proposition}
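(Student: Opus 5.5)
I would prove \Cref{proposition: (kr)-labelling} by an explicit block construction, with one block for each $r$-element subset of $\range{k}$. Enumerate $\binom{\range{k}}{r}$ as $S_1, \dots, S_{\binom{k}{r}}$. For each $i \in \range{k}$ and each $j$, define a block $\const{enc}_{i,j} \in \set{0,1}^r$ as follows.
\begin{itemize}
\item If $i \notin S_j$, the block is all zeros.
\item If $i$ is the $\ell$-th smallest element of $S_j$, the block has a $0$ in position $\ell$ and $1$s in all other positions.
\end{itemize}
Concatenating these blocks gives $\const{enc}_i \in \set{0,1}^{r\binom{k}{r}}$. Since all entries are non-negative, the generalized inner product splits as a sum over blocks. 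So $\GIP{\const{enc}_{i_1},\dots,\const{enc}_{i_r}} = 0$ iff every block contributes $0$.

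The key local fact concerns a single block $j$. I claim its contribution $\GIP{\const{enc}_{i_1,j},\dots,\const{enc}_{i_r,j}}$ is zero iff either $\set{i_1,\dots,i_r} \not\subseteq S_j$ or $\set{i_1,\dots,i_r} = S_j$.
\begin{itemize}
\item If some $i_t \notin S_j$, its block is all zeros, so the product vanishes in every coordinate.
\item If all $i_t$ lie in $S_j$, coordinate $\ell$ survives (product $1$) exactly when no $i_t$ is the $\ell$-th element of $S_j$. Hence the contribution is $r - \card{\set{i_1,\dots,i_r}}$, which is zero iff the set is all of $S_j$.
\end{itemize}

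Both directions of the main equivalence follow from this fact.
\begin{itemize}
\item Suppose $i_1,\dots,i_r$ are pairwise distinct. Their set has size $r$, so it either equals $S_j$ or is not contained in $S_j$. Every block then contributes zero.
\item Suppose they are not pairwise distinct. Their set has size $< r \le k$, so it extends to some $r$-subset $S_j$ strictly containing it. That block contributes a positive value.
\end{itemize}

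The second sentence of the statement is handled the same way. Take a nonempty $S$ with $\card{S} = r' < r$. Since $r' < r \le k$, choose $S_j \supsetneq S$ and a position $\ell$ whose element of $S_j$ is not in $S$. Every $\const{enc}_{i,j}$ with $i \in S$ has a $1$ at position $\ell$, so the product there is $1$ and the generalized inner product is nonzero. Alternatively, this follows from the first part by padding the tuple with repetitions, since $\GIP{\cdot}$ is invariant under repetitions; either route is immediate.

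The main obstacle is minor: getting the per-block characterization exactly right, in particular checking that a proper subset of $S_j$ is detected, i.e.\ yields a surviving all-ones coordinate. Once that is settled, the rest is bookkeeping over the blocks.
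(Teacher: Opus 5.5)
Your proposal is correct and follows essentially the same route as the paper: the same one-block-per-$r$-subset encoding, the same per-block characterization (block $j$ contributes zero iff the index set equals $S_j$ or is not contained in it), and the same arguments for both directions and for the second sentence. Your explicit count $r - \card{\set{i_1,\dots,i_r}}$ justifies the per-block fact, which the paper only asserts as an observation. The padding-by-repetition alternative for the second sentence is also valid.
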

\begin{proof}
We enumerate the $r$-element subsets of $\range{k}$ as $S_1, \dots, S_{\binom{k}{r}}$.
For each $i \in \range{k}$ and $j \in \range{\binom{k}{r}}$,
let $\const{enc}_{i, j} \in \set{0, 1}^{r}$ be the following "vector":
\[\const{enc}_{i, j} ~\defeq~
\begin{cases}
\quad\underbracket{1 \dots 1}_{\text{$\ell-1$ times}}0\underbracket{1 \dots 1}_{\text{$r-\ell$ times}}
& \text{ if the $\ell$-th smallest element of $S_{j}$ is $i$ for some $\ell$, and}\\
\quad\underbracket{0 \dots 0}_{\text{$r$ times}}
& \text{ otherwise ("ie", $i \not\in S_{j}$)}.
\end{cases}\]
We then let $\const{enc}_i \defeq \const{enc}_{i, 1} \const{enc}_{i, 2} \dots \const{enc}_{i, \binom{k}{r}}$.
Observe that:
\[\GIP{\const{enc}_{i_1, j}, \dots, \const{enc}_{i_r, j}} = 0 \quad\iff\quad \text{$\set{i_1, \dots, i_r} = S_j$ or $\set{i_1, \dots, i_r} \not\subseteq S_j$}.\]
Since $\GIP{\const{enc}_{i_1}, \dots, \const{enc}_{i_r}} = \sum_{j} \GIP{\const{enc}_{i_1, j}, \dots, \const{enc}_{i_r, j}}$ is a sum of non-negative integers,
it is $0$ "iff" every block $j$ has $\GIP{\const{enc}_{i_1, j}, \dots, \const{enc}_{i_r, j}} = 0$.
Using this, we obtain the desired equivalence:
\proofcase{($\Leftarrow$)}
Suppose that $i_1, \dots, i_r$ are pairwise distinct, "ie", $\card{\set{i_1, \dots, i_r}} = r$.
For each $j$, since $\card{S_j} = r$, if $S_j \neq \set{i_1, \dots, i_r}$, then $\set{i_1, \dots, i_r} \not\subseteq S_j$.
Hence every block $j$ has $\GIP{\const{enc}_{i_1, j}, \dots, \const{enc}_{i_r, j}} = 0$.
\proofcase{($\Rightarrow$)}
If $i_1, \dots, i_r$ are not pairwise distinct, then $\card{\set{i_1, \dots, i_r}} < r \le k$, so there is an $S_j$ such that $\set{i_1, \dots, i_r} \subsetneq S_j$.
Then this block $j$ has $\GIP{\const{enc}_{i_1, j}, \dots, \const{enc}_{i_r, j}} \neq 0$, and hence $\GIP{\const{enc}_{i_1}, \dots, \const{enc}_{i_r}} \neq 0$.

For the second sentence of the statement, let $S \subseteq \range{k}$ with $r' = \card{S} < r$.
As $r' < r \le k$, there is an $S_j$ such that $S \subsetneq S_j$.
Let $\ell$ be such that the $\ell$-th smallest element of $S_j$ is not in $S$.
Then, for every $i \in S$, the $\ell$-th bit of $\const{enc}_{i, j}$ is $1$;
hence $\GIP{\const{enc}_{S\ith{1}, j}, \dots, \const{enc}_{S\ith{r'}, j}} \ge 1$, and thus $\GIP{\const{enc}_{S\ith{1}}, \dots, \const{enc}_{S\ith{r'}}} \neq 0$.
\end{proof}

\begin{lemma}\label{lemma: non-partitioned binary hyperclique hypothesis}
Fix $k > r \ge 2$.
Then there exists an $\asympO(n (d + r \binom{k}{r}))$-time reduction
from an instance $(\mathcal{V}_1, \dots, \mathcal{V}_k)$ of the $(k,r)$-\PartitionedOVHyperCliqueProblem{n, d} to an instance $\mathcal{V}'$ of the $(k,r)$-$\OVHyperCliqueProblem{k n, d + r \binom{k}{r}}$ such that:
$\OVgraph{\mathcal{V}_1,\dots, \mathcal{V}_k, r}$ has some $(k, r)$-"hyperclique" "iff"
$\OVgraph{\mathcal{V}', r}$ has some $(k, r)$-"hyperclique".
\end{lemma}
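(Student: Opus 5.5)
The reduction tags each vector with its part index. Using the vectors $\const{enc}_1, \dots, \const{enc}_k \in \set{0,1}^{r\binom{k}{r}}$ from \Cref{proposition: (kr)-labelling}, set
\[
\mathcal{V}' \defeq \bigsqcup_{i=1}^{k} \set{ v\,\const{enc}_i \mid v \in \mathcal{V}_i },
\]
where $v\,\const{enc}_i$ denotes concatenation. Each vector thus has dimension $d + r\binom{k}{r}$. Vectors coming from different parts are distinct because their suffixes differ, and vectors from the same part are distinct because the $\mathcal{V}_i$ are sets. Hence $\card{\mathcal{V}'} = kn$, and writing out $\mathcal{V}'$ takes $\asympO(n(d + r\binom{k}{r}))$ time.

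The key observation is that the generalized inner product is additive over concatenated coordinates:
\[
\GIP{v_1\,\const{enc}_{i_1}, \dots, v_t\,\const{enc}_{i_t}} = \GIP{v_1,\dots,v_t} + \GIP{\const{enc}_{i_1},\dots,\const{enc}_{i_t}},
\]
and both summands are non-negative.

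First, $\mathcal{V}'$ is well-formed. Take a nonempty $S \subseteq \mathcal{V}'$ with $\card{S} = r' < r$. The labels $\const{enc}_i$ occurring in $S$ form a nonempty set of indices $T$ with $\card{T} \le r' < r$. By the second part of \Cref{proposition: (kr)-labelling}, the label part alone already has a nonzero generalized inner product. Repeated labels do not change the generalized inner product, since it is invariant under repetitions. So $\GIP{S} \neq 0$.

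Next, the forward direction. Suppose $v_1 \in \mathcal{V}_1, \dots, v_k \in \mathcal{V}_k$ form a hyperclique in the partite hypergraph; such a hyperclique necessarily takes one vector per part. Then for every $r$-subset $\set{i_1,\dots,i_r}$ both summands vanish: the $v$-part by assumption and the label part by \Cref{proposition: (kr)-labelling}, since the indices are distinct. So $\set{v_i\,\const{enc}_i}$ is a $(k,r)$-hyperclique in $\OVgraph{\mathcal{V}', r}$.

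Finally, the backward direction. Take a hyperclique $\set{v'_1\,\const{enc}_{i_1}, \dots, v'_k\,\const{enc}_{i_k}}$ in $\OVgraph{\mathcal{V}', r}$. Suppose two of its members share a label, say $i_a = i_b$. Since $k > r$, we can extend $\set{a,b}$ to an $r$-subset of positions. The corresponding $r$ members are distinct elements of the clique, so they must be generalized-orthogonal. But their labels are not pairwise distinct, so by \Cref{proposition: (kr)-labelling} the label summand is nonzero, a contradiction. Hence the $i_j$ are pairwise distinct, i.e., a permutation of $\range{k}$, and after reindexing the clique picks exactly one vector from each $\mathcal{V}_i$. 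Every $r$-subset then has a vanishing total sum, so its $v$-summand vanishes. This gives a $(k,r)$-hyperclique in the partite hypergraph, which is legitimate because it uses at most one vector per part.

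The step needing most care is this backward direction: one must rule out two clique members carrying the same label. The inequality $k > r$ is what allows a colliding pair to be embedded into an $r$-subset. Everything else is bookkeeping.
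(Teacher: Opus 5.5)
Your proof is correct and follows the paper's argument: the same tagging $\mathcal{V}' = \bigsqcup_{i} \set{v\,\const{enc}_i \mid v \in \mathcal{V}_i}$, with \Cref{proposition: (kr)-labelling} supplying well-formedness and both directions of the equivalence. You make explicit two steps the paper states tersely: the additivity of the generalized inner product over concatenated coordinates, and the use of $k > r$ to embed a label collision into an $r$-subset in the backward direction.
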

\begin{proof}
Let $\const{enc}_1, \dots, \const{enc}_k \in \set{0, 1}^{r \binom{k}{r}}$ be the "vectors"
from \Cref{proposition: (kr)-labelling}.
Given an instance $\tuple{\mathcal{V}_1, \dots, \mathcal{V}_k}$,
where each $\mathcal{V}_i$ is an $n$-element set of "vectors" in $\set{0, 1}^{d}$,
let $\mathcal{V}' \defeq \bigsqcup_{i = 1}^{k} \set{v \ \const{enc}_{i} \mid v \in \mathcal{V}_i}$.
By \Cref{proposition: (kr)-labelling}, $\mathcal{V}'$ is indeed "well-formed" (several "vectors" may share a label, but repeated $0/1$ "vectors" do not change the "generalized inner product").
Also, $\mathcal{V}'$ can be computed in $\asympO(n (d + r \binom{k}{r}))$ time.
We show that $\OVgraph{\mathcal{V}_1,\dots, \mathcal{V}_k, r}$ has a $(k, r)$-"hyperclique" "iff"
$\OVgraph{\mathcal{V}', r}$ has a $(k, r)$-"hyperclique".
\proofcase{($\Rightarrow$)}
Suppose $\OVgraph{\mathcal{V}_1, \dots, \mathcal{V}_k, r}$ has a $(k, r)$-"hyperclique" $v_1 \in \mathcal{V}_1, \dots, v_k \in \mathcal{V}_k$.
Then $v_1 \const{enc}_1, \dots, v_k \const{enc}_k$ form a $(k, r)$-"hyperclique" in $\OVgraph{\mathcal{V}', r}$,
since $\GIP{\const{enc}_{i_1}, \dots, \const{enc}_{i_r}} = 0$ for every $r$-element subset $\set{i_1, \dots, i_r}$ of $\range{k}$, by the property of \Cref{proposition: (kr)-labelling}.
\proofcase{($\Leftarrow$)}
Suppose $\OVgraph{\mathcal{V}', r}$ has a $(k, r)$-"hyperclique" $S$.
By \Cref{proposition: (kr)-labelling},
$S$ consists of $v_1' \const{enc}_{i_1}, \dots, v_k' \const{enc}_{i_k}$ such that
$i_1, \dots, i_k$ are pairwise distinct.
Hence, $S$ consists of $v_1 \const{enc}_1, \dots, v_k \const{enc}_k$.
Then $v_1 \in \mathcal{V}_1, \dots, v_k \in \mathcal{V}_k$ form a $(k, r)$-"hyperclique" in $\OVgraph{\mathcal{V}_1,\dots, \mathcal{V}_k, r}$.
This completes the proof.
\end{proof}
Using this reduction, we have the following.
\begin{proposition}\label{proposition: partitioned OV-hyperclique hypothesis}
Fix $k > r \ge 3$.
The $(k, r)$-\PartitionedOVHyperCliqueHypothesis{} implies the $(k, r)$-\OVHyperCliqueHypothesis{}.
Similarly, the \PartitionedOVTriangleHypothesis{} implies the \OVTriangleHypothesis{}.
\end{proposition}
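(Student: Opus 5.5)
We argue by contraposition: an algorithm for the non-partitioned problem breaks the partitioned hypothesis, using the reduction of \Cref{lemma: non-partitioned binary hyperclique hypothesis}. Suppose the $(k,r)$-\OVHyperCliqueHypothesis{} fails. Then there is some $\eps > 0$ such that for every $c > 0$ the $(k,r)$-\OVHyperCliqueProblem{n, \le c \log n} is solvable in $\asympO(n^{k-\eps})$ time. We will show that for every $c' > 0$ the $(k,r)$-\PartitionedOVHyperCliqueProblem{n, \le c' \log n} is solvable in $\asympO(n^{k-\eps})$ time, contradicting the partitioned hypothesis for this $\eps$. We may shrink $\eps$ so that $k - \eps > 1$; this keeps the assumed algorithm valid.

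Fix $c' > 0$ and pick any $c > c'$. Given a partitioned instance $(\mathcal{V}_1,\dots,\mathcal{V}_k)$ with $d \le c'\log n$, apply \Cref{lemma: non-partitioned binary hyperclique hypothesis}. It produces, in $\asympO(n(d + r\binom{k}{r})) = \asympO(n\log n)$ time, a well-formed instance $\mathcal{V}'$ with $kn$ vectors of dimension $d' = d + r\binom{k}{r}$, and it preserves the existence of a $(k,r)$-hyperclique.

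The one point to check is that $\mathcal{V}'$ lies in the regime covered by the assumed algorithm. We need $d' \le c \log(kn)$. Since $r\binom{k}{r}$ is a constant and $c > c'$, we have $c'\log n + r\binom{k}{r} \le c\log n \le c \log(kn)$ for all sufficiently large $n$. Small $n$ are handled in constant time.

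Running the assumed algorithm on $\mathcal{V}'$ then takes $\asympO((kn)^{k-\eps}) = \asympO_k(n^{k-\eps})$ time. The reduction cost $\asympO(n\log n)$ is dominated by this because $k - \eps > 1$, so the total is $\asympO(n^{k-\eps})$. This gives the contradiction.

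The triangle case, $(k,r) = (3,2)$, is identical with the exponent $k - \eps$ replaced by $\matrixexponent - \eps$. We need $\matrixexponent - \eps > 1$, which holds since $\matrixexponent \ge 2$. \Cref{lemma: non-partitioned binary hyperclique hypothesis} is stated for $k > r \ge 2$, so it applies to $k = 3$, $r = 2$.

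I expect no real obstacle. The only delicate point is the order of quantifiers: $\eps$ is fixed first by the failing hypothesis, and then the constant $c$ is chosen after $c'$ so that the additive $\asympO(1)$ growth in dimension is absorbed.
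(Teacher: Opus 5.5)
Your proposal is correct and follows the paper's proof essentially step for step. Both argue by contraposition with the same $\eps$, shrunk so that the exponent exceeds $1$. Both choose $c > c'$ so that the constant padding $r\binom{k}{r}$ from \Cref{lemma: non-partitioned binary hyperclique hypothesis} is absorbed for large $n$. Both then bound the total time by $\asympO(n\log n) + \asympO((kn)^{k-\eps}) = \asympO(n^{k-\eps})$. The triangle case is likewise handled identically. One small correction there: $\matrixexponent - \eps > 1$ needs you to shrink $\eps$ below $1$, which your argument allows; $\matrixexponent \ge 2$ alone does not give it.
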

\begin{proof}
We show the contrapositive, that is, if
there is some $\eps > 0$ such that for all $c > 0$,
the $(k, r)$-\OVHyperCliqueProblem{n, \le c \log n} can be solved in $\asympO(n^{k - \eps})$ time, then
for all $c' > 0$,
the $(k, r)$-\PartitionedOVHyperCliqueProblem{n, \le c' \log n} can be solved in $\asympO(n^{k - \eps})$ time.
Let $\eps > 0$ witness this. We may assume $k - \eps > 1$.
Fix any $c' > 0$ and choose $c$ so that $c > c'$.
By $c > c'$, for all sufficiently large $n$, we have $c' \log n + r \binom{k}{r} \le c \log n$.
By \Cref{lemma: non-partitioned binary hyperclique hypothesis},
the $(k, r)$-\PartitionedOVHyperCliqueProblem{n, \le c' \log n} reduces to
the $(k, r)$-\OVHyperCliqueProblem{k n, \le c \log (k n)};
hence, by the assumption, it can be solved in
time $\asympO(n (c' \log n + r \binom{k}{r})) + \asympO((k n)^{k - \eps}) \le \asympO(n^{k - \eps})$.
The case of the \OVTriangleHypothesis{} is shown in the same way with $(k, r) = (3, 2)$, replacing the exponent $k - \eps$ with $\matrixexponent - \eps$;
note that \Cref{lemma: non-partitioned binary hyperclique hypothesis} works for $k = 3$ and $r = 2$ as well.
\end{proof}
\Cref{proposition: alman VVW} follows from \Cref{proposition: alman VVW',proposition: partitioned OV-hyperclique hypothesis}.

\clearpage
\mbox{}
\clearpage
\section*{Additional Appendices}
\section*{Appendices to \Cref{sec: preliminaries}}

\section{Proof of \Cref{proposition: isolated}}\label{section: proposition: isolated}

\begin{proposition*}[Restatement of \Cref{proposition: isolated}]
\propositionisolated
\end{proposition*}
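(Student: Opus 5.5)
The plan is to keep all non-isolated elements and replace the set of isolated elements by a set of at most $k$ representatives. Then we argue that first-order logic with at most $k$ variables cannot tell the two structures apart. Let $N \subseteq \univ{\struc}$ be the set of elements that occur in some tuple of some relation in $\fmlsig{\fml}$, and let $J \defeq \univ{\struc} \setminus N$ be the isolated elements. Since we restrict to $\fmlsig{\fml}$, we have $\card{N} \le \fmlarity{\fml}\,\strucm{\struc}$. We define $\struc[2]$ with universe $N \sqcup J'$, where $J' \subseteq J$ is any subset of size $\min(\card{J}, k)$, and with the same relations, since every tuple lies in $N$. If this universe would be empty, which happens only when $\card{\univ{\struc}} \ge 1$ and $N = \emptyset$, the case $\card{J'} \ge 1$ already handles it. This gives $\strucn{\struc[2]} \le \fmlarity{\fml}\,\strucm{\struc} + k$.

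For the running time, we scan all tuples once and mark the elements of $N$, which costs $\asympO(\max(\fmlarity{\fml},1)\,\strucm{\struc})$. We do not enumerate $J$ explicitly. The input gives $\strucn{\struc}$, so we may take the elements to be $\range{\strucn{\struc}}$, and we find up to $k$ unmarked elements by probing. To keep the time bounded independently of $\strucn{\struc}$, we can compute $\card{J} = \strucn{\struc} - \card{N}$ arithmetically and introduce $\min(\card{J},k)$ fresh names instead. We also relabel $N$ compactly (via hashing or radix sort over the $\asympO(\fmlarity{\fml}\,m)$ occurrences) so that the output has size $\asympO_k(\max(\fmlarity{\fml},1)\,m)$.

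For correctness, we prove the following by induction on subformulas $\fml[2]$ of $\fml$ with $\vars \subseteq \set{x_1,\dots,x_k}$. For every valuation $I$ into $\univ{\struc}$ and every valuation $I'$ into $\univ{\struc[2]}$ such that $I(x)=I'(x)$ whenever either value lies in $N$, and such that the map $I(x)\mapsto I'(x)$ on isolated values is a well-defined bijection preserving equalities among the $x_i$, we have $\struc,I \models \fml[2]$ iff $\struc[2],I' \models \fml[2]$. Atoms and equalities are immediate, because relation tuples lie in $N$ and the correspondence on $J$ is injective. Boolean connectives are trivial. The main step is the quantifier case $\exists x_i$. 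A witness in $N$ is copied unchanged. A witness in $J$ either equals the value of some other variable $x_j$, in which case we use its image, or it is fresh relative to the at most $k-1$ other variables. In the fresh case we need an unused element of $J'$, which exists because $\card{J'} \ge \min(\card{J},k)$ and at most $k-1$ elements are occupied. The backward direction is symmetric, using $\card{J} \ge \card{J'}$.

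This pigeonhole count, ensuring that a fresh isolated element is always available on both sides, is the only delicate point. Using exactly $k$ representatives is what makes it work, which is why the bound has the additive $+k$. For a sentence we start from the empty valuation, which vacuously satisfies the invariant, and conclude $\struc \models \fml$ iff $\struc[2] \models \fml$.
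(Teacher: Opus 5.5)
Your construction and correctness argument are the paper's. You keep all non-isolated elements plus $\min(k,|J|)$ isolated ones, and a $k$-pebble / partial-isomorphism induction shows that $\mathrm{FO}^{k}$ cannot distinguish the two structures. Your explicit invariant is a faithful expansion of the paper's ``straightforward induction (cf.\ pebble games)''.

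One small repair is needed in the quantifier step. ``At most $k-1$ elements are occupied'' yields a free element of $J'$ only when $|J|\ge k$. When $|J|<k$ you should use the bijection instead: it bounds the occupied elements of $J'$ by $|J|-1<|J'|$. Alternatively, observe that then $J'=J$ and nothing changes.

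The real gap is in the running-time argument. The bound $O_k(\max(\mathrm{ar},1)\,m)$ must not depend on $n$, and $n$ can be arbitrarily larger than $m$; this is exactly the situation with many isolated elements.
\begin{itemize}
\item Your marking step implicitly uses an array indexed by $[n]$, and initializing it alone costs time growing with $n$.
\item Hashing gives the bound only in expectation.
\item Radix sort of $M=O(\mathrm{ar}\cdot m)$ keys drawn from $[n]$ needs $\lceil\log n/\log M\rceil$ passes with base $\Theta(M)$. That is not $O(M)$ when $n$ is superpolynomial in $m$.
\end{itemize}
So neither of your compaction tools gives the claimed bound deterministically.

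The paper resolves this with an \emph{uninitialized} array of size $n$, using the standard lazy-initialization trick with back-pointers. This gives deterministic $O(1)$ membership tests and id assignment. With it, the following all fit in $O_k(\max(\mathrm{ar},1)\,m)$ time as claimed: marking $N$, computing $|N|$, picking up to $k$ unmarked elements (at most $|N|+k$ probes), and coordinate-compressing all tuples.
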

\begin{proof}
Let $Z$ be the set of isolated "elements" in $\struc$,
"ie", the set of "elements" that occur in no tuple of any input "relation".
Choose a subset $Z_0 \subseteq Z$ of size $\min(k,\card{Z})$.
Let $\struc[2]$ be the substructure of $\struc$ induced by
\((\univ{\struc}\setminus Z)\cup Z_0\).
A straightforward induction on \(\fml\) then shows that this preserves the truth of $\fml$ ("cf" pebble games~\cite[\S 3.3.1]{ebbinghausFiniteModelTheory1995}).
This runs in $\asympO_{k}(\max(\fmlarity{\fml}, 1) \, \strucm{\struc})$ time using a standard coordinate compression (with an uninitialized array of size $\strucn{\struc}$).
\end{proof}

\clearpage
\section*{Appendices to \Cref{section: dichotomy sparse}}

\section{Explicit Proof of Theorem~\ref{theorem: dichotomy sparse}}\label{section: proof of theorem: dichotomy sparse}
In this section, we give an explicit proof of \Cref{theorem: dichotomy sparse}.
This is a direct consequence of combining the theorems presented in \Cref{section: dichotomy sparse}.

\begin{proof}[Proof of \Cref{theorem: dichotomy sparse}]
\proofcase{For Tractable Cases}
If $\sigarity{\sig}\le 1$, the claim follows from \Cref{theorem: monadic}, since $n \le \asympO(m)$ after removing isolated "elements" (\Cref{proposition: isolated}).
If $k=2$ and $\sigarity{\sig}\ge2$, it follows from \Cref{theorem: fo2}.
If $k=3$ and $\sigarity{\sig}\ge2$, it follows from \Cref{theorem: fo3}.
These give $m^{k-\eps}$ bounds with $\eps = k-1$ ($\ge 1$, as $k \ge 2$), $\eps=1$, and $\eps = 3-\matrixexponent$, respectively.
\proofcase{For Hard Cases}
It remains to consider $k\ge4$ and $\sigarity{\sig}\ge2$.
When $\sigarity{\sig} = 2$, it follows from \Cref{theorem:sparse-k4} (note that the \SparseMAXxSATHypothesis{3} implies the \SparseMAXxSATHypothesis{(k-1)}).
When $\sigarity{\sig} > 2$, it follows from \Cref{proposition: arity monotone} and the case of $\sigarity{\sig} = 2$.
\end{proof}

\section{Supplement to Footnote~\ref{footnote:k=1}: On the Case \texorpdfstring{$k = 1$}{k = 1}}\label{section: footnote:k=1}
When $k = 1$, except for the case where $\sigarity{\sig} = 1$ and $\card{\sig} = 1$, $\MC{\kPP{1}\sigres{\sig}}$ and $\MC{\kFO{1}\sigres{\sig}}$ have the same complexity.
If $\sigarity{\sig} = 0$, both are solvable in $\asympO(\len{\fml})$ time (\Cref{theorem: monadic}).
Otherwise, both are solvable in $2^{\asympO(\len{\fml})} \cdot m$ time (\Cref{theorem: monadic,lemma: projection}),
and this is optimal in $m$, as neither can be solved in $\asympO(m^{1 - \eps})$ time for any $\eps > 0$:
\begin{itemize}
    \item for the case $\sigarity{\sig} \ge 2$, by the diagonal query $\exists x.\ R(x, \dots, x)$,
    \item for the case $\sigarity{\sig} = 1$ and $\card{\sig} \ge 2$, by the array-intersection query $\exists x.\ U(x) \land U'(x)$.
\end{itemize}

When $\sigarity{\sig} = 1$ and $\card{\sig} = 1$, $\MC{\kPP{1}\sigres{\sig}}$ is solvable in $\asympO(\len{\fml})$ time by pushing quantifiers inwards and evaluating $\exists x. U(x)$ by checking whether $U$ is nonempty in $\asympO(1)$ time.
(Here, we recall that the input is given by a list of tuples, and hence the nonemptiness of $U$ can be checked in $\asympO(1)$ time by checking whether the list of tuples for $U$ is empty.)
However, the array-non-universal query $\exists x.\ \lnot U(x)$ cannot be solved in $\asympO(m^{1-\eps})$ time for any $\eps > 0$,
provided that the list of tuples for $U$ may contain duplicates or its length is not given explicitly.
 
\clearpage

\section{Monadic \texorpdfstring{$\kFO{k}$}{FOk} Model Checking via Monadic \texorpdfstring{$\kC{1}$}{C1}:
Proof of \texorpdfstring{\Cref{theorem: monadic}}{Theorem~\ref{theorem: monadic}}}
\label{section: monadic}

In this section, we present a $2^{\asympO(\len{\fml})}$-time translation (for fixed $k$) from $\kFO{k}\sigres{(\ordinalomega)}$
to "DAG-represented" $\kC{1}\sigres{(\ordinalomega)}$ (\Cref{lemma: monadic FOEk translation}),
by removing "equality" using "counting quantifiers".
Below, we assume $\sigarity{\sig} \le 1$.

\subsection{Monadic \texorpdfstring{$\kC{1}$}{C1}: definition and overview of translation}
We recall the one-variable counting logic $\kC{1}\sigres{(\ordinalomega)}$,
which is obtained by augmenting $\kFO{1}\sigres{(\ordinalomega)}$ with ""counting quantifiers"" ``$\exists^{\ge d}$'' and ``$\exists^{= d}$'' for integers $d \ge 0$;
see, "eg", \cite{immermanDescribingGraphsFirstOrder1990}.

\begin{definition}[$\kC{1}$ Formula]\label{definition: C}
\AP
We write $\intro*\kC{1}\sigres{\sig}$ (or simply $\kC{1}$) for the set of "formulas" (without "equality"), given by the following grammar:
\begin{align*}
\fml[1], \fml[2], \fml[3] \Coloneqq R(x)
\mid \fmltrue \mid \fmlfalse
\mid \fml[1] \land \fml[2]
\mid \fml[1] \lor \fml[2]
\mid \lnot \fml[1]
\mid \exists^{\ge d} x. \fml[1]
\mid \exists^{= d} x. \fml[1],
\end{align*}
where $d \ge 0$ is an integer and $R$ is a unary "relation symbol" in $\sig$.
The semantics of the existential \reintro*\kl{counting quantifiers} ``$\exists^{\ge d}$'' and ``$\exists^{= d}$'' are defined as follows:
\begin{align*}
\tuple{\struc, I} \models \exists^{\bigstar d} x. \fml[1] &\hspace{1.5em}\text{"iff"}\hspace{1.5em}
\bigl|\,\set{a \in \univ{\struc} \mid \tuple{\struc, I[x \mapsto a]} \models \fml[1]} \,\bigr| \mathrel{\bigstar} d \tag*{ for $\bigstar \in \{\ge, =\}$.}
\end{align*}
\end{definition}

$\MC{\kC{1}\sigres{\sig}}$ denotes the "model-checking problem" for $\kC{1}\sigres{\sig}$-"sentences":
given a (possibly "DAG-represented") $\kC{1}\sigres{\sig}$-"sentence" $\fml$ and a "structure" $\struc$, decide whether $\struc \models \fml$.
Because $\kC{1}$ is a one-variable logic, we can evaluate $\kC{1}$ sentences in linear time as follows.
\begin{proposition}\label{proposition: C1 eval}
$\MC{\kC{1}\sigres{(\ordinalomega)}}$ is solvable in $\asympO(\len{\fml}) \cdot n$ time.
\end{proposition}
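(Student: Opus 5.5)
The plan is a bottom-up dynamic programming over the DAG of $\fml$. Because $\kC{1}$ uses one variable, every subformula has at most one free variable, and we may take it to be $x$ after renaming. The only atoms are unary $R(x)$, $\fmltrue$, and $\fmlfalse$; equality is excluded. So each subformula $\fml[2]$ either has no free variable, in which case its value is a single bit, or has free variable $x$, in which case its value is the set $\sem{\fml[2]}^{\struc} \subseteq \univ{\struc}$. We store the latter as a Boolean array of length $n$. We process the DAG vertices in topological order, so that each shared subformula is computed once and memoized. The running time is then the number of DAG edges times the per-vertex cost, and the task reduces to showing that each vertex costs $\asympO(n)$ per outgoing edge.

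The steps, in order, are as follows.

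\begin{enumerate}
\item \textbf{Atoms.} For $R(x)$, initialise a zero array of length $n$ and set the entry for each tuple in the list $R^{\struc}$. This takes $\asympO(n + \card{R^{\struc}}) \le \asympO(n + m)$ time. In the dense input format it is a direct copy taking $\asympO(n)$.
\item \textbf{Constants.} $\fmltrue$ and $\fmlfalse$ are either constant bits or all-ones/all-zeros arrays, each in $\asympO(n)$.
\item \textbf{Boolean connectives.} $\land$, $\lor$, and $\lnot$ act pointwise on arrays. If one operand is closed, we broadcast its bit. Each costs $\asympO(n)$.
\item \textbf{Counting quantifiers.} For $\exists^{\bigstar d} x.\,\fml[1]$, count the ones in the array of $\fml[1]$ in $\asympO(n)$, or treat a closed $\fml[1]$ as contributing $n$ or $0$. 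Compare the count with $d$ to obtain a closed bit, which is broadcast if needed by a parent.
\end{enumerate}

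Summing over DAG vertices, each contributes $\asympO(n)$ times its number of children. This gives $\asympO(\len{\fml}) \cdot n$, plus one read of the relations, which is within the bound whenever $m \le \asympO(\len{\fml}\cdot n)$.

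The main obstacle I expect is making that last accounting honest. I would argue it in two ways. First, the relations of symbols outside $\fmlsig{\fml}$ are discarded by the input convention. Second, for a unary relation the list $R^{\struc}$, as a set, has at most $n$ distinct tuples, so $m \le \card{\fmlsig{\fml}}\,n \le \len{\fml}\,n$. If duplicates in the lists were a concern, setting array bits is idempotent, so each tuple can be charged once; I would add a remark that lists are sets.

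A second subtlety is the comparison with $d$. The integer $d$ may be larger than a machine word, so I would compare $\min(\text{count}, n+1)$ with $d$. Since $d$ is written in $\fml$, this costs $\asympO(\len{\fml})$ over all vertices in total, which preserves the bound.
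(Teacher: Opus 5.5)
Your proposal is correct and is essentially the paper's own proof: a bottom-up evaluation over the (DAG) formula that stores each subformula's value as a length-$n$ Boolean array and evaluates each counting quantifier by a single $\asympO(n)$ pass over the domain. Your extra accounting for reading the input lists ($\card{R^{\struc}} \le n$ for each unary $R \in \fmlsig{\fml}$) and for the size of $d$ just spells out what the paper leaves implicit; one small imprecision is that idempotent bit-setting does not avoid the cost of \emph{reading} duplicate list entries, so what actually secures the bound is the convention you also invoke, namely that the lists enumerate the sets $R^{\struc}$ without repetition.
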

\begin{proof}
Use the same straightforward bottom-up evaluation as for $\kFO{k}$; "cf" \cite[Proposition 3.1]{vardiComplexityBoundedvariableQueries1995}.
"Counting quantifiers" can be evaluated in $\asympO(n)$ time by a single pass over the domain.
\end{proof}
Based on the translation of \cite[Theorem 5]{mintsLogicalEquationsMonadic2009},
we can translate $\kFO{k}\sigres{(\ordinalomega)}$-"sentences" into "semantically equivalent" $\kC{1}\sigres{(\ordinalomega)}$-"sentences".
The essence is that "equality" can be eliminated using "counting quantifiers" by the following equivalence:
\[
\exists x.~ (y \fmlneq x \land \fml[3](x))
\quad \leftrightarrow \quad
(\exists^{\ge 2} x. \fml[3](x)) \lor (\lnot \fml[3](y) \land \exists x. \fml[3](x)).
\]
Observe that a naive translation based on this equivalence may blow up the "size" exponentially, even for "DAG-represented" "formulas", because $\fml[3]$ is duplicated as $\fml[3](x)$ and $\fml[3](y)$ in a single step.
Hence, a naive translation gives only an upper bound of $2^{\asympO(\len{\fml}^2)}$ time.
Below, we present an inductive translation using "types".

\begin{lemma}[\Cref{section: C1 DAG}]\label{lemma: monadic FOEk translation}
\gdef\lemmamonadicFOEktranslation{
Given an $\kFO{k}\sigres{(C)}$-"sentence" $\fml$,
we can construct, in time $\len{\fml} \cdot k^{\asympO(k)} \cdot 2^{\asympO(C k)}$, a ("DAG-represented") $\kC{1}\sigres{(C)}$-"sentence" $\fml[2]$
such that
\begin{itemize}
    \item for every structure $\struc$,
$\struc \models \fml$ "iff" $\struc \models \fml[2]$;
    \item the maximum number $d$ in "counting quantifiers" $\exists^{\ge d}$ or $\exists^{= d}$ appearing in $\fml[2]$ is at most $k$.
\end{itemize}
}
\lemmamonadicFOEktranslation
\end{lemma}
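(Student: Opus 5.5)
The plan is to avoid the syntactic duplication of the naive equality-elimination step entirely. I would translate each subformula into a \emph{family of $\kC{1}$-sentences}, one for every possible \emph{local configuration} of its free "variables". Because all translated pieces are "sentences" (no free "variables"), they can be freely shared in a "DAG", and the blowup becomes multiplicative in the number of configurations rather than exponential in nesting depth.

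\textbf{Configurations.} Fix the "variables" $x_1, \dots, x_k$ and let $\sig = \fmlsig{\fml}$ with $\card{\sig} = C$. A \emph{type} $\tau \subseteq \sig$ is described by the quantifier-free one-variable "formula" $\tau(x) \defeq \bigwedge_{U \in \tau} U(x) \land \bigwedge_{U \notin \tau} \lnot U(x)$, of "size" $\asympO(C)$; there are $2^C$ types. For a finite set $X \subseteq \set{x_1, \dots, x_k}$, a configuration $\sigma$ on $X$ consists of two pieces of data:
\begin{itemize}
\item a partition $\pi$ of $X$, recording which "variables" are equal;
\item an assignment of a type to each block of $\pi$.
\end{itemize}
A pair $(\struc, I)$ \emph{realizes} $\sigma$ if $I$ induces exactly this equality pattern and these types on $X$. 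There are at most $k^{\asympO(k)} \cdot 2^{Ck}$ configurations.

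\textbf{Invariant.} By induction on "subformulas" $\fml[3]$ of $\fml$, I would construct, for every configuration $\sigma$ on $\FV(\fml[3])$, a $\kC{1}$-"sentence" $\theta_{\fml[3],\sigma}$ such that for every $(\struc, I)$ realizing $\sigma$,
\[
\struc, I \models \fml[3] \iff \struc \models \theta_{\fml[3],\sigma}.
\]
The cases are as follows.
\begin{itemize}
\item \emph{Atoms.} For $U(x_i)$ and $x_i = x_j$, the sentence $\theta$ is $\fmltrue$ or $\fmlfalse$, read off directly from $\sigma$.
\item \emph{Boolean connectives.} These commute with the translation. For $\fml[3]_1 \land \fml[3]_2$, I use the restrictions of $\sigma$ to $\FV(\fml[3]_1)$ and $\FV(\fml[3]_2)$.
\item \emph{Universal quantifiers.} These are handled by duality, $\forall x.\fml[3] \equiv \lnot \exists x. \lnot \fml[3]$.
\end{itemize}

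\textbf{Key step: $\exists x_i.\fml[3]$.} Let $\sigma'$ be a configuration on $X' = \FV(\fml[3]) \setminus \set{x_i}$. Every witness for $x_i$ extends $\sigma'$ to a configuration on $X' \cup \set{x_i}$ (restricted to $\FV(\fml[3])$) in one of two ways:
\begin{itemize}
\item $x_i$ joins one of the at most $k$ blocks of $\sigma'$;
\item $x_i$ is a fresh "element" of some type $\tau$.
\end{itemize}
The second option is possible exactly when $\struc$ has more than $c_\tau$ "elements" of type $\tau$, where $c_\tau \le k$ is the number of blocks of $\sigma'$ carrying type $\tau$. Writing $\sigma' \oplus_j$ for the first kind of extension and $\sigma' \oplus \tau$ for the second, I set
\[
\theta_{\exists x_i.\fml[3],\sigma'} \defeq \bigvee_{j} \theta_{\fml[3],\, \sigma' \oplus_j} \;\lor\; \bigvee_{\tau} \Bigl( \exists^{\ge c_\tau + 1} x.\, \tau(x) \land \theta_{\fml[3],\, \sigma' \oplus \tau} \Bigr).
\]
When $x_i \notin \FV(\fml[3])$, this collapses to $\theta_{\fml[3],\sigma'}$ conjoined with $\exists x.\fmltrue$, which holds because the "universe" is nonempty.

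This case is where I expect the main obstacle: proving correctness carefully while the free-variable sets change (restriction and extension of configurations), and checking that every extension is realized precisely when the stated count condition holds. The count condition is where "equality" is genuinely replaced by counting: a fresh "element" of type $\tau$ exists iff the number of $\tau$-elements exceeds the $c_\tau$ "elements" already named by blocks of $\sigma'$.

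\textbf{Conclusion and complexity.} The output is $\fml[2] \defeq \theta_{\fml,\emptyset}$, and correctness follows from the invariant applied to the empty configuration. Every counting threshold is of the form $c_\tau + 1 \le k$, so $d \le k$ as required. For the cost, each pair (subformula, configuration) contributes one "DAG" node whose children are memoized $\theta$'s plus the $\asympO(k + 2^C)$ shared gadgets $\exists^{\ge c} x.\tau(x)$. Hence the total "size" and construction time are
\[
\len{\fml} \cdot k^{\asympO(k)} \cdot 2^{\asympO(Ck)},
\]
as claimed.
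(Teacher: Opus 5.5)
Your proof is correct and is essentially the paper's argument. Both are a translation indexed by atomic types (an equality pattern plus $1$-types of the blocks), where the existential case is a disjunction over the one-variable extensions of the type, guarded by counting quantifiers, and the result is memoized per (subformula, type) pair in a DAG. The differences are in bookkeeping: you index by configurations on the free variables and use the relative threshold $\exists^{\ge c_\tau+1}x.\,\tau(x)$ for a fresh witness (the bound you need is $c_\tau\le k-1$, which holds because $\sigma'$ lives on at most $k-1$ variables, and this is what yields $d\le k$), whereas the paper indexes by full $k$-types and conjoins absolute realizability sentences $\Realizable_{\beta'}$, so it needs the extension property (\Cref{proposition: invariant lemma in C1 existential}) in the $(\Rightarrow)$ direction and shared group vertices to keep the DAG small, both of which your free-variable indexing avoids.
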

Combining \Cref{lemma: monadic FOEk translation,proposition: C1 eval} yields the following.
\begin{theorem} \label{theorem:formula-1single monadic model checking}
$\MC{\kFO{k}\sigres{(C)}}$
is solvable in $\len{\fml} \cdot k^{\asympO(k)} \cdot 2^{\asympO(C k)} \cdot n$ time.
If $k$ is fixed, then $\MC{\kFO{k}\sigres{(\ordinalomega)}}$ is solvable in $2^{\asympO(\len{\fml})} \cdot n$ time,
by applying the above bound with $C = \card{\fmlsig{\fml}} \le \len{\fml}$.
Moreover, if $k$ and $C$ are fixed,
then it is solvable in $\asympO_{k,C}(\len{\fml} \cdot n)$ time.
\end{theorem}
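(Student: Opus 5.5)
This follows by composing \Cref{lemma: monadic FOEk translation} with \Cref{proposition: C1 eval}; all that remains is to track the parameters.

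\textbf{Step 1: the general bound.} Given an $\kFO{k}\sigres{(C)}$-"sentence" $\fml$ and a "structure" $\struc$, apply \Cref{lemma: monadic FOEk translation}. This produces, in time $\len{\fml} \cdot k^{\asympO(k)} \cdot 2^{\asympO(Ck)}$, a "DAG-represented" $\kC{1}\sigres{(C)}$-"sentence" $\fml[2]$ with $\struc \models \fml$ "iff" $\struc \models \fml[2]$. Its "size" is bounded by the running time, so $\len{\fml[2]} \le \len{\fml} \cdot k^{\asympO(k)} \cdot 2^{\asympO(Ck)}$.

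We then run the bottom-up evaluator of \Cref{proposition: C1 eval} on $\fml[2]$ and $\struc$, which takes $\asympO(\len{\fml[2]}) \cdot n$ time. One point needs checking: each "counting quantifier" must be evaluable in $\asympO(n)$ time on the word-RAM. This holds because the thresholds $d$ satisfy $d \le k$ by the lemma, so comparing a count (at most $n$) against $d$ is constant time. For a DAG, we compute one Boolean array of length $n$ per node, so shared subformulas are evaluated only once. Adding the translation time and the evaluation time gives $\len{\fml} \cdot k^{\asympO(k)} \cdot 2^{\asympO(Ck)} \cdot n$ in total, since $n \ge 1$.

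\textbf{Step 2: fixed $k$.} For fixed $k$ and arbitrary unary "signature", take $C = \card{\fmlsig{\fml}}$; recall the convention that the input is over its active "signature". Every "relation symbol" in $\fmlsig{\fml}$ occurs in $\fml$, so $C \le \len{\fml}$. With $k$ fixed, $k^{\asympO(k)} = \asympO(1)$ and $2^{\asympO(Ck)} = 2^{\asympO(\len{\fml})}$. Also $\len{\fml} \le 2^{\len{\fml}}$, so the total is $2^{\asympO(\len{\fml})} \cdot n$.

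\textbf{Step 3: fixed $k$ and $C$.} If both $k$ and $C$ are fixed, the factors $k^{\asympO(k)} \cdot 2^{\asympO(Ck)}$ are constants, leaving $\asympO_{k,C}(\len{\fml} \cdot n)$.

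\textbf{Main obstacle.} There is essentially no obstacle, since the substantive work is in \Cref{lemma: monadic FOEk translation}. The one subtle point is that the bound on $\len{\fml[2]}$ must come from the translation's running time, and must not be a separately stated size bound. The "DAG-represented" $\fml[2]$ has to be evaluated without unfolding it into a tree; otherwise the size could blow up further.
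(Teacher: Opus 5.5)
Your proposal is correct and follows the paper's own argument: the paper also obtains this theorem simply by composing \Cref{lemma: monadic FOEk translation} with \Cref{proposition: C1 eval}. Your parameter bookkeeping matches it as well, namely the DAG size bounded by the translation time, and $C \le |\varphi|$ for fixed $k$.
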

Consequently, the first part of \Cref{theorem: monadic} immediately follows from \Cref{theorem:formula-1single monadic model checking} when $k$ is fixed.
\begin{theorem*}[Restatement of {\Cref{theorem: monadic}} (first part)]
Fix $k \ge 1$.
\theoremmonadicFOEk
\end{theorem*}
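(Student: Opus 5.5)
The plan is to combine \Cref{lemma: monadic FOEk translation} with \Cref{proposition: C1 eval}, and then pass from $n$ to $m$ using \Cref{proposition: isolated}. I do not expect any real obstacle in this final assembly. All the substantive work lives in \Cref{lemma: monadic FOEk translation}, which is assumed here.

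First, take an input $\kFO{k}\sigres{(\ordinalomega)}$-sentence $\fml$. Its active signature $\fmlsig{\fml}$ consists of $C \defeq \card{\fmlsig{\fml}} \le \len{\fml}$ unary relation symbols, so $\fml$ is a $\kFO{k}\sigres{(C)}$-sentence. Apply \Cref{lemma: monadic FOEk translation} to get a DAG-represented $\kC{1}\sigres{(C)}$-sentence $\fml[2]$ that is equivalent to $\fml$ on every structure. Since $k$ is fixed, the construction time is $\len{\fml} \cdot k^{\asympO(k)} \cdot 2^{\asympO(Ck)} \le 2^{\asympO(\len{\fml})}$. The output size $\len{\fml[2]}$ is bounded by the running time, so $\len{\fml[2]} \le 2^{\asympO(\len{\fml})}$ as well.

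Second, evaluate $\fml[2]$ on $\struc$ with \Cref{proposition: C1 eval}. This costs $\asympO(\len{\fml[2]}) \cdot n \le 2^{\asympO(\len{\fml})} \cdot n$. One point needs checking: the counting thresholds $d$ in $\fml[2]$ are at most $k$ by the lemma. Hence each counting quantifier is a single pass over the domain, using a counter capped at $k+1$, which fits the word-RAM cost assumed in \Cref{proposition: C1 eval}. Adding the translation time gives the bound $2^{\asympO(\len{\fml})} \cdot n$. This is exactly \Cref{theorem:formula-1single monadic model checking} with $C \le \len{\fml}$.

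Third, for the parenthetical bound in $m$, first apply \Cref{proposition: isolated}. This takes $\asympO_k(\strucm{\struc})$ time, since $\fmlarity{\fml} \le 1$. It yields an equivalent structure with $\strucn{\struc[2]} \le \strucm{\struc} + k = \asympO(m)$. Running the algorithm above on this structure gives $2^{\asympO(\len{\fml})} \cdot m$.
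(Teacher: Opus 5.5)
Your proposal is correct and is essentially the paper's own proof. The paper likewise applies \Cref{lemma: monadic FOEk translation} with $C \le |\varphi|$ and fixed $k$ to obtain a $2^{O(|\varphi|)}$-size DAG-represented $\mathsf{C}^1$ sentence. It then evaluates that sentence in $O(|\psi|)\cdot n$ time by \Cref{proposition: C1 eval}, and passes from $n$ to $m$ via \Cref{proposition: isolated}.
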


\subsection{Translating into "DAG-represented" monadic \texorpdfstring{$\kC{1}$}{C1}: proof of \texorpdfstring{\Cref{lemma: monadic FOEk translation}}{Lemma~\ref{lemma: monadic FOEk translation}}}\label{section: C1 DAG}
In this subsection, we prove~\Cref{lemma: monadic FOEk translation} using "types".

Let $P_1, \dots, P_C$ be $C$ unary "relation symbols" and
let $x_1, x_2, \dots$ be a sequence of "variables";
we write $\intro*\Varset \defeq \set{ x_1, x_2, \ldots, x_k }$ for the set of the first $k$ "variables".

\subsubsection{Types and realizability: definition}
Given a "structure" $\struc$ and $a_1, \dots, a_\ell \in \univ{\struc}$ (where $\ell \ge 0$),
the (atomic) $\ell$-""type"" on $\struc$ with $a_1, \dots, a_\ell$,
denoted by $\intro*\type_{\struc}(a_1, \dots, a_\ell)$,
is the set of "atoms" or negated "atoms" (including "equalities"/"inequalities") over $\sig$ with free "variables" among $x_1, \dots, x_\ell$ that are satisfied by the map $x_i \mapsto a_i$ for $i \in \range{\ell}$.
More generally, a set $\beta$ is called an $\ell$-"type" if $\beta = \type_{\struc}(a_1, \dots, a_\ell)$ for some "structure" $\struc$ and $a_1, \dots, a_\ell \in \univ{\struc}$;
hence every $\ell$-"type" is consistent.
Below, quantifications over $\ell$-"types" range over all sets $\beta$ of this form.
We say that an $\ell$-"type" $\beta$ is ""realizable"" in a "structure" $\struc$ if $\beta = \type_{\struc}(a_1, \dots, a_\ell)$ for some $a_1, \dots, a_\ell \in \univ{\struc}$.

\begin{proposition}\label{proposition: invariant lemma in C1 existential}
Consider a unary "signature" and let $\ell \ge 0$.
Let $\struc$ be a "structure" and let $a_1, \dots, a_{\ell} \in \univ{\struc}$ and $b_1, \dots, b_{\ell} \in \univ{\struc}$.
Suppose $\type_{\struc}(a_1, \dots, a_{\ell}) = \type_{\struc}(b_1, \dots, b_{\ell})$.
For every $a_{\ell+1} \in \univ{\struc}$, there is $b_{\ell+1} \in \univ{\struc}$
such that
\[\type_{\struc}(a_1, \dots, a_{\ell}, a_{\ell+1}) = \type_{\struc}(b_1, \dots, b_{\ell}, b_{\ell+1}).\]
\end{proposition}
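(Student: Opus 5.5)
We define $b_{\ell+1}$ by a case split on whether $a_{\ell+1}$ coincides with one of $a_1, \dots, a_\ell$, and then check atom by atom that the two $(\ell+1)$-"types" agree. Over a unary "signature", an $(\ell+1)$-"type" is determined by three kinds of "atoms": the unary "atoms" $P(x_i)$, the "equalities" $x_i = x_j$ among the first $\ell$ "variables", and the "equalities" $x_i = x_{\ell+1}$ together with the unary "atoms" $P(x_{\ell+1})$. The atoms of the first two kinds involve only $x_1,\dots,x_\ell$, so they agree by the hypothesis $\type_{\struc}(a_1,\dots,a_\ell) = \type_{\struc}(b_1,\dots,b_\ell)$. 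It remains to choose $b_{\ell+1}$ so that the atoms involving $x_{\ell+1}$ also agree.

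\proofcase{Case 1: $a_{\ell+1} = a_i$ for some $i \in \range{\ell}$.} Fix such an $i$ and set $b_{\ell+1} \defeq b_i$. For each $j \in \range{\ell}$ we have $a_{\ell+1} = a_j$ iff $a_i = a_j$ iff $b_i = b_j$, by equality of the $\ell$-"types". For each unary $P$, the atom $P(x_{\ell+1})$ holds for $a_{\ell+1}$ iff $P(x_i)$ holds for $a_i$, iff $P(x_i)$ holds for $b_i$ (again by the $\ell$-"types"), iff $P(x_{\ell+1})$ holds for $b_{\ell+1}$. The atom $x_{\ell+1} = x_{\ell+1}$ is trivially true on both sides.

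\proofcase{Case 2: $a_{\ell+1} \notin \set{a_1,\dots,a_\ell}$.} We need some $b_{\ell+1} \notin \set{b_1,\dots,b_\ell}$ satisfying exactly the same unary predicates as $a_{\ell+1}$. This step is the main obstacle. In the worst case the only elements with that unary profile are $a_{\ell+1}$ together with some $b_i$'s, and no fresh $b_{\ell+1}$ exists. For instance, take $\ell=1$, $\univ{\struc} = \set{c,d}$ with no predicates, $a_1 = c$ and $b_1 = c$, and $a_2 = d$; then $b_2 = d$ works. But with $a_1 = d$, $b_1 = c$ (same $1$-"type") and $a_2 = c$, we must take $b_2 = d$, which is also fine. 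Real failures come from counting: if $b_1,\dots,b_\ell$ occupy more elements of the relevant unary class than $a_1,\dots,a_\ell$ do, the class may be exhausted.

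The resolution is that equal $\ell$-"types" force the sets $\set{a_1,\dots,a_\ell}$ and $\set{b_1,\dots,b_\ell}$ to match. The map $a_i \mapsto b_i$ is well defined and injective because the "equalities" among the first $\ell$ "variables" agree. It also preserves each unary predicate, by the unary atoms of the first kind. Hence, within every unary class $K$ (the set of elements realizing a given combination of the predicates $P$), the number of distinct $a_i$ lying in $K$ equals the number of distinct $b_i$ lying in $K$. Now let $K$ be the class of $a_{\ell+1}$. Since $a_{\ell+1} \in K$ differs from every $a_i$, we get $\card{K} > \card{K \cap \set{a_1,\dots,a_\ell}} = \card{K \cap \set{b_1,\dots,b_\ell}}$. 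Therefore some $b_{\ell+1} \in K \setminus \set{b_1,\dots,b_\ell}$ exists.

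For this $b_{\ell+1}$, every "equality" $x_j = x_{\ell+1}$ with $j \le \ell$ is false on both sides, and every unary atom $P(x_{\ell+1})$ agrees because $a_{\ell+1}$ and $b_{\ell+1}$ lie in the same class $K$. Collecting the three kinds of atoms, $\type_{\struc}(a_1,\dots,a_{\ell+1}) = \type_{\struc}(b_1,\dots,b_{\ell+1})$ in both cases, which proves the proposition.
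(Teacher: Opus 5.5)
Your proof is correct and follows essentially the same route as the paper. When $a_{\ell+1}$ coincides with some $a_i$ you reuse $b_i$; otherwise equal $\ell$-types give the same number of distinct elements of each $1$-type among the $a_i$ and the $b_i$, so some element of $a_{\ell+1}$'s $1$-type lies outside $\{b_1,\dots,b_\ell\}$. The exploratory example at the start of Case 2 adds nothing and could be dropped, since your bijection argument already settles the counting.
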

\begin{proof}
We define $b_{\ell+1}$ as follows.
\begin{itemize}
    \item If $a_{\ell+1} = a_j$ for some $j \in \range{\ell}$, let $b_{\ell+1} \defeq b_j$.
    \item Otherwise, let $\tau \defeq \type_{\struc}(a_{\ell+1})$ and let $N$ be the number of distinct "elements" of $1$-"type" $\tau$ among $a_1, \dots, a_\ell$.
    Since $\type_{\struc}(a_1, \dots, a_{\ell}) = \type_{\struc}(b_1, \dots, b_{\ell})$, $N$ is also the number of distinct "elements" of $1$-"type" $\tau$ among $b_1, \dots, b_\ell$.
    As $a_{\ell+1}$ is not among $a_1, \dots, a_\ell$, at least $N + 1$ "elements" of $\univ{\struc}$ have $1$-"type" $\tau$;
    let $b_{\ell+1}$ be one of them that is distinct from $b_1, \dots, b_\ell$.
\end{itemize}
In both cases, $\type_{\struc}(b_{\ell+1}) = \type_{\struc}(a_{\ell+1})$,
and $b_{\ell+1} = b_j$ if and only if $a_{\ell+1} = a_j$ for each $j \in \range{\ell}$.
Since the "signature" is unary, the (atomic) $(\ell+1)$-"type" of a tuple is determined by the $\ell$-"type" of its first $\ell$ entries,
the $1$-"type" of its last entry, and the "equalities" between its last entry and the others.
Hence, $\type_{\struc}(a_1, \dots, a_{\ell}, a_{\ell+1}) = \type_{\struc}(b_1, \dots, b_{\ell}, b_{\ell+1})$.
\end{proof}

\begin{proposition}\label{proposition: invariant lemma in C1 equivalence}
Let $\struc$ be a "structure".
Suppose that $\fml$ is an arbitrary $\kFO{k}\sigres{(\ordinalomega)}$-"formula" such that $\vars(\fml) \subseteq \set{x_1, \dots, x_{k}}$.
For every $a_1, \dots, a_{k}, b_1, \dots, b_{k} \in \univ{\struc}$ such that $\type_{\struc}(a_1, \dots, a_{k}) = \type_{\struc}(b_1, \dots, b_{k})$, we have:
\[
\tuple{\struc, \set{x_j \mapsto a_j \mid j \in \range{k}}} \models \fml
\quad\Longleftrightarrow\quad 
\tuple{\struc, \set{x_j \mapsto b_j \mid j \in \range{k}}} \models \fml
.\]
\end{proposition}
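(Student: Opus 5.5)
We argue by structural induction on $\fml$, proving the statement simultaneously for all pairs of $k$-tuples with the same $k$-"type". The "formula" $\fml$ may use any of $x_1, \dots, x_k$ freely, so the induction hypothesis must range over all tuples $a_1,\dots,a_k$ and $b_1,\dots,b_k$ with equal $k$-"types". This is an Ehrenfeucht--Fra\"iss\'e style argument: equality of atomic $k$-types is a winning-position invariant for the $k$-pebble game on unary structures, and \Cref{proposition: invariant lemma in C1 existential} is the back-and-forth property.

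\proofcase{Atoms}
An "atom" $P(x_j)$ or $x_i = x_j$ is itself either in the $k$-"type" or its negation is. Equal $k$-"types" therefore give the same "truth value" directly.

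\proofcase{Boolean connectives}
For $\land$, $\lor$ and $\lnot$ the claim follows immediately from the induction hypothesis.

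\proofcase{Quantifiers}
This is the main step. Consider $\exists x_i.\fml[2]$; the case $\forall x_i.\fml[2]$ is dual, via $\lnot\exists\lnot$ or a symmetric argument. Suppose some $a' \in \univ{\struc}$ witnesses $\fml[2]$ under $a[x_i \mapsto a']$. We need a matching $b'$ such that $b[x_i\mapsto b']$ has the same $k$-"type" as $a[x_i\mapsto a']$; the induction hypothesis then transfers the truth of $\fml[2]$. To obtain $b'$, we drop the $i$-th coordinate and note that the $(k-1)$-"types" of $(a_j)_{j\ne i}$ and $(b_j)_{j\ne i}$ coincide, since a subtuple's "type" is the restriction of the whole "type" to the relevant "variables" (after renaming). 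We then apply \Cref{proposition: invariant lemma in C1 existential} with $\ell = k-1$ and new element $a'$, which yields $b'$ with $\type_{\struc}((a_j)_{j\ne i}, a') = \type_{\struc}((b_j)_{j\ne i}, b')$. Permuting coordinates back to put the new element in position $i$ preserves type equality, because "types" are invariant under simultaneous renaming of "variables". The converse direction is symmetric, swapping the roles of $a$ and $b$.

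The expected obstacle is purely bookkeeping: the quantified "variable" $x_i$ need not be the last coordinate, and the old value $a_i$ is discarded. The key observation is that the "type" of a tuple determines the "type" of any subtuple and of any reordering. Once this is stated explicitly, \Cref{proposition: invariant lemma in C1 existential} applies directly.
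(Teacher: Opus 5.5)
Your proposal is correct and matches the paper's proof: structural induction with immediate atomic and Boolean cases, and the $\exists x_i$ case handled by applying \Cref{proposition: invariant lemma in C1 existential} with $\ell = k-1$ up to reordering the entries, with the $\forall$ case obtained via $\lnot\exists\lnot$. You also state explicitly that the type of a full tuple determines the types of its subtuples and reorderings, and the paper's phrase ``up to reordering the entries'' relies on exactly this.
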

\begin{proof}
By induction on the structure of $\fml$.

\proofcase{Case $\fml = \fml[2] \land \fml[3]$, $\fml[2] \lor \fml[3]$, or $\lnot \fml[2]$}
Immediate from IH.

\proofcase{Case $\fml = \alpha$ where $\alpha$ is an "atom"}
Immediate from the definition of "types", since $\FV(\fml) \subseteq \set{x_1, \dots, x_{k}}$.

\proofcase{Case $\fml = \exists x_i. \fml[2]$}
\proofcase{($\Rightarrow$)} Suppose $\tuple{\struc, \set{x_j \mapsto a_j \mid j \in \range{k}}} \models \exists x_i. \fml[2]$.
There exists $a_i' \in \univ{\struc}$ such that $\tuple{\struc, \set{x_j \mapsto a_j \mid j \in \range{k} \setminus \set{i}} \cup \set{x_i \mapsto a_i'}} \models \fml[2]$.
Since $\type_{\struc}(a_1, \dots, a_{i-1}, a_{i+1}, \dots, a_{k}) = \type_{\struc}(b_1, \dots, b_{i-1}, b_{i+1}, \dots, b_{k})$,
by \Cref{proposition: invariant lemma in C1 existential} with $\ell = k-1$ (up to reordering the entries),
there exists $b_i' \in \univ{\struc}$ such that
\[\type_{\struc}(a_1, \dots, a_{i-1}, a_i', a_{i+1}, \dots, a_{k}) = \type_{\struc}(b_1, \dots, b_{i-1}, b_i', b_{i+1}, \dots, b_{k}).\]
By IH, $\tuple{\struc, \set{x_j \mapsto b_j \mid j \in \range{k} \setminus \set{i}} \cup \set{x_i \mapsto b_i'}} \models \fml[2]$.
Thus, $\tuple{\struc, \set{x_j \mapsto b_j \mid j \in \range{k}}} \models \exists x_i. \fml[2]$ by $x_i \not\in \FV(\exists x_i. \fml[2])$.
\proofcase{($\Leftarrow$)} The proof is symmetric to the above.

\proofcase{Case $\fml = \forall x_i. \fml[2]$}
Similar to the previous case, based on the equivalence to $\lnot \exists x_i. \lnot \fml[2]$.
\end{proof}

\subsubsection{Expressing realizability}
For a $k$-"type" $\beta$ and a subset $X \subseteq \Varset$, we denote by $\beta \restriction X$ the set of "formulas" in $\beta$ with free "variables" among $X$.
For $i \in \range{k}$, we denote by $\beta \restriction x_i$ the $1$-"type" obtained from $\beta \restriction \set{x_i}$ by renaming $x_i$ to $x_1$.
Observe that each $k$-"type" satisfies the following properties:
\begin{itemize}
    \item the binary relation $\sim_{\beta}$ on $\Varset$, defined by $x_i \sim_{\beta} x_j$ "iff" $(x_i = x_j) \in \beta$,
    is an "equivalence relation";
    \item if $x_i \sim_{\beta} x_j$,
    then $\alpha[x_i/z] \in \beta$ if and only if $\alpha[x_j/z] \in \beta$ for every "atom" $\alpha$ and "variable" $z$.
\end{itemize}
Note that if $x_i \sim_{\beta} x_j$, then $x_i$ and $x_j$ have the same $1$-"type".
We denote by $[\Varset]_{\sim_{\beta}}$ the set of "equivalence classes" of $\Varset$ under $\sim_{\beta}$.

For a $k$-"type" $\beta$ and a $1$-"type" $\tau$,
we denote by $\intro*\Nclass_{\beta}(\tau)$ the number of "equivalence classes" $X \in [\Varset]_{\sim_{\beta}}$ such that $(\beta \restriction x) = \tau$ for some (equivalently, every) $x \in X$.
We then define the $\kC{1}$-"sentence" $\Realizable_{\beta}$ as follows:
\phantomintro{\Realizable}%
\begin{align*}
&\reintro*\Realizable_{\beta} \defeq \bigwedge_{\tau: \text{$1$-"type" occurring in $\beta$}}\hspace{-.3em} \exists^{\ge \Nclass_{\beta}(\tau)} z.\ \bigwedge \tau(z).
\end{align*}
Here, $\tau(z)$ denotes $\tau$ with $x_1$ renamed to $z$, and the "equality" $z = z$ is omitted from $\bigwedge \tau(z)$, so that $\Realizable_{\beta}$ is a $\kC{1}$-"sentence" without "equality".
By construction,
$\Realizable_{\beta}$ expresses the "realizability@realizable" of "type" $\beta$, as follows:
\begin{proposition}
For every "structure" $\struc$ and $k$-"type" $\beta$,
$\struc \models \Realizable_{\beta}$ "iff" $\beta$ is "realizable" in $\struc$.
\end{proposition}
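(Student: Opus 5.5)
We prove the two directions separately. In both, we work with the equivalence relation $\sim_{\beta}$ on $\Varset$ and its equivalence classes $[\Varset]_{\sim_{\beta}}$. Realizing $\beta$ then amounts to choosing one element per class such that the elements are pairwise distinct and each carries the prescribed $1$-"type".

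\proofcase{($\Leftarrow$)} Suppose $\beta = \type_{\struc}(a_1, \dots, a_k)$. Fix a $1$-"type" $\tau$ occurring in $\beta$ and look at the classes $X$ with $(\beta \restriction x) = \tau$ for $x \in X$.
\begin{itemize}
\item Variables in distinct classes are assigned distinct elements, because $(x_i = x_j) \notin \beta$ forces $a_i \neq a_j$.
\item Each such class contributes an element $a_i$ with $\type_{\struc}(a_i) = \tau$, by the definition of $\beta \restriction x_i$.
\end{itemize}
Hence at least $\Nclass_{\beta}(\tau)$ distinct elements satisfy $\bigwedge \tau(z)$. Omitting $z = z$ from $\tau(z)$ is harmless, so the conjunct $\exists^{\ge \Nclass_{\beta}(\tau)} z.\ \bigwedge \tau(z)$ holds. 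Since $\tau$ was arbitrary, $\struc \models \Realizable_{\beta}$.

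\proofcase{($\Rightarrow$)} Suppose $\struc \models \Realizable_{\beta}$. For each $\tau$ occurring in $\beta$, choose $\Nclass_{\beta}(\tau)$ distinct elements of $1$-"type" $\tau$, and assign them bijectively to the classes of that $1$-"type".
\begin{itemize}
\item Here we use that, over a unary "signature", an element satisfies $\bigwedge \tau(z)$ exactly when its $1$-"type" is $\tau$: a $1$-"type" decides every $P_c(x_1)$ literal, so the conjunction pins the type down completely.
\item Elements chosen for different $1$-"types" are automatically distinct, because they have different types.
\end{itemize}
Now set $a_i$ to be the element assigned to the class of $x_i$. Then $a_i = a_j$ holds iff $x_i \sim_{\beta} x_j$, and $\type_{\struc}(a_i) = \beta \restriction x_i$ for every $i$.

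The remaining, and main, step is to show $\type_{\struc}(a_1, \dots, a_k) = \beta$. Over a unary "signature", every "atom" with free "variables" in $\Varset$ is either an "equality" $x_i = x_j$ or of the form $P_c(x_i)$. We check the two kinds.
\begin{itemize}
\item \emph{Equalities.} $(x_i = x_j) \in \beta$ iff $x_i \sim_{\beta} x_j$ iff $a_i = a_j$, by construction.
\item \emph{Unary atoms.} $P_c(x_i) \in \beta$ iff $P_c(x_1) \in \beta \restriction x_i$ iff $a_i \in P_c^{\struc}$.
\end{itemize}
Negated atoms follow from completeness: $\beta$ is itself a "type", so for each atom it contains exactly one of the atom and its negation, and the same holds for $\type_{\struc}(a_1, \dots, a_k)$. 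The two types therefore agree on all atoms and negated atoms, so they coincide and $\beta$ is "realizable" in $\struc$.
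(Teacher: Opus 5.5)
Your proof is correct and follows the paper's argument: for ($\Leftarrow$) you count the distinct elements of each $1$-type among $a_1,\dots,a_k$, and for ($\Rightarrow$) you assign distinct witnesses of the right $1$-type to the classes of $\sim_{\beta}$. The only difference is that you check atom by atom a fact the paper simply asserts, namely that over a unary signature a $k$-type is determined by $\sim_{\beta}$ together with the $1$-types of the variables.
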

\begin{proof}
\proofcase{($\Rightarrow$)} Suppose $\struc \models \Realizable_{\beta}$.
For each $1$-"type" $\tau$ occurring in $\beta$,
there are at least $\Nclass_{\beta}(\tau)$ distinct "elements" of $1$-"type" $\tau$.
Thus, for each such $\tau$, we can assign distinct "elements" of $1$-"type" $\tau$ to the $\Nclass_{\beta}(\tau)$ "equivalence classes" $X \in [\Varset]_{\sim_{\beta}}$ with $(\beta \restriction x) = \tau$ for $x \in X$,
and we map each "variable" in $\Varset$ to the "element" assigned to its "equivalence class".
Since the "signature" is unary, a $k$-"type" is determined by its "equivalence relation" $\sim_{\beta}$ and the $1$-"types" of the "variables".
Hence, the resulting tuple realizes $\beta$.
\proofcase{($\Leftarrow$)} 
Suppose $\beta$ is "realizable" in $\struc$.
Then there are $a_1, \dots, a_k \in \univ{\struc}$ such that $\beta = \type_{\struc}(a_1, \dots, a_k)$.
Then for each $1$-"type" $\tau$ occurring in $\beta$, there are at least $\Nclass_{\beta}(\tau)$ "elements" in $\set{a_1, \dots, a_k}$ having the corresponding $1$-"type" $\tau$.
Thus, $\struc \models \Realizable_{\beta}$.
\end{proof}
\subsubsection{Translation} We now define the translation. 
\begin{definition}\label{definition: C1 translation}
    For an $\kFO{k}$-"formula" $\fml$ and a $k$-"type" $\beta$,
    the $\kC{1}$ \emph{"sentence"} $T_{\beta}(\fml)$ is defined as follows:
    \begin{align*}
        T_{\beta}(\alpha) &\defeq \begin{cases}
            \fmltrue  & \text{if $\alpha \in \beta$}, \\
            \fmlfalse & \text{otherwise},
        \end{cases} \text{ for "atoms" $\alpha$} 
        & T_{\beta}(\lnot \fml[1]) &\defeq \lnot T_{\beta}(\fml[1]) \\
        T_{\beta}(\fml[1] \land \fml[2]) &\defeq T_{\beta}(\fml[1]) \land T_{\beta}(\fml[2]) &
        T_{\beta}(\fml[1] \lor \fml[2]) &\defeq T_{\beta}(\fml[1]) \lor T_{\beta}(\fml[2]) \\
        T_{\beta}(\exists x_i. \fml[1]) &\defeq \hspace{-3.em}\bigvee_{\substack{\beta' \text{ a $k$-"type" "st"} \\ \beta' \restriction (\Varset \setminus \set{x_i}) = \beta \restriction (\Varset \setminus \set{x_i})}} \hspace{-3.em} \left(\Realizable_{\beta'} \land T_{\beta'}(\fml[1]) \right) &
        T_{\beta}(\forall x_i. \fml[1]) &\defeq \hspace{-3.em}\bigwedge_{\substack{\beta' \text{ a $k$-"type" "st"} \\ \beta' \restriction (\Varset \setminus \set{x_i}) = \beta \restriction (\Varset \setminus \set{x_i})}} \hspace{-3.em} \left(\lnot \Realizable_{\beta'} \lor T_{\beta'}(\fml[1]) \right).
    \end{align*}
\end{definition}

\begin{lemma}\label{lemma: invariant lemma in C1 reduction}
Let $\struc$ be a "structure".
Suppose that
$\fml$ is an arbitrary $\kFO{k}\sigres{(\ordinalomega)}$-"formula" such that $\vars(\fml) \subseteq \set{x_1, \dots, x_k}$.
For every $a_1, \dots, a_k \in \univ{\struc}$, we have:
\[\struc \models T_{\type_{\struc}(a_1, \dots, a_k)}(\fml) \quad\Longleftrightarrow\quad \tuple{\struc, \set{x_j \mapsto a_j \mid j \in \range{k}}} \models \fml.\]
\end{lemma}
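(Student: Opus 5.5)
We argue by structural induction on $\fml$, proving the equivalence simultaneously for all tuples $a_1, \dots, a_k \in \univ{\struc}$. Write $\beta \defeq \type_{\struc}(a_1, \dots, a_k)$ and $I \defeq \set{x_j \mapsto a_j \mid j \in \range{k}}$. Note that $T_{\beta}(\fml)$ is a sentence, so its truth in $\struc$ does not depend on any valuation.

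The non-quantifier cases are immediate. For an atom $\alpha$ with free variables in $\Varset$, we have $\alpha \in \beta$ iff $\struc, I \models \alpha$, because $\beta$ is exactly the set of atoms and negated atoms satisfied by $I$. Hence $T_{\beta}(\alpha)$ is $\fmltrue$ iff $\struc, I \models \alpha$. The cases $\lnot$, $\land$ and $\lor$ follow directly from the induction hypothesis, since $T_{\beta}$ commutes with these connectives and is applied to the same $\beta$.

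The existential case $\fml = \exists x_i.\fml[2]$ carries the content. Say $\struc, I \models \exists x_i.\fml[2]$, witnessed by $a_i'$. Put $\beta' \defeq \type_{\struc}(a_1, \dots, a_i', \dots, a_k)$. Since the tuples agree off position $i$, we have $\beta' \restriction (\Varset \setminus \set{x_i}) = \beta \restriction (\Varset \setminus \set{x_i})$. Also $\beta'$ is realizable, so $\struc \models \Realizable_{\beta'}$ by the preceding proposition. By the induction hypothesis applied to the modified tuple, $\struc \models T_{\beta'}(\fml[2])$, so the corresponding disjunct holds.

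For the converse, suppose some disjunct holds for a $k$-type $\beta'$ that agrees with $\beta$ off $x_i$ and satisfies $\struc \models \Realizable_{\beta'}$. By the preceding proposition, $\beta' = \type_{\struc}(b_1, \dots, b_k)$ for some $b_1, \dots, b_k$. By the induction hypothesis, $\struc, \set{x_j \mapsto b_j} \models \fml[2]$, hence $\struc, \set{x_j \mapsto b_j} \models \exists x_i.\fml[2]$.

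This is the main obstacle: we must transfer the satisfaction from $b$ back to $a$. The tuples $(b_j)_{j \ne i}$ and $(a_j)_{j \ne i}$ have the same $(k-1)$-type, namely the common restriction, after renaming. We apply \Cref{proposition: invariant lemma in C1 existential} with $\ell = k-1$, reordering so that $x_i$ is last. It gives an element $c$ such that the tuple $(a_1, \dots, c, \dots, a_k)$, with $c$ at position $i$, has the same $k$-type as $(b_1, \dots, b_k)$. Alternatively, we extend by $a_i$ itself and use \Cref{proposition: invariant lemma in C1 equivalence}.

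The cleanest route is to apply \Cref{proposition: invariant lemma in C1 equivalence} directly to $\exists x_i.\fml[2]$. Its truth is unaffected by the value at $x_i$. So we compare $(b_1, \dots, b_k)$ with $(a_1, \dots, a_{i-1}, c, a_{i+1}, \dots, a_k)$; these have equal $k$-types by the choice of $c$. Then $\struc, I \models \exists x_i.\fml[2]$ follows, because $x_i \notin \FV(\exists x_i.\fml[2])$.

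The universal case is dual. It can either be repeated with the conjunction of $\lnot\Realizable_{\beta'} \lor T_{\beta'}(\fml[2])$, or derived by checking that $T_{\beta}(\forall x_i.\fml[2])$ is equivalent to $\lnot T_{\beta}(\exists x_i.\lnot\fml[2])$ under De Morgan's laws, which the definition respects syntactically.
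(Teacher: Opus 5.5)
Your proof is correct and follows the paper's route: structural induction on $\varphi$. The only substantive step is the converse of the existential case, where you realize $\beta'$, use the extension proposition (with $\ell = k-1$) to move to a tuple agreeing with $(a_j)_{j \neq i}$, conclude via the type-invariance proposition, and reduce the universal case to $\lnot \exists x_i.\lnot \psi$. The one cosmetic difference is that you apply the invariance proposition to $\exists x_i.\psi$ and then discard the value of the non-free variable $x_i$, whereas the paper applies it to $\psi$ itself to get the witness $b_i$ for $x_i$ directly.
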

\begin{proof}
By induction on the structure of $\fml$.

\proofcase{Case $\fml = \fml[2] \land \fml[3]$, $\fml[2] \lor \fml[3]$, or $\lnot \fml[2]$}
Immediate from IH.

\proofcase{Case $\fml = \alpha$ where $\alpha$ is an "atom"}
Immediate from the definition of $T_{\beta}(\alpha)$.

\proofcase{Case $\fml = \exists x_i. \fml[2]$}
($\Leftarrow$) Suppose $\tuple{\struc, \set{x_j \mapsto a_j \mid j \in \range{k}}} \models \exists x_i. \fml[2]$.
Then there is $a_i' \in \univ{\struc}$ such that
$\tuple{\struc, \set{x_j \mapsto a_j \mid j \in \range{k} \setminus \set{i}} \cup \set{x_i \mapsto a_i'}} \models \fml[2]$.
By IH, $\struc \models T_{\type_{\struc}(a_1, \dots, a_{i-1}, a_i', a_{i+1}, \dots, a_k)}(\fml[2])$.
Also, $\struc \models \Realizable_{\type_{\struc}(a_1, \dots, a_{i-1}, a_i', a_{i+1}, \dots, a_k)}$ because $\type_{\struc}(a_1, \dots, a_{i-1}, a_i', a_{i+1}, \dots, a_k)$ is "realizable" in $\struc$.
We thus have $\struc \models T_{\type_{\struc}(a_1, \dots, a_k)}(\exists x_i. \fml[2])$.

($\Rightarrow$) Suppose $\struc \models T_{\type_{\struc}(a_1, \dots, a_k)}(\exists x_i. \fml[2])$.
Then there is a $k$-"type" $\beta'$ such that
$\beta' \restriction (\Varset \setminus \set{x_i}) = \type_{\struc}(a_1, \dots, a_k) \restriction (\Varset \setminus \set{x_i})$, 
$\struc \models \Realizable_{\beta'}$, and
$\struc \models T_{\beta'}(\fml[2])$.
By $\struc \models \Realizable_{\beta'}$, there are $a_1', \dots, a_k' \in \univ{\struc}$ such that $\beta' = \type_{\struc}(a_1', \dots, a_k')$.
By IH, we have:
$\tuple{\struc, \set{x_j \mapsto a_j' \mid j \in \range{k}}} \models \fml[2]$.
Since
$\beta' \restriction (\Varset\setminus\{x_i\})
 = \type_{\struc}(a_1,\ldots,a_k)\restriction(\Varset\setminus\{x_i\})$,
and $\beta'$ is realizable, by \Cref{proposition: invariant lemma in C1 existential} (applied as in the proof of \Cref{proposition: invariant lemma in C1 equivalence}) there exists $b_i\in\univ{\struc}$ such that
\[
\type_{\struc}(a_1,\ldots,a_{i-1},b_i,a_{i+1},\ldots,a_k)=\beta'.
\]
By \Cref{proposition: invariant lemma in C1 equivalence}, the truth of
$\fml[2]$ is the same under this valuation as under any realization of
$\beta'$.
Thus, $\tuple{\struc, \set{x_j \mapsto a_j \mid j \in \range{k} \setminus \set{i}}\cup\set{x_i\mapsto b_i}}\models \fml[2]$, and hence
$\tuple{\struc, \set{x_j \mapsto a_j \mid j \in \range{k}}} \models \exists x_i. \fml[2]$.

\proofcase{Case $\fml = \forall x_i. \fml[2]$}
Similar to the previous case, based on the equivalence to $\lnot \exists x_i. \lnot \fml[2]$.
\end{proof}

\begin{proof}[Proof of \Cref{lemma: monadic FOEk translation}]
From $\fml$, define
\[
\fml[2] \defeq
\bigvee_{\beta \text{ an atomic $k$-"type"}}
\left(\Realizable_{\beta} \land T_{\beta}(\fml)\right).
\]
By \Cref{lemma: invariant lemma in C1 reduction}, each "realizable" disjunct has the same "truth value" as the original "sentence" $\fml$.
Hence, $\struc \models \fml[2]$ "iff" $\struc \models \fml$.
We construct the "DAG-represented" $\fml[2]$ by recursion on "subformulas" $\fml[3]$ of $\fml$.
We memoize "vertices" $v_{\fml[3],\beta}$ where $\fml[3]$ is a "subformula" and $\beta$ is a $k$-"type".
For the case $\fml[3] = \exists x_i. \fml[3]'$,
we first introduce shared group "vertices" $v'_{\fml[3]', \chi}$,
where $\chi = \beta' \restriction (\Varset \setminus \set{x_i})$ for some $k$-"type" $\beta'$.
Each $v'_{\fml[3]', \chi}$ is defined as the disjunction of $\Realizable_{\beta'}\land v_{\fml[3]', \beta'}$ over all $k$-"types" $\beta'$ such that $\beta' \restriction (\Varset \setminus \set{x_i}) = \chi$.
Then $v_{\fml[3], \beta}$ points to $v'_{\fml[3]', \beta \restriction (\Varset \setminus \set{x_i})}$.
The $\forall$ case is symmetric, with a conjunction.
Since these groups partition the set of $k$-"types",
the total number of child pointers introduced per quantified "subformula" is linear in the number of $k$-"types".

As each $k$-"type" is determined by an "equivalence relation" on $\Varset$ and $1$-"types" of $k$ "variables",
the number of $k$-"types" is bounded by
\[B_k \cdot (2^C)^k \le \asympO(k^k \cdot 2^{Ck}),\]
where $B_k$ denotes the $k$-th Bell number, which counts the number of "equivalence relations" on a set of size $k$.
$B_k$ is bounded by $k^k$, because each "equivalence relation" can be encoded injectively by a function $\range{k} \to \range{k}$.
Also, the number of $1$-"types" is equal to $2^C$.
Hence, the total time for constructing the "DAG" is $\asympO(\len{\fml} \cdot k^{k} \cdot 2^{C k}) + k^{k} \cdot 2^{C k} \cdot \mathrm{poly}(k, C) \le \len{\fml} \cdot k^{\asympO(k)} \cdot 2^{\asympO(C k)}$ (and hence, so is the "size" of $\fml[2]$ as a "DAG").
Additionally, the maximum number $d$ in "counting quantifiers" $\exists^{\ge d}$ or $\exists^{= d}$ appearing in $\fml[2]$ is at most $k$,
because $\Nclass_{\beta}(\tau) \le k$ for every $1$-"type" $\tau$.
Hence, this translation satisfies the desired properties.
\end{proof}

\subsection{On the case of \texorpdfstring{$C = 0$}{C=0}}
In the case of $C = 0$, we can obtain a better upper bound.

\begin{proposition}\label{proposition: C1 eval 0}
$\MC{\kC{1}\sigres{(0)}}$ is solvable in $\asympO(\max(d, 1) \, \len{\fml})$ time (independent of the "structure"),
where $d$ is the maximum number in "counting quantifiers" $\exists^{\ge d}$ or $\exists^{= d}$ appearing in the input "sentence" $\fml$.
\end{proposition}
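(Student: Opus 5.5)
The plan is to evaluate the sentence bottom-up over the DAG, computing a single truth value for every vertex and touching each vertex only once. With $C = 0$ the signature is empty, so every structure $\struc$ is determined up to isomorphism by $n = \strucn{\struc}$. The only atoms available are $\fmltrue$ and $\fmlfalse$, since $\kC{1}$ has no equality. The first step is an induction on subformulas showing that every subformula $\fml[3]$, whether it has free variable $x$ or not, is satisfied either by all elements or by none. This holds because the atoms are constant, Boolean connectives preserve constancy, and a counting quantifier always yields a sentence. So each vertex of the DAG carries a truth value that depends only on $n$.

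The second step computes these values in topological order of the DAG. For $\fmltrue$, $\fmlfalse$, $\land$, $\lor$ and $\lnot$ this costs constant time per edge. For a counting quantifier $\exists^{\ge d} x.\fml[3]$, the set of witnesses has size $n$ if $\fml[3]$ is true and size $0$ otherwise. The vertex is therefore true iff either ($\fml[3]$ true and $n \ge d$) or ($\fml[3]$ false and $d = 0$). For $\exists^{= d} x.\fml[3]$ the condition becomes $\fml[3]$ true and $n = d$, or $\fml[3]$ false and $d = 0$.

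The main obstacle is that the bound must be independent of $n$, while $n$ may be huge and need not fit in a word. It is also undesirable to spend more than $O(\max(d,1))$ time per comparison when the constants $d$ are written in unary or in several words. The plan is to cap $n$ at $\hat n = \min(n, d+1)$, where $d$ is the maximum counting constant. All the comparisons $n \ge d'$ and $n = d'$ with $d' \le d$ are unchanged when $n$ is replaced by $\hat n$. Computing $\hat n$ needs no inspection of the universe beyond deciding whether $n \ge d+1$, which can be answered from the given value of $n$, or by scanning at most $d+1$ elements, in $O(\max(d,1))$ time.

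Each comparison of $\hat n$ against a constant $d' \le d$ then costs $O(\max(d,1))$ in the worst case, for instance when constants are represented in unary, and $O(1)$ on the word-RAM otherwise. Summing over the at most $\len{\fml}$ vertices gives $O(\max(d,1)\,\len{\fml})$ total. Correctness follows directly from the induction in the first step.
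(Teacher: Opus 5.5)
Your proposal is correct, and its core is the same as the paper's: every subformula holds at all elements or at none, and counting constants $d' \le d$ cannot distinguish a universe of size $n > d$ from one of size $d+1$. The difference lies in how the evaluation is carried out.

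The paper splits on whether $n \le d$. If $n > d$, it replaces the input by the relation-free structure with universe $\{1,\dots,d+1\}$, shows by induction that both structures agree on $\varphi$, and then invokes the generic bottom-up evaluator of \Cref{proposition: C1 eval}. That evaluator costs $O(|\varphi|\cdot n')$ with $n' = \min(n,d+1)$, which is exactly where the $\max(d,1)$ factor comes from.

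You instead use constancy to store a single bit per DAG vertex, and you decide each counting quantifier by one comparison of $n$ (or $\hat n$) with its constant. This gives a sharper bound. On the word-RAM, when $n$ and the constants fit in a word, each vertex costs $O(1)$ and the total is $O(|\varphi|)$. So your capping step $\hat n = \min(n,d+1)$, which is precisely the paper's surrogate-structure device, is only needed for your unary-encoding hedge.

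The paper's route, in turn, needs no new algorithm: it gets the bound directly from an already-established evaluator plus a one-line model-theoretic induction.
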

\begin{proof}
If the input "structure" $\struc$ has at most $d$ "elements", then
we can evaluate $\fml$ in $\asympO(\max(d, 1) \, \len{\fml})$ time by \Cref{proposition: C1 eval}.
If $\struc$ has more than $d$ "elements", then
let $\struc[2]_{d+1}$ be the "structure" with universe $\set{1, \dots, d+1}$ and no "relations".
Since there are no "relation symbols" and no "equality", each "subformula" of $\fml$ holds either at every "element" or at none;
moreover, every number $d'$ in the "counting quantifiers" $\exists^{\ge d'}$ or $\exists^{= d'}$ appearing in $\fml$ satisfies $d' \le d$, which is smaller than both the number of "elements" of $\struc$ and $d + 1$.
Hence, by induction on "subformulas", we have: $\struc \models \fml$ "iff" $\struc[2]_{d+1} \models \fml$.
Hence, we can evaluate $\fml$ by evaluating $\fml$ on $\struc[2]_{d+1}$, which can be done in $\asympO(\max(d, 1) \, \len{\fml})$ time by \Cref{proposition: C1 eval}.
Hence, the total running time is $\asympO(\max(d, 1) \, \len{\fml})$.
\end{proof}

\begin{proof}[Proof of \Cref{theorem: monadic} (second part)]
By \Cref{lemma: monadic FOEk translation}, we can translate the input $\kFO{k}\sigres{(0)}$-"sentence" $\fml$ into a $\kC{1}\sigres{(0)}$-"sentence" $\fml[2]$ in $\len{\fml} \cdot k^{\asympO(k)} \le \asympO_{k}(\len{\fml})$ time.
By \Cref{proposition: C1 eval 0}, we can evaluate $\fml[2]$ in $\asympO(k \, \len{\fml[2]}) \le \asympO_k(\len{\fml})$ time.
Hence, the total running time is $\asympO_k(\len{\fml})$.
\end{proof}

\clearpage
\section*{Appendices to \Cref{section: reduction from OV Hyperclique to MCFO}}

\section{\texorpdfstring{\Cref{lemma: reduction monadic and binary predicates to single binary predicate}}{Lemma~\ref{lemma: reduction monadic and binary predicates to single binary predicate}} in the \texorpdfstring{\kl[tree-represented]{Tree-Represented}}{Tree-Represented} Case}\label{section: lemma: reduction monadic and binary predicates to single binary predicate for tree}
When the "formula" is "tree-represented",
the unary encoding of \Cref{lemma: unary encoding} requires $\asympO(C \, \len{\fml})$ time.
Below, we refine this complexity via a standard binary encoding of "relation symbols" by paths of length $\asympO(\log C)$.

\begin{proposition}\label{proposition: binary encoding}
Let $\fmlclass \in \set{\kPP{k}, \kEP{k}, \kFO{k} \mid k \ge 3}$.
Let $\fml[1]$ be an $\fmlclass\sigres{(0, C)}$-"sentence" without "equality" and $\struc[1]$ be a "structure".
Let $L \defeq \max(\lceil \log_2 C \rceil, 2)$.
One can construct, in $\asympO_k(\len{\fml[1]} + \strucn{\struc[1]} + L \, \strucm{\struc[1]})$ time,
an $\fmlclass\sigres{(0, 2)}$-"sentence" $\fml[2]$ without "equality" and a "structure" $\struc[2]$
such that
$\struc[1] \models \fml[1]$ "iff" $\struc[2] \models \fml[2]$.
Moreover, $\len{\fml[2]} = \asympO_k(\len{\fml[1]})$,
$\strucn{\struc[2]} = \asympO(\strucn{\struc[1]} + L \, \strucm{\struc[1]})$, and $\strucm{\struc[2]} = \asympO(\strucn{\struc[1]} + L \, \strucm{\struc[1]})$.
Additionally, if $\fml[1]$ is "tree-represented",
then the "tree-represented" $\fml[2]$ is constructed in $\asympO(L \, \len{\fml[1]} + \strucn{\struc[1]} + L \, \strucm{\struc[1]})$ time and has "size" $\len{\fml[2]} = \asympO(L \, \len{\fml[1]})$.
\end{proposition}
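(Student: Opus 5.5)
We follow the template of \Cref{lemma: unary encoding}, but replace the unary encoding of the index $i$ (a chain of $2(i+2)$ layers) by a binary encoding of length $L$ that is attached only to the tuples, not to every element. We use two binary relation symbols: $E$ for structural edges and $F$ for labelling. We cannot afford $C$ layers per element, so each tuple gets its own gadget. For each tuple $(v,w) \in R_i^{\struc[1]}$ we introduce fresh path vertices $p_{v,w,i}^{1},\dots,p_{v,w,i}^{L}$ and edges
\[
E(v,p^{1}),\; E(p^{1},p^{2}),\;\dots,\; E(p^{L-1},p^{L}),\; E(p^{L},w).
\]
The bits of $i$ (written as an $L$-bit binary string $\binrep(i)$) are recorded by putting an $F$-self-loop on $p^{\ell}$ if and only if the $\ell$-th bit is $1$. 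To distinguish original elements from path vertices, each original element gets a self-loop of a distinguishing kind. A convenient choice is an $E$-self-loop on original elements only, which path vertices never carry. This gives $\strucn{\struc[2]}=\strucn{\struc[1]}+L\,\strucm{\struc[1]}$ and $\strucm{\struc[2]}=\asympO(\strucn{\struc[1]}+L\,\strucm{\struc[1]})$, and the construction time is as claimed.

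On the formula side, each atom $R_i(x,y)$ is replaced by a $\kPP{3}$ path formula in the style of $\pathfml$. It walks $L$ steps from $x$ along $E$ through vertices without an $E$-loop, checks $F(z,z)$ or its absence at step $\ell$ according to the $\ell$-th bit, and finally takes an $E$-step into $y$. Inside $\kPP{k}$ we cannot write $\lnot F(z,z)$, so we instead use a second marker: path vertices get $F(z,z)$ for bit $1$ and a length-$2$ $F$-cycle through a private dummy for bit $0$. The test for bit $0$ is then $\exists z'.\,F(z,z')\land F(z',z)$. Equivalently, we can use two loop kinds distinguished by cycle lengths, as in the $E^3/E^5$ trick of \Cref{lemma: unary encoding}. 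Similarly, "no $E$-loop" is avoided by requiring the positive marker instead. Quantifiers are relativized to original elements via $E(x,x)$, exactly as $E^2(x,x)$ was used before. Rebinding variables keeps everything within three variables. For DAG-represented input, each $R_i$ yields one shared subformula of size $\asympO(L)$ per ordered variable pair, at most $k^2$ of them. Hence $\len{\fml[2]}=\asympO_k(\len{\fml[1]}+C L)$, which is $\asympO_k(\len{\fml[1]})$ since $C\le\len{\fml[1]}$ and $L=\asympO(\log C)$. For tree-represented input, each atom is copied in full, giving $\asympO(L\,\len{\fml[1]})$.

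For correctness, we first show as in the claims of \Cref{lemma: unary encoding} that the marker tests exactly identify original elements and the two bit kinds. Because each gadget path is private to one tuple and its inner vertices have in- and out-degree one in $E$, a walk of length $L+1$ from an original $u$ through non-original vertices to an original $v$ must traverse one whole gadget for a tuple $(u,v)$ of some $R_j$. The bit checks then force $\binrep(j)=\binrep(i)$, so $j=i$. The rest is the same induction through the projection of valuations onto original elements.

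The main obstacle is the marker design within $\kPP{3}$ without negation. We must make sure the bit-$0$ and bit-$1$ tests, and the "original element" test, are mutually exclusive on \emph{all} vertices, including dummies and original vertices. Short cycles must not accidentally arise through gadget edges. This is why dummies have only $F$-edges and original elements only $E$-loops. I would verify this by a case check on $E$- and $F$-cycles of length at most $2$.
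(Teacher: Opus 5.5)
Your overall architecture matches the paper's: a private path of $L+1$ edges per tuple carrying the code of $i$, and quantifiers relativized to original elements by a self-loop. However, two concrete steps fail as written.

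\textbf{The bit tests are not exclusive.} A vertex carrying an $F$-self-loop passes your bit-$0$ test $\exists z'.\,F(z,z')\land F(z',z)$, simply by taking $z'=z$. So the translated atom for $R_i$ accepts the gadget of any tuple of $R_j$ whenever every $1$-bit of $\mathrm{bin}(i)$ is also a $1$-bit of $\mathrm{bin}(j)$. For example, $R_0(x,y)$ becomes equivalent to $\bigvee_j R_j(x,y)$, so already $\exists x\exists y.\,R_0(x,y)$ is mistranslated.

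The case check you propose would expose this rather than confirm exclusivity. Your cycle-length alternative is also not ``equivalent.'' A closed-walk test of length $\ell$ is passed by a vertex on a private $F$-cycle of length $c$ iff $c \mid \ell$. So you need two cycle lengths neither of which divides the other (e.g.\ $2$ and $3$), and never a self-loop.

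The paper avoids vertex markers altogether. It uses two edge relations $E_0,E_1$ and labels the $\ell$-th gadget edge by the $\ell$-th bit of $1\,\mathrm{bin}_L(i)$. Original elements get $E_0$-loops. The leading $1$ forces the first step off the original element, after which the walk is forced through the gadget, and the label sequence pins down $i$ using only positive atoms.

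\textbf{The DAG-size inference is invalid.} $O_k(|\varphi|+CL)$ is not $O_k(|\varphi|)$. The number of symbols $C$ can be $\Theta(|\varphi|)$, e.g.\ for $\exists x\exists y.\bigwedge_{i} R_i(x,y)$, while $L=\Theta(\log C)$. Your per-index sharing then gives $\Theta(|\varphi|\log|\varphi|)$. This is exactly the $L$ factor the DAG bound (and the stated construction time) is supposed to avoid.

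The missing idea is to share path subformulas across different indices. Define $\theta_{bs}$ from $\theta_s$ and share $\theta_s$ for every suffix $s$ of a code word, i.e.\ build a trie on the reversed strings. There are at most $\min(2^t,C)$ distinct suffixes of length $t$, so the trie has $O(C)$ nodes in total. Taking all variable pairs gives $O(k^2C)=O_k(|\varphi|)$ extra size, which is how the paper obtains $|\psi|=O_k(|\varphi|)$.

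A minor point: you also omitted the dummy vertices from $n_{\mathbb{B}}$, though this does not affect the asymptotic bound.
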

\begin{proof}
We first discard "relation symbols" not occurring in $\fml[1]$; after this preprocessing, $C \le \len{\fml[1]}$.
Below, $L$ denotes the value for this new $C$; it is at most the original value, and $L = \asympO(\len{\fml[1]})$.
Let $\fml \in \fmlclass\sigres{\set{R_0, \ldots, R_{C-1}}}$,
where $R_0, \dots, R_{C-1}$ are binary "relation symbols".
We construct $\struc[2]$ by extending $\struc[1]$ with dummy "elements" that encode the binary "relation symbols" $R_i$.
In the construction,
we put an $E_0$-labeled self-loop on "elements" originally in $\struc[1]$ to distinguish them from the dummy "elements".

Let $\intro*\binrep_L(i)$ denote the binary representation of $i$ of length $L$.
Let $E_0$ and $E_1$ be two fresh binary "relation symbols".
The "formula" $\intro*\pathfmltree_{s}(x, y)$ is defined as follows:
\begin{align*}
    \pathfmltree_{b}(x, y) &~\defeq~ E_b(x, y), &
    \pathfmltree_{bs}(x, y) &~\defeq~ \exists z.~ E_b(x, z) \land \pathfmltree_{s}(z, y),
\end{align*}
where $b \in \set{0,1}$, $s \in \set{0,1}^{\ge 1}$, and $z$ is a "variable" distinct from $x$ and $y$ (bound "variables" in recursive occurrences are renamed as in \Cref{lemma: unary encoding}).
Let $\fml[2]$ be the $\fmlclass\sigres{\set{E_0, E_1}}$-"sentence" obtained from $\fml[1]$ (without "equality") by replacing each "subformula" of the left-hand side form with the right-hand side "formula" as follows:
\begin{align*}
\exists x. \fml[3] &~\leadsto~ \exists x. E_0(x,x) \land \fml[3],&
\forall x. \fml[3] &~\leadsto~ \forall x. E_0(x,x) \to \fml[3],&
R_i(x, y) &~\leadsto~ \pathfmltree_{1\binrep_L(i)}(x, y).
\end{align*}
Since $\pathfmltree_{s}(x, y)$ is a $\kPP{3}$-"formula" and $k \geq 3$,
the "formula" $\fml[2]$ is in $\fmlclass\sigres{\set{E_0, E_1}}$.
Because $\len{\pathfmltree_{s}} = \asympO(L)$,
we have $\len{\fml[2]} = \asympO(L \, \len{\fml})$ for "tree-represented" "sentences".
For "DAG-represented" "sentences", the "formulas" $\pathfmltree_{s}$ can be shared along the trie of the reversed strings $1\binrep_L(i)$ ($i \in \range{0\,..\,C-1}$),
which has $\asympO(C)$ nodes, since for each $t$ there are at most $\min(2^t, C)$ distinct suffixes of length $t$ and $L = \max(\lceil \log_2 C \rceil, 2)$;
hence $\len{\fml[2]} = \asympO(k^2 C + \len{\fml}) = \asympO_k(\len{\fml})$.

Let $\struc[2] = \tuple{\univ{\struc[2]}, E_0^{\struc[2]}, E_1^{\struc[2]}}$ be the "structure" defined as follows:
\begin{itemize}
    \item $\univ{\struc[2]} \defeq \univ{\struc[1]} \sqcup \bigsqcup_{i = 0}^{C-1} \set{\tuple{v, w, i, \ell} \mid \tuple{v, w} \in R_i^{\struc[1]}, \ell \in \range{L}}$.
    For simplicity, we write $\tuple{v, w, i, 0}$ for $v$ and $\tuple{v, w, i, L+1}$ for $w$.
    \item $E_0$ and $E_1$ are defined as follows:
    \[
    \begin{aligned}
    E_b^{\struc[2]} \defeq{}&
    \set{\tuple{v,v} \mid v \in \univ{\struc[1]}, b=0} \sqcup {}\\
    &\set{\tuple{\tuple{v,w,i,\ell-1}, \tuple{v,w,i,\ell}}
    \mid \tuple{v,w} \in R_i^{\struc[1]},\
    \ell \in \range{L+1},\
    \text{the $\ell$-th bit of $1\binrep_L(i)$ is $b$}}.
    \end{aligned}
    \]
\end{itemize}

The following illustrates $\struc[2]$ for $C = 3$ (and $L = 2$):
\knowledgeconfigure {quotation=false}
\begin{align*}
    & \struc =\quad \begin{tikzpicture}[baseline = -.5ex]
    \tikzstyle{vert}=[draw = black, circle, fill = gray!10, inner sep = .5pt, minimum size = 3pt, font = \scriptsize]
    \graph[grow right = 3.cm, branch down = 6.ex]{ 
    {s1/{$v$}[vert, inner sep = 1pt, minimum size = 15pt]}-!-
    {s2/{$w$}[vert, inner sep = 1pt, minimum size = 15pt]}
    };
    \path (s1) edge[->, bend left = 10, earrow] node[fill=white, inner sep=1pt,font= \scriptsize, pos = .5, xshift = -.5em]{$R_0, R_2$} (s2);
    \path (s2) edge[->, bend left = 10, earrow] node[fill=white, inner sep=1pt,font= \scriptsize, pos = .5]{$R_1$} (s1);
\end{tikzpicture}
\quad \leadsto\quad
\struc[2] =\quad
\begin{tikzpicture}[baseline = -3.5ex]
    \tikzstyle{vert}=[draw = black, circle, fill = gray!10, inner sep = .5pt, minimum size = 3pt, font = \scriptsize]
    \graph[grow right = 1.5cm, branch down = 3.ex]{
    {/, s10/{$v$}[vert,inner sep = 1pt, minimum size = 15pt],} -!-
    {s11/{}[vert], s121/{}[vert], /, s22/{}[vert, xshift = -.2cm]} -!-
    {s12/{}[vert, xshift = .2cm], s122/{}[vert, xshift = .2cm], /, s21/{}[vert]} -!-
    {/, s20/{$w$}[vert,inner sep = 1pt, minimum size = 15pt]}
    };
    \graph[use existing nodes, edges={color=black, pos = .5, earrow}, edge quotes={fill=white, inner sep=1pt,font= \scriptsize}]{
        {s10} ->["$E_0$", out = 45, in = 135, looseness =6] {s10};
        s10 ->["$E_1$"] s11 ->["$E_0$"] s12;
        s10 ->["$E_1$"] s121 ->["$E_1$"] s122;
        {s20} ->["$E_0$", out = 45, in = 135, looseness =6] {s20};
        s20 ->["$E_1$"] s21 ->["$E_0$"] s22;
        s12 ->["$E_0$"] s20;
        s122 ->["$E_0$"] s20;
        s22 ->["$E_1$"] s10;
    };
\end{tikzpicture}
\end{align*}
\knowledgeconfigure {quotation}

By construction, we have:
\[
\strucn{\struc[2]}
= \asympO(\strucn{\struc} + L \, \strucm{\struc}), \quad
\strucm{\struc[2]}
=\asympO(\strucn{\struc} + L \, \strucm{\struc}).
\]
The total construction time is
$\asympO(\len{\fml[2]} + \strucn{\struc[2]} + \strucm{\struc[2]})
\le \asympO_k(\len{\fml} + \strucn{\struc[1]} + L \, \strucm{\struc[1]})$ for "DAG-represented" "sentences"
and
$\asympO(L \, \len{\fml} + \strucn{\struc[1]} + L \, \strucm{\struc[1]})$ for "tree-represented" "sentences".

\begin{claim*}
For every $i \in \range{0\,..\,C-1}$, $u, v \in \univ{\struc}$,
we have:
$\struc[2] \models \pathfmltree_{1 \binrep_L(i)}(u,v) \iff \struc[1] \models R_{i}(u,v)$.
\end{claim*}
\begin{proof}
\proofcase{($\Longleftarrow$)}
This follows directly from the definition of $\pathfmltree_{1 \binrep_L(i)}$.
\proofcase{($\Longrightarrow$)}
Unfold the definition of $\pathfmltree_{1 \binrep_L(i)}(u,v)$.
Every satisfying path must start with the leading \(1\)-edge and then follow
the unique bit pattern \(\binrep_L(i)\). By construction, the only
paths with this label from an original "element" \(u\) to an original "element"
\(v\) are precisely the paths introduced for tuples \((u,v)\in R_i^{\struc[1]}\).
Hence \(\struc[1] \models R_i(u,v)\).
\end{proof}
Thus, by a straightforward induction on the structure of "subformulas",
for "valuations" of "variables" to $\univ{\struc}$,
we have:
$\struc \models \fml$ "iff" $\struc[2] \models \fml[2]$.
This gives the desired construction.
\end{proof}

\begin{lemma}[\Cref{lemma: reduction monadic and binary predicates to single binary predicate} for "tree-represented" "formulas"]\label{lemma: reduction monadic and binary predicates to single binary predicate for tree}
Fix $k \ge 3$.
Let $\fml$ be a "tree-represented" $\kPP{k}\sigres{(C_1,C_2)}$-"sentence" ("resp", $\kEP{k}\sigres{(C_1,C_2)}$-"sentence", $\kFO{k}\sigres{(C_1,C_2)}$-"sentence"), and let $\struc$ be a "structure".
Let $C \defeq \card{\fmlsig{\fml}} + 1$ where the extra binary relation is used
to eliminate "equality".
Let $L \defeq \max(\lceil \log_2 C \rceil, 2)$.
One can construct, in $\asympO(L \, (\len{\fml} + \strucm{\struc}))$ time, a "tree-represented" $\kPP{k}\sigres{(0,1)}$-"sentence" ("resp", $\kEP{k}\sigres{(0,1)}$-"sentence", $\kFO{k}\sigres{(0,1)}$-"sentence") $\fml[2]$ without "equality" and a "structure" $\struc[2]$ such that
$\struc \models \fml$ "iff" $\struc[2] \models \fml[2]$.
Moreover,
$\len{\fml[2]} = \asympO(L \, \len{\fml})$,
$\strucn{\struc[2]} = \asympO(\strucn{\struc} + L \, \strucm{\struc})$, and
$\strucm{\struc[2]} = \asympO(L \, \strucm{\struc})$.
\end{lemma}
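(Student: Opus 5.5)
The plan is to mirror the proof of \Cref{lemma: reduction monadic and binary predicates to single binary predicate} for DAG-represented formulas, swapping the unary encoding of \Cref{lemma: unary encoding} for the binary encoding of \Cref{proposition: binary encoding}.

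The reduction is the chain
\[
\kPP{k}\sigres{(C_1,C_2)} \leadsto_{\text{\Cref{proposition: arity monotone}}} \kPP{k}\sigres{(0,C-1)} \leadsto_{\text{\Cref{proposition: equality elimination}}} \kPP{k}\sigres{(0,C)} \text{ without equality} \leadsto_{\text{\Cref{proposition: binary encoding}}} \kPP{k}\sigres{(0,2)},
\]
followed by a final step that merges $E_0,E_1$ into a single binary relation. The same chain works for $\kEP{k}$ and $\kFO{k}$, since every step only rewrites atoms and guards quantifiers.

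\textbf{First two steps.}
\begin{itemize}
\item Apply \Cref{proposition: arity monotone} with $r=2$: each unary atom $U(x)$ becomes $U'(x,x)$, and the relation is replaced by its diagonal.
\item Apply \Cref{proposition: equality elimination}: $x=y$ becomes $E(x,y)$, with $E$ interpreted as the diagonal of the universe.
\end{itemize}
Both steps keep tree-representation. The size grows by only a constant factor, $n$ is unchanged, and $m$ becomes $\strucm{\struc}+\strucn{\struc}$. The number of binary symbols is now $C=\card{\fmlsig{\fml}}+1$.

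\textbf{Third step.} Apply the tree-represented clause of \Cref{proposition: binary encoding}. This gives $\len{\fml[2]}=\asympO(L\len{\fml})$ and $\strucn{\struc[2]},\strucm{\struc[2]}=\asympO(\strucn{\struc}+L(\strucm{\struc}+\strucn{\struc}))$, in time $\asympO(L(\len{\fml}+\strucm{\struc}+\strucn{\struc}))$.

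\textbf{Merging $E_0,E_1$.} I will apply \Cref{lemma: unary encoding} with only two symbols. Its tree-represented clause then costs just an $\asympO(1)$ factor in formula size, and $\asympO(\strucn{\cdot}+\strucm{\cdot})$ in structure size and time.

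\textbf{Expected obstacle: the bounds.} The statement wants $\strucn{\struc[2]}=\asympO(\strucn{\struc}+L\strucm{\struc})$, $\strucm{\struc[2]}=\asympO(L\strucm{\struc})$, and total time $\asympO(L(\len{\fml}+\strucm{\struc}))$, with no standalone $\strucn{\struc}$ term in $m$ or in the time. Two problems arise.
\begin{itemize}
\item The diagonals from equality elimination and from the final unary encoding each add $\Theta(\strucn{\struc})$ tuples.
\item \Cref{lemma: unary encoding} multiplies the whole universe by a constant, including the $L\strucm{\struc}$ dummy elements.
\end{itemize}
The fix I would try is to first remove isolated elements via \Cref{proposition: isolated}, so that $\strucn{\struc}\le 2\strucm{\struc}+k$. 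This is justified because arity is at most $2$ (only constant-factor cost) and tuples are given as lists. With that, all $\strucn{\struc}$ terms are absorbed into $\asympO(\strucm{\struc})$, and every bound in the statement follows.

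If the isolated-element removal is disallowed, because the claimed $n$ bound retains $\strucn{\struc}$, I would keep the original universe. Only the tuple-count and time bounds need absorbing, and for those it suffices to note that in the sparse input convention $\strucn{\struc}$ may be compressed to $\asympO(\strucm{\struc}+k)$ without changing truth.

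Finally, correctness is just the composition of the correctness of the individual steps, each of which is already established.
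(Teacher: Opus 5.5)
Your proposal is correct and is essentially the paper's proof: first remove isolated elements via \Cref{proposition: isolated} so that $\strucn{\struc}=\asympO(\strucm{\struc})$, then apply \Cref{proposition: arity monotone,proposition: equality elimination}, then the tree-represented case of \Cref{proposition: binary encoding}, and finally \Cref{lemma: unary encoding} with two symbols to merge $E_0,E_1$. Your fallback paragraph is unnecessary, and it is not really an alternative since the compression it invokes is exactly \Cref{proposition: isolated}; the resulting $\asympO(L\,\strucm{\struc})$ universe size already satisfies the stated $\asympO(\strucn{\struc}+L\,\strucm{\struc})$ bound, so the removal step is allowed.
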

\begin{proof}
Let $\widehat{\struc}$ be the "structure" obtained from $\struc$ by removing isolated "elements" (\Cref{proposition: isolated}).
Then $\strucn{\widehat{\struc}} = \asympO(\strucm{\widehat{\struc}}) \le \asympO(\strucm{\struc})$.
Let $\fml'$ and $\struc'$ be obtained from $\fml$ and $\widehat{\struc}$ by encoding unary "relations" as binary ones and eliminating "equality" (\Cref{proposition: arity monotone,proposition: equality elimination}).
Then $\len{\fml'} = \asympO(\len{\fml})$, $\strucn{\struc'} = \asympO(\strucn{\widehat{\struc}}) \le \asympO(\strucm{\struc})$, and $\strucm{\struc'} = \asympO(\strucm{\widehat{\struc}}) \le \asympO(\strucm{\struc})$.
Let $\fml[2]$ and $\struc[2]$ be obtained from $\fml'$ and $\struc'$ by applying \Cref{proposition: binary encoding} and \Cref{lemma: unary encoding}.
We then have:
\begin{align*}
\len{\fml[2]} &= \asympO(L \, \len{\fml'}) = \asympO(L \, \len{\fml}),\\
\strucn{\struc[2]} &= \asympO(\strucn{\struc'} + L \, \strucm{\struc'}) = \asympO(L \, \strucm{\struc}),&
\strucm{\struc[2]} &= \asympO(\strucn{\struc'} + L \, \strucm{\struc'}) = \asympO(L \, \strucm{\struc}).
\end{align*}
The construction takes $\asympO(\strucm{\struc}) + \asympO(\len{\fml} + \strucm{\widehat{\struc}}) + \asympO(L \, \len{\fml'} + \strucn{\struc'} + L \, \strucm{\struc'}) + \asympO(L \, (\len{\fml} + \strucm{\struc}))
= \asympO(L \, (\len{\fml} + \strucm{\struc}))$ time.
As the truth of $\fml$ on $\struc$ is preserved through the reductions, this completes the proof.
\end{proof}
\section{Proof of \texorpdfstring{\Cref{lemma:sparse-k3'}}{Lemma~\ref{lemma:sparse-k3'}}}\label{section:lemma:sparse-k3'}

\begin{lemma}["Cf" \Cref{lemma:sparse-k4}]\label{lemma:sparse-k3}
Fix $\eps > 0$.
Assuming the \OVTriangleHypothesis{},
there is a $\kPP{3}\sigres{(\ordinalomega, 2)}$-"sentence" $\fml_{\eps}$ such that
$\MC{\set{\fml_{\eps}}}$ cannot be solved in $\asympO(m^{\matrixexponent - \eps})$ time.
\end{lemma}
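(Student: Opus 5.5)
We mirror the proof of \Cref{lemma:sparse-k4} with $(k,r)=(3,2)$, using the \OVTriangleHypothesis{} instead of the $(k,r)$-\OVHyperCliqueHypothesis{} and replacing the target exponent $k$ by $\matrixexponent$. The reduction itself is \Cref{lemma: new new partitioned binary hyperclique hypothesis}, which is stated for all $k > r \ge 2$ and hence applies to $k=3$, $r=2$. It yields a fixed $\kPP{3}\sigres{(B,2)}\subseteq\kPP{3}\sigres{(\ordinalomega,2)}$-sentence $\fml$ of size $\asympO_B(1)$. For each instance $\mathcal{V}$ of \OVTriangleProblem{n,d} with $B \mid d$, it gives a structure $\struc_{\mathcal{V},2}$ with $\strucn{} = \asympO(2^{d/B}+n)$ and $\strucm{} = \asympO(2^{d/B+1}n)$. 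The triangle exists iff $\struc_{\mathcal{V},2}\models\fml$.

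We may assume $\eps<\matrixexponent-1$; otherwise take $\fml_{\eps}\defeq\fml_{\eps'}$ for a smaller $\eps'$, since a stronger lower bound implies a weaker one. Set $\delta\defeq\eps/2$ and let $c$ be the constant given by the \OVTriangleHypothesis{} for loss $\delta$. Then choose $B$ with $\gamma\defeq (c/B)\log 2 <\delta/\matrixexponent$, that is, $\matrixexponent\gamma<\delta$, and let $\fml_{\eps}$ be the sentence produced for this $B$.

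Given an instance with $d\le c\log n$, we pad with at most $B-1$ zero coordinates so that $B\mid d$. This preserves well-formedness and the answer. Then $2^{d/B}\le 2\cdot n^{\gamma}$, so $\strucn{}=\asympO(n)$, and both $\strucm{}$ and the construction time are $\asympO(n^{1+\gamma})$.

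Suppose an $\asympO(m^{\matrixexponent-\eps})$-time algorithm exists. It would solve \OVTriangleProblem{n,\le c\log n} in time
\[
\asympO(n^{1+\gamma})+\asympO(n^{(1+\gamma)(\matrixexponent-\eps)}).
\]
For the second term, $(1+\gamma)(\matrixexponent-\eps)<\matrixexponent-\eps+\matrixexponent\gamma<\matrixexponent-\delta$.

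The step that needs care is the first term, since we must ensure $1+\gamma\le\matrixexponent-\delta$. This holds because $\matrixexponent\ge2$: with $\eps<1$, we get $\delta<1/2$ and $\gamma<\delta/2<1/4$, so $1+\gamma<3/2\le\matrixexponent-\delta$. Hence the total running time is $\asympO(n^{\matrixexponent-\delta})$, contradicting the choice of $c$.

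Finally, for \Cref{lemma:sparse-k3'}, we apply \Cref{lemma: reduction monadic and binary predicates to single binary predicate} (with $k=3$) after \Cref{proposition: isolated}, exactly as in the proof of \Cref{lemma:sparse-k4'}. Because $m$ is preserved up to $\asympO_{\eps}(\cdot)$ factors and $\matrixexponent-\eps>1$, this transfers the bound to $\kPP{3}\sigres{(0,1)}$.
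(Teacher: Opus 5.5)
Your proposal is correct and follows essentially the same route as the paper. Like the paper, you apply \Cref{lemma: new new partitioned binary hyperclique hypothesis} with $(k,r)=(3,2)$, set $\delta=\eps/2$, choose $B$ so that $\matrixexponent\gamma<\delta$, pad so that $B\mid d$, and compare $(1+\gamma)(\matrixexponent-\eps)$ against $\matrixexponent-\delta$. There is one harmless slip: you normalize to $\eps<\matrixexponent-1$ but then use $\eps<1$ when bounding the construction term, so normalize to $\eps<1$ as the paper does (the inequality $1+\gamma\le\matrixexponent-\delta$ holds under either normalization).
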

\begin{proof}
This follows the same argument as \Cref{lemma:sparse-k4};
note that \Cref{lemma: new new partitioned binary hyperclique hypothesis} works for $k = 3$ as well.
We give the details for completeness.

As in the proof of \Cref{lemma:sparse-k4}, we may assume $\eps < 1$.
Let $\delta \defeq \eps/2$ and let $c$ be the constant supplied by the
\OVTriangleHypothesis{} for the exponent loss $\delta$.
Choose $B$ so large that $\gamma \defeq c/B \cdot \log 2 < \delta/\matrixexponent$, or equivalently, $\matrixexponent\gamma < \delta$.
Note that $B$ is chosen after $\eps$ and after the constant $c$ (here, $c$ is determined by $\eps$).
Let $\fml_{\eps}$ be the fixed "sentence" produced by \Cref{lemma: new new partitioned binary hyperclique hypothesis} for $k = 3$, $r = 2$, and the parameter $B$.
Suppose, toward a contradiction, that $\MC{\set{\fml_{\eps}}}$ can be solved in $\asympO(m^{\matrixexponent - \eps})$ time.
Let $\mathcal{V} \subseteq \set{0, 1}^d$ be an arbitrary instance of \OVTriangleProblem{n, \le c \log n}.
Without loss of generality, we may assume that $B$ divides $d$,
by padding each "vector" with at most $B-1$ zeros.
This padding does not change the "generalized orthogonality@generalized-orthogonal" of any tuple of "vectors", so the instance stays "well-formed" and the answer is unchanged.
By $d \le c \log n + B - 1$, we have:
\[
2^{{d/B}}
~\le~ 2^{(\frac{c}{B} \cdot \log n) + 1}
~=~ 2 \cdot n^{\frac{c}{B} \cdot \log 2}
~=~ \asympO(n^{\gamma}).
\]
Let $\struc \defeq \struc_{\mathcal{V}, 2}$ be given by \Cref{lemma: new new partitioned binary hyperclique hypothesis}.
By $2^{{d/B}} = \asympO(n^{\gamma})$ and $\gamma < \delta/\matrixexponent < 1$, we have:
\begin{align*}
\strucn{\struc} &= \asympO(n^{\gamma} + n) = \asympO(n), &\strucm{\struc} &= \asympO(n^{1+\gamma}), & \text{and the construction time is } \asympO(n^{1 + \gamma}).
\end{align*}
By applying the assumed $\asympO(m^{\matrixexponent - \eps})$-time algorithm,
the \OVTriangleProblem{n, \le c \log n} can be solved in
$\asympO(n^{1+\gamma}) + \asympO(n^{(1+\gamma)(\matrixexponent - \eps)}) \le \asympO(n^{\matrixexponent - \delta})$ time, where
$(1+\gamma)(\matrixexponent-\eps)
< \matrixexponent-\eps + \matrixexponent\gamma
< \matrixexponent-\eps + \delta
= \matrixexponent-\delta$
by $\matrixexponent\,\gamma <\delta$ and $\eps = 2 \delta$.
This contradicts the choice of $c$.
Hence, $\MC{\set{\fml_{\eps}}}$ cannot be solved in $\asympO(m^{\matrixexponent - \eps})$ time.
\end{proof}

\begin{lemma*}[Restatement of {\Cref{lemma:sparse-k3'}}]
\sparsekthreeprimelemma
\end{lemma*}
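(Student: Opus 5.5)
We derive the statement from \Cref{lemma:sparse-k3} exactly as \Cref{lemma:sparse-k4'} was derived from \Cref{lemma:sparse-k4}, instantiated at $k = 3$ with exponent $\matrixexponent - \eps$ in place of $k - \eps$. We may assume $\eps < 1$, so that $\matrixexponent - \eps > 1$; if $\eps \ge 1$ we take the sentence for $\eps = 1/2$. Let $\fml_{\eps}$ be the $\kPP{3}\sigres{(\ordinalomega, 2)}$-"sentence" given by \Cref{lemma:sparse-k3}. Let $\fml[2]_{\eps}$ be the $\kPP{3}\sigres{(0,1)}$-"sentence" obtained from $\fml_{\eps}$ by \Cref{lemma: reduction monadic and binary predicates to single binary predicate}. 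That lemma applies because $k = 3 \ge 3$. The resulting sentence depends only on $\fml_{\eps}$, so it is a fixed sentence.

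For the reduction, take an arbitrary "structure" $\struc$ as input to $\MC{\set{\fml_{\eps}}}$. First apply \Cref{proposition: isolated}, obtaining $\widehat{\struc}$ with $\strucn{\widehat{\struc}} = \asympO_{\eps}(\strucm{\struc})$ and $\strucm{\widehat{\struc}} \le \strucm{\struc}$; the arity of $\fml_{\eps}$ is at most $2$ and $k=3$, so the constants are fixed. Then apply \Cref{lemma: reduction monadic and binary predicates to single binary predicate} to $\fml_{\eps}$ and $\widehat{\struc}$. Here $C = \card{\fmlsig{\fml_{\eps}}} + 1$ is a constant depending only on $\eps$. The lemma yields $\struc[2]$ with $\strucn{\struc[2]}, \strucm{\struc[2]} = \asympO_{\eps}(\strucm{\struc})$, built in $\asympO_{\eps}(\strucm{\struc})$ time, such that $\struc \models \fml_{\eps}$ "iff" $\struc[2] \models \fml[2]_{\eps}$.

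Suppose, toward a contradiction, that $\MC{\set{\fml[2]_{\eps}}}$ is solvable in $\asympO(m^{\matrixexponent - \eps})$ time. Then $\MC{\set{\fml_{\eps}}}$ is solvable in $\asympO_{\eps}(\strucm{\struc}) + \asympO(\strucm{\struc[2]}^{\matrixexponent-\eps}) = \asympO_{\eps}(\strucm{\struc}^{\matrixexponent-\eps})$ time. The linear preprocessing term is absorbed because $\matrixexponent - \eps > 1$. This contradicts \Cref{lemma:sparse-k3}.

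The only point requiring care is that the blow-up factor $C$ in \Cref{lemma: reduction monadic and binary predicates to single binary predicate} multiplies $\strucn{\struc}$, not $\strucm{\struc}$. This is why isolated elements must be removed first, so that $n = \asympO(m)$. Since $C$ depends only on the fixed sentence $\fml_{\eps}$, all factors stay constant in $m$, so I expect no real obstacle.
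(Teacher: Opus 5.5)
Your proposal is correct and matches the paper's proof essentially step for step. You take the $\mathsf{PP}^3$ sentence from \Cref{lemma:sparse-k3}, remove isolated elements via \Cref{proposition: isolated} so that $n = O(m)$, apply the single-binary-relation encoding of \Cref{lemma: reduction monadic and binary predicates to single binary predicate} with a constant $C$, and absorb the linear preprocessing cost using $\omega - \varepsilon > 1$.
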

\begin{proof}
This follows by the same argument as in the proof of \Cref{lemma:sparse-k4'}, except that we use \Cref{lemma:sparse-k3} instead of \Cref{lemma:sparse-k4}.

As above, we may assume $\eps < 1$.
Note that $\matrixexponent - \eps > 1$, as $\matrixexponent \ge 2$ and $\eps < 1$.
Let $\fml_{\eps}$ be the $\kPP{3}\sigres{(\ordinalomega, 2)}$-"sentence" given by \Cref{lemma:sparse-k3} for the parameter $\eps$.
Let $\fml[2]_{\eps}$ be the "sentence" obtained from $\fml_{\eps}$ by \Cref{lemma: reduction monadic and binary predicates to single binary predicate}; it depends only on $\eps$.
Let $\struc$ be an arbitrary "structure".
Let $\widehat{\struc}$ be the "structure" obtained from $\struc$ by applying \Cref{proposition: isolated}.
Then $\strucn{\widehat{\struc}} = \asympO_{\eps}(\strucm{\struc})$ and $\strucm{\widehat{\struc}} = \asympO(\strucm{\struc})$.
Let $\struc[2]$ be the "structure" obtained from $\widehat{\struc}$ by applying \Cref{lemma: reduction monadic and binary predicates to single binary predicate} to $\fml_{\eps}$ and $\widehat{\struc}$.
Then $\fml[2]_{\eps}$ is a $\kPP{3}\sigres{(0,1)}$-"sentence" without "equality",
$\len{\fml[2]_{\eps}} = \asympO(\len{\fml_{\eps}})$, $\strucn{\struc[2]} = \asympO_{\eps}(\strucn{\widehat{\struc}}) = \asympO_{\eps}(\strucm{\struc})$, and $\strucm{\struc[2]} = \asympO_{\eps}(\strucn{\widehat{\struc}} + \strucm{\widehat{\struc}}) = \asympO_{\eps}(\strucm{\struc})$.
Toward a contradiction, we assume that $\MC{\set{\fml[2]_{\eps}}}$ is solvable in $\asympO(m^{\matrixexponent - \eps})$ time.
Since $\fml_{\eps}$ is fixed, the construction time is $\asympO_{\eps}(\strucm{\struc})$.
Then, by constructing $\struc[2]$ from $\struc$ and running the assumed algorithm on $\struc[2]$, $\MC{\set{\fml_{\eps}}}$ can be solved in
\[
\asympO_{\eps}(\strucm{\struc}) + \asympO(\strucm{\struc[2]}^{\matrixexponent-\eps})
\le \asympO_{\eps}(\strucm{\struc}^{\matrixexponent-\eps})
\]
time,
which contradicts \Cref{lemma:sparse-k3}.
\end{proof}
 
\clearpage
\section*{Appendices to \Cref{section: fo3}}

\section{Example of the \texorpdfstring{$\kFO{3}$}{FO3}-to-\texorpdfstring{$\CoR$}{CoR} Translation}\label{section: example: proposition: fo3 to atomic CoR}
\begin{example}
Here are additional examples, where the input "formula" has one or no free "variables".
\begin{align*}
\fml(y) \equiv \exists x. P(x, y)
&\quad\leadsto\quad \fml[2](y, y) \equiv (\cortop \compo P)(y, y),\\
\fml(y) \equiv \exists x. P(x, x) \land Q(x, y)
&\quad \leadsto\quad \fml[2](y, y) \equiv \left(\cortop \compo ((\corid \intersection P) \compo Q) \right) (y, y),\\
\fml \equiv \exists x. P(x, x)
&\quad\leadsto\quad \fml[2](v, v) \equiv (\cortop \compo (\corid \intersection P) \compo \cortop)(v, v).
\end{align*}
The input "formula" and the translated $\CoR$ "atomic FO-formula@@CoR" may have different free "variables",
since $\CoR$ "atomic FO-formulas@@CoR" have at least one free "variable".
The quantifier elimination \eqref{QE} shows that
we can eliminate 
the "quantifier" $\exists$ from $\exists x. \fml[1]_1 \land \fml[1]_2$ on the $\CoR$ side.
Recall that $\exists x. \fml[1]_1 \lor \fml[1]_2$ can be distributed to $(\exists x. \fml[1]_1) \lor (\exists x. \fml[1]_2)$.
We can translate more complex "formulas" by first translating them to \phantomintro{DNF}"DNF" and then pushing "quantifiers" inward.
For instance, we can translate the following "formula" $\fml(x, z)$ into a $\CoR$ "atomic FO-formula@@CoR" $\fml[2]$, as follows:
\begin{align*}
& \fml(x, z) \equiv (\exists y. (P(x, y) \lor Q(y, z)) \land R(x, y)) \land S(x, z)\\
& \leadsto (\exists y. P(x, y) \land R(x, y) \land S(x, z)) \lor (\exists y. Q(y, z) \land R(x, y) \land S(x, z)) \tag{to "DNF"}\\
& \leadsto ((\exists y. P(x, y) \land R(x, y)) \land S(x, z)) \lor ((\exists y. Q(y, z) \land R(x, y)) \land S(x, z)) \tag{push $\exists$ inward}\\
& \leadsto (((P \intersection R) \compo \cortop)(x, z) \land S(x, z)) \lor ((R \compo Q)(x, z) \land S(x, z)) \tag{to $\CoR$ (intermediate)}\\
& \leadsto ((((P \intersection R) \compo \cortop) \intersection S) \union ((R \compo Q) \intersection S))(x, z) \equiv \fml[2](x, z). \tag{to $\CoR$}
\end{align*}
\end{example}

\section{Proof of \texorpdfstring{\Cref{lemma: guarded-pairing}}{Lemma~\ref{lemma: guarded-pairing}}: \texorpdfstring{\kl[guarded-pairing]{Guarded-Pairing}}{Guarded-Pairing}}\label{section: lemma: guarded-pairing}
In this section, we prove the following.
\begin{lemma*}[Restatement of \Cref{lemma: guarded-pairing}]
    \guardedpairinglemma
\end{lemma*}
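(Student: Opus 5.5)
I will make the guarded-pairing idea of the overview precise. On the structure side, $\struc[2]$ has universe $\univ{\struc}$ together with one fresh element $w_{R,\bar a}$ per tuple $\bar a \in R^{\struc}$ for each ternary (and binary) $R$. It has fresh binary relations $R^{[i]} = \set{\tuple{w_{R,\bar a}, a_i}}$ for $i \le \arity{R}$, diagonal copies $D_U$ for unary $U$, and one extra binary relation $O = \set{\tuple{a,a} \mid a \in \univ{\struc}}$ marking original elements. This gives $1 + \sum_i iC_i$ binary symbols, and $\strucn{\struc[2]}, \strucm{\struc[2]} = \asympO(\strucm{\struc})$ after first applying \Cref{proposition: isolated}. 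I can assume atoms have no repeated variables: repeated variables can be pushed into the guard with an equality test, or the projection step of \Cref{lemma: projection} can be composed first, so every ternary atom is $R(x_{\sigma 1},x_{\sigma 2},x_{\sigma 3})$ over the three distinct variables.

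On the formula side, I translate $\fml$ by structural induction. Quantifiers $\exists x.\,\fml[3]$ are relativized to $\exists x.\,O(x,x)\land \fml[3]^\dagger$, and dually for $\forall$. A binary atom $S(x,y)$ becomes $\exists z.\,S^{[1]}(z,x)\land S^{[2]}(z,y)$ with $z$ the third variable. This is fine as long as $S(x,y)$ does not sit where all three variables are simultaneously needed.

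The difficulty is ternary atoms, and atoms in contexts where all three variables are live. Here I follow the overview: at each point where the translation would need an extra variable, I case-split on whether some pair, say $(x_1,x_2)$, is guarded, i.e.\ co-occurs in some tuple. Concretely, when translating a subformula $\fml[3]$ with all of $x_1,x_2,x_3$ free, I will instead produce two translations:
\begin{itemize}
\item a version $\fml[3]^{\dagger}_{\mathrm{g}}(x_1, x_3)$, in which $x_1$ is reinterpreted as ranging over tuple-elements $w$ encoding the pair $(x_1,x_2)$, and every occurrence of $x_2$ (or of the original $x_1$) is replaced by $\exists y.\, T^{[j]}(x_1,y)\land\ldots$ using a spare variable $y$;
\item an unguarded version, where atoms containing both $x_1$ and $x_2$ become $\fmlfalse$.
\end{itemize}
The top-level formula is then $\bigl(\exists x_1.\,\mathit{Guard}(x_1)\land \fml[3]^{\dagger}_{\mathrm{g}}\bigr)\lor \fml[3]^{\dagger}_{\mathrm{u}}$, where the unguarded branch includes a test that the pair is not guarded.

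To keep the size $\asympO(\len{\fml})$, I would use the DAG representation and memoize per (subformula, which pair is packed, which tuple-relation and coordinates encode it). This is a constant number of variants per subformula, because there are only 3 pairs. The catch is that the number of relation-coordinate choices is $\asympO(\sum iC_i)$. To avoid multiplying by it, I pack via a single generic relation $G^{[i]}$, taken as the union over all relations and coordinates, plus per-relation identity checks.

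The main obstacle is correctness of quantifier steps inside the packed mode. Requantifying $\exists x_3$ is harmless because $x_3$ is unpacked. Requantifying $\exists x_1$ or $\exists x_2$ breaks the pair. In that case I must unpack: the remaining component is recovered as $\exists y.\,G^{[j]}(x_1,y)$, renamed into the freed variable, and the translation switches back to unpacked mode.

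Proving, by induction on subformulas and simultaneously for both modes, that under any valuation the packed and unpacked translations agree with the original is the heart of the argument. I expect the delicate points to be two:
\begin{itemize}
\item the unguarded branch really can set mixed atoms to $\fmlfalse$, which holds because any true atom itself witnesses guardedness;
\item each mode switch uses at most three variables.
\end{itemize}
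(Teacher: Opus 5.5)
The gap is in where you place the guarded/unguarded split. You put it at a subformula in which $x_1,x_2,x_3$ are all free, the point where an extra variable would be needed, and combine the two versions as $(\exists x_1.\,\mathit{Guard}(x_1)\land\varphi^{\dagger}_{\mathrm g})\lor\varphi^{\dagger}_{\mathrm u}$. This cannot be written with three variables. Once $\exists x_1$ rebinds $x_1$ to a tuple vertex, $\mathit{Guard}$ must compare that vertex with the \emph{old} value of $x_1$ and with $x_2$, while $x_3$ must stay live for $\varphi^{\dagger}_{\mathrm g}$; that is four values at once.

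The only places with a free slot are the original quantifiers $\exists x_1$ and $\exists x_2$, where the old value is dead, and that is exactly where the paper splits. Call the unpacked mode $\Lambda_{\emptyset}$ and the packed modes $\Lambda_{\langle j_1,j_2\rangle}$, and write $\pi_j$ for your $G^{[j]}$. In unpacked mode, $\exists x_1.\varphi_1$ becomes the disjunction of $\exists x_1.\bigvee_{j_1,j_2}\bigl((\exists x_3.\,\pi_{j_1}(x_1,x_3))\land\pi_{j_2}(x_1,x_2)\land\Lambda_{\langle j_1,j_2\rangle}(\varphi_1)\bigr)$ and $\exists x_1.\bigl(O(x_1,x_1)\land\lnot\mathrm{Gdd}(x_1,x_2)\land\Lambda_{\emptyset}(\varphi_1)\bigr)$. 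For $\exists x_2$, one first copies $x_1$ into $x_2$ via $x_2=x_1$, then rebinds $x_1$ to a tuple vertex carrying the old $x_1$ at coordinate $j_1$.

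This also exposes the invariant your induction needs but never states. In unpacked mode the current pair must \emph{always} be unguarded; this is the only justification for sending mixed atoms to $\mathsf{F}$. It must therefore be re-established at \emph{every} binding of $x_1$ or $x_2$. So when packed mode meets $\exists x_1$, unpacking the survivor and switching back to unpacked mode is not enough, because the new value may be guarded with the survivor. The unpacked translation of that same quantifier must split again, as in $\Lambda_{\langle j_1,j_2\rangle}(\exists x_1.\varphi_1)=\exists x_2.\bigl(\pi_{j_2}(x_1,x_2)\land\Lambda_{\emptyset}(\exists x_1.\varphi_1)\bigr)$. With this, only the pair $(x_1,x_2)$ is ever packed, since in unpacked mode an atom not containing both has at most two distinct variables; that gives ten modes per subformula. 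In packed mode, a ternary atom uses the spare $x_2$ as an $R$-witness. Each coordinate that must match a packed component is then checked through $x_3$ rebound locally inside that conjunct; one spare variable alone does not suffice. The same local rebinding is why binary atoms never cause trouble in unpacked mode, contrary to your caveat.

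Two smaller points. First, your tuple vertices come only from binary and ternary tuples. So a pair $(a,a)$ with $a$ in no such tuple lands in the unguarded branch, where the equality atom $x_1=x_2$ is true, and your claim that any true atom witnesses guardedness fails for equality. You can keep $=$ untranslated in unpacked mode. Alternatively, as the paper does, first replace $=$ by a diagonal symbol $\mathrm{EQ}$ that also receives tuple vertices; its single projection symbol can replace your $O$, keeping the count $1+\sum_i iC_i$. This makes every $(a,a)$ guarded and gives a legitimate starting state $\exists x_1\exists x_2.\,\pi_1(x_1,x_2)\land\Lambda_{\langle 1,1\rangle}(\varphi)$ for the sentence, which your plan leaves unspecified.

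Second, composing with the projection lemma first is not available here. It introduces one fresh symbol per atom and breaks the bound $1+\sum_i iC_i$ on the output signature; repeated variables need no special handling anyway. For the same reason $G^{[i]}$ must be the shared disjunction $\bigvee_R R^{[i]}$, not a new symbol. No per-relation identity checks are needed, since $R^{[i]}(x_2,\cdot)$ already fixes the relation of the witness.
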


For a "structure" $\struc$ over a "signature" $\sig$,
let $\intro*\Gaifman^{\struc} \subseteq \univ{\struc}^2$
be the binary relation of the ""Gaifman graph"" of $\struc$:
$\tuple{w, w'} \in \Gaifman^{\struc}$
"iff"
$w = w'$, or
there are $R \in \sig$, $\tuple{v_1, \dots, v_{\arity{R}}} \in R^{\struc}$, and $i, j \in \range{\arity{R}}$ such that $v_i = w$ and $v_j = w'$.
A pair $\tuple{v, w} \in \univ{\struc}^2$ is ""guarded"" if $\tuple{v, w} \in \Gaifman^{\struc}$.
Given a "valuation" $I$,
a pair $\tuple{x, y}$ of "variables" is \reintro*\kl{guarded} on $I$ if the pair $\tuple{I(x), I(y)}$ is "guarded".
Observe $\card{\Gaifman^{\struc}} = \asympO(\strucn{\struc} + \strucm{\struc})$, which is $\asympO(\strucm{\struc})$ after applying \Cref{proposition: isolated}.
We encode each "guarded" pair as a single "variable" by using a newly added "vertex" for a tuple of $R$, together with two indices that record the pair.

\begin{proof}
By \Cref{proposition: isolated}, we first replace $\struc[1]$ by an
equivalent "structure", still denoted by $\struc[1]$, in $\asympO(\strucm{\struc[1]})$ time, so that $\strucn{\struc[1]} \le \asympO(\strucm{\struc[1]})$.
Before the construction, by applying \Cref{proposition: equality elimination},
we eliminate "equality" with a fresh binary "relation symbol" $\intro*\EQ$, where $\EQ^{\struc[1]}$ is the diagonal relation.
Then $\fml$ is an $\kFO{3}\sigres{(C_1, C_2 + 1, C_3)}$-"sentence" without "equality".
Let $\intro*\sigEQ \defeq \fmlsig{\fml} \cup \set{\EQ}$ and regard $\struc[1]$ as a "structure" over $\sigEQ$ (discarding the "relations" not in $\sigEQ$);
in particular, $\Gaifman^{\struc[1]}$ below is taken with respect to $\sigEQ$.
Without loss of generality, we may assume that $C_i$ is exactly the number of "relation symbols" of "arity" $i$ in the input "sentence" (decreasing $C_i$ only strengthens the claim); in particular, $C_1 + C_2 + C_3 \le \len{\fml}$.
Then $\sigEQ$ has exactly $C_1$, $C_2 + 1$, and $C_3$ "relation symbols" of "arity" $1$, $2$, and $3$, respectively.

Given the "structure" $\struc[1]$ over $\sigEQ$,
we construct the "structure" $\struc[2]$ as follows:
\begin{itemize}
\item
$\univ{\struc[2]} \defeq \univ{\struc[1]} \sqcup \set{w_{R v_1, \dots, v_{\ell}} \mid R \in \sigEQ, \tuple{v_1, \dots, v_{\ell}} \in R^{\struc[1]}}$.
(Namely, for each $R \in \sigEQ$ and each tuple $\tuple{v_1, \dots, v_{\ell}} \in R^{\struc[1]}$,
a new "vertex" $w_{R v_1, \dots, v_{\ell}}$ is added.)
\item
$(R^{[i]})^{\struc[2]} \defeq \set*{ \tuple{w_{R v_1, \dots, v_{\ell}}, v_i} \;\middle|\; 
\tuple{v_1, \dots, v_{\ell}} \in R^{\struc[1]}}$ for each $R \in \sigEQ$ and $i \in \range{\arity{R}}$.
\end{itemize}
We also define the following "formulas":
\begin{itemize}
    \item $\intro*\Proj_i(x, y) \defeq \bigvee_{R \in \sigEQ ; \arity{R} \ge i} R^{[i]}(x, y)$
    for each $i \in \range{3}$.
    The "formulas" $\Proj_i(x, y)$ are shared as "DAGs"; hence their total
contribution to the translated "formula" is $\asympO(1 + \sum_{i=1}^3 i C_i) = \asympO(\len{\fml})$.

    \item $\intro*\IsV(x) \defeq \exists z. \bigvee_{i = 1}^3 \Proj_i(z, x)$, where $z$ is a "variable" distinct from $x$.
    Observe $\struc[2], I \models \IsV(x)$ "iff" $I(x) \in \univ{\struc[1]}$, because $\EQ \in \sigEQ$ yields the "vertex" $w_{\EQ a a}$ with $\tuple{w_{\EQ a a}, a} \in (\EQ^{[1]})^{\struc[2]}$ for every $a \in \univ{\struc[1]}$.

    \item $\intro*\Gdd(x, y) \defeq \exists z. \bigvee_{i, j \in \range{3}} \left(\Proj_i(z, x) \land \Proj_j(z, y)\right)$, where $z$ is a "variable" distinct from $x$ and $y$.
    Observe $\struc[2], I \models \Gdd(x, y)$ "iff" $\tuple{I(x), I(y)} \in \Gaifman^{\struc[1]}$.
\end{itemize}
Let $x_1, x_2, x_3$ be the distinct "variables" covering those of $\fml$.
We consider two types of $\breve{p}$:
the empty set $\emptyset$ or a pair of positions $\tuple{j_1, j_2} \in \range{3}^2$ so that the following invariants hold:
\begin{itemize}
    \item In $\Lambda_{\emptyset}$, 
    $x_1$ and $x_2$ directly denote original "vertices" and are required not to be "guarded"; 

    \item In $\Lambda_{\tuple{j_1, j_2}}$, 
    the translated "variable" $x_1$ denotes a tuple-"vertex" $w_{R v_1, \dots, v_{\ell}}$
    whose $j_1$-projection $v_{j_1}$ and $j_2$-projection $v_{j_2}$ are the original "vertices" of $x_1$ and $x_2$.
    Also, in the translated "formula", $x_2$ is not a free "variable".
\end{itemize}
We now define the map $\Lambda_{\breve{p}}$ from $\kFO{3}\sigres{(C_1, C_2 + 1, C_3)}$-"formulas" to $\kFO{3}\sigres{(0, 1+\sum_{i=1}^{3} i C_i)}$-"formulas"%
\footnote{While $\EQ \in \sigEQ$ yields the two "relation symbols" $\EQ^{[1]}$ and $\EQ^{[2]}$,
we can identify $\EQ^{[1]}$ and $\EQ^{[2]}$ as a single "relation symbol"
because $(\EQ^{[1]})^{\struc[2]} = (\EQ^{[2]})^{\struc[2]}$ since $\EQ^{\struc[1]}$ is the diagonal relation.
This accounts for the additional $1$ in $\sigres{(0,1+\sum_{i=1}^3 iC_i)}$.}%
, as follows:
\begin{align*}
\Lambda_{\breve{p}}(\lnot \fml)
\defeq{}& \lnot \Lambda_{\breve{p}}(\fml),&
\hspace{-.5em}\Lambda_{\breve{p}}(\fml[1] \land \fml[2])
\defeq{}& \Lambda_{\breve{p}}(\fml[1]) \land \Lambda_{\breve{p}}(\fml[2]), &
\hspace{-.5em}\Lambda_{\breve{p}}(\fml[1] \lor \fml[2])
\defeq \Lambda_{\breve{p}}(\fml[1]) \lor \Lambda_{\breve{p}}(\fml[2]),\\
\Lambda_{\emptyset}(R(y_1, \dots, y_{\ell}))
\defeq{}& \begin{cases}
\fmlfalse & \text{if $x_1, x_2 \in \set{y_1, \dots, y_{\ell}}$}\\
\exists z. \bigwedge_{i = 1}^{\ell}  R^{[i]}(z, y_i) & \text{otherwise}
\end{cases} \tag*{(where $z \not\in \set{y_1, \dots, y_{\ell}}$),}\span\span\span\\
\Lambda_{\tuple{j_1, j_2}}(R(y_1, \dots, y_{\ell}))
\defeq{}& 
\exists x_2. \bigwedge_{i = 1}^{\ell} \begin{cases}
    \exists z. (\Proj_{j_q}(x_1, z) \land R^{[i]}(x_2, z)) & \text{if $y_i = x_q$ for $q \in \range{2}$}\\
    R^{[i]}(x_2, y_i) & \text{otherwise}
\end{cases} \tag*{(where $z \not\in \set{x_1, x_2}$),} \span\span\span\\
\Lambda_{\breve{p}}(\exists y. \fml[1])
\defeq{}& \exists y. (\IsV(y) \land \Lambda_{\breve{p}}(\fml[1])) \text{ if $y \not\in \set{x_1, x_2}$}, \span\span\span\\
\Lambda_{\emptyset}(\exists x_1. \fml[1])
\defeq{}&
\left( 
\exists x_1.~ \bigvee_{j_1, j_2}
\Bigl(
    (\exists z. \Proj_{j_1}(x_1, z)) \land (\Proj_{j_2}(x_1, x_2)) \land \Lambda_{\tuple{j_1,j_2}}(\fml[1]) \Bigr)
\right) \lor {} \span\span\span\\
&\exists x_1. (\IsV(x_1) \land \lnot \Gdd(x_1, x_2) \land \Lambda_{\emptyset}(\fml[1])) \tag*{(where $z \not\in \set{x_1, x_2}$),} \span\span\span\\
\Lambda_{\tuple{j_1, j_2}}(\exists x_1. \fml[1])
\defeq{}&
\exists x_2. (\Proj_{j_2}(x_1, x_2) \land \Lambda_{\emptyset}(\exists x_1. \fml[1])), \span\span\span\\
\Lambda_{\emptyset}(\exists x_2. \fml[1])
\defeq{}&
\exists x_2.~ \Bigl( x_2 = x_1 \land \exists x_1.~ \bigvee_{j_1, j_2} \bigl( \Proj_{j_1}(x_1, x_2) \land (\exists z. \Proj_{j_2}(x_1, z)) \land \Lambda_{\tuple{j_1,j_2}}(\fml[1]) \bigr) \Bigr) \lor {}\hspace{-11.2pt} \span\span\span\\
&\exists x_2. (\IsV(x_2) \land \lnot \Gdd(x_1, x_2) \land \Lambda_{\emptyset}(\fml[1])) \tag*{(where $z \not\in \set{x_1, x_2}$),} \span\span\span\\
\Lambda_{\tuple{j_1, j_2}}(\exists x_2. \fml[1])
\defeq{}&
\exists x_2. (x_2 = x_1 \land \exists x_1. (\Proj_{j_1}(x_2, x_1) \land \Lambda_{\emptyset}(\exists x_2. \fml[1]))),\span\span\span\\
\Lambda_{\breve{p}}(\forall y. \fml[1])
\defeq{}&
\lnot \Lambda_{\breve{p}}(\exists y. \lnot \fml[1]). \span\span\span
\end{align*}
The translated "formula" is constructed as a "DAG".
For every "subformula" $\fml[2]$ of $\fml$ and every $\breve p \in \set{\emptyset} \cup \range{3}^{2}$,
we create at most one "vertex" representing $\Lambda_{\breve p}(\fml[2])$ and reuse it.
Thus the disjunctions over $\range{3}^{2}$ in the quantifier clauses contribute only a factor $\asympO(3^{2}) = \asympO(1)$, and the
auxiliary "formulas" $\Proj_i$, $\IsV$, $\Gdd$ are shared.
Consequently, $\len{\Lambda_{\breve p}(\fml)} = \asympO(\len{\fml})$ for every $\breve p$.

We prove the following invariant by induction on $\fml$: for every "valuation" $I$ on $\struc$,
\begin{itemize}
    \item if the pair $\tuple{x_1, x_2}$ is not "guarded" on $I$, then
    $\struc, I \models \fml$ "iff" $\struc[2], I \models \Lambda_{\emptyset}(\fml)$;
    \item for every $w \in \univ{\struc[2]}$ and $j_1, j_2 \in \range{3}$,
    if 
    $\tuple{w, I(x_1)} \in \sem{\Proj_{j_1}}^{\struc[2]}$ and 
    $\tuple{w, I(x_2)} \in \sem{\Proj_{j_2}}^{\struc[2]}$,
    then
    $\struc, I \models \fml$ "iff" $\struc[2], I[w/x_1] \models \Lambda_{\tuple{j_1, j_2}}(\fml)$.
\end{itemize}

\proofcase{Case $\Lambda_{\breve{p}}(\lnot \fml)$, $\Lambda_{\breve{p}}(\fml[1] \land \fml[2])$, $\Lambda_{\breve{p}}(\fml[1] \lor \fml[2])$}
Immediate from IH.

\proofcase{Case $\Lambda_{\emptyset}(R(y_1, \dots, y_{\ell}))$}
If $x_1, x_2 \in \set{y_1, \dots, y_{\ell}}$,
then $\struc[1], I \models R(y_1, \dots, y_{\ell})$
"iff" $\struc[1], I \models \fmlfalse$ (because $\tuple{x_1, x_2}$ is not "guarded" on $I$)
"iff" false
"iff" $\struc[2], I \models \fmlfalse$.
Otherwise, we have $\card{\set{y_1, \dots, y_{\ell}}} < 3$.
By the construction of $\struc[2]$, we have:
$\struc[1], I \models R(y_1, \dots, y_{\ell})$ "iff"
$\struc[2], I \models \exists z. \bigwedge_{i = 1}^{\ell}  R^{[i]}(z, y_i)$.

\proofcase{Case $\Lambda_{\tuple{j_1, j_2}}(R(y_1, \dots, y_{\ell}))$}
Assume that $w$ encodes the guarded pair
$\tuple{I(x_1),I(x_2)}$ in positions $\tuple{j_1,j_2}$, that is,
$\tuple{w, I(x_1)} \in \sem{\Proj_{j_1}}^{\struc[2]}$ and
$\tuple{w, I(x_2)} \in \sem{\Proj_{j_2}}^{\struc[2]}$.
The existentially quantified "variable" $x_2$ in the translated atom is
used as the witness for the original $R$-tuple.
Thus any satisfying witness has the form $u = w_{R v_1,\ldots,v_\ell}$ with
$(v_1,\ldots,v_\ell)\in R^{\struc[1]}$.
For each $i \in \range{\ell}$,
the corresponding conjunct forces $v_i = I(y_i)$:
if $y_i = x_q$ for some $q\in \set{1,2}$,
this follows from the conjunct
$\exists z.(\Proj_{j_q}(x_1,z)\land R^{[i]}(x_2,z))$:
it forces the $j_q$-th projection of the tuple-"vertex" $w$ and the $i$-th
projection of the witness tuple-"vertex" $u$ to be the same original "vertex".
Otherwise, the conjunct $R^{[i]}(x_2,y_i)$ (with $x_2 \mapsto u$) directly forces the $i$-th projection
of $u$ to be $I(y_i)$.
Hence, the translated "atom" holds in $\struc[2]$ "iff" $(I(y_1),\ldots,I(y_\ell))\in R^{\struc[1]}$.

\proofcase{Case $\Lambda_{\emptyset}(\exists y. \fml[1])$ where $y \not\in \set{x_1, x_2}$}
We then have:
\begin{align*}
\struc[1], I \models \exists y. \fml[1]
&\iff \struc[1], I[v/y] \models \fml[1] \text{ for some $v \in \univ{\struc[1]}$}\\
&\iff \struc[2], I[v/y] \models \Lambda_{\emptyset}(\fml[1]) \text{ for some $v \in \univ{\struc[1]}$} \tag{By IH}\\
&\iff \struc[2], I \models \exists y. (\IsV(y) \land \Lambda_{\emptyset}(\fml[1])).
\end{align*}

\proofcase{Case $\Lambda_{\tuple{j_1, j_2}}(\exists y. \fml[1])$ where $y \not\in \set{x_1, x_2}$}
Let $w$ encode the "guarded" pair $\tuple{I(x_1), I(x_2)}$ in the position $\tuple{j_1, j_2}$; note that $\tuple{I[v/y](x_1), I[v/y](x_2)} = \tuple{I(x_1), I(x_2)}$ for every $v \in \univ{\struc[1]}$, as $y \not\in \set{x_1, x_2}$.
We then have:
\begin{align*}
\struc[1], I \models \exists y. \fml[1]
&\iff \struc[1], I[v/y] \models \fml[1] \text{ for some $v \in \univ{\struc[1]}$}\\
&\iff \struc[2], I[v/y][w/x_1] \models \Lambda_{\tuple{j_1, j_2}}(\fml[1]) \text{ for some $v \in \univ{\struc[1]}$} \tag{By IH}\\
&\iff \struc[2], I[w/x_1] \models \exists y. (\IsV(y) \land \Lambda_{\tuple{j_1, j_2}}(\fml[1])). 
\end{align*}

\proofcase{Case $\Lambda_{\emptyset}(\exists x_1. \fml[1])$}
Let $\intro*\DgOne$ and $\intro*\DnOne$ be the first and second disjuncts of $\Lambda_{\emptyset}(\exists x_1. \fml[1])$, respectively.
Since $\struc[1], I \models \exists x_1. \fml[1]$ "iff" $\struc[1], I[v/x_1] \models \fml[1]$ for some $v \in \univ{\struc[1]}$,
it suffices to show the following:
\begin{itemize}
    \item $\struc[2], I \models \DgOne$ "iff" $\struc[1], I[v/x_1] \models \fml[1]$ for some $v \in \univ{\struc[1]}$ such that $\tuple{v, I(x_2)}$ is "guarded";
    \item $\struc[2], I \models \DnOne$ "iff" $\struc[1], I[v/x_1] \models \fml[1]$ for some $v \in \univ{\struc[1]}$ such that $\tuple{v, I(x_2)}$ is not "guarded".
\end{itemize}

\subproofcase{For $\DgOne$}
$\struc[2], I \models \DgOne$ "iff" there are $u \in \univ{\struc[2]}$, $j_1, j_2 \in \range{3}$, and $v$ such that
$\tuple{u, v} \in \sem{\Proj_{j_1}}^{\struc[2]}$,
$\tuple{u, I(x_2)} \in \sem{\Proj_{j_2}}^{\struc[2]}$, and
$\struc[2], I[u/x_1] \models \Lambda_{\tuple{j_1, j_2}}(\fml[1])$.
By IH (applied to $I[v/x_1]$ and $u$), the last condition is equivalent to $\struc[1], I[v/x_1] \models \fml[1]$.
The conditions $\tuple{u, v} \in \sem{\Proj_{j_1}}^{\struc[2]}$ and $\tuple{u, I(x_2)} \in \sem{\Proj_{j_2}}^{\struc[2]}$ say that $u$ encodes the pair $\tuple{v, I(x_2)}$ in the position $\tuple{j_1, j_2}$.
Such $u$, $j_1$, and $j_2$ exist "iff" $v \in \univ{\struc[1]}$ and $\tuple{v, I(x_2)}$ is "guarded".
Hence, $\struc[2], I \models \DgOne$ "iff" $\struc[1], I[v/x_1] \models \fml[1]$ for some $v \in \univ{\struc[1]}$ such that $\tuple{v, I(x_2)}$ is "guarded".

\subproofcase{For $\DnOne$}
$\struc[2], I \models \DnOne$ "iff" there is $v$ such that
$v \in \univ{\struc[1]}$ (by $\IsV(x_1)$),
$\tuple{v, I(x_2)}$ is not "guarded" (by $\lnot \Gdd(x_1, x_2)$), and
$\struc[2], I[v/x_1] \models \Lambda_{\emptyset}(\fml[1])$;
by IH, the last condition is equivalent to $\struc[1], I[v/x_1] \models \fml[1]$.

\proofcase{Case $\Lambda_{\tuple{j_1, j_2}}(\exists x_1. \fml[1])$}
Let $w$ be such that $\tuple{w, I(x_1)} \in \sem{\Proj_{j_1}}^{\struc[2]}$ and $\tuple{w, I(x_2)} \in \sem{\Proj_{j_2}}^{\struc[2]}$.
As $I(x_2)$ is the unique $j_2$-projection of $w$ and $x_1 \not\in \FV(\Lambda_{\emptyset}(\exists x_1. \fml[1]))$,
we have $\struc[2], I[w/x_1] \models \Lambda_{\tuple{j_1, j_2}}(\exists x_1. \fml[1])$ "iff" $\struc[2], I \models \Lambda_{\emptyset}(\exists x_1. \fml[1])$,
which is equivalent to $\struc[1], I \models \exists x_1. \fml[1]$ by the previous case (its proof does not use that $\tuple{x_1, x_2}$ is not "guarded" on $I$).

\proofcase{Case $\Lambda_{\emptyset}(\exists x_2. \fml[1])$}
Let $\intro*\DgTwo$ and $\intro*\DnTwo$ be the first and second disjuncts of $\Lambda_{\emptyset}(\exists x_2. \fml[1])$, respectively.
Since $\struc[1], I \models \exists x_2. \fml[1]$ "iff" $\struc[1], I[v/x_2] \models \fml[1]$ for some $v \in \univ{\struc[1]}$,
it suffices to show the following:
\begin{itemize}
    \item $\struc[2], I \models \DgTwo$ "iff" $\struc[1], I[v/x_2] \models \fml[1]$ for some $v \in \univ{\struc[1]}$ such that $\tuple{I(x_1), v}$ is "guarded";
    \item $\struc[2], I \models \DnTwo$ "iff" $\struc[1], I[v/x_2] \models \fml[1]$ for some $v \in \univ{\struc[1]}$ such that $\tuple{I(x_1), v}$ is not "guarded".
\end{itemize}

\subproofcase{For $\DgTwo$}
$\struc[2], I \models \DgTwo$ "iff" there are $u \in \univ{\struc[2]}$, $j_1, j_2 \in \range{3}$, and $v$ such that
$\tuple{u, I(x_1)} \in \sem{\Proj_{j_1}}^{\struc[2]}$,
$\tuple{u, v} \in \sem{\Proj_{j_2}}^{\struc[2]}$, and
$\struc[2], I[I(x_1)/x_2][u/x_1] \models \Lambda_{\tuple{j_1, j_2}}(\fml[1])$.
As $x_2 \not\in \FV(\Lambda_{\tuple{j_1, j_2}}(\fml[1]))$, the last condition is equivalent to $\struc[2], I[v/x_2][u/x_1] \models \Lambda_{\tuple{j_1, j_2}}(\fml[1])$,
and hence to $\struc[1], I[v/x_2] \models \fml[1]$ by IH (applied to $I[v/x_2]$ and $u$).
The conditions $\tuple{u, I(x_1)} \in \sem{\Proj_{j_1}}^{\struc[2]}$ and $\tuple{u, v} \in \sem{\Proj_{j_2}}^{\struc[2]}$ say that $u$ encodes the pair $\tuple{I(x_1), v}$ in the position $\tuple{j_1, j_2}$.
Such $u$, $j_1$, and $j_2$ exist "iff" $v \in \univ{\struc[1]}$ and $\tuple{I(x_1), v}$ is "guarded".
Hence, $\struc[2], I \models \DgTwo$ "iff" $\struc[1], I[v/x_2] \models \fml[1]$ for some $v \in \univ{\struc[1]}$ such that $\tuple{I(x_1), v}$ is "guarded".

\subproofcase{For $\DnTwo$}
$\struc[2], I \models \DnTwo$ "iff" there is $v$ such that
$v \in \univ{\struc[1]}$ (by $\IsV(x_2)$),
$\tuple{I(x_1), v}$ is not "guarded" (by $\lnot \Gdd(x_1, x_2)$), and
$\struc[2], I[v/x_2] \models \Lambda_{\emptyset}(\fml[1])$;
by IH, the last condition is equivalent to $\struc[1], I[v/x_2] \models \fml[1]$.

\proofcase{Case $\Lambda_{\tuple{j_1, j_2}}(\exists x_2. \fml[1])$}
Let $w$ be such that $\tuple{w, I(x_1)} \in \sem{\Proj_{j_1}}^{\struc[2]}$ and $\tuple{w, I(x_2)} \in \sem{\Proj_{j_2}}^{\struc[2]}$.
By definition, $\Lambda_{\tuple{j_1, j_2}}(\exists x_2. \fml[1])$ first binds $x_2$ to $w$ by $x_2 = x_1$ and then binds $x_1$ to a $j_1$-projection of $w$.
As $I(x_1)$ is the unique $j_1$-projection of $w$ and $x_2 \not\in \FV(\Lambda_{\emptyset}(\exists x_2. \fml[1]))$,
we have $\struc[2], I[w/x_1] \models \Lambda_{\tuple{j_1, j_2}}(\exists x_2. \fml[1])$ "iff" $\struc[2], I \models \Lambda_{\emptyset}(\exists x_2. \fml[1])$,
which is equivalent to $\struc[1], I \models \exists x_2. \fml[1]$ by the previous case (its proof does not use that $\tuple{x_1, x_2}$ is not "guarded" on $I$).

Finally, for an input "sentence" $\fml$, define
\(\fml[2] \defeq
\exists x_1\exists x_2.\,
\bigl(\Proj_1(x_1,x_2)\land
\Lambda_{\tuple{1,1}}(\fml)\bigr)\).
Since $\EQ \in \sigEQ$ and $\tuple{a,a}\in\EQ^{\struc[1]}$ for any $a \in \univ{\struc[1]}$, some $w \in \univ{\struc[2]}$ encodes a pair of original "vertices" in the position $\tuple{1,1}$ ("eg", $w = w_{\EQ a a}$);
hence, by the invariant, we have $\struc[1]\models\fml$ "iff" $\struc[2]\models\fml[2]$.
The output "sentence" $\fml[2]$ uses only $x_1,x_2,x_3$ and has size $\asympO(\len{\fml})$.
Moreover, $\strucn{\struc[2]} = \strucn{\struc[1]} + \sum_{R \in \sigEQ} \card{R^{\struc[1]}}$ and $\strucm{\struc[2]} \le \sum_{R \in \sigEQ} \arity{R} \card{R^{\struc[1]}}$;
since $\card{\EQ^{\struc[1]}} = \strucn{\struc[1]} = \asympO(\strucm{\struc[1]})$ (after \Cref{proposition: isolated}), both are $\asympO(\strucm{\struc[1]})$ for the input "structure",
and the whole construction takes $\asympO(\strucm{\struc[1]} + \len{\fml})$ time.%
\end{proof}
 
\clearpage
\section*{Appendices to \Cref{section: fo2}}

\section{Example of \texorpdfstring{\Cref{proposition: fo2 to bml+c}}{Proposition~\ref{proposition: fo2 to bml+c}}}\label{section: proposition: fo2 to bml+c}
In this section, we present an example of \Cref{proposition: fo2 to bml+c},
which translates $\kFO{2}\sigres{(\ordinalomega, \ordinalomega)}$ into $\BMLPwithC$.
The following translation is based on \cite[Theorem 1]{lutzModalLogicTwoVariable2001}.
We use $\intro*\dia{\term} \mlfml[2]$ to denote the "formula@@BMLP" $\lnot \bo{\term} \lnot \mlfml[2]$.
\begin{example*}
Let $\fml$ be the $\kFO{2}\sigres{(\ordinalomega, \ordinalomega)}$-"formula"
$\exists y ( (S(x, x) \lor R(x, y)) \land \exists x \fml[3] )$.
We focus on
unary types (having one free "variable" $x$), here $S(x, x)$, and
binary types (having two free "variables" $x$ and $y$), here $R(x, y)$.
We first move the "subformula" $S(x, x)$, which has no free occurrence of $y$, out of the scope of $\exists$, as follows:
\begin{align*}
&(S(x, x) \land \exists y ( (\fmltrue \lor R(x, y)) \land \exists x \fml[3] )) \lor (\lnot S(x, x) \land \exists y ( (\fmlfalse \lor R(x, y)) \land \exists x \fml[3] )).
\end{align*}
Second, we ``guess'' binary types:
\begin{align*}
&(S(x, x) ~\land~
(\exists y~ (R(x, y) \land (( \fmltrue \lor \fmltrue) \land \exists x \fml[3] )) ~\lor~
\exists y~ (\lnot R(x, y) \land (( \fmltrue \lor \fmlfalse) \land \exists x \fml[3] )))) ~\lor {}\\
&(\lnot S(x, x) ~\land~
(\exists y~ (R(x, y) \land (( \fmlfalse \lor \fmltrue) \land \exists x \fml[3] )) ~\lor~
\exists y~ (\lnot R(x, y) \land (( \fmlfalse \lor \fmlfalse) \land \exists x \fml[3] )))).
\end{align*}
Finally, we translate this "formula" canonically into the following $\BMLPwithC$-"formula@@BMLP" (here, $\mlfml[2]$ is the $\BMLPwithC$ translation of $\exists x \fml[3]$):
\begin{align*}
&(\dia{S \intersection \corid} \fmltrue \land
(\dia{R} ((\fmltrue \lor \fmltrue) \land \mlfml[2]) \lor
\dia{R^{-}} ((\fmltrue \lor \fmlfalse) \land \mlfml[2]))) ~\lor {} \\
&(\dia{S^{-} \intersection \corid} \fmltrue \land
(\dia{R} ((\fmlfalse \lor \fmltrue) \land \mlfml[2]) \lor
\dia{R^{-}} ((\fmlfalse \lor \fmlfalse) \land \mlfml[2]))).
\end{align*}
This "formula@@BMLP" can be simplified to the following "semantically equivalent" "formula@@BMLP":
\begin{align*}
&\mlfml \defeq (\dia{S \intersection \corid} \fmltrue \land (\dia{R} \mlfml[2] \lor \dia{R^{-}} \mlfml[2])) \lor
(\dia{S^{-} \intersection \corid} \fmltrue \land \dia{R} \mlfml[2]).
\end{align*}
It is straightforward to check that $\mlfml(x)$ is "semantically equivalent" to $\fml$.
\end{example*}

\section{Proof of \texorpdfstring{\Cref{proposition: bml+c to bml+}}{Proposition~\ref{proposition: bml+c to bml+}}}\label{section: proposition: bml+c to bml+}

\begin{proposition*}[Restatement of \Cref{proposition: bml+c to bml+}]
\propositionbmlctobml
\end{proposition*}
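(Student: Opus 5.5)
The plan is a mutual recursion on the DAG of $\mlfml$, computing for every term $\term$ a complement-free normal form, and then removing complements at the modal level. Define for each $\BMLPwithC$-term $\term$ a pair $(s,\term')$, where $\term'$ is a $\BMLP$-term and $s\in\set{+,-}$, with $\sem{\term}^{\struc}=\sem{\term'}^{\struc}$ if $s=+$ and $\sem{\term}^{\struc}=\univ{\struc}^2\setminus\sem{\term'}^{\struc}$ if $s=-$, for every structure $\struc$. The clauses are as follows. $R$ and $\corid$ map to $(+,R)$ and $(+,\corid)$. For converse, $(s,\term')$ maps to $(s,\term'^{\smile})$; this is sound since converse commutes with complement. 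For complement, the sign flips. For intersection, the four sign combinations give:
\begin{itemize}
\item $(+,a),(+,b)\mapsto(+,a\intersection b)$;
\item $(+,a),(-,b)\mapsto(+,a\setminus b)$, and symmetrically for $(-,a),(+,b)$;
\item $(-,a),(-,b)\mapsto(-,a\union b)$ by De Morgan.
\end{itemize}
Union is dual: $(+,+)\mapsto(+,a\union b)$, $(+,a),(-,b)\mapsto(-,b\setminus a)$, and $(-,-)\mapsto(-,a\intersection b)$. For difference, $\term_1\setminus\term_2=\term_1\cap\term_2^{-}$, so it reduces to the intersection case with the second sign flipped. Each clause is verified by an elementary set identity, proved by a straightforward induction on terms.

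Next, translate formulas by a homomorphic map on $U,\fmltrue,\land,\lor,\lnot$. For modalities, with $\term\mapsto(s,\term')$:
\begin{itemize}
\item $\bo{\term}\mlfml\mapsto\bo{\term'}\mlfml^\ast$ if $s=+$, and $\windowbo{\term'}\lnot\mlfml^\ast$ if $s=-$, using $\bo{\term'^{-}}\mlfml=\windowbo{\term'}\lnot\mlfml$ (the defining identity of windows, applied with $\lnot\lnot$ collapsed semantically).
\item $\windowbo{\term}\mlfml\mapsto\windowbo{\term'}\mlfml^\ast$ if $s=+$, and $\bo{\term'}\lnot\mlfml^\ast$ if $s=-$. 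The second case holds because $\windowbo{\term'^{-}}\mlfml$ says that $y\in\mlfml$ implies $(x,y)\notin\term'$, i.e.\ $(x,y)\in\term'$ implies $y\notin\mlfml$.
\end{itemize}
Semantic equivalence follows by induction on formulas, using the term lemma.

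For the complexity, memoize on DAG vertices: each term vertex gets exactly one output vertex plus a sign, and each formula vertex gets one output vertex, possibly with one extra $\lnot$ node. Every clause adds $\asympO(1)$ new nodes and edges per input edge, so the output has size $\asympO(\len{\mlfml})$ and is built in that time by a single bottom-up pass.

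The only delicate point is the sign bookkeeping in the union and difference cases and the window duality. I would check each of these against the example $\bo{R\union S^{-}}U\leadsto\windowbo{S\setminus R}\lnot U$ given in the text.
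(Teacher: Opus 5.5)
Your proposal is correct and matches the paper's proof: the paper's maps $\Lambda$ and $\Omega$ are exactly your sign-annotated normal form for terms. Its union and difference clauses are routed through intersection and complement, which reproduces your direct clauses. Its formula translation $\Lambda'$ uses the same four box/window clauses, and your explicit DAG-memoization argument supplies the linear-size detail that the paper only asserts.
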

\begin{proof}
We define the map $\Lambda$ from  $\BMLPwithC$ "terms@@BMLP" to  $\BMLP$ "terms@@BMLP" and 
the map $\Omega$ from $\BMLPwithC$ "terms@@BMLP" to $\set{0, 1}$, as follows:
\begin{align*}
\tuple{\Lambda(R), \Omega(R)} &\defeq \tuple{R, 1}, \hspace{1.5em}
\tuple{\Lambda(\corid), \Omega(\corid)} \defeq \tuple{\corid, 1}, &
\tuple{\Lambda(\term^{\smile}), \Omega(\term^{\smile})} &\defeq \tuple{\Lambda(\term)^{\smile}, \Omega(\term)}, \\
\tuple{\Lambda(\term[1] \intersection \term[2]), \Omega(\term[1] \intersection \term[2])} &\defeq
\begin{cases}
\tuple{\Lambda(\term[1]) \intersection \Lambda(\term[2]), 1} & \text{if $\tuple{\Omega(\term[1]), \Omega(\term[2])} = \tuple{1, 1}$}\\
\tuple{\Lambda(\term[1]) \setminus \Lambda(\term[2]), 1} & \text{if $\tuple{\Omega(\term[1]), \Omega(\term[2])} = \tuple{1, 0}$}\\
\tuple{\Lambda(\term[2]) \setminus \Lambda(\term[1]), 1} & \text{if $\tuple{\Omega(\term[1]), \Omega(\term[2])} = \tuple{0, 1}$}\\
\tuple{\Lambda(\term[1]) \union \Lambda(\term[2]), 0} & \text{if $\tuple{\Omega(\term[1]), \Omega(\term[2])} = \tuple{0, 0}$},
\end{cases} \span\span\span\span\\
\tuple{\Lambda(\term[1] \union \term[2]), \Omega(\term[1] \union \term[2])} &\defeq
\tuple{\Lambda((\term[1]^{-} \intersection \term[2]^{-})^{-}), \Omega((\term[1]^{-} \intersection \term[2]^{-})^{-})},\\
\tuple{\Lambda(\term[1] \setminus \term[2]), \Omega(\term[1] \setminus \term[2])} &\defeq
\tuple{\Lambda(\term[1] \intersection \term[2]^{-}), \Omega(\term[1] \intersection \term[2]^{-})}, \span\span\\
\tuple{\Lambda(\term[1]^{-}), \Omega(\term[1]^{-})} &\defeq
\begin{cases}
\tuple{\Lambda(\term[1]), 0} & \text{if $\Omega(\term[1]) = 1$}\\
\tuple{\Lambda(\term[1]), 1} & \text{if $\Omega(\term[1]) = 0$}.
\end{cases} \span\span\span\span
\end{align*}
By a straightforward induction on $\term$,
$\Lambda(\term)$ is "semantically equivalent" to $\term$ if $\Omega(\term) = 1$
and to $\term^{-}$ if $\Omega(\term) = 0$.

We then define the map $\Lambda'$ from $\BMLPwithC$-"formulas@@BMLP" to $\BMLP$-"formulas@@BMLP", as follows:
\begin{align*}
\Lambda' (U) &\defeq U, \quad \Lambda'(\fmltrue) \defeq \fmltrue, &
\Lambda' (\lnot \mlfml[1]) &\defeq \lnot \Lambda'(\mlfml[1]),\\
\Lambda' (\mlfml[1] \lor \mlfml[2]) &\defeq \Lambda'(\mlfml[1]) \lor \Lambda'(\mlfml[2]), &
\Lambda' (\mlfml[1] \land \mlfml[2]) &\defeq \Lambda'(\mlfml[1]) \land \Lambda'(\mlfml[2]),\\
\Lambda' (\bo{\term} \mlfml[1]) &\defeq \begin{cases}
\bo{\Lambda(\term)} \Lambda'(\mlfml[1]) & \text{if } \Omega(\term) = 1\\
\windowbo{\Lambda(\term)} \lnot \Lambda'(\mlfml[1]) & \text{if } \Omega(\term) = 0,
\end{cases} &
\Lambda' (\windowbo{\term} \mlfml[1]) &\defeq \begin{cases}
\windowbo{\Lambda(\term)} \Lambda'(\mlfml[1]) & \text{if } \Omega(\term) = 1\\
\bo{\Lambda(\term)} \lnot \Lambda'(\mlfml[1]) & \text{if } \Omega(\term) = 0.
\end{cases}
\end{align*}
By a straightforward induction on $\mlfml$,
we have that $\Lambda'(\mlfml)$ is "semantically equivalent" to $\mlfml$.

The translation runs in $\asympO(\len{\mlfml})$ time.
\end{proof}

\section{Proof of \Cref{proposition: BMLP model checking}} \label{section: proposition: BMLP model checking}
In this section, we prove the following:
\begin{proposition*}[Restatement of \Cref{proposition: BMLP model checking}]
\propositionBMLPmodelchecking
\end{proposition*}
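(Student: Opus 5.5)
We evaluate $\fml = \mlfml(x)$ bottom-up over the DAG of $\mlfml$, after first removing isolated elements by \Cref{proposition: isolated}. Since the construction there keeps at most $k$ isolated elements, we may assume $\strucn{\struc} = \asympO(\strucm{\struc})$; this costs $\asympO(m)$ time, with the arity of relation symbols in $\BMLP$ at most $2$.

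For every subterm $\term$ we store $\sem{\term}^{\struc}$ as an explicit list of pairs. For every subformula $\mlfml[2]$ we store $\sem{\mlfml[2]}^{\struc}$ as a Boolean array indexed by $\univ{\struc}$.

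By \Cref{proposition: BMLP term sparsity}, every $\sem{\term}^{\struc}$ is contained in $\sem{\corid}^{\struc} \cup \bigcup_R \sem{R \union R^{\smile}}^{\struc}$, so it has size at most $n + 2m = \asympO(m)$. We fix once an indexing of this ambient set $G$, for instance by sorting the pairs with radix sort in $\asympO(m)$ time after coordinate compression. Each term is then represented as a bit vector over $G$ of length $\asympO(m)$, and we precompute for each pair its converse's index. With this representation, $\intersection$, $\union$, $\setminus$ and ${}^{\smile}$ are each $\asympO(m)$ bitwise or permutation operations, and $\corid$ and $R$ are read off directly.

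For formulas, $U$, $\fmltrue$ and the Boolean connectives cost $\asympO(n)$ each. For $\bo{\term}\mlfml[2]$ we start with all elements true and scan the pairs $\tuple{a,b} \in \sem{\term}^{\struc}$; whenever $b \notin \sem{\mlfml[2]}^{\struc}$ we mark $a$ false. This takes $\asympO(n + m)$ time.

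The window $\windowbo{\term}\mlfml[2]$ is the only delicate case, since its definition quantifies over all $y \in \sem{\mlfml[2]}^{\struc}$, a possibly dense set. We handle it by counting. Let $c \defeq \card{\sem{\mlfml[2]}^{\struc}}$, and for each $a$ let $d_a \defeq \card{\set{b \mid \tuple{a,b} \in \sem{\term}^{\struc},\ b \in \sem{\mlfml[2]}^{\struc}}}$, computed by one scan of $\sem{\term}^{\struc}$. Then $a \in \sem{\windowbo{\term}\mlfml[2]}^{\struc}$ if and only if $d_a = c$. Each $d_a \le n$ fits in a word, so this case also runs in $\asympO(n+m)$ time. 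This is the step that relies on the absence of term complement, and it is the main point to check; the duplicate-free list representation guarantees that $d_a$ counts distinct $b$.

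Summing over the $\asympO(\len{\fml})$ DAG nodes gives $\asympO(\len{\fml}\cdot m)$ total time, and finally we test whether $I(x) \in \sem{\mlfml}^{\struc}$ in $\asympO(1)$.
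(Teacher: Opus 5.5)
Your algorithm is the one in the paper: remove isolated elements so that $n = \asympO(m)$, evaluate the DAG bottom-up with terms stored sparsely (justified by \Cref{proposition: BMLP term sparsity}) and formulas stored as Boolean arrays, handle $\bo{\term}$ by one scan of $\sem{\term}^{\struc}$, and handle $\windowbo{\term}$ by comparing each count $d_a$ with $\card{\sem{\mlfml[2]}^{\struc}}$. Your bit vectors over the fixed, converse-closed envelope $G$ (with a precomputed converse permutation) are only an implementation variant. The paper instead uses duplicate-free radix-sorted lists with merge scans and re-sorts on converse; both give $\asympO(n+m)$ per term operation.

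The step you should repair is the appeal to \Cref{proposition: isolated}. That proposition is stated for $\kFO{k}$-\emph{sentences}, but here the input is a $\BMLP$ formula evaluated at the element $I(x)$. Two things are missing.
\begin{itemize}
\item Every $\BMLP$ formula has a standard translation into an $\kFO{2}$-formula with one free variable: terms are quantifier-free, and each box or window uses one quantifier over the other variable. This fixes your unspecified $k$ as $2$ and lets the 2-pebble argument behind \Cref{proposition: isolated} apply.
\item If $I(x)$ is itself isolated, it must be among the retained elements; otherwise your final test $I(x) \in \sem{\mlfml}^{\struc}$ refers to a deleted element. The paper keeps at most two isolated elements and includes $I(x)$ among them. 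Alternatively, any permutation of isolated elements is an automorphism, so you may test a retained isolated element in place of $I(x)$.
\end{itemize}
With this fix your proof is complete.
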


\begin{proposition}\label{proposition: BMLP term evaluation}
\gdef\propositionBMLPtermevaluation{%
For $\BMLP$ "terms@@BMLP" $\term$,
$\sem{\term}^{\struc}$ can be computed in $\asympO(\len{\term}) \cdot (\strucn{\struc} + \strucm{\struc})$ time.
}\propositionBMLPtermevaluation
\end{proposition}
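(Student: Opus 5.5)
The approach is bottom-up evaluation over the DAG of $\term$, computing for each subterm $\term[2]$ the set $\sem{\term[2]}^{\struc}$ explicitly as a list of pairs. The key quantitative input is \Cref{proposition: BMLP term sparsity}: every such set is contained in $D \defeq \sem{\corid}^{\struc} \cup \bigcup_{R} \sem{R \union R^{\smile}}^{\struc}$, whose size is at most $\strucn{\struc} + 2\strucm{\struc}$. So each intermediate relation has size $\asympO(\strucn{\struc} + \strucm{\struc})$. It then suffices to show that each operator can be applied in time linear in the sizes of its operands plus $\asympO(\strucn{\struc} + \strucm{\struc})$. Summing over the $\asympO(\len{\term})$ DAG vertices gives the claimed bound.

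The first step is preprocessing, done once in $\asympO(\strucn{\struc} + \strucm{\struc})$ time. I index the "guard" set $D$: enumerate its pairs and assign each pair $\tuple{a,b} \in D$ an integer identifier in $\range{\card{D}}$. To keep this linear, I would bucket pairs by first coordinate using an array of size $\strucn{\struc}$, then sort each bucket by second coordinate via a two-pass radix/counting sort over $\univ{\struc}$, and finally remove duplicates. I also precompute the map $\tuple{a,b} \mapsto \tuple{b,a}$ on identifiers, which is valid since $D$ is closed under converse. After this step, every subterm's semantics is represented as a Boolean array of length $\card{D}$ indexed by identifier.

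The second step handles the cases. Base cases: $\sem{R}^{\struc}$ is obtained by marking the identifiers of the tuples of $R^{\struc}$, and $\sem{\corid}^{\struc}$ by marking the diagonal identifiers. Each takes $\asympO(\card{D})$ time once each pair's identifier is known; these identifiers can be recorded during the preprocessing pass. The intersection, union and difference cases are pointwise Boolean operations on arrays of length $\card{D}$. Converse permutes the array via the precomputed converse map. Each step costs $\asympO(\card{D}) = \asympO(\strucn{\struc} + \strucm{\struc})$, and the total over the DAG is $\asympO(\len{\term}) \cdot (\strucn{\struc} + \strucm{\struc})$. Correctness of each case is immediate from the semantics together with the sparsity proposition, which guarantees that no pair outside $D$ is ever needed; in particular, difference never requires a complement.

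The main obstacle I expect is making the identifier lookup for the base relations genuinely linear in the word-RAM model without hashing, that is, mapping each input tuple to its index in $D$. I would handle this with the same bucket-then-counting-sort trick, jointly sorting the tagged multiset of all pairs from every $R^{\struc}$, their converses, and the diagonal. Each original tuple then receives the identifier of its deduplicated representative within the same linear pass. Alternatively, the cost could be charged against the uninitialized-array technique already used in \Cref{proposition: isolated}.
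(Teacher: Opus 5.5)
Your proof is correct, but it organizes the computation differently from the paper. The paper keeps each intermediate relation as a duplicate-free, lexicographically sorted list of pairs. It implements intersection, union and difference by linear merge scans, and handles converse by swapping coordinates and radix-sorting once more. In the paper, \Cref{proposition: BMLP term sparsity} serves only to bound every list by $O(n+m)$.

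You use the sparsity set $D$ structurally instead. Since $D$ contains the interpretations of every $R$ and of $\mathsf{id}$, and is closed under converse, intersection, union and difference (this closure is exactly the inductive step behind \Cref{proposition: BMLP term sparsity}), one global indexing of $D$ suffices. After that, every DAG node is a pointwise operation on a bit-vector of length $|D|$, and converse is a fixed involution on identifiers.

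Your route front-loads all sorting into one preprocessing phase, gives a canonical representation, and makes the per-node work trivial. The paper's route needs no global index or converse table, and keeps each step's cost proportional to the actual operand sizes (plus $O(n)$ per converse).

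One detail needs fixing. The preprocessing sort must be a single global two-pass LSD radix sort: by second coordinate first, then stably by first coordinate. Your sentence literally describes a separate counting sort over the universe inside each first-coordinate bucket. That costs $\Theta(n)$ per bucket and can total $\Theta(n^2)$, which breaks the sparse bound.

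With the joint tagged sort you describe at the end, this is resolved. Tag each input tuple, its converse, and each diagonal pair, then sort them together in one pass. The identifiers of the input tuples and the converse map (matching each tuple's tag with its converse's tag) both come out of that same linear pass.
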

\begin{proof}
We evaluate $\term$ in a bottom-up manner.
Given $\sem{\term[1]}^{\struc}$ and $\sem{\term[2]}^{\struc}$,
we can compute
$\sem{R}^{\struc}$, 
$\sem{\corid}^{\struc}$, 
$\sem{\term[1]^{\smile}}^{\struc}$,
$\sem{\term[1] \intersection \term[2]}^{\struc}$, 
$\sem{\term[1] \union \term[2]}^{\struc}$, and
$\sem{\term[1] \setminus \term[2]}^{\struc}$ in $\asympO(\strucn{\struc} + \strucm{\struc})$ time, based on \Cref{proposition: BMLP term sparsity}.

Below, we give a more precise description of the algorithm.
Assume that the universe has been identified with $\range{\strucn{\struc}}$.
Each binary "relation" $E$ is represented as a duplicate-free lexicographically sorted list of ordered pairs.
Since each coordinate fits in one word, radix sort builds
this representation in $\asympO(\strucn{\struc}+\card{E})$ deterministic time in the "word-RAM model".

\proofcase{For $R$ and $\corid$}
The list for $R$ is obtained from the input as above, and the list for $\corid$ is $\tuple{1,1}, \dots, \tuple{\strucn{\struc},\strucn{\struc}}$, built in $\asympO(\strucn{\struc})$ time.

\proofcase{For unions ($\union$), intersections ($\intersection$), and differences ($\setminus$)}
These cases are implemented by linear merge scans over sorted lists.

\proofcase{For converse ($\cdot^{\smile}$)} We swap the two coordinates and radix-sort the resulting list once.

Thus, every term operation takes deterministic $\asympO(\strucn{\struc}+\strucm{\struc})$ time.
\end{proof}

\begin{proposition}\label{proposition: BMLP model checking for formula}
\gdef\propositionBMLPmodelcheckingforformula{%
For $\BMLP$-"formulas@@BMLP" $\mlfml$,
$\sem{\mlfml}^{\struc}$ can be computed in $\asympO(\len{\mlfml}) \cdot (\strucn{\struc} + \strucm{\struc})$ time.
}\propositionBMLPmodelcheckingforformula
\end{proposition}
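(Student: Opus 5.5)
We compute $\sem{\mlfml}^{\struc}$ bottom-up over the DAG of $\mlfml$. Each distinct subformula $\mlfml[2]$ gets a Boolean array of length $\strucn{\struc}$ storing $\sem{\mlfml[2]}^{\struc}$. Each distinct subterm $\term$ gets a sorted duplicate-free list of pairs storing $\sem{\term}^{\struc}$. By \Cref{proposition: BMLP term evaluation}, all term lists can be computed in $\asympO(\len{\term}) \cdot (\strucn{\struc} + \strucm{\struc})$ total time. By \Cref{proposition: BMLP term sparsity}, every such list has length at most $\strucn{\struc} + 2\strucm{\struc}$. It therefore suffices to show that each formula constructor costs $\asympO(\strucn{\struc}+\strucm{\struc})$, given the arrays of its children and the list of its term argument. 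Summing over the at most $\len{\mlfml}$ DAG vertices then gives the claimed bound.

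The Boolean cases are direct. For $U$, we initialize an all-false array and set the entries listed in $U^{\struc}$. For $\fmltrue$, we use an all-true array. For $\land$, $\lor$ and $\lnot$, we combine the child arrays pointwise. Each of these costs $\asympO(\strucn{\struc}+\strucm{\struc})$.

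For the box $\bo{\term}\mlfml[1]$, we start from an all-true array. We scan the list of $\sem{\term}^{\struc}$, and for each pair $\tuple{x,y}$ with $y \notin \sem{\mlfml[1]}^{\struc}$ we set $x$ to false. This scan takes time linear in the list length.

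The main obstacle is the window $\windowbo{\term}\mlfml[1]$, because its definition quantifies over all $y \in \sem{\mlfml[1]}^{\struc}$, which may be dense. The fix is a counting argument. Let $N \defeq \card{\sem{\mlfml[1]}^{\struc}}$, computed in $\asympO(\strucn{\struc})$ time. For each $x$, let $c_x \defeq \card{\set{y \mid \tuple{x,y}\in\sem{\term}^{\struc},\ y\in\sem{\mlfml[1]}^{\struc}}}$. We compute all $c_x$ with one scan of the duplicate-free list, incrementing a counter array indexed by $x$. Then $x \in \sem{\windowbo{\term}\mlfml[1]}^{\struc}$ iff $c_x = N$. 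This holds because $x$ is in the window set iff every $y \in \sem{\mlfml[1]}^{\struc}$ is a $\term$-successor of $x$, and duplicate-freeness ensures $c_x$ counts each such $y$ exactly once. The whole step is again linear.

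Correctness of every case follows by induction along the DAG from the semantics given above. Finally, the running time $\asympO(\len{\mlfml})\cdot(\strucn{\struc}+\strucm{\struc})$ becomes $\asympO(\len{\fml}\cdot m)$ once isolated elements are removed via \Cref{proposition: isolated}. That yields \Cref{proposition: BMLP model checking}, since checking $I(x)\in\sem{\mlfml}^{\struc}$ at the end is a constant-time lookup.
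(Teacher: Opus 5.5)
Your proposal is correct and follows the paper's proof essentially verbatim. Subformulas are stored as Boolean arrays and subterms as sorted duplicate-free pair lists, via \Cref{proposition: BMLP term evaluation,proposition: BMLP term sparsity}; the box uses the all-true-then-falsify scan, and the window uses the counter test $c_x = N$.

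One small caveat concerns only your closing remark about \Cref{proposition: BMLP model checking}, not this statement. \Cref{proposition: isolated} is stated for sentences, so when removing isolated elements you must keep $I(x)$. The paper does this by keeping at most two isolated elements, including $I(x)$, and arguing via $2$-pebble games.
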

\begin{proof}
We evaluate $\mlfml$ in a bottom-up manner.
Given the semantics of the relevant "subformulas" and "terms@@BMLP",
we can compute
$\sem{U}^{\struc}$, $\sem{\fmltrue}^{\struc}$, 
$\sem{\lnot \mlfml[1]}^{\struc}$, 
$\sem{\mlfml[1] \lor \mlfml[2]}^{\struc}$,
$\sem{\mlfml[1] \land \mlfml[2]}^{\struc}$, 
$\sem{\bo{\term}\mlfml[1]}^{\struc}$, and
$\sem{\windowbo{\term}\mlfml[1]}^{\struc}$ in $\asympO(\strucn{\struc} + \strucm{\struc})$ time.

Below, we give a more detailed description of the algorithm.
Each unary relation $E$ is represented as a $0$-$1$ (false-true) array of length $\strucn{\struc}$.

\proofcase{For $\sem{U}^{\struc}$ and $\sem{\fmltrue}^{\struc}$}
The array for $U$ is built from the input in $\asympO(\strucn{\struc} + \card{U^{\struc}})$ time, and the array for $\fmltrue$ is the all-true array.

\proofcase{For $\sem{\lnot \mlfml[1]}^{\struc}$, $\sem{\mlfml[1] \lor \mlfml[2]}^{\struc}$, $\sem{\mlfml[1] \land \mlfml[2]}^{\struc}$}
By applying the corresponding Boolean operations at each index of the arrays,
this takes $\asympO(\strucn{\struc})$ time.

\proofcase{For $\sem{\bo{\term} \mlfml[1]}^{\struc}$}
By $\card{\sem{\term}^{\struc}} \le \asympO(\strucn{\struc} + \strucm{\struc})$ (\Cref{proposition: BMLP term sparsity}),
we can compute $\sem{\bo{\term} \mlfml[1]}^{\struc}$ in $\asympO(\strucn{\struc} + \strucm{\struc})$ time, as follows:
\[\begin{array}{ll}
1: &\sem{\bo{\term} \mlfml[1]}^{\struc} \gets \text{``array of length $\strucn{\struc}$ initialized to all true''}; \\
2: &\textbf{foreach}\ \tuple{x, y} \in \sem{\term}^{\struc}: \\
3: &\qquad\textbf{if}\ y \not\in \sem{\mlfml[1]}^{\struc}: \quad \text{/* $\asympO(1)$ time check (as $\sem{\mlfml[1]}^{\struc}$ is an array) */} \\
4: &\qquad\quad  \sem{\bo{\term} \mlfml[1]}^{\struc}[x] \gets \text{false};
\end{array}\]

\proofcase{For $\sem{\windowbo{\term} \mlfml[1]}^{\struc}$}
We observe:
\[x \in \sem{\windowbo{\term}\mlfml[1]}^{\struc}
~\iff~
\card{\set{y : \tuple{x, y} \in \sem{\term}^{\struc} \text{ and } y \in \sem{\mlfml[1]}^{\struc}}}
=
\card{\sem{\mlfml[1]}^{\struc}}.
\]
Based on this equivalence with $\card{\sem{\term}^{\struc}} \le \asympO(\strucn{\struc} + \strucm{\struc})$ (\Cref{proposition: BMLP term sparsity}),
we can compute $\sem{\windowbo{\term} \mlfml[1]}^{\struc}$ in $\asympO(\strucn{\struc} + \strucm{\struc})$ time, as follows:
\[\begin{array}{ll}
1: & X \gets \text{``array of length $\strucn{\struc}$ initialized to all 0''}; \\
2: &\textbf{foreach}\ \tuple{x, y} \in \sem{\term}^{\struc}: \\
3: &\qquad\textbf{if}\ y \in \sem{\mlfml[1]}^{\struc}: \quad \text{/* $\asympO(1)$ time check (as $\sem{\mlfml[1]}^{\struc}$ is an array) */} \\
4: &\qquad\quad X[x] \gets X[x] + 1;\\
5: &\sem{\windowbo{\term} \mlfml[1]}^{\struc} \gets \text{``array of length $\strucn{\struc}$ such that $\sem{\windowbo{\term} \mlfml[1]}^{\struc}[x] = \text{true}$ "iff" $X[x] = \card{\sem{\mlfml[1]}^{\struc}}$''};
\end{array}\]
Hence, we have the desired algorithm.
\end{proof}
\Cref{proposition: BMLP model checking} follows from \Cref{proposition: BMLP model checking for formula} as follows.
\begin{proof}[Proof of \Cref{proposition: BMLP model checking}]
Similarly to \Cref{proposition: isolated},
we first keep at most two isolated "elements" (including $I(x)$ if it is isolated) and remove the other isolated "elements"; we still denote the resulting "structure" by $\struc$.
This is computable in $\asympO(\strucm{\struc})$ time.
Since every $\BMLP$-"formula@@BMLP" is expressible by an $\kFO{2}$-"formula" with one free "variable" (by the standard translation),
the same argument as in \Cref{proposition: isolated} ("cf" $2$-pebble games~\cite[\S 3.3.1]{ebbinghausFiniteModelTheory1995}) shows that this preserves the truth of $\mlfml(x)$ under $I$.
The resulting "structure" satisfies $\strucn{\struc} \le \asympO(\strucm{\struc})$.
By \Cref{proposition: BMLP model checking for formula},
we can compute $\sem{\mlfml}^{\struc}$ in $\asympO(\len{\mlfml} \cdot \strucm{\struc})$ time
and then test whether $I(x)\in\sem{\mlfml}^{\struc}$.
\end{proof}

\section{Supplement to Remark~\ref{remark: BMLP to CoR}: A Fragment of \texorpdfstring{$\CoR$}{CoR} Corresponding to \texorpdfstring{$\BMLP$}{ML*}}\label{section: BMLP to CoR}

For $\CoR$ "terms@@CoR" $\term[1]$ and $\term[2]$, we write
\begin{align*}
\term[1]\setminus\term[2]
&\defeq
\term[1]\intersection\term[2]^{-},
&
\term[1]^{\exists}
&\defeq
\term[1] \compo \cortop,\\
\bo{\term[1]}\term[2]
&\defeq
\corid\setminus
\bigl(\term[1]\setminus((\term[2] \intersection \corid)^{\exists})^{\smile}\bigr)^{\exists},
&
\windowbo{\term[1]}\term[2]
&\defeq
\corid\setminus
\bigl(((\term[2] \intersection \corid)^{\exists})^{\smile}\setminus\term[1]\bigr)^{\exists}.
\end{align*}
Here, $\setminus$ takes set difference and ${}^{\exists}$ forms the cylinder over the domain of a "term@@CoR".
These "terms@@CoR" encode the box and window modalities of $\BMLP$ on the diagonal.
Given a "structure" $\struc$, we have
\[
\sem{\term[1]\setminus\term[2]}^{\struc}
=
\sem{\term[1]}^{\struc}\setminus\sem{\term[2]}^{\struc},
\qquad
\sem{\term[1]^{\exists}}^{\struc}
=
\set{\tuple{a,c} \in\univ{\struc}^2 \mid \exists b\in\univ{\struc}.~\tuple{a,b}\in\sem{\term[1]}^{\struc}}.
\]
Thus, we have:
\begin{align*}
\sem{\bo{\term[1]}\term[2]}^{\struc}
&=
\set{\tuple{a,a}\mid
\forall b\in\univ{\struc}.~
\tuple{a,b}\in\sem{\term[1]}^{\struc}
\text{ implies }
\tuple{b,b}\in\sem{\term[2]}^{\struc}},\\
\sem{\windowbo{\term[1]}\term[2]}^{\struc}
&=
\set{\tuple{a,a}\mid
\forall b\in\univ{\struc}.~
\tuple{b,b}\in\sem{\term[2]}^{\struc}
\text{ implies }
\tuple{a,b}\in\sem{\term[1]}^{\struc}}.
\end{align*}
Define $\intro*\CoRtwostar$ as the set of $\CoR$ "terms@@CoR" generated by the following grammar, where $R$ ranges over binary "relation symbols":
\[
\term[1],\term[2] ~\Coloneqq~
R \mid \corid \mid \term[1]^{\smile} \mid
\term[1]\intersection\term[2] \mid \term[1]\union\term[2]
~\mid~ \term[1]\setminus\term[2]
~\mid~ \bo{\term[1]}\term[2] \mid \windowbo{\term[1]}\term[2].
\]
$\CoRtwostar$ is a fragment of $\intro*\CoRtwo$ \cite[\S 3.2]{nakamuraExpressivePowerSuccinctness2022}, which is the set of $\CoR$ "terms@@CoR" generated by the grammar
\[
\term[1],\term[2] ~\Coloneqq~
R \mid \corid \mid \term[1]^{\smile} \mid
\term[1]\intersection\term[2] \mid \term[1]\union\term[2]
~\mid~ \term[1]^{-}
~\mid~ \term[1]^{\exists}.
\]
Observe that the grammar of $\CoRtwo$ uses composition only inside ${}^{\exists}$.

\subsection{Translation from \texorpdfstring{$\BMLP$}{ML*} to \texorpdfstring{$\CoRtwostar$}{CoR2star}}\label{section: BMLP to CoR translation}

Assume that each unary "relation symbol" $U$ has a corresponding binary "relation symbol" $D_U$ interpreted as
\(D_U^{\struc}\defeq \set{\tuple{a,a}\mid a\in U^{\struc}}\).

Define the translation $\delta$ from $\BMLP$-"formulas@@BMLP" to $\CoRtwostar$ "terms@@CoR", as follows:
\begin{align*}
\delta(U) &\defeq D_U, \quad \delta(\fmltrue) \defeq \corid,&
\delta(\lnot\mlfml) &\defeq \corid\setminus\delta(\mlfml),&
\delta(\bo{\term}\mlfml) &\defeq \bo{\term}\delta(\mlfml),\\
\delta(\mlfml[1]\land\mlfml[2]) &\defeq \delta(\mlfml[1])\intersection\delta(\mlfml[2]),&
\delta(\mlfml[1]\lor\mlfml[2]) &\defeq \delta(\mlfml[1])\union\delta(\mlfml[2]),&
\delta(\windowbo{\term}\mlfml) &\defeq \windowbo{\term}\delta(\mlfml).
\end{align*}
\begin{proposition}\label{proposition: BMLP to CoR}
For every $\BMLP$-"formula@@BMLP" $\mlfml$,
one can construct, in $\asympO(\len{\mlfml})$ time,
a $\CoRtwostar$ "term@@CoR" $\delta(\mlfml)$ such that, for every "structure" $\struc$ satisfying the above condition on the symbols $D_U$,
we have \(\sem{\delta(\mlfml)}^{\struc} = \set{\tuple{a,a} \mid a\in\sem{\mlfml}^{\struc}}\).
\end{proposition}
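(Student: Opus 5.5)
The plan is a structural induction on $\mlfml$, proving simultaneously that $\delta(\mlfml)$ is a $\CoRtwostar$ "term@@CoR" and that $\sem{\delta(\mlfml)}^{\struc} = \set{\tuple{a,a} \mid a \in \sem{\mlfml}^{\struc}}$, i.e.\ that $\delta(\mlfml)$ is the diagonal encoding of $\sem{\mlfml}^{\struc}$. Membership in $\CoRtwostar$ is immediate from the grammar. The $\BMLP$ "terms@@BMLP" $\term$ occurring inside boxes and windows are complement-free and built from $R,\corid,{}^{\smile},\intersection,\union,\setminus$; these are all productions of $\CoRtwostar$, and their $\CoR$ semantics coincides with their $\BMLP$ semantics. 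Hence $\term$ can be copied verbatim, and no induction on terms is needed.

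The Boolean cases are routine. For $U$ the claim holds by the stated assumption on $D_U$. For $\fmltrue$ it follows from $\sem{\corid}^{\struc}$ being the full diagonal. For $\land$ and $\lor$, intersection and union of diagonal encodings is the diagonal encoding of the intersection and union. For $\lnot$, $\corid \setminus \delta(\mlfml)$ is the diagonal of $\univ{\struc} \setminus \sem{\mlfml}^{\struc}$, using the induction hypothesis that $\sem{\delta(\mlfml)}^{\struc}$ lies inside the diagonal.

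The modal cases carry the real content. They rely on the semantic identities displayed just before the definition of $\CoRtwostar$ for $\bo{\term[1]}\term[2]$ and $\windowbo{\term[1]}\term[2]$, and I would verify those identities first. For the box: $(\term[2] \intersection \corid)^{\exists}$ consists of the pairs $\tuple{b,c}$ with $\tuple{b,b} \in \sem{\term[2]}^{\struc}$, so its converse consists of the pairs $\tuple{c,b}$ with $\tuple{b,b} \in \sem{\term[2]}^{\struc}$. Then $\term[1] \setminus$ (that converse) is the set of pairs $\tuple{a,b} \in \sem{\term[1]}^{\struc}$ with $\tuple{b,b} \notin \sem{\term[2]}^{\struc}$. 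Applying ${}^{\exists}$ yields the pairs $\tuple{a,c}$ for which such a $b$ exists, and subtracting this from $\corid$ leaves exactly the stated diagonal set. The window identity is verified the same way with the roles of the two sets swapped. Given these identities, substitute $\term[2] = \delta(\mlfml)$. By the induction hypothesis, $\tuple{b,b} \in \sem{\delta(\mlfml)}^{\struc}$ iff $b \in \sem{\mlfml}^{\struc}$, and this matches the $\BMLP$ "semantics@@BMLP" of $\bo{\term}\mlfml$ and $\windowbo{\term}\mlfml$ pointwise.

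For the time bound, $\delta$ is defined by a single recursion in which each node produces a constant number of new nodes plus a copy of $\term$. I would build the output as a "DAG" with memoization on subformulas, sharing the term $\term$ and the subterm $\delta(\mlfml)$. This gives size and time $\asympO(\len{\mlfml})$, even though the abbreviations for $\bo{-}$ and $\windowbo{-}$ unfold to $\CoR$ with only constant overhead per occurrence. The main obstacle is only the careful unfolding of the ${}^{\exists}$, converse and difference abbreviations in the box and window identities; everything else is bookkeeping.
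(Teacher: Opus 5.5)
Your proposal is correct and takes essentially the same approach as the paper. The paper also argues by induction on $\mlfml$: the atomic and Boolean cases are immediate, the modal cases reduce to the diagonal identities for the box and window abbreviations, and the time bound follows because each constructor adds only a constant number of symbols. Your explicit unfolding of those identities through ${}^{\exists}$, converse, and difference fills in a check the paper only states, and your verification is correct.
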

\begin{proof}
By induction on $\mlfml$.
The atomic and Boolean cases are immediate from the definitions.
For the two modal cases, the induction hypothesis says that $\delta(\mlfml)$ stores exactly $\sem{\mlfml}^{\struc}$ on the diagonal, and the definitions above are exactly the corresponding $\BMLP$ semantics.
The construction adds only a constant number of symbols per constructor.
\end{proof}

\subsection{Translation from \texorpdfstring{$\CoRtwo$}{CoR2} to \texorpdfstring{$\BMLP$}{ML*} for unary queries}\label{section: CoR to BMLP translation}

Conversely, consider a $\CoRtwo$ (in particular, $\CoRtwostar$) "atomic FO-formula@@CoR" of the form $\term(x,x)$.
The translation from $\CoRtwo$ to $\kFO{2}$ in \cite[\S 3.2]{nakamuraExpressivePowerSuccinctness2022}
gives, in $\asympO(\len{\term})$ time,
an $\kFO{2}\sigres{(\ordinalomega,\ordinalomega)}$-"formula" $\alpha_{\term}(x)$ such that
\[
\struc,I \models \alpha_{\term}(x)
\quad\text{if and only if}\quad
\tuple{I(x),I(x)}\in\sem{\term}^{\struc}.
\]
By \Cref{proposition: fo2 to bml+c,proposition: bml+c to bml+},
the "formula" $\alpha_{\term}(x)$ can be translated into an $\BMLP$-"formula@@BMLP".
\begin{proposition}\label{proposition: CoR to BMLP}
For every $\CoRtwo$ "term@@CoR" $\term$ (in particular, every $\CoRtwostar$ "term@@CoR"),
one can construct, in $2^{\asympO(\len{\term})}$ time,
an $\BMLP$-"formula@@BMLP" $\eta(\term)$ such that
$\term(x,x)$ and $\eta(\term)(x)$ are "semantically equivalent".
\end{proposition}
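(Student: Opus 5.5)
The plan is to compose three translations already available. The first two are quoted from \cite[\S 3.2]{nakamuraExpressivePowerSuccinctness2022} in the paragraph preceding the statement; the third is \Cref{proposition: bml+c to bml+}.

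First, given a $\CoRtwo$ term $\term$, I apply the translation from $\CoRtwo$ to $\kFO{2}$. In linear time it yields an $\kFO{2}\sigres{(\ordinalomega,\ordinalomega)}$-formula $\alpha_{\term}(x)$ with $\struc,I\models\alpha_{\term}(x)$ iff $\tuple{I(x),I(x)}\in\sem{\term}^{\struc}$. To make sure this is really available, I would recall the underlying two-variable construction, which defines $\alpha_{\term}(x,y)$ by induction on $\term$:
\begin{itemize}
\item relation symbols, $\corid$ (as $x=y$), converse (swap $x,y$), intersection, union and complement are handled directly;
\item $\term[2]^{\exists}(x,y)$ becomes $\exists y.\,\alpha_{\term[2]}(x,y)$, which reuses $y$ and so needs no third variable.
\end{itemize}
Substituting $y:=x$ at the end needs care with two variables. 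I would instead use $\alpha_{\term}(x,x)$ in the form $\exists y.(x=y\land\alpha_{\term}(x,y))$. If the paper's representation also shares subterms as a DAG, I would memoize on (subterm, variable orientation), so the output is still $\asympO(\len{\term})$.

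Second, I apply \Cref{proposition: fo2 to bml+c} to $\alpha_{\term}(x)$, which has at most one free variable $x$. This gives in $2^{\asympO(\len{\alpha_{\term}})}=2^{\asympO(\len{\term})}$ time a DAG-represented $\BMLPwithC$-formula $\mlfml$ with $\mlfml(x)$ semantically equivalent to $\alpha_{\term}(x)$. Third, \Cref{proposition: bml+c to bml+} removes term complement in linear time and gives a $\BMLP$-formula $\eta(\term)$.

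Semantic equivalence of $\term(x,x)$ and $\eta(\term)(x)$ then follows by chaining the three equivalences, and the running times add up to $2^{\asympO(\len{\term})}$. Since $\CoRtwostar\subseteq\CoRtwo$, the statement for $\CoRtwostar$ follows; the abbreviations $\setminus$, $\bo{\cdot}$ and $\windowbo{\cdot}$ unfold into $\CoRtwo$ with only constant blowup per constructor, so their sizes stay linear.

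I expect the main obstacle to be the first step: checking that the cited $\CoRtwo\to\kFO{2}$ translation is linear in size under the paper's size measure and indeed stays within two variables. If it is, the remaining steps are direct invocations of earlier results.
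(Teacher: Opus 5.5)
Your proposal is correct and is essentially the paper's proof: the paper also composes the cited $\CoRtwo \to \kFO{2}$ translation producing $\alpha_{\term}(x)$, then \Cref{proposition: fo2 to bml+c}, then \Cref{proposition: bml+c to bml+}, and gets semantic equivalence and the $2^{\asympO(\len{\term})}$ bound by chaining the three steps. Your extra verification of the first step is a sound elaboration of what the paper simply cites. This covers handling ${}^{\exists}$ by re-quantifying $y$, and forming the diagonal as $\exists y.(x=y\land\alpha_{\term}(x,y))$ so that only two variables are used.
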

\begin{proof}
Translate $\term(x,x)$ into the $\kFO{2}$-"formula" $\alpha_{\term}(x)$ above.
Then apply the translation from $\kFO{2}$ to $\BMLPwithC$ and the complement-elimination translation from $\BMLPwithC$ to $\BMLP$.
Semantic equivalence follows by composing these translations.
\end{proof}
Thus, $\BMLP$, $\CoRtwostar$, and $\CoRtwo$ have the same "expressive power" for "atomic FO-formulas@@CoR" of the form $\term(x,x)$ (unary queries).

\subsection{Expressive power comparison between \texorpdfstring{$\CoRtwostar$}{CoR2star}, \texorpdfstring{$\CoRtwo$}{CoR2}, and \texorpdfstring{$\CoR$}{CoR}}\label{section: expressive power comparison}
For unary queries, we view a $\CoR$ "term@@CoR" $\term$ as the "atomic FO-formula@@CoR" $\term(x,x)$.
After encoding unary "relation symbols" as diagonal binary "relation symbols", the relationships are:
\[
(\BMLP \mathrel{=)} \CoRtwostar = \CoRtwo = \kFO{2} \subsetneq \CoR = \kFO{3}.
\]
The equivalences between $\BMLP$, $\CoRtwostar$, and $\CoRtwo$ follow from \Cref{proposition: BMLP to CoR,proposition: CoR to BMLP} and the correspondence between $\CoRtwo$ and $\kFO{2}$ \cite[\S 3.2]{nakamuraExpressivePowerSuccinctness2022}.
The equivalence between $\CoR$ and $\kFO{3}$ follows from \cite{tarskiFormalizationSetTheory1987,nakamuraExpressivePowerSuccinctness2022}.

For the strict inclusion $\kFO{2} \subsetneq \CoR$, consider the following "sentence", which detects a directed $3$-cycle over irreflexive directed graphs:
\[
\exists x\,\exists y\,\exists z\, (E(x,y)\land E(y,z)\land E(z,x)).
\]
$\CoR$ can also express the "sentence" by
\[
\bigl(\cortop\compo((E\compo E)\intersection E^{\smile})\compo\cortop\bigr)(v,v),
\]
where $v$ is a dummy "variable".
In contrast, $\kFO{2}$ cannot express this "sentence":
it separates the disjoint union of the irreflexive directed cycles $C_3$ and $C_4$ from that of two copies of $C_4$,
whereas no $\kFO{2}$-"sentence" does, as easily shown by $2$-pebble games~\cite[\S 3.3.1]{ebbinghausFiniteModelTheory1995}.

For binary queries, the equality between $\CoRtwostar$ and $\CoRtwo$ above becomes a strict inclusion:
\[
\CoRtwostar \subsetneq \CoRtwo = \kFO{2} \subsetneq \CoR = \kFO{3}.
\]
The strictness of $\CoRtwostar \subsetneq \CoRtwo$ follows from the following sparsity property, which holds for every $\CoRtwostar$ "term@@CoR" $\term$ ("cf" \Cref{proposition: BMLP term sparsity}):
\[
\sem{\term}^{\struc}
\subseteq
\sem{\corid}^{\struc}
\cup
\bigcup_{R \in \sig : \arity{R}=2} \sem{R \union R^{\smile}}^{\struc}.
\]
This is shown by induction on $\term$.
Thus, for instance, $\CoRtwostar$ cannot express the off-diagonal relation $\corid^{-}$, which is expressible in $\CoRtwo$.

\clearpage
\section*{Appendices to \Cref{sec: dichotomy dense}}

\section{Dichotomy Theorem in the Dense Setting} \label{section: theorem: dichotomy dense}
We also establish the following dichotomy result for $\kPP{k}\sigres{\sig} \subseteq \fmlclass \subseteq \kFO{k}\sigres{\sig}$, in the dense setting.
Its hardness part relies on the following hypothesis.
\AP
\begin{hypothesis*}[$(k, r)$-\intro*\HyperCliqueHypothesis{} \cite{abboudMoreConsequencesFalsifying2018,lincolnTightHardnessShortest2018}]
Let $k > r \ge 3$.
For any $\eps > 0$,
the $(k, r)$-\HyperCliqueProblem{} on $r$-"uniform hypergraphs" with $n$ "vertices" cannot be solved in $\asympO(n^{k - \eps})$ time.
\end{hypothesis*}
\begin{theorem}[Dichotomy in the dense setting; see also \Cref{table: dense dichotomy}]\label{theorem: dichotomy dense}
Fix $k \ge 2$, a "relational signature" $\sig$, and a class $\fmlclass$ such that $\kPP{k}\sigres{\sig} \subseteq \fmlclass \subseteq \kFO{k}\sigres{\sig}$.
Then the following dichotomy holds:

\noindent\textbf{Tractable cases.}
$\MC{\fmlclass}$ is solvable in $2^{\asympO(\len{\fml})} \cdot n^{k - \eps}$ time for some $\eps > 0$ if one of the following holds:
\begin{itemize}
    \item $\sigarity{\sig} \le 1$, in which case the problem reduces to one-variable counting logic $\kC{1}$ (this case is solvable in $\asympO(\len{\fml})$ time if $\sigarity{\sig} = 0$ and in $2^{\asympO(\len{\fml})} \cdot n$ time if $\sigarity{\sig} = 1$);
    \item $k = 3$ and $\sigarity{\sig} = 2$, in which case the problem reduces to $\CoR$ (this case is solvable in $2^{\asympO(\len{\fml})} \cdot n^{\matrixexponent}$ time).
\end{itemize}

\noindent\textbf{Hard cases.}
In all remaining cases, there is a fixed $\kPP{k}\sigres{\sig}$-"sentence"
$\fml$ whose "model-checking problem" cannot be solved in $\asympO(n^{k-\eps})$ time for any $\eps > 0$, under the corresponding condition below:
\begin{itemize}
    \item $k \le \sigarity{\sig}$, unconditionally (see \Cref{proposition: trivial lower bound});
    \item $k \geq 4$ and $2 \le \sigarity{\sig} < k$, assuming the $(qk, q)$-\HyperCliqueHypothesis{} for some $3 \le q < k$.
\end{itemize}
Consequently, under the corresponding assumption above,
$\MC{\fmlclass}$ cannot be solved in $\asympO(n^{k - \eps})$ time for any $\eps > 0$,
and hence cannot be solved in $f(\len{\fml}) \cdot n^{k-\eps}$ time
for any function $f$ and any $\eps > 0$.
\end{theorem}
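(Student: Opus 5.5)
We prove the dense dichotomy by assembling results already in the paper and treating the tractable and hard sides separately. Since $\kPP{k}\sigres{\sig} \subseteq \fmlclass \subseteq \kFO{k}\sigres{\sig}$, upper bounds for $\kFO{k}\sigres{\sig}$ transfer downward to $\fmlclass$, and lower bounds for a fixed $\kPP{k}\sigres{\sig}$-sentence transfer upward. Both sides therefore reduce to the two extreme fragments.

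\textbf{Tractable side.} When $\sigarity{\sig}\le 1$ we invoke \Cref{theorem: monadic} directly. It gives $\asympO(\len{\fml})$ time if $\sigarity{\sig}=0$ and $2^{\asympO(\len{\fml})}\cdot n$ time if $\sigarity{\sig}=1$. Since $n \le n^{k-\eps}$ with $\eps = k-1 \ge 1$, this case is done. When $k=3$ and $\sigarity{\sig}=2$ we invoke the dense part of \Cref{theorem: fo3}, which gives $2^{\asympO(\len{\fml})}\cdot n^{\matrixexponent}$. Because $\matrixexponent < 3$, we may take $\eps = 3-\matrixexponent$. No reduction of arity is needed here, since the arity is already at most $2$. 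Note that $k=2$ with $\sigarity{\sig}\ge 2$ is \emph{not} tractable in the dense setting: it falls under $k \le \sigarity{\sig}$.

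\textbf{Hard side.} We split into two cases according to whether $k \le \sigarity{\sig}$.

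\emph{Case $k\le\sigarity{\sig}$.} Here we cite \Cref{proposition: trivial lower bound}, an input-reading argument. Pick $R$ of arity $r \ge k$. The fixed sentence $\exists x_1\cdots\exists x_k\, R(x_1,\dots,x_k,x_k,\dots,x_k)$ lies in $\kPP{k}$, and deciding it requires inspecting $\asympOmega(n^k)$ entries of the adjacency tensor of $R$ in the worst case. The adversary argument takes an instance that is everywhere false except possibly one entry on the relevant "diagonal-extended" $k$-dimensional slice. This slice has $n^k$ cells, so any algorithm must read $n^{k}$ of them.

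\emph{Case $k\ge4$ and $2\le\sigarity{\sig}<k$.} We use \Cref{corollary: dense hardness}, which already provides a fixed $\kPP{k}\sigres{(0,1)}$-sentence $\fml$ that is $n^{k-\eps}$-hard under the $(qk,q)$-\HyperCliqueHypothesis{}. We then lift it to $\sig$ with \Cref{proposition: arity monotone}: replace $E(x,y)$ by $R(x,y,\dots,y)$ for some $R\in\sig$ of arity $\sigarity{\sig}$.

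The only point to check is that this lifting preserves the dense size measure. The universe is unchanged, so $n$ is preserved. Constructing the adjacency tensor of the new relation costs $n^{\sigarity{\sig}}$, which could look worrying. However, the reduction direction is from the binary instance to the $\sig$-instance, so we must build a tensor of size $n^{\sigarity{\sig}}$ with $\sigarity{\sig} < k$. This costs $\asympO(n^{k-1})$, which is negligible against $n^{k-\eps}$ when $\eps<1$. For $\eps\ge1$ the statement follows from the case $\eps=1/2$.

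Finally, since each hard sentence is a fixed $\kPP{k}\sigres{\sig}$-sentence, it lies in $\fmlclass$. Its hardness therefore rules out any $f(\len{\fml})\cdot n^{k-\eps}$ algorithm for $\MC{\fmlclass}$, because $f(\len{\fml})$ is a constant on that fixed sentence.

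The main obstacle I expect is this dense-encoding cost check, together with making sure \Cref{corollary: dense hardness} is stated for a single $\eps$-uniform fixed sentence. It is, as it inherits this from \Cref{proposition: dense hardness}. The remaining steps are bookkeeping.
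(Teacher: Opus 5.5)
Your proposal is correct and follows the paper's proof: the same case split, with \Cref{theorem: monadic} and the dense part of \Cref{theorem: fo3} on the tractable side, the input-reading bound for $k \le \sigarity{\sig}$, and \Cref{corollary: dense hardness} lifted via \Cref{proposition: arity monotone} for $k \ge 4$, $2 \le \sigarity{\sig} < k$. The two small differences are harmless. First, your single-atom sentence suffices because dense inputs are adjacency tensors; the paper's two-atom $\fml_{k,r}$ is needed only for tuple lists, and with word packing your bound is $\asympOmega(n^k/\log n)$ rather than $n^k$, which still rules out $\asympO(n^{k-\eps})$. Second, your check that building the $n^{\sigarity{\sig}}$-entry tensor costs only $\asympO(n^{k-1})$ makes explicit a step the paper leaves implicit.
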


\begin{table}[h]
\centering
\caption{Dense dichotomy for $\kPP{k}\sigres{\sig} \subseteq \fmlclass \subseteq \kFO{k}\sigres{\sig}$
where $k \ge 2$ and $\sig$ is a \kl{relational signature}.
The cyan entries correspond to tractable cases
and the orange entries correspond to hard cases in \Cref{theorem: dichotomy dense}.}
\label{table: dense dichotomy}

\definecolor{myPink}{HTML}{FFEEDD}
\definecolor{myCyan}{HTML}{DDEEFF}

\resizebox{\textwidth}{!}{
\begin{tblr}{
  colspec = {Q[c,m,1.cm] | Q[c,m,1.4cm] | Q[c,m,1.4cm] | Q[c,m,1.5cm] | Q[c,m,5.5cm] | Q[c,1.4cm] },
  row{1} = {},
  column{1} = {},
  hlines, 
  vlines,
  hline{2} = {1.5pt},
  hline{3} = {5-6}{fg=myPink},
  hline{4} = {6-6}{fg=myPink},
  hline{5} = {5-5}{fg=myPink},
  vline{2} = {1.5pt},
  vline{3} = {2-5}{0pt},
  vline{5} = {2-2}{fg=myPink},
  vline{5} = {4-5}{fg=myPink},
  vline{6} = {5-5}{fg=myPink},
  cell{2}{4} = {}{bg=myPink},       %
  cell{2}{5} = {r=2, c=2}{bg=myPink, fg = gray},       %
  cell{3}{5} = {c=2}{bg=myPink},       %
  cell{4}{6} = {}{bg=myPink}, %
  cell{2}{2} = {r=4}{bg=myCyan},       %
  cell{2}{3} = {r=4}{bg=myCyan},       %
  cell{3}{4} = {c=1}{bg=myCyan},       %
  cell{4}{4} = {r=2}{bg=myPink}, %
  cell{4}{5} = {r=2}{bg=myPink}, %
  cell{5}{6} = {}{bg=myPink}, %
}
  \diagbox[width=1.2cm, height = 1.\line]{\raisebox{-.5ex}{$k$}}{\raisebox{.5ex}{$\sigarity{\sig}$}} & 0 & 1 & 2 & 3 & $\sigarity{\sig} \ge 4$ \\

  2 & {$\asympO(\len{\fml})$} \par via $\kC{1}$ \par (Thm.~\ref{theorem: monadic}) & {$2^{\asympO(\len{\fml})} \cdot n$} \par via $\kC{1}$ \par (Thm.~\ref{theorem: monadic}) 
    & & Naive $\asympO(n^k)$ is optimal \par by the direct $r$-ary relation lower bound \par (Prop.~\ref{proposition: trivial lower bound}) & \\

  3 & & & {$2^{\asympO(\len{\fml})} \cdot n^{\matrixexponent}$} \par via CoR (Thm.~\ref{theorem: fo3}) & \\

  4 & & & & Naive $\asympO(n^k)$ is optimal under \par $(qk, q)$-\HyperCliqueHypothesis{} \par 
  \text{for some $3 \le q < k$} (Cor.~\ref{corollary: dense hardness}) & \\

  $k \ge 5$ & & & & & $\ddots$ 

\end{tblr}
}
 \end{table}

\Cref{theorem: dichotomy dense} is a direct consequence of combining the results in \Cref{section: dichotomy sparse} and \Cref{sec: dichotomy dense}.
See also \Cref{section: trivial lower bound} for the direct $r$-ary relation lower bound.

\begin{proof}[Proof of \Cref{theorem: dichotomy dense}]
\proofcase{For Tractable Cases}
If $\sigarity{\sig}\le 1$, the claim follows from \Cref{theorem: monadic}.
If $k=3$ and $\sigarity{\sig}=2$, it follows from \Cref{theorem: fo3}.
These give $n^{k-\eps}$ bounds with $\eps = k-1$ and $\eps = 3-\matrixexponent$, respectively.
\proofcase{For Hard Cases}
If $k \le \sigarity{\sig}$, the claim follows from \Cref{proposition: trivial lower bound} applied to some "relation symbol" $R \in \sig$ of arity $r \ge k$.
If $k\ge4$ and $2 \le \sigarity{\sig} < k$,
then this claim follows from \Cref{corollary: dense hardness}
(and \Cref{proposition: arity monotone} for $\sigarity{\sig} > 2$).
\end{proof}

\subsection{On the direct \texorpdfstring{$r$-ary}{r-ary} relation lower bound}\label{section: trivial lower bound}

In this subsection, we show that one \(r\)-ary "relation symbol" with \(r\ge k\) already gives a fixed \(\PNFkPP{k}\)-"sentence" ruling out \(\asympO(n^{k-\eps})\) time, both for adjacency tensors and for lists of tuples.
\begin{proposition}\label{proposition: trivial lower bound}%
\gdef\trivialLowerbound{
Fix integers $k \geq 2$ and $r \geq k$, and let $R$ be an $r$-ary "relation symbol".
Consider the following fixed $\PNFkPP{k}$-"sentence":
\[
\fml_{k,r} \:\defeq\: \exists x_1. \dots \exists x_k.\,
R(x_1,x_2,x_3,\ldots,x_k,\underbracket{x_k,\ldots,x_k}_{r-k\text{ further copies}})
\land
R(x_2,x_1,x_3,\ldots,x_k,\underbracket{x_k,\ldots,x_k}_{r-k\text{ further copies}}),
\]
where \(x_3,\ldots,x_k\) are absent if \(k=2\), and the final repeated block is omitted if \(r=k\).
When \(R^{\struc}\) is given either by its full \(n^r\)-entry adjacency tensor
or as a list of tuples,
$\MC{\set{\fml_{k,r}}}$ cannot be solved in $\asympO(n^{k - \eps})$ time for any $\eps > 0$.
}
\trivialLowerbound
\end{proposition}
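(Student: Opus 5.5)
This is an unconditional lower bound, so I would prove it with a reading (adversary) argument, not a reduction. The sentence $\fml_{k,r}$ holds iff there are $a_1,\dots,a_k$ with both $(a_1,a_2,a_3,\dots,a_k,a_k,\dots,a_k)$ and $(a_2,a_1,a_3,\dots,a_k,a_k,\dots,a_k)$ in $R^{\struc}$. So it asks for a ``symmetric pair'' among the tuples of a specific shape, indexed by $(a_1,a_2,a_3,\dots,a_k)$. There are $\Theta(n^k)$ such index tuples. The plan is to build a family of hard instances in which the answer hinges on one tuple, which an algorithm cannot locate without inspecting $\Omega(n^k)$ input positions.

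For the adjacency-tensor input, take the universe to be $[n]$, split as $A\sqcup B$ with $|A|=|B|=n/2$. Let the base structure $\struc_0$ contain every tuple $(a_1,a_2,a_3,\dots,a_k,a_k,\dots,a_k)$ with $a_1\in A$ and $a_2\in B$, and nothing else. In $\struc_0$ no symmetric partner exists: the swapped tuple has first coordinate in $B$, while every present tuple has first coordinate in $A$. Also $a_1=a_2$ is impossible, since $A$ and $B$ are disjoint. Hence $\struc_0\not\models\fml_{k,r}$.

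For each index $(b,a,a_3,\dots,a_k)$ with $b\in B$ and $a\in A$, the instance $\struc_{b,a,\bar a}$ adds the single tuple $(b,a,a_3,\dots,a_k,\dots)$. That tuple pairs with $(a,b,a_3,\dots)\in R^{\struc_0}$, so $\struc_{b,a,\bar a}\models\fml_{k,r}$. These are $(n/2)^2 n^{k-2}=\Omega(n^k)$ distinct tensor entries, each flipping the answer from no to yes. By the standard adversary argument, any correct algorithm must read each of these entries on input $\struc_0$: an unread entry could be flipped without changing the run. So it needs $\Omega(n^k)$ time, which rules out $\asympO(n^{k-\eps})$.

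For the list-of-tuples input, I would reduce the adversary to reading the list. I expect this to be the main obstacle, because the algorithm now sees explicit tuples rather than probing absent positions. My first choice is to keep $\struc_0$ as above, with $m=\Theta(n^k)$ tuples, so that $\asympO(n^{k-\eps})=o(m)$. The algorithm then cannot read every tuple of the list. Replace one unread tuple $(a_1,a_2,\bar a,\dots)$ with $(a_2,a_1,\bar a,\dots)$; this is a new tuple, since $a_1\in A$ and $a_2\in B$. The new tuple forms a symmetric pair only with the removed tuple, so the answer would stay no. I would fix this by choosing the target so that the modified instance contains both members of a pair. For that, let $\struc_0$ contain, for each index, either its tuple or its swapped tuple; either choice gives the answer no. Overwriting an unread list entry with the swap of some other present tuple creates a pair and makes the answer yes. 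The list length is unchanged, so the run on unread positions is identical. Hence an algorithm reading $o(m)=o(n^k)$ entries errs, and no $\asympO(n^{k-\eps})$-time algorithm exists in either input model.
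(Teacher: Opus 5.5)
Your proposal is correct and is essentially the paper's argument. Both build a no-instance containing exactly one orientation of each relevant pair (the paper orients by $a_1<a_2$ where you use the split $A\sqcup B$) and then run an adversary argument on an unread tensor entry or list tuple; in the list case the paper simply overwrites an unread tuple with the diagonal tuple $(a,\ldots,a)$, which satisfies both atoms by itself, so your swap-of-another-tuple fix is valid but unnecessary.

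Two small precisions. First, with a bit-packed tensor in the word-RAM model, forcing all $\Theta(n^k)$ critical entries to be inspected gives only $\Omega(n^k/\log n)$ time, not $\Omega(n^k)$; this still excludes $O(n^{k-\varepsilon})$. Second, for $k=2<r$ a tuple's partner must be obtained by exchanging only the first two coordinates. For example, $(a,b,b,\ldots,b)$ pairs with $(b,a,b,\ldots,b)$, not with the tuple of index $(b,a)$ in your indexing, which is $(b,a,a,\ldots,a)$.
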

\begin{proof}
Let \(\struc[1]\) be the "structure" with
"universe" \(\univ{\struc[1]}=\range{n}\) and \(R^{\struc[1]}\) defined as follows:
\[
R^{\struc[1]} \defeq
\set{\tuple{a_2,a_1,a_3,\ldots,a_k,a_k,\ldots,a_k}
\mid a_1,\ldots,a_k\in \univ{\struc[1]},\ a_1<a_2}.
\]
Then \(\struc[1] \not\models \fml_{k, r}\)
but there are \(\asympTheta(n^k)\) tuples of the form
\(\tuple{a_1,a_2,a_3,\ldots,a_k,a_k,\ldots,a_k}\) with \(a_1<a_2\)
that do not belong to \(R^{\struc[1]}\).

\proofcase{Adjacency-Tensor Case}
Let $w=\asympTheta(\log n)$ be the word size.
The corresponding entries occupy
\(\asympOmega(n^k/w)\) words.
If a deterministic algorithm in the "word-RAM model" reads \(\asympo(n^k/w)\) words on the encoding of \(\struc[1]\),
then it leaves unread some word containing one of these entries.
Changing that entry from \(0\) to \(1\) gives another input "structure" satisfying \(\fml_{k,r}\) while leaving the execution unchanged,
which is a contradiction.
Thus the worst-case running time is $\asympOmega(n^k/w)=\asympOmega(n^k/\log n)$,
which is not $\asympO(n^{k-\eps})$ for any fixed $\eps>0$.

\proofcase{Tuple-List Case}
When \(R^{\struc[1]}\) is represented as a list of its
\(\asympTheta(n^k)\) tuples,
an \(\asympo(n^k)\)-time deterministic algorithm in the "word-RAM model" reads only \(\asympo(n^k)\) tuples from the list,
so some tuple remains unread.
Replacing that unread tuple by \(\tuple{a,\ldots,a}\), for any fixed \(a\in \univ{\struc[1]}\),
makes the resulting input "structure" satisfy \(\fml_{k,r}\), again without changing the execution.
Hence this representation also requires \(\asympOmega(n^k)\) time in the worst case.
\end{proof}

\begin{remark}
\Cref{theorem: dichotomy dense} excludes $k = 1$.
For adjacency tensors,
"model checking" of the "sentence" $\exists x.\ U(x)$ cannot be solved in $\asympO(n^{1-\eps})$ time for any $\eps > 0$.
If the input is instead given by lists of tuples,
the case of $k = 1$, $\sigarity{\sig} = 1$, and $\card{\sig} = 1$ is exceptional, for the same reason as in \Cref{section: footnote:k=1}.
\end{remark}

\clearpage

\section*{Appendices to \Cref{sec: fixed sentence}}

\section{Proofs of \texorpdfstring{\Cref{theorem:sparse-k4 fo,theorem:sparse-k3 fo}}{Theorem~\ref{theorem:sparse-k4 fo} and Theorem~\ref{theorem:sparse-k3 fo}}}\label{section: proof of hardness by H-Clique fo}

\begin{lemma}\label{lemma: hardness by OVHClique with QW}
Fix $k > r \ge 2$.
There exists a fixed $\kPFO{k}_{\exists^{k}\forall^{1}}\sigres{(0, 1)}$-"sentence" $\fml$ with the following property:
for every instance $\mathcal{V}$ of the $(k, r)$-\OVHyperCliqueProblem{n, d},
we can construct a "structure" $\struc_{\mathcal{V}}$ in $\asympO(n (d+1))$ time such that
\[
\text{$\OVgraph{\mathcal{V}, r}$ contains a $(k, r)$-"hyperclique"}
\quad\iff\quad 
\struc_{\mathcal{V}} \models \fml.\]
Moreover, $\len{\fml} = \asympO(r \binom{k}{r})$,
$\strucn{\struc_{\mathcal{V}}} = n + d$, and
$\strucm{\struc_{\mathcal{V}}} \le n (d+1)$.
\end{lemma}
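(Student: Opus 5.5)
We follow the proof of \Cref{lemma: new new partitioned binary hyperclique hypothesis}, but replace the split-and-list "choice function" gadget with a single universal quantifier over "dimensions". The structure has universe $\mathcal{V} \sqcup \range{d}$ (so $n+d$ elements). We first build it with unary predicates $V$ (vectors) and $D$ (dimensions) and one binary relation $Z \defeq \set{\tuple{i, v} \in \range{d}\times\mathcal{V} \mid v\ith{i} = 0}$. This gives at most $nd + n + d$ tuples and $\asympO(n(d+1))$ construction time. We then merge everything into a single binary relation at the end.

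The sentence is
\[
\fml \defeq \exists x_1 \cdots \exists x_k.\ \bigwedge_{i} V(x_i) \land \bigwedge_{S \in \binom{\range{k}}{r}} \forall x_{\mu(S)}.\ \Bigl( \lnot D(x_{\mu(S)}) \lor \bigvee_{j=1}^{r} Z(x_{\mu(S)}, x_{S\ith{j}}) \Bigr),
\]
with $\mu(S) \defeq \min(\range{k}\setminus S)$, exactly as before. Because the formula must be positive, the guard cannot be written as $\lnot D$. Instead we use a positive unary predicate $V'$ for the non-dimension elements; since $\range{d}$ and $\mathcal{V}$ partition the universe, $V'$ is just $V$. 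Each $\forall$ is then preceded by exactly the $k$ existentials, so every root-to-atom path has quantifier sequence a subsequence of $\exists^k\forall^1$. The size is $\asympO(r\binom{k}{r})$, matching the claimed bound.

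Correctness is direct. Under $x_i\mapsto v_i \in \mathcal{V}$, the conjunct for $S$ says that every coordinate $\ell$ has some $j$ with $v_{S\ith{j}}\ith{\ell}=0$, which is exactly $\GIP{v_{S\ith{1}},\dots,v_{S\ith{r}}}=0$. As in the earlier proof, well-formedness forces the witnesses to be pairwise distinct, so $\set{v_1,\dots,v_k}$ is a $(k,r)$-"hyperclique"; the converse direction is immediate.

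The main obstacle is the reduction to the signature $(0,1)$ while staying inside $\kPFO{k}_{\exists^k\forall^1}$ and keeping both the size and the tuple count within bounds. \Cref{lemma: unary encoding} in its stated form is unsuitable for two reasons. It relativizes the universal quantifier as $\forall x.\,E^2(x,x)\to\cdots$, which introduces a negation. Its path gadgets $\pathfml_i$ also contain existential quantifiers, and these would appear below the $\forall$, breaking the quantifier-pattern restriction. I would instead encode directly with a single binary $E$ using self-loops as unary markers. Put $\tuple{v,v} \in E$ for $v\in\mathcal{V}$ and $\tuple{i,v}\in E$ iff $v\ith{i}=0$, and take the dimension vertices loop-free. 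Then $V(x)$ becomes $E(x,x)$, and the guard $\lnot D(x)\lor\bigvee_j Z(x,x_{S\ith{j}})$ becomes $E(x,x)\lor\bigvee_j E(x,x_{S\ith{j}})$. There is a spurious case where $x$ is itself a vector $u$ and $E(u,v)$ holds, but since vectors carry only a self-loop, $E(u,v)$ with $u\ne v$ never holds. Hence for a vector $x$ the disjunct $E(x,x)$ already makes the guard true, as intended, while for a dimension $x$ only the $Z$-edges can satisfy it. Equality is never used, the universe stays $n+d$, and the tuple count is at most $nd+n \le n(d+1)$, so all bounds in the statement hold.
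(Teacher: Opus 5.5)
Your proposal is correct and follows the paper's proof. Your final structure is the paper's: universe $\mathcal{V}\sqcup\range{d}$, self-loops on vectors, an edge $\tuple{\ell,v}$ whenever $v\ith{\ell}=0$, and loop-free dimension vertices. Your final sentence is also the paper's: $\exists x_1\cdots\exists x_k$ followed by $\bigwedge_i E(x_i,x_i)$ and, for each $S$, the clause $\forall x_{\mu(S)}.\,E(x_{\mu(S)},x_{\mu(S)})\lor\bigvee_{j}E(x_{\mu(S)},x_{S\ith{j}})$. Your correctness argument (including distinctness via well-formedness) and your size and tuple-count bounds match the paper's. The detour through $V$, $D$, $Z$ and the unary-encoding lemma is unnecessary but harmless, since the paper also builds the single binary relation directly.
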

\begin{proof}
Let $\fml$ be the $\kPFO{k}_{\exists^{k}\forall^{1}}\sigres{(0, 1)}$-"sentence" defined as follows:
\begin{align*}
\fml
&~\defeq~
\exists x_1. \dots \exists x_{k}.~ \left(\left(\bigwedge_{i = 1}^{k} E(x_i, x_i)\right) \land
\bigwedge_{S \in \binom{\range{k}}{r}} \hyperedgefml(S)\right),
\end{align*}
where
\[
\intro*\hyperedgefml(S) ~\defeq~ \forall x_{\mu(S)}.\, E(x_{\mu(S)}, x_{\mu(S)}) \lor \bigvee_{j \in S} E(x_{\mu(S)}, x_j)
\]
and $\mu(S) \defeq \min(\range{k} \setminus S)$, which exists since $k > r$.
By construction, we only use the "variables" $x_1, \dots, x_k$, and hence $\fml \in \kPFO{k}_{\exists^{k}\forall^{1}}\sigres{(0, 1)}$.
Moreover, $\len{\fml} = \asympO(r \binom{k}{r})$.
Let $\mathcal{V}$ be an instance of the $(k, r)$-\OVHyperCliqueProblem{n, d}.
We then build the following "structure" $\struc_{\mathcal{V}}$:
\begin{align*}
    \univ{\struc_{\mathcal{V}}} &\defeq \mathcal{V} \sqcup \range{d},&
    E^{\struc_{\mathcal{V}}} &\defeq \set{\tuple{v, v} \mid v \in \mathcal{V}} \sqcup \set{\tuple{\ell, v} \in \range{d} \times \mathcal{V} \mid v\ith{\ell} = 0}.
\end{align*}
For all $v_1, \dots, v_k \in \mathcal{V}$
and $S \in \binom{\range{k}}{r}$,
we have:
\begin{align*}
\forall \ell \in \range{d}.\, \bigvee_{j \in S} (v_j \ith{\ell} = 0)
&\iff \forall \ell \in \range{d}.~ \struc_{\mathcal{V}}, \set{x_j \mapsto v_j \mid j \in S} \cup \set{x_{\mu(S)} \mapsto \ell} \models \bigvee_{j \in S} E(x_{\mu(S)}, x_j)\\
&\iff \struc_{\mathcal{V}}, \set{x_j \mapsto v_j \mid j \in S} \models  \hyperedgefml(S).
\end{align*}
We show that
$\OVgraph{\mathcal{V}, r}$ has a $(k, r)$-"hyperclique"
"iff"
$\struc_{\mathcal{V}} \models \fml$.
\proofcase{($\Rightarrow$)} Straightforward.
\proofcase{($\Leftarrow$)}
Suppose $\struc_{\mathcal{V}} \models \fml$,
witnessed by $x_1, \dots, x_k \mapsto a_1,\ldots,a_k$.
By $\tuple{a_1, a_1}, \dots, \tuple{a_k, a_k} \in E^{\struc_{\mathcal{V}}}$,
$a_1, \dots, a_k \in \mathcal{V}$.
By the equivalence above, for each $r$-element subset $S \in \binom{\range{k}}{r}$,
$\GIP{a_{S\ith{1}}, \ldots, a_{S\ith{r}}} = 0$.
Because the input $\mathcal{V}$ is "well-formed",
$a_{S\ith{1}}, \ldots, a_{S\ith{r}}$ are pairwise distinct for each $S \in \binom{\range{k}}{r}$, and hence $a_1, \dots, a_k$ are pairwise distinct.
Hence, $\set{a_1,\ldots,a_k}$ is a $(k,r)$-"hyperclique" in $\OVgraph{\mathcal{V}, r}$.
\end{proof}

\begin{theorem*}[Restatement of {\Cref{theorem:sparse-k4 fo}}]
\sparseFOhardnesstheoremFO
\end{theorem*}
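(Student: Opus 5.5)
The plan is to follow the chain used for \Cref{theorem:sparse-k4}, with \Cref{lemma: hardness by OVHClique with QW} in place of \Cref{lemma: new new partitioned binary hyperclique hypothesis}. Set $r \defeq k-1$. By \Cref{proposition: alman VVW}, the \SparseMAXxSATHypothesis{(k-1)} implies the $(k,k-1)$-\OVHyperCliqueHypothesis{}. Let $\fml$ be the fixed $\kPFO{k}_{\exists^{k}\forall^{1}}\sigres{(0,1)}$-"sentence" given by \Cref{lemma: hardness by OVHClique with QW} for $(k,k-1)$. This sentence does not depend on $\eps$: the single $\forall x_{\mu(S)}$ ranges over all $d$ coordinates at once, so no block parameter $B$ is needed. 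That is exactly why one sentence serves for every $\eps$.

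Next, suppose toward a contradiction that for some $\eps>0$ the problem $\MC{\set{\fml}}$ is solvable in $\asympO(m^{k-\eps})$ time. Set $\delta \defeq \eps/2$ and let $c$ be the constant the $(k,k-1)$-\OVHyperCliqueHypothesis{} provides for the loss $\delta$. Take any instance $\mathcal{V}$ of the $(k,k-1)$-\OVHyperCliqueProblem{n, \le c\log n}. Build $\struc_{\mathcal{V}}$ in $\asympO(n(d+1)) = \asympO(n\log n)$ time; it has $\strucm{\struc_{\mathcal{V}}} \le n(d+1) = \asympO(n\log n)$. Running the assumed algorithm then costs
\[
\asympO\bigl((n\log n)^{k-\eps}\bigr) \le \asympO(n^{k-\delta}),
\]
since $(\log n)^{k}$ is $n^{o(1)}$. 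This contradicts the hypothesis. Since $\eps$ was arbitrary, no $\eps>0$ works.

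The one point to check is that the $\log n$ blowup in $m$ is absorbed. That holds because the dimension satisfies $d = \asympO(\log n)$, so $m = n^{1+o(1)}$, unlike the exponential $r^{d/B}$ blowup in the $\kPP{k}$ construction. I expect no real obstacle here. The only subtle issues are that well-formedness is already part of the instance definition, and that the sentence stays within $\kPFO{k}_{\exists^k\forall^1}$ over a single binary relation. Both are already guaranteed by \Cref{lemma: hardness by OVHClique with QW}.
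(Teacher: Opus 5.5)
Your proposal is correct and matches the paper's proof. Both take the fixed $\varepsilon$-independent sentence of \Cref{lemma: hardness by OVHClique with QW} with $r=k-1$, observe that $m \le n(d+1) = O(n\log n)$ when $d \le c\log n$, and conclude that an $O(m^{k-\varepsilon})$ algorithm would solve the $(k,k-1)$-OV-HyperClique problem in $O(n^{k-\varepsilon/2})$ time, which contradicts the hypothesis via \Cref{proposition: alman VVW}. The only difference is cosmetic: you fix $c$ for the loss $\varepsilon/2$ first, whereas the paper proves the bound for every $c$, and these are logically equivalent.
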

\begin{proof}
Toward a contradiction, assume that
$\MC{\set{\fml}}$ is solvable in $\asympO(m^{k - \eps})$ time for some $\eps > 0$,
where $\fml$ is the "sentence" from \Cref{lemma: hardness by OVHClique with QW} for $r = k - 1$.
By \Cref{lemma: hardness by OVHClique with QW},
for every $c > 0$, the $(k, k-1)$-\OVHyperCliqueProblem{n, \le c \log n} can be solved in $\asympO(n \cdot c \log n) + \asympO((n (c \log n + 1))^{k - \eps}) \le \asympO(n^{k - \eps/2})$ time, and hence the $(k, k-1)$-\OVHyperCliqueHypothesis{} fails.
By \Cref{proposition: alman VVW}, the \SparseMAXxSATHypothesis{(k-1)} then fails, which contradicts the assumption.
\end{proof}

\begin{theorem*}[Restatement of {\Cref{theorem:sparse-k3 fo}}]
\sparsehardnessbyovtriangleFO
\end{theorem*}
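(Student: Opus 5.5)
We argue by contradiction, mirroring the proof of \Cref{theorem:sparse-k4 fo} but with $k=3$, $r=2$ and the exponent $\matrixexponent$ in place of $k$. Let $\fml$ be the fixed $\kPFO{3}_{\exists^{3}\forall^{1}}\sigres{(0,1)}$-"sentence" from \Cref{lemma: hardness by OVHClique with QW} for $(k,r)=(3,2)$. The lemma explicitly allows $k>r\ge 2$, so this choice is legitimate. Since $\fml$ does not depend on $\eps$, it is a single hard "sentence" for all $\eps>0$, which is the uniformity claimed. Suppose that for some $\eps>0$, $\MC{\set{\fml}}$ is solvable in $\asympO(m^{\matrixexponent-\eps})$ time. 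We may assume $\eps<1$.

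Fix an arbitrary $c>0$ and an instance $\mathcal{V}$ of \OVTriangleProblem{n, \le c \log n}. By \Cref{lemma: hardness by OVHClique with QW}, we construct $\struc_{\mathcal{V}}$ in $\asympO(n(d+1))=\asympO(n\cdot c\log n)$ time. It satisfies $\strucm{\struc_{\mathcal{V}}}\le n(c\log n+1)$, and $\struc_{\mathcal{V}}\models\fml$ "iff" $\OVgraph{\mathcal{V},2}$ has a triangle. Running the assumed algorithm therefore decides the instance in time
\[
\asympO(n c\log n)+\asympO\bigl((n(c\log n+1))^{\matrixexponent-\eps}\bigr)\le \asympO(n^{\matrixexponent-\eps/2}).
\]
The polylogarithmic factor is absorbed by lowering $\eps$ to $\eps/2$, and the linear term is dominated because $\matrixexponent-\eps/2>1$. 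Since $c$ was arbitrary, this contradicts the \OVTriangleHypothesis{} for exponent loss $\eps/2$. By \Cref{proposition: alman VVW} (case $k=3$, $r=2$), the \SparseMAXxSATHypothesis{2} then fails, which contradicts the assumption.

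The only point needing care is the quantifier order in the hypothesis. The \OVTriangleHypothesis{} says that for each $\eps'$ there exists some $c$ for which \OVTriangleProblem{n, \le c \log n} is hard. Here we need to refute it for the specific $c$ supplied, and the argument above works for every $c$, so that is fine. The structural content (correctness of the reduction for $r=2$, in particular "well-formed" ensuring distinctness of the three witnesses) is already covered by \Cref{lemma: hardness by OVHClique with QW}. Hence I expect no real obstacle beyond checking that the lemma's hypotheses $k>r\ge2$ include $(3,2)$.
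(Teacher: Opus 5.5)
Your proposal is correct and is essentially the paper's argument. You instantiate the fixed sentence of Lemma~\ref{lemma: hardness by OVHClique with QW} with $(k,r)=(3,2)$, show that an $O(m^{\omega-\varepsilon})$ algorithm would solve the low-dimensional OV-triangle problem in $O(n^{\omega-\varepsilon/2})$ time for every $c$, and then use Proposition~\ref{proposition: alman VVW} to contradict the Sparse MAX-2-SAT hypothesis, which is exactly how the paper adapts the proof of Theorem~\ref{theorem:sparse-k4 fo}, including your handling of the quantifier order (fixed $\varepsilon/2$, all $c$).
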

\begin{proof}
The proof is analogous to that of \Cref{theorem:sparse-k4 fo}, with $k = 3$ and $r = 2$, the bound $\asympO(m^{\matrixexponent - \eps})$ in place of $\asympO(m^{k - \eps})$, and the \OVTriangleHypothesis{} and the \SparseMAXxSATHypothesis{2} in place of the $(k, k-1)$-\OVHyperCliqueHypothesis{} and the \SparseMAXxSATHypothesis{(k-1)}.
\end{proof}

\clearpage
\section*{Appendices to \Cref{sec: target language hardness}}
\section{Proofs of \texorpdfstring{\Cref{corollary: CoR fixed term hardness,corollary: CoR uniform fixed term hardness}}{Corollaries~\ref{corollary: CoR fixed term hardness} and~\ref{corollary: CoR uniform fixed term hardness}}}\label{section: CoR hardness}

\begin{proof}
The translation of \cite{nakamuraExpressivePowerSuccinctness2022} sends each $\kPP{3}\sigres{(0,1)}$ (respectively, $\kFO{3}\sigres{(0,1)}$) "sentence" to a "semantically equivalent" $\PPCoR$ (respectively, $\CoR$) "atomic FO-formula@@CoR" $\term(x,y)$ with distinct "variables" $x, y$ over the same "signature".
Apply it to the "sentences" $\fml_{\eps}$ of \Cref{theorem:sparse-k3} and to the single fixed "sentence" of \Cref{theorem:sparse-k3 fo}, respectively.
Each resulting term denotes the full "relation" when the original "sentence" is true and the empty "relation" otherwise.
Thus both membership and nonemptiness decide the original "sentence" on the same input "structure", preserving $m$ and yielding the claimed lower bounds.
\end{proof}
 
\clearpage
\section*{Appendices to \Cref{section: succinctness}}

We first recall the $k$-\OVHypothesis{} and the low-""dimensional"" $k$-\OVHypothesis{}. \AP
Given $n$-element sets $S_1, \dots, S_k \subseteq \set{0, 1}^d$ (with $d \ge 1$),
the problem $k$-\intro*\OVProblem{n, d} asks whether there exist $u_1 \in S_1, \dots, u_k \in S_k$ such that for every $\ell \in \range{d}$, $u_i\ith{\ell} = 0$ for some $i \in \range{k}$;
we write $k$-\OVProblem{n, \le c \log n} for its restriction to $d \le c \log n$.

\begin{hypothesis*}[The $k$-\intro*\OVHypothesis{} {\cite{gaoCompletenessFirstorderProperties2018}, \cite[Hypothesis~4]{williamsFinegrainedQuestionsAlgorithms2018}}]
    Fix $k \geq 2$.
    For any $\eps > 0$,
    $k$-\OVProblem{n, d} is not in $\poly(d) n^{k-\eps}$ time.
\end{hypothesis*}

\begin{hypothesis*}[The Low-Dimensional $k$-\reintro*\OVHypothesis{} {\cite[Hypothesis~24]{bringmannStructuralInvestigationApproximability2022arXiv}}]
Fix $k \geq 2$.
For any $\eps > 0$,
there exists a constant $c > 0$ such that $k$-\OVProblem{n, \le c \log n} is not in $\asympO(n^{k-\eps})$ time.
\end{hypothesis*}

Both the $k$-\OVHypothesis{} and the low-"dimensional" $k$-\OVHypothesis{} follow from the Strong Exponential Time Hypothesis (""SETH"") \cite{impagliazzoComplexity$k$SAT2001,impagliazzoWhichProblemsHave2001}, via the reduction of Williams~\cite{williamsNewAlgorithmOptimal2005}.

\section{Proof of \texorpdfstring{\Cref{theorem: FO model checking hardness}}{Theorem~\ref{theorem: FO model checking hardness}}: \texorpdfstring{$k$-\OVProblem{}}{k-OV} to \texorpdfstring{$\MC{\PNFkEP{k}\sigres
{(\ordinalomega)}}$}{PNF-EPk[(∞)]-MC}} \label{section: theorem: FO model checking hardness}
To prove \Cref{theorem: FO model checking hardness},
it suffices to give the following reduction.
\begin{lemma}\label{lemma: reduction from kOV to PNF-FO}
Let $S_1, S_2, \ldots, S_k$ be an instance of $k$-\OVProblem{n, d}, "ie",
each $S_i$ is an $n$-element subset of $\set{0, 1}^d$.
Then we can construct, in $\asympO(k d n)$ time, a "structure" $\struc$ and a $\PNFkEP{k}\sigres{(k + d)}$-"sentence" $\fml$,
such that
the instance is true ("ie", $\exists u_1 \in S_1.\,\exists u_2 \in S_2.\ \ldots \exists u_k \in S_k.\ \forall \ell \in \range{d}. \exists i \in \range{k}.\, u_i\ith{\ell} = 0$) "iff" $\struc \models \fml$.
Moreover, $\len{\fml} = \asympO(k d)$, $\strucn{\struc} = k n$, and $\strucm{\struc} = \asympO(k d n)$.
\end{lemma}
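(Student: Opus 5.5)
The plan is to build a structure over $k+d$ unary relation symbols directly encoding the $k$-OV instance, together with a fixed-shape PNF existential positive sentence. The universe is the disjoint union $\univ{\struc} \defeq \bigsqcup_{i=1}^{k} S_i$ (so $\strucn{\struc} = kn$, with a vector occurring in several $S_i$ giving distinct elements). There are two kinds of unary symbols:
\begin{itemize}
\item $P_1,\dots,P_k$, with $P_i^{\struc} \defeq S_i$ marking the part of each element;
\item $Z_1,\dots,Z_d$, with $Z_\ell^{\struc} \defeq \set{u \in \univ{\struc} \mid u\ith{\ell} = 0}$ marking the zero coordinates.
\end{itemize}
Then $\strucm{\struc} \le kn + kdn = \asympO(kdn)$, and the structure is built by one scan over all vectors in $\asympO(kdn)$ time.

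The sentence is
\[
\fml \defeq \exists x_1 \cdots \exists x_k.\ \Bigl(\bigwedge_{i=1}^{k} P_i(x_i)\Bigr) \land \bigwedge_{\ell=1}^{d} \bigvee_{i=1}^{k} Z_\ell(x_i).
\]
It is in prenex normal form, uses only $\exists$, $\land$, $\lor$ and the $k$ variables $x_1,\dots,x_k$, and has size $\asympO(k + kd) = \asympO(kd)$. So it is a $\PNFkEP{k}\sigres{(k+d)}$-sentence, and it can be written down in $\asympO(kd)$ time.

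For correctness:
\begin{itemize}
\item The conjuncts $P_i(x_i)$ force the witness for $x_i$ to be an element of $S_i$.
\item For each $\ell$, the disjunction $\bigvee_i Z_\ell(x_i)$ holds exactly when some chosen $u_i$ has $u_i\ith{\ell} = 0$.
\end{itemize}
Hence $\struc \models \fml$ holds iff there are $u_1 \in S_1, \dots, u_k \in S_k$ such that every coordinate $\ell$ is zero in some $u_i$, which is exactly the stated condition. Both directions are immediate by unfolding the semantics.

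To derive \Cref{theorem: FO model checking hardness}, suppose $\MC{\PNFkEP{k}\sigres{(\ordinalomega)}}$ were solvable in $\poly(\len{\fml}) \cdot m^{k-\eps}$ time. Then $k$-\OVProblem{n,d} could be solved in $\asympO(kdn) + \poly(kd) \cdot (kdn)^{k-\eps} = \poly(d)\, n^{k-\eps}$ time for fixed $k$, contradicting the $k$-\OVHypothesis{}. The one point requiring care is that $\len{\fml}$ grows with $d$; this is harmless because the hypothesis permits $\poly(d)$ factors. I expect no real obstacle beyond this bookkeeping.
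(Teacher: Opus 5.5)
Your proof is correct and is essentially the paper's construction: the universe $\range{k}\times\range{n}$ (your disjoint union), unary part-markers $\hat{S}_i$ (your $P_i$), unary zero-coordinate predicates $Z_\ell$, and the same prenex sentence $\exists x_1\cdots\exists x_k.\,\bigwedge_i \hat{S}_i(x_i)\land\bigwedge_\ell\bigvee_i Z_\ell(x_i)$. Your derivation of the theorem also matches the paper's. The one detail to add there is the paper's assumption $\eps<1$ (so $k-\eps>1$), which is what lets the $\asympO(kdn)$ construction cost be absorbed into $\poly(d)\,n^{k-\eps}$.
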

\begin{proof}
For $i \in \range{k}$,
let $S_i = \set{v_{i,1}, \dots, v_{i,n}}$.
Let $\hat{S}_1, \ldots, \hat{S}_k$ and $Z_1, \ldots, Z_d$ be unary "relation symbols".
We define the following $\PNFkEP{k}\sigres{(k+d)}$-"sentence" $\fml$
and the following "structure" $\struc$:
\begin{align*}
\fml
&\defeq
\exists x_1. \dots \exists x_k. \bigwedge_{i = 1}^{k} \hat{S}_i(x_i) \land \bigwedge_{\ell = 1}^{d} \bigvee_{i = 1}^{k} Z_{\ell}(x_i), \span \span\span \span\span\\
\univ{\struc} & \defeq \range{k} \times \range{n}, &
\hat{S}_i^{\struc} & \defeq \set{i} \times \range{n} \text{ for $i \in \range{k}$, } & 
Z_{\ell}^{\struc} & \defeq \set{\tuple{i, j} \in \univ{\struc} \mid v_{i,j}\ith{\ell} = 0} \text{ for $\ell \in \range{d}$. }
\end{align*}
Note that $\len{\fml} = \asympO(k d)$, $\strucn{\struc} = k n$, and $\strucm{\struc} = \asympO(k d n)$.
Also, $\fml$ and $\struc$ are constructed in $\asympO(k d n)$ time.

Intuitively, each $\tuple{i, j} \in \univ{\struc}$ represents the vector $v_{i, j}$,
$\hat{S}_i(x_i)$ expresses $u_i \in S_i$, and
$Z_{\ell}(x_i)$ expresses $u_i\ith{\ell} = 0$.
Therefore, the desired equivalence holds, which completes the proof.
\end{proof}

\begin{theorem*}[Restatement of \Cref{theorem: FO model checking hardness}]\leavevmode\\
\theoremFOmodelcheckinghardness
\end{theorem*}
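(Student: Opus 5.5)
We argue by contraposition, using the reduction of \Cref{lemma: reduction from kOV to PNF-FO}. Suppose that for some $\eps > 0$ there is an algorithm deciding $\MC{\PNFkEP{k}\sigres{(\ordinalomega)}}$ in time $\poly(\len{\fml}) \cdot m^{k-\eps}$. Given an instance $S_1, \dots, S_k \subseteq \set{0,1}^d$ of $k$-\OVProblem{n, d}, we first build $\struc$ and $\fml$ as in \Cref{lemma: reduction from kOV to PNF-FO}, which takes $\asympO(kdn)$ time. The output satisfies $\len{\fml} = \asympO(kd)$ and $\strucm{\struc} = \asympO(kdn)$. Note that $\fml$ is a $\PNFkEP{k}$-sentence over the unary signature $(k+d)$, so it lies in $\PNFkEP{k}\sigres{(\ordinalomega)}$. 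By the convention that each input is over its active signature, this is a legitimate instance.

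We then run the assumed algorithm on $(\fml, \struc)$. Since $k$ is fixed, the running time is
\[
\poly(kd) \cdot \asympO(kdn)^{k-\eps} \le \poly(d) \cdot n^{k-\eps}.
\]
Adding the $\asympO(kdn)$ construction time gives a $\poly(d)\, n^{k-\eps}$-time algorithm for $k$-\OVProblem{n, d}. This contradicts the $k$-\OVHypothesis{}, whose statement allows an arbitrary $\poly(d)$ factor.

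The step that needs care is the size accounting. Although $\strucm{\struc}$ carries a factor $d$, raising it to the power $k-\eps$ only contributes $d^{k-\eps}$, and this is absorbed into $\poly(d)$. Likewise, $\poly(\len{\fml})$ is polynomial in $d$. No isolated-element removal is needed, since $\strucn{\struc} = kn$ is already bounded by $\asympO(\strucm{\struc})$ up to constants (or we could simply apply \Cref{proposition: isolated}). This is exactly why the theorem is stated with a polynomial, rather than arbitrary, dependence on $\len{\fml}$.
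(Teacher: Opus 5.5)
Your proof is correct and is essentially the paper's argument: you apply the $O(kdn)$-time reduction from $k$-OV to prenex existential-positive $k$-variable model checking, then absorb the $\mathrm{poly}(kd)$ and $(kd)^{k-\varepsilon}$ factors into $\mathrm{poly}(d)$ because $k$ is fixed. The paper additionally assumes $\varepsilon<1$, so that $k-\varepsilon>1$; you leave this implicit, and without it the additive $O(kdn)$ construction time need not be bounded by $\mathrm{poly}(d)\,n^{k-\varepsilon}$. The fix is free, since an $m^{k-\varepsilon}$-time algorithm is also an $m^{k-\varepsilon'}$-time algorithm for every $\varepsilon'<\varepsilon$.
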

\begin{proof}
Suppose that for some fixed $\eps > 0$,
$\MC{\PNFkEP{k}\sigres{(\ordinalomega)}}$ is in $\poly(\len{\fml}) \cdot m^{k - \eps}$ time.
Without loss of generality, assume that $\eps$ is sufficiently small, say $\eps < 1$.
By the reduction of \Cref{lemma: reduction from kOV to PNF-FO},
we can solve $k$-\OVProblem{n, d} in time
$\asympO(k d n) + \poly(\asympO(k d)) \cdot \asympO(k d n)^{k - \eps}
= \poly(k d) \cdot n^{k - \eps}$ (by $k - \eps > 1$),
and hence in $\poly(d) \cdot n^{k - \eps}$ time as $k$ is fixed,
which contradicts the $k$-\OVHypothesis{}.
\end{proof}
Additionally, the low-"dimensional" $k$-\OVHypothesis{}
implies the following stronger result.
\begin{theorem}\label{theorem: FO model checking hardness low-dimensional}
    Fix $k \ge 2$.
    Assuming the low-"dimensional" $k$-\OVHypothesis{},
    $\MC{\PNFkEP{k}\sigres{(\ordinalomega)}}$
    cannot be solved in $2^{\asympo(\len{\fml})} \cdot m^{k - \eps}$ time
    for any $\eps > 0$.
\end{theorem}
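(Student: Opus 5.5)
The plan is to reuse the reduction of \Cref{lemma: reduction from kOV to PNF-FO} and exploit that the formula size it produces is only $\asympO(kd)$. Suppose, for contradiction, that for some fixed $\eps > 0$ the problem $\MC{\PNFkEP{k}\sigres{(\ordinalomega)}}$ can be solved in $2^{\delta(\len{\fml})} \cdot m^{k-\eps}$ time, where $\delta(s) = \asympo(s)$. We may assume $\eps < 1$, so that $k - \eps > 1$. The target is to contradict the low-"dimensional" $k$-\OVHypothesis{}. Apply that hypothesis with exponent loss $\eps/2$; it supplies a constant $c > 0$ such that $k$-\OVProblem{n, \le c \log n} is not solvable in $\asympO(n^{k-\eps/2})$ time.

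Given an instance of $k$-\OVProblem{n, d} with $d \le c\log n$, construct $\fml$ and $\struc$ by \Cref{lemma: reduction from kOV to PNF-FO} in $\asympO(kdn) = \asympO(n\log n)$ time. This gives $\len{\fml} = \asympO(kd) = \asympO(\log n)$ and $m = \strucm{\struc} = \asympO(kdn) = \asympO(n \log n)$, with $k$ and $c$ fixed. Running the assumed algorithm then costs
\[
2^{\delta(\asympO(\log n))} \cdot \asympO(n\log n)^{k-\eps} = n^{\asympo(1)} \cdot n^{k-\eps} \cdot (\log n)^{k} ,
\]
which is $\asympO(n^{k-\eps/2})$ for large $n$. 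Adding the construction time keeps the total at $\asympO(n^{k-\eps/2})$, which contradicts the choice of $c$.

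The only delicate point is the order of quantifiers. The function in $2^{\asympo(\len{\fml})}$ has to be fixed before $c$ is chosen. We also have to check that $\delta(C\log n) = \asympo(\log n)$ for the fixed constant $C = \asympO(kc)$. This holds because $\delta(s)/s \to 0$ as $s \to \infty$, so for every $\eta > 0$ we get $2^{\delta(C\log n)} \le n^{\eta C}$ for large $n$; taking $\eta$ small enough absorbs this factor into the $n^{\eps/2}$ slack. Small $n$, and the case $\len{\fml}$ bounded, are handled by constants. No other obstacle is expected, since the reduction already lives in the unary, PNF, existential-positive fragment.
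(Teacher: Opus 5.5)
Your proposal is correct and is essentially the paper's proof. Both apply the reduction of \Cref{lemma: reduction from kOV to PNF-FO} with $\len{\fml}=\asympO(kd)=\asympO(\log n)$ and $m=\asympO(kdn)$, so that $2^{\asympo(\len{\fml})}$ becomes $n^{\asympo(1)}$ and is absorbed into the $n^{\eps/2}$ slack. Your explicit treatment of the quantifier order (fixing $\delta$, then taking $c$ from the hypothesis with loss $\eps/2$) and of bounded formula sizes is just a more careful version of the paper's one-line $2^{\asympo(\log n)} = n^{\asympo(1)}$ step.
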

\begin{proof}
Suppose that for some fixed $\eps > 0$,
$\MC{\PNFkEP{k}\sigres{(\ordinalomega)}}$ is in $2^{\asympo(\len{\fml})} \cdot m^{k - \eps}$ time.
Without loss of generality, assume that $\eps$ is sufficiently small, say $\eps < 1$.
By the reduction of \Cref{lemma: reduction from kOV to PNF-FO},
we can solve $k$-\OVProblem{n, d} in time
$\asympO(k d n) + 2^{\asympo(\asympO(k d))} \cdot \asympO(k d n)^{k - \eps}
= 2^{\asympo(k d)} \cdot n^{k - \eps}$ (by $k - \eps > 1$),
and hence in $2^{\asympo(d)} \cdot n^{k - \eps}$ time as $k$ is fixed.
Hence, for every $c > 0$,
$k$-\OVProblem{n, \le c \log n} is in time $2^{\asympo(\log n)} \cdot n^{k - \eps} = n^{k - \eps + \asympo(1)} \le \asympO(n^{k - \eps / 2})$,
contradicting the low-"dimensional" $k$-\OVHypothesis{}.
Note that the "sentence" $\fml$ constructed in \Cref{lemma: reduction from kOV to PNF-FO} is already "tree-represented" with $\len{\fml} = \asympO(k d)$;
hence the same lower bound holds even when restricted to "tree-represented" "sentences".
\end{proof}

\section{Proof of \texorpdfstring{\Cref{theorem: FO model checking hardness one binary}}{Theorem~\ref{theorem: FO model checking hardness one binary}}:
\texorpdfstring{$k$-\OVProblem{}}{k-OV} to \texorpdfstring{$\MC{\kEP{k}\sigres
{(0,1)}}$}{EPk[(0,1)]-MC}}\label{section: theorem: FO model checking hardness one binary}

\begin{theorem*}[Restatement of \Cref{theorem: FO model checking hardness one binary}]\leavevmode\\
\theoremFOmodelcheckinghardnessonebinary
\end{theorem*}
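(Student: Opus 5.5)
The plan is to compose the reduction of \Cref{lemma: reduction from kOV to PNF-FO} with the single-binary-relation encoding. The catch is that each step must keep the formula size polynomial in $d$ and keep $m$ within a $\poly(d)\cdot n$ factor.

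Start from a $k$-\OVProblem{n, d} instance and apply \Cref{lemma: reduction from kOV to PNF-FO}. This yields a $\PNFkEP{k}\sigres{(k+d)}$-sentence $\fml$ with $\len{\fml}=\asympO(kd)$ and a structure with $\strucn{\struc}=kn$ and $\strucm{\struc}=\asympO(kdn)$. The sentence has $C=k+d$ unary relation symbols.

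Next apply \Cref{lemma: reduction monadic and binary predicates to single binary predicate}, using $k\ge 3$ and the $\kEP{k}$ variant. This produces an equality-free $\kEP{k}\sigres{(0,1)}$-sentence $\fml[2]$ and a structure $\struc[2]$ with $\len{\fml[2]}=\asympO(\len{\fml})=\asympO(kd)$ in the DAG representation. The size bounds are $\strucn{\struc[2]}=\asympO(C\,kn)$ and $\strucm{\struc[2]}=\asympO(C\,kn+kdn)=\asympO((k+d)kn)$, and the construction time is of the same order. The key point is that the multiplicative blowup is only $\poly(d)$, so it stays compatible with the $k$-\OVHypothesis{}, which tolerates $\poly(d)$ factors. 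Since $\PNFkEP{k}\subseteq\kEP{k}$, the intermediate sentence is a valid input to the lemma. The output need not be in PNF: the $\pathfml_i$ gadgets introduce nested quantifiers, which is harmless because the theorem concerns $\kEP{k}$ rather than PNF.

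Now argue by contradiction. Suppose $\MC{\kEP{k}\sigres{(0,1)}}$ runs in $\poly(\len{\fml[2]})\cdot m^{k-\eps}$ for some $\eps\in(0,1)$. Then $k$-\OVProblem{n,d} is solvable in time
\[
\asympO(kdn)+\asympO((k+d)kn)+\poly(kd)\cdot\asympO((k+d)kn)^{k-\eps}=\poly(d)\cdot n^{k-\eps},
\]
where the last equality uses that $k$ is fixed and $k-\eps>1$. This contradicts the $k$-\OVHypothesis{}.

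The step needing most care is the size accounting. The $\asympO(C\,\strucn{\struc})$ term of the unary encoding must be checked to cost only a $\poly(d)$ factor; it does, since $C=k+d$. The formula size must also stay polynomial. In the DAG case it is linear. If one insists on tree-represented sentences, the tree variant \Cref{lemma: reduction monadic and binary predicates to single binary predicate for tree} gives size $\asympO(\log C\cdot kd)$ and $m=\asympO(\log C\cdot kdn)$, both still polynomial, so the argument is unchanged. This matches the remark that only the low-dimensional variant, which needs a $2^{\asympo(\len{\fml})}$ bound, is sensitive to the representation.
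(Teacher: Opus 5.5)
Your proposal is correct and follows essentially the same route as the paper: compose the $k$-OV-to-$\mathsf{PNF}$-$\mathsf{EP}^k$ reduction with the single-binary-relation encoding of \Cref{lemma: reduction monadic and binary predicates to single binary predicate} (or its tree variant), and check that the blow-up in formula size and in $m$ is only polynomial in $d$ (equivalently, in $|\varphi|$). The only difference is cosmetic. The paper invokes \Cref{theorem: FO model checking hardness} as a black box, after removing isolated elements so that $C=O(|\varphi|)$ and $n=O(m)$. You instead inline \Cref{lemma: reduction from kOV to PNF-FO} and track the sizes directly in terms of $n$ and $d$.
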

\begin{proof}
    Suppose that for some fixed $\eps > 0$,
$\MC{\kEP{k}\sigres{(0, 1)}}$ is in $\poly(\len{\fml}) \cdot m^{k - \eps}$ time.
By the reduction of \Cref{lemma: reduction monadic and binary predicates to single binary predicate} (with $C = \asympO(\len{\fml})$, after removing isolated "elements" by \Cref{proposition: isolated}; for "tree-represented" "sentences", we use \Cref{lemma: reduction monadic and binary predicates to single binary predicate for tree} instead),
$\MC{\kEP{k}\sigres{(\ordinalomega)}}$ is in time $\poly(\len{\fml}) \cdot (\len{\fml} \cdot m)^{k - \eps}
= \poly(\len{\fml}) \cdot m^{k - \eps}$,
which contradicts \Cref{theorem: FO model checking hardness} under the $k$-\OVHypothesis{}.
\end{proof}
Additionally, the low-"dimensional" $k$-\OVHypothesis{}
implies the following stronger result.
\begin{theorem}\label{theorem: FO model checking hardness low-dimensional binary low-dimensional}
    Fix $k \ge 3$.
    Assuming the low-"dimensional" $k$-\OVHypothesis{},
    $\MC{\kEP{k}\sigres{(0, 1)}}$
    cannot be solved in $2^{\asympo(\len{\fml})} \cdot m^{k - \eps}$ time for any $\eps > 0$.
    For "tree-represented" "sentences",
    assuming the low-"dimensional" $k$-\OVHypothesis{},
    $\MC{\kEP{k}\sigres{(0, 1)}}$
    cannot be solved in $2^{\asympo(\len{\fml}/\log \len{\fml})} \cdot m^{k - \eps}$ time
    for any $\eps > 0$.
\end{theorem}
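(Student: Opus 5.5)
The plan is to mirror the proof of \Cref{theorem: FO model checking hardness low-dimensional}: chain the reduction of \Cref{lemma: reduction from kOV to PNF-FO} with the single-binary-relation encoding, keeping track of the formula-size blowup in each representation. Suppose that for some $\eps>0$ (w.l.o.g.\ $\eps<1$), $\MC{\kEP{k}\sigres{(0,1)}}$ is solvable in $2^{\asympo(\len{\fml})}\cdot m^{k-\eps}$ time.

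Fix $c>0$ and an instance of $k$-\OVProblem{n, \le c\log n}. \Cref{lemma: reduction from kOV to PNF-FO} produces, in $\asympO(kdn)$ time, a tree-represented $\PNFkEP{k}\sigres{(k+d)}$-sentence $\fml$ with $\len{\fml}=\asympO(kd)$ and a structure with $m=\asympO(kdn)$. We then apply \Cref{proposition: isolated} and \Cref{lemma: reduction monadic and binary predicates to single binary predicate} with $C=k+d+1$.

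\emph{DAG case.} The resulting $\kEP{k}\sigres{(0,1)}$-sentence has size $\asympO(\len{\fml})=\asympO(d)$ for fixed $k$. The new structure has $m'=\asympO(C\,n+kdn)=\asympO(dn)=\asympO(n\log n)$. The assumed algorithm therefore runs in time
\[
2^{\asympo(d)}\cdot(n\log n)^{k-\eps}=2^{\asympo(\log n)}\,n^{k-\eps+\asympo(1)}\le\asympO(n^{k-\eps/2}),
\]
which contradicts the low-dimensional $k$-\OVHypothesis{}.

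\emph{Tree-represented case.} We use \Cref{lemma: reduction monadic and binary predicates to single binary predicate for tree} instead. It gives a tree-represented sentence of size $\asympO(L\,\len{\fml})$, where $L=\asympO(\log C)=\asympO(\log d)$, and a structure with $m'=\asympO(L\cdot dn)$, which is still $n^{1+\asympo(1)}$. The new formula size is $N=\asympO(d\log d)$, so
\[
N/\log N=\asympTheta(d\log d/\log d)=\asympTheta(d).
\]
Hence $2^{\asympo(N/\log N)}=2^{\asympo(d)}=n^{\asympo(1)}$ for $d\le c\log n$, and the same contradiction follows.

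The main obstacle is the tree case. There we must check two things: that the $\asympO(\log C)$-factor binary encoding of \Cref{proposition: binary encoding} really yields size $\asympO(d\log d)$, and not the $\asympO(Cd)$ that the unary encoding would give; and that $N/\log N$ recovers $\Theta(d)$ without swallowing the $\asympo(\cdot)$. Everything else is bookkeeping with $k$ fixed.
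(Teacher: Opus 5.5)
Your proposal is correct and is essentially the paper's proof. The paper first proves the unary low-dimensional bound (\Cref{theorem: FO model checking hardness low-dimensional}) from \Cref{lemma: reduction from kOV to PNF-FO} and then applies the single-binary-relation encoding: the DAG version in the DAG case, and \Cref{lemma: reduction monadic and binary predicates to single binary predicate for tree} with $L=O(\log|\varphi|)$ in the tree case. You inline that intermediate step by composing the $k$-OV reduction with the encoding directly, and the bookkeeping is the same: formula size $O(d)$ in the DAG case, and size $N=O(d\log d)$ with $N/\log N=O(d)$ in the tree case.
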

\begin{proof}
For "DAG-represented" inputs,
suppose that for some fixed $\eps > 0$,
$\MC{\kEP{k}\sigres{(0, 1)}}$ is in $2^{\asympo(\len{\fml})} \cdot m^{k - \eps}$ time.
By the reduction of \Cref{lemma: reduction monadic and binary predicates to single binary predicate} (as in the proof of \Cref{theorem: FO model checking hardness one binary}),
$\MC{\kEP{k}\sigres{(\ordinalomega)}}$ on "DAG-represented" inputs is in time $2^{\asympo(\len{\fml})} \cdot (\len{\fml} \cdot m)^{k - \eps}
= 2^{\asympo(\len{\fml})} \cdot m^{k - \eps}$,
which contradicts \Cref{theorem: FO model checking hardness low-dimensional} under the low-"dimensional" $k$-\OVHypothesis{}.

For "tree-represented" inputs, 
suppose that for some fixed $\eps > 0$,
$\MC{\kEP{k}\sigres{(0, 1)}}$ is in $2^{\asympo(\len{\fml} / \log \len{\fml})} \cdot m^{k - \eps}$ time.
By the reduction of \Cref{lemma: reduction monadic and binary predicates to single binary predicate for tree} (with $L = \asympO(\log \len{\fml})$),
$\MC{\kEP{k}\sigres{(\ordinalomega)}}$ on "tree-represented" inputs is in time $2^{\asympo(\len{\fml} \log \len{\fml} / \log (\len{\fml} \log \len{\fml}))} \cdot (\log \len{\fml} \cdot m)^{k - \eps}
= 2^{\asympo(\len{\fml})} \cdot m^{k - \eps}$,
which contradicts \Cref{theorem: FO model checking hardness low-dimensional} under the low-"dimensional" $k$-\OVHypothesis{}.
\end{proof}
\section{Unconditional Lower Bounds for \texorpdfstring{$\kFO{2}\sigres{(\ordinalomega,\ordinalomega)}$}{FO2[(∞,∞)]} to \texorpdfstring{$\BMLP$}{ML*} and \texorpdfstring{$\kFO{k}\sigres{(\ordinalomega)}$}{FOk[(∞)]} to \texorpdfstring{$\kC{1}$}{C1}}\label{section: succinctness unconditional}
In this section, we note that there is no polynomial-time translation from $\kFO{2}\sigres{(\ordinalomega,\ordinalomega)}$ to $\BMLP$
and from $\kFO{k}\sigres{(\ordinalomega)}$ ($k \ge 2$) to $\kC{1}$, without any assumption.

For $\kFO{2}\sigres{(\ordinalomega,\ordinalomega)}$ to $\BMLP$, and moreover for $\kFO{2}\sigres{(\ordinalomega)}$ to $\BMLP$,
there is no polynomial-size translation (and hence, no polynomial-time translation),
because there is a trivial polynomial-time translation from $\BMLP$ to $\MLLSW$
and there is no polynomial-size translation from $\kFO{2}\sigres{(\ordinalomega)}$ to $\MLLSW$ \cite[Theorem 2]{lutzModalLogicTwoVariable2001} (also for "DAG-represented" formulas).

For $\kFO{k}\sigres{(\ordinalomega)}$ to $\kC{1}$ where $k \ge 2$,
there is no polynomial-time translation without any assumption,
because the satisfiability problem for $\kFO{k}\sigres{(\ordinalomega)}$ is "NEXPTIME"-hard \cite[Theorem 4.1]{lewisComplexityResultsClasses1980} but that for $\kC{1}$ is in "NP" (even if the numbers occurring in counting quantifiers are encoded in binary) \cite{kuncakEfficientSatisfiabilityChecking2007}, \cite[Theorem 1]{pratt-hartmannComputationalComplexityNumerically2008} (also for "DAG-represented" formulas),
with "NP" $\subsetneq$ "NEXPTIME" \cite{cookHierarchyNondeterministicTime1973}.
 
\ifdefined\submission\else
\clearpage
\clearpage

\fi
\fi

\end{document}